\newcommand{\lstCodeSize}{\footnotesize\ttfamily} 
\lstdefinestyle{mygo}{language=Go,
moredelim=[is][\color{gray}]{``}{``},
}
\theoremstyle{acmplain}
\newtheorem{theorem}{Theorem}[section]
\newtheorem{lemma}[theorem]{Lemma}
\newif\iflong 
\newenvironment{Figure}{
  \begin{figure}
}{
\end{figure}
}
\newenvironment{ruled}{
  \hrule
  \vspace{1ex}
}{
  \vspace{1ex}
  \hrule
  \vspace{-2ex}
}
\newenvironment{ruledgo}{
  \hrule
  \vspace{-2.5ex}
}{
  \vspace{-1.25ex}
  \hrule
  \vspace{-2ex}
}
\newenvironment{ruledGo}{
  \hrule
  \vspace{-2.5ex}
}{
  \vspace{-3.5ex}
  \hrule
  \vspace{-2ex}
}
\newcommand{\ROUNDTWO}[1]{#1}
\newcommand{\squeeze}{\hspace{-1.5em}}
\newcommand{\gap}{\hspace{0.75em}}
\newcommand{\Strut}{\vphantom{\ov{f}}}
\newcommand{\calD}{\mathcal{D}}
\newcommand{\calS}{\mathcal{S}}
\newcommand{\ov}{\overline}
\newcommand{\kw}[1]{\texttt{\bf #1}}
\newcommand{\id}[1]{\texttt{#1}}
\newcommand{\meta}{\mathit}
\newcommand{\ok}{~\meta{ok}}
\newcommand{\fields}{\meta{fields}}
\newcommand{\methods}{\meta{methods}}
\newcommand{\vtype}{\meta{type}}
\newcommand{\mbody}{\meta{body}}
\newcommand{\tdecls}{\meta{tdecls}}
\newcommand{\mdecls}{\meta{mdecls}}
\newcommand{\bounds}{\meta{bounds}}
\newcommand{\unique}{\meta{unique}}
\newcommand{\distinct}{\meta{distinct}}
\newcommand{\type}{\kw{type}}
\newcommand{\struct}{\kw{struct}}
\newcommand{\interface}{\kw{interface}}
\newcommand{\func}{\kw{func}}
\newcommand{\return}{\kw{return}}
\newcommand{\package}{\kw{package}}
\newcommand{\main}{\kw{main}}
\newcommand{\un}{\id{\textunderscore}}
\newcommand{\prog}{\rhd}
\newcommand{\br}[1]{\id{\{}#1\id{\}}}
\newcommand{\lst}[1]{[#1]}
\newcommand{\set}[1]{\{#1\}}
\newcommand{\an}[1]{\langle #1 \rangle}
\newcommand{\imp}{\mathbin{\id{<:}}}
\newcommand{\notimp}{\mathbin{\not\!\!\imp}}
\newcommand{\becomes}{\longrightarrow}
\newcommand{\by}{\mathbin{:=}}
\newcommand{\gray}[1]{{\color{gray}#1}}
\newcommand{\black}[1]{{\color{black}#1}}
\newcommand{\comma}{,\,}
\newcommand{\stoup}{;\,}
\newcommand{\Hole}{\Box}
\newcommand{\PANIC}
{\ensuremath{\mathsf{panic}}}
\newcommand{\goinl}[1]{\lstinline[style=mygo]+#1+}
\newcommand{\yields}{\blacktriangleright}
\newcommand{\extensionkw}{closure}
\newcommand{\Sclo}{\textit{S-\extensionkw}}
\newcommand{\Iclo}{\textit{I-\extensionkw}}
\newcommand{\Fclo}{\textit{F-\extensionkw}}
\newcommand{\Mclo}{\textit{M-\extensionkw}}
\newcommand{\SExtensionD}[2]{\Sclo_{#2}(#1)}
\newcommand{\IExtensionD}[2]{\Iclo_{#2}(#1)}
\newcommand{\FExtensionD}[2]{\Fclo(#1)}
\newcommand{\MExtensionD}[2]{\Mclo_{#2}(#1)}
\newcommand{\goodsub}{good}
\newcommand{\eo}{\rho}
\DeclareMathOperator{\sorder}{\gtrdot}
\newcommand{\notmonomorphisable}{~{\notmonomorphisable*}}
\newcommand\notmonomorphisable*{\meta{nomono}}  \newcommand{\mkdummy}{\meta{hash}}
\newcommand{\naturals}{\mathbb{N}}
\newcommand{\dummytype}{\mathsf{Top}}
\newcommand{\Ga}{\Gamma}
\newcommand{\TENV}{\ensuremath{\Delta}}
\newcommand{\RULENAME}[1]{\textsc{#1}}
\newcommand{\tyrulename}[1]{\RULENAME{#1}}
\newcommand{\EQ}{\ensuremath{\mathbin{=}}}
\newcommand{\vvv}{\ensuremath{v}}  \newcommand{\SSS}{\ensuremath{S}}  \newcommand{\III}{\ensuremath{I}}  
\newcommand{\ttt}{\ensuremath{t}}  \newcommand{\tttS}{\ensuremath{\ttt_\SSS}}  \newcommand{\tttI}{\ensuremath{\ttt_\III}}  \newcommand{\fff}{\ensuremath{f}}  \newcommand{\mmm}{\ensuremath{m}}  \newcommand{\xxx}{\ensuremath{x}}  \newcommand{\yyy}{\ensuremath{y}}  
\newcommand{\eee}{\ensuremath{e}}  
\newcommand{\SUBS}[2]{\ensuremath{[#2 \by #1]}}
\newcommand{\PANICe}[1]{\ensuremath{#1\,\PANIC}}
\DeclareMathOperator{\IMPLOP}{<:}
\newcommand{\IMPLS}[2]{\ensuremath{#1 \IMPLOP #2}}
\newcommand{\IMPLSG}[3]{\ensuremath{#1 \vdash #2 \IMPLOP #3}}
\newcommand{\NIMPLS}[2]{\ensuremath{#1 \not\!\!\IMPLOP #2}}
\newcommand{\NIMPLSG}[3]{\ensuremath{#1 \vdash #2 \not\!\!\IMPLOP #3}}
\newcommand{\impls}[2]{\IMPLS{#1}{#2}}
\newcommand{\METHSOFt}[1]{\methods(#1)}
\newcommand{\FIELDSOFt}[1]{\fields(#1)}
\newcommand{\FIELDSOFtD}[1]{\FIELDSOFt{#1}}  \newcommand{\methodsof}[1]{\METHSOFt{#1}}
\newcommand{\methodsofD}[2]{\methods_{#1}(#2)}
\newcommand{\MBODYDmt}[3]{\mbody_{#1}(#3.#2)}
\newcommand{\MBODYmt}[2]{\mbody(#2.#1)}
\newcommand{\MBODYxxe}[3]{\ensuremath{(#1, #2) . #3}} 
\newcommand{\MBODYmtN}[4]{\MBODYDmt{#1}{#2(#3)}{#4}}
\newcommand{\RETe}[1]{\ensuremath{\kw{return}\,#1}}
\newcommand{\MDECLpmGe}[4]
{\ensuremath{\kw{func}\,(#1)\,#2\,#3\,\{\RETe{#4}\}}}
\newcommand{\JUDGEWF}[1]{\ensuremath{#1 \, \ok}}
\newcommand{\JUDGEEXPR}[3]{\ensuremath{#1 \vdash #2 : #3}}
\newcommand{\JUDGEWFG}[2]{\ensuremath{#1 \vdash #2 \, \ok}}
\newcommand{\JUDGEFORMCTXT}[3]{\ensuremath{#1 ; #2 \ok~#3}}
\newcommand{\JUDGEEXPRG}[4]{\ensuremath{#1; #2 \vdash #3 : #4}}
\newcommand{\kkk}{\ensuremath{\tvar}}
\newcommand{\tvar}{\ensuremath{\alpha}}
\newcommand{\BOUNDTENVt}[2]{\ensuremath{\bounds_{#1}(#2)}}
\newcommand{\SUBSTBUILDER}[3]{#1 = (#2 \by #3)}
\newcommand{\SUBSTBUILDERCTXT}[4]{#1 = (#3 \by_{#2} #4)}
\newcommand{\NTYPEtt}[2]{\ensuremath{#1(#2)}}
\newcommand{\TYPINSTtn}[2]{\NTYPEtt{#1}{#2}}
\newcommand{\GPARAMSk}[1]{\ensuremath{(\kw{type}\,#1)}}
\newcommand{\PDECLkN}[2]{\ensuremath{#1\,#2}}
\newcommand{\pDECLxt}[2]{\ensuremath{#1\,#2}}
\newcommand{\MSIGpt}[2]{\ensuremath{(#1)\,#2}}
\newcommand{\METHmG}[2]{\ensuremath{#1\,#2}}
\newcommand{\MSIGPpt}[3]{\ensuremath{#1\,(#2)\,#3}}
\newcommand{\METHmpt}[3]{\METHmG{#1}{\MSIGpt{#2}{#3}}}
\newcommand{\METHODmpt}[3]{\METHmpt{#1}{#2}{#3}}
\newcommand{\STRUCTte}[2]{\ensuremath{#1\br{#2}}}
\newcommand{\SELef}[2]{\ensuremath{#1.#2}}
\newcommand{\FDECLft}[2]{\ensuremath{#1\,#2}}
\newcommand{\MCALLeme}[3]{\ensuremath{#1.#2(#3)}}
\newcommand{\MCALLemne}[4]{\ensuremath{#1.#2(#3)(#4)}}
\newcommand{\ASSRTet}[2]{\ensuremath{#1.(#2)}}
\newcommand{\REDUCE}{\becomes}
\newcommand{\CS}{C\nolinebreak\hspace{-.05em}\raisebox{.5ex}{\footnotesize\bf \#}}
\newcommand{\Cpp}{C\nolinebreak\hspace{-.05em}\raisebox{.4ex}{\footnotesize\bf +}\nolinebreak\hspace{-.10em}\raisebox{.4ex}{\footnotesize\bf +}}
\newcommand{\FV}[1]{\ensuremath{\mathit{fv}(#1)}}
\newcommand{\monoid}[1]{\an{#1}}
\title{Featherweight Go}
\author{Robert Griesemer}
\affiliation{\institution{Google}
\country{USA}
}
\author{Raymond Hu}
\affiliation{\institution{University of Hertfordshire}
  \department{School of Engineering and Computer Science}
\city{Hatfield}
\country{UK}
}
\email{r.z.h.hu@herts.ac.uk}
\author{Wen Kokke}
\affiliation{\institution{University of Edinburgh}
  \department{Laboratory for Foundations of Computer Science}
\city{Edinburgh}
\country{UK}
}
\email{wen.kokke@ed.ac.uk}
\author{Julien Lange}
\affiliation{\institution{Royal Holloway, University of London}
  \department{Department of Computer Science}
\city{Egham}
\country{UK}
}
\email{julien.lange@rhul.ac.uk}
\author{Ian Lance Taylor}
\affiliation{\institution{Google}
\country{USA}
}
\author{Bernardo Toninho}
\affiliation{
  \department{Departamento de Inform\'{a}tica}
  \institution{NOVA-LINCS, FCT-NOVA, Universidade Nova de Lisboa}
\country{Portugal}}
\email{btoninho@fct.unl.pt}
\author{Philip Wadler}
\affiliation{\institution{University of Edinburgh}
  \department{School of Informatics}
\city{Edinburgh}
\country{UK}
}
\email{wadler@inf.ed.ac.uk}
\author{Nobuko Yoshida}
\affiliation{\institution{Imperial College London}
  \department{Computing}
\city{London}
\country{UK}
}
\email{n.yoshida@imperial.ac.uk}
\begin{abstract}

  We describe a design for generics in Go inspired by previous
  work on Featherweight Java by Igarashi, Pierce, and Wadler.  Whereas
  subtyping in Java is nominal, in Go it is structural, and whereas
  generics in Java are defined via erasure, in Go we use
  monomorphisation.  Although monomorphisation is widely used, we are
  one of the first to formalise it.  Our design also supports a solution to
  The Expression Problem.

\end{abstract}
\keywords{Go, Generics, Monomorphisation}
\begin{document}

\maketitle
\renewcommand{\shortauthors}{Griesemer \emph{et al}.}

\section{Introduction}

Google introduced the Go programming language in 2009
\citep{Griesemer-et-al-2009,The-Go-Team-2020}.  Today it sits at
position 12 on the Tiobe Programming Language Index
\ROUNDTWO{and position 10 on the IEEE Spectrum Programming Language
  Ranking}
(Haskell sits at positions 41 \ROUNDTWO{and 29}, respectively).
Recently, the Go team mooted a design to extend Go with generics
\citep{Taylor-and-Griesemer-2019}, and
Rob Pike wrote Wadler to ask:
\begin{quote}
 Would you be interested in helping us get polymorphism right (and/or
 figuring out what ``right'' means) for some future version of Go?
\end{quote}
This paper is our response to that question.

Two decades ago, Igarashi, Pierce, and Wadler
[\citeyear{Igarashi-et-al-1999,Igarashi-et-al-2001}], introduced
Featherweight Java.  They considered a tiny model of Java (FJ),
extended that model with generics (FGJ), and translated FGJ to FJ
(via \emph{erasure}).  In their footsteps, we introduce Featherweight
Go.  We consider a tiny model of Go (FG), extend that model with
generics (FGG), and translate FGG to FG (via \emph{monomorphisation}).

Go differs in interesting ways from Java.
Subtyping in Java is nominal, whereas in Go it is structural.
\ROUNDTWO{Casts in Java correspond to type assertions in Go:
casts in Java with generics are restricted to support erasure},
whereas type assertions in Go with generics are unrestricted
thanks to monomorphisation.  Monomorphisation is widely used,
but we are among the first to formalise it.  The Expression Problem
was first formulated by \citet{Wadler-1998} in the context of Java,
though Java never supported a solution; our design does.

We provide a full formal development:
for FG and FGG, type and reduction rules, and preservation and progress;
for monomorphisation, a formal translation from FGG to FG
that preserves types and is a bisimulation.
{\iflong
  The appendices contain complete proofs\else
  The complete proofs are available in~\cite{techreport}\fi}.

\paragraph{Structural subtyping}
Go is based on structures and interface types.
Whereas most programming languages use
\emph{nominal subtyping}, Go is unique among mainstream
typed programming languages in using
\emph{structural subtyping}. A structure implements an interface
if it defines all the methods specified by that interface,
and an interface implements another if the methods of the first
are a superset of the methods specified by the second.

In Java, the superclasses of a class are fixed by the declaration.
If lists are defined before collections, then one
cannot retrofit collections as a superclass of lists---save by
rewriting and recompiling the entire library.
In Haskell the superclasses of a type class are fixed by the
declaration. If monads are defined before functors, then one
cannot retrofit functors as a superclass of monads---save by
rewriting and recompiling the entire library.
In contrast, in Go one might define lists or monads first, and
later introduce collections or functors as an interface that
the former implements---without rewriting or
recompiling the earlier code.

\paragraph{The Expression Problem}
The Expression Problem was formulated by \citet{Wadler-1998}.
It gave a name to issues described by
\citet{Cook-1990}, \citet{Reynolds-1994}, and
\citet{Krishnamurthi-et-al-1998},
and became the basis of subsequent work by
\citet{Torgersen-2004}, \citet{Zenger-and-Odersky-2004},
\citet{Swierstra-2008}, and many others.
Wadler defines The Expression Problem this way:
\begin{quote}
  The goal is to define a data type by cases, where one can add new
  cases to the data type and new functions over the data type, without
  recompiling existing code, and while retaining static type safety.
\end{quote}
And motivates its interest as follows:
\begin{quote}
  Whether a language can solve the Expression Problem is a salient
  indicator of its capacity for expression.  One can think of cases as
  rows and functions as columns in a table.  In a functional language,
  the rows are fixed (cases in a datatype declaration) but it is easy
  to add new columns (functions).  In an object-oriented language, the
  columns are fixed (methods in a class declaration) but it is easy to
  add new rows (subclasses).  We want to make it easy to add either
  rows or columns.
\end{quote}
One can come close to solving The Expression Problem in Go as it
exists now, using dynamic checking via type assertions. We show how to
provide a fully static solution with generics.
\ROUNDTWO{We had to adjust our design: our first design
for generics used nonvariant matching on bounds in receivers of
methods, but to solve The Expression Problem we had to relax this
to covariant matching.}

\paragraph{Monomorphisation}
FGJ translates to FJ via \emph{erasure},
whereas FGG translates to FG via \emph{monomorphisation}.
Two instances \goinl{List<int>} and \goinl{List<bool>} in FGJ
both translate to \goinl{List} in FJ
(where \goinl{<>} are punctuation),
whereas two instances \goinl{List(int)} and \goinl{List(bool)} in FGG
translate to separate types \goinl{List<int>} and \goinl{List<bool>} in FG
(where \goinl{()} are punctuation, but \goinl{<>} are taken as part of the name).

Erasure is more restrictive than monomorphisation.  In Java with
generics, a cast \goinl{(a)x} is illegal if \goinl{a} is a type
variable, whereas in Go with generics, the equivalent type assertion
\goinl{x.(a)} is permitted.  Erasure is often less efficient
than monomorphisation.  In Java with generics, all type
variables are boxed, whereas in Go with generics 
type variables may instantiate to be unboxed. 
However, erasure is linear in the size
of the code, whereas monomorphisation can suffer an exponential
blowup; and erasure is suited to separate compilation, whereas
monomorphisation requires the whole program.
\ROUNDTWO{We choose to look at monomorphisation in the first instance,
  because it is simple, efficient, and the first design looked at by the Go team.
Other designs offer other tradeoffs, e.g.,
the restriction on a type assertion \goinl{x.(a)} could also
be avoided by passing runtime representations of types. This
solution avoids exponential blowup and offers better support
for separate compilation, but at a cost in efficiency
and complexity. We expect the final solution will involve a mix
of both monomorphisation and runtime representations of types,
see Section~\ref{sec:conclusion} for more details.}

Template expansion in \Cpp{}~\cite[Chapter~26]{Stroustrup-2013}
corresponds to monomorphisation.
Generics in .NET are implemented by a mixture of erasure and monomorphisation
\cite{Kennedy-and-Syme-2001}.
The MLton compiler for Standard ML
\citep{Cejtin-et-al-2000} and the Rust programming language
\citep{Rust-Team-2017} both apply techniques closely related to
monomorphisation, as described by \citet{Fluet-2015} and
\citet{Turon-2015} on web pages and blog posts.  We say more about
related work in Section~\ref{sec:related}, but we have found
only a handful of peer-reviewed publications that touch on
formalisation of monomorphisation.
Monomorphisation is possible only when it requires a finite set
of instances of types and methods.  We believe we are
the first to formalise computation of instance sets
and determination of whether they are finite.

The bookkeeping required to formalise monomorphisation of instances
and methods is not trivial.  Monomorphising an interface with type
parameters that contains a method with type parameters may require
different instances of the interfaces to contain different instances
of the methods. It took us several tries over many months to formalise it
correctly. While the method for monomorphisation described here is
specialised to Go, we expect it to be of wider interest, since similar
issues arise for other languages and compilers such as \Cpp{}, .Net,
MLton, or Rust.

\paragraph{Featherweight vs complete}
A reviewer of an earlier revision of this paper wrote:
\begin{quotation}
It is also quite common for semantics to strive for ``completeness'',
instead of being ``featherweight''. There is a lot of value in having
featherweight semantics, but the argument for completeness is that
it helps language designers understand bad interactions between
features.
(For example, \citet{Amin-and-Tate-2016} recently showed that Java
generics are unsound, but the bug is beyond the scope of
Featherweight Generic Java.)
\end{quotation}
We agree with these words. Since the review was a reject, we
deduce an implicit claim that it is better to be complete.
Here, with respect, we disagree.
We argue both ``featherweight'' and ``complete''
descriptions have value.  As evidence, compare citations
counts for the paper on Featherweight Java,
\citet{Igarashi-et-al-2001}, with the four most-cited papers
on more complete models, 
\citet{Flatt-et-al-1998,Nipkow-and-von-Oheimb-1998,Drossopoulou-and-Eisenbach-1997,Syme-1999}:
1070 as compared with 549, 248, 174, 158, respectively
(Google Scholar, April 2020).

\paragraph{Impact}
The original proposal for generics in Go
\citep{Taylor-and-Griesemer-2019} was based on \emph{contracts}, which
are syntactically convenient but lack a clear semantics.  One result
of our work is that the new proposal \citep{Taylor-and-Griesemer-2020}
is based on \emph{interfaces}, which are already well defined in Go.
After we submitted the draft of this paper, Griesemer wrote to Wadler:
\begin{quote}
I want to thank you and your team for all the type theory work on Go
so far---it really helped clarify our understanding to a massive
degree. So thanks!
\end{quote}
Another result is the proposal for covariant receiver typing,
a feature required by The Expression Problem. It is not part of the
Go team's current design, but they have noted it is backward
compatible and are considering adding it in the future.

In this paper we adopt the syntax originally proposed by the Go team
in July 2019~\cite{Taylor-and-Griesemer-2019}. In September
2020~\cite{Taylor-and-Griesemer-2020}, they proposed a revised syntax
where type parameters are declared within square brackets and the
$\type$ keyword is omitted.
We prefer the new syntax, but retained the
old since our artifact uses the old syntax and artifact evaluation was
already complete.

\paragraph{Outline}
Section~\ref{sec:examples} introduces FG and FGG,
presents a solution to The Expression Problem,
and introduces monomorphisation.
Sections~\ref{sec:fg} and~\ref{sec:fgg} present FG and FGG;
we give formal rules for types and reductions, and prove preservation and progress.
Section~\ref{sec:mono} presents monomorphisation, which translates FGG back to FG;
we prove the translation preserves types and is a bisimulation.
Section~\ref{sec:implementation} describes our prototype implementation.
Section~\ref{sec:related} describes related work.
Section~\ref{sec:conclusion} concludes.
{\iflong
  Appendices provide extra examples and details of all proofs\else
  Additional examples and details of all proofs are available
  in~\cite{techreport}\fi}.

 \section{Featherweight Go by example}
\label{sec:examples}

Formally, FG and FGG are tiny languages, containing only structures,
interfaces, and methods. Our examples use features of
Go missing in FG and FGG, including booleans, integers, strings, and
variable bindings.
We show how to declare booleans in FG and FGG in {\iflong
  Appendices~\ref{sec:fg-bool} and~\ref{sec:fgg-bool}\else
  \cite{techreport}\fi}.

\subsection{FG by example}
\label{sec:fg-ex}

\paragraph{Functions in FG}
Figure~\ref{fig:fg-function} shows higher-order functions in FG.

Interface \goinl{Any} has no methods, and so is implemented
by any type.  Interface \goinl{Function} has a single method,
\goinl{Apply(x Any) Any}, which has an argument and result of
type \goinl{Any}.  It is implemented by any structure
that defines a method with the same name and same signature.
In Go structures and methods are declared separately, as compared to
Java where they are grouped together in a class declaration.
We give three examples.

Structure \goinl{incr} has a single field, \goinl{n}, of type
\goinl{int}.  Its \goinl{Apply} method has receiver \goinl{this} of
type \goinl{incr}, argument \goinl{x} of type \goinl{Any}, and result
type \goinl{Any}, and increments its argument by \goinl{n}.  You might
expect the argument and result to instead have type \goinl{int}, but
then the declared method would not implement the \goinl{Function}
interface, because the method name and signature must match exactly.
In the method's body, \goinl{x.(int)} is a \emph{type assertion} that
checks its argument is an integer; otherwise it \emph{panics}, which is
Go jargon for raising a runtime error.
A structure is created by a \emph{literal},
consisting of the structure name and its field values in braces.
For instance, \goinl{incr\{-5\}.Apply(3)} returns \goinl{-2}.
A field of a structure is accessed in the usual way, \goinl{this.n}.
Here \goinl{this} is a variable bound to the receiver, not a keyword.

Structure \goinl{pos} has no fields, and its apply method returns \goinl{true}
if given a positive integer. Structure \goinl{compose} has two fields,
each of which is a function, and its apply method applies the
two in succession.  The top-level \goinl{main}
method composes \goinl{incr\{-5\}} with \goinl{pos\{\}} and
applies it to \goinl{3}, yielding false.  One cannot pass a value of
type \goinl{Any} where a boolean is expected, so the type assertion
\goinl{.(bool)} is required.

Bound variable names are irrelevant when comparing method
signatures, but method names and type names must match exactly.
For example, the following signatures are considered equivalent:
\begin{center}
\text{\goinl{Apply(x Any) Any}}
\qquad
\qquad
\qquad
\text{\goinl{Apply(arg Any) Any}}
\end{center}

\begin{Figure}
\begin{ruledGo}
\begin{golsttwo}
type Any interface {}
type Function interface {
  Apply(x Any) Any
}
type incr struct { n int }
func (this incr) Apply(x Any) Any {
  return x.(int) + this.n
}
type pos struct {}
func (this pos) Apply(x Any) Any {
  return x.(int) > 0
}
(* \newpage *)
type compose struct {
  f Function
  g Function
}
func (this compose) Apply(x Any) Any {
  return this.g.Apply(this.f.Apply(x))
}
func main() {
  var f Function = compose{incr{-5},pos{}}
  var _ bool = f.Apply(3).(bool) // false
}
\end{golsttwo}
\end{ruledGo}
\caption{Functions in FG}
\label{fig:fg-function}

\vspace{2ex}
\begin{ruledGo}
\begin{golsttwo}
type Eq interface {
  Equal(that Eq) bool
}
type Int int
func (this Int) Equal(that Eq) bool {
  return this == that.(Int)
}
type Pair struct {
  left Eq
  right Eq
}
(* \newpage *)
func (this Pair) Equal(that Eq) bool {
  return this.left.Equal(that.(Pair).left) &&
    this.right.Equal(that.(Pair).right)
}
func main() {
  var i, j Int = 1, 2
  var p Pair = Pair{i, j}
  var _ bool = p.Equal(p) // true
}
\end{golsttwo}
\end{ruledGo}
\caption{\ROUNDTWO{Equality in FG}}
\label{fig:fg-eq}

\vspace{2ex}
\begin{ruledGo}
\begin{golsttwo}
type List interface {
  Map(f Function) List
}
type Nil struct {}
type Cons struct {
  head Any
  tail List
}
func (xs Nil) Map(f Function) List {
  return Nil{}
}
(* \newpage *)
func (xs Cons) Map(f Function) List {
  return Cons{f.Apply(xs.head), xs.tail.Map(f)}
}
func main() {
  var xs List = Cons{3, Cons{6, Nil{}}}
  var ys List = xs.Map(incr{-5})
    // Cons{-2, Cons{1, Nil{}}}
  var _ List = ys.Map(pos{})
    // Cons{false, Cons{true, Nil{}}}
}
\end{golsttwo}
\end{ruledGo}
\caption{Lists in FG}
\label{fig:fg-list}
\end{Figure}

\ROUNDTWO{
\paragraph{Equality in FG}
Figure~\ref{fig:fg-eq} shows equality in FG.

Interface \goinl{Eq} has one method with signature
\goinl{Equal(that Eq) bool}.
If a type implements this interface we say it supports equality.

A type declaration introduces \goinl{Int} as a synonym for integers,
and a method declaration ensures that type supports
equality. Since signatures must match exactly, in the method the
argument has type \goinl{Eq} and the body uses a type assertion to
convert it to \goinl{Int} as required.

A second type declaration introduces the structure \goinl{Pair}
with two fields \goinl{left} and \goinl{right} which may be of
any type that supports equality, and the method declaration
ensures that pairs themselves support equality.  Again, the
argument has type \goinl{Eq} and the body uses a type assertion
to convert it to a \goinl{Pair} as required.
The top level \goinl{main} method builds a pair of integers
and compares it to itself for equality, yielding \goinl{true}.

Since pairs are to support equality, their components are
also required to support equality. In general, if a structure is
to satisfy some interface we may need to require that each field
of that structure satisfies the same interface---a property we
refer to as \emph{type pollution}.
An alternative design would
give fields the type \goinl{Any}, and to replace \goinl{this.left}
by \goinl{this.left.(Eq)}, and similarly for the other component.
The alternative design is more flexible---it permits fields of
the pair to have any type---but is less efficient
(the type assertions must be checked at runtime)
and less reliable (the type assertions may fail). 
As we will see, FGG will let us avoid type pollution,
providing flexibility, efficiency, and reliability all at the same time.
}

\paragraph{Lists in FG}
Figure~\ref{fig:fg-list} shows lists in FG.

Interface \goinl{List} has a single method:
\goinl{Map(f Function) List},
which applies its argument to each element of its receiver.
We define two structures that implement the list interface. Structure
\goinl{Nil} has no fields, while structure \goinl{Cons} has two
fields, a head of any type and a tail which is a list.
The methods to define \goinl{Map} are straightforward, and the main
method shows an example of its use.

Go is designed to enable efficient implementation.  Structures are
laid out in memory as a sequence of fields, while an interface is a
pair of a pointer to an underlying structure and a
pointer to a dictionary of methods. To ensure the
layout of a structure is finite, a structure that recurses on itself
is forbidden. Thus, the declaration
\begin{center}
\goinl{type Bad struct \{ oops Bad \}}
\end{center}
is not allowed, and similarly for mutual recursion.
\ROUNDTWO{However,
structures that recurse through an interface are permitted, such as}
\begin{center}
\goinl{type Cons struct \{ head Any; tail List \}}
\end{center}
\ROUNDTWO{where the \goinl{tail} field of type \goinl{List}
may itself contain a \goinl{Cons}, since \goinl{Cons}
implements interface \goinl{List}.}

\subsection{FGG by Example}
\label{sec:fgg-ex}

We now adapt the examples of the previous section to generics.

\paragraph{Functions in FGG}
Figure~\ref{fig:fgg-func} shows higher-order functions in FGG.

The interface for functions now takes two type parameters,
\goinl{Function(type a Any, b Any)}.  Each type parameter is followed
by an interface it must implement, called its \emph{bound}.
Here the bounds indicate that the argument and result may be
of any type.  The signature for the method is now
\goinl{Apply(x a) b}, where the first type parameter is the argument
type and the second the result type.

Structures \goinl{incr} and \goinl{pos} are as before.  However, they
now have more natural signatures for their apply methods, where all
occurrences of \goinl{Any} are replaced by \goinl{int} or
\goinl{bool} as appropriate.  Type
assertions in the method bodies are no longer needed, and
the types ensure a panic never occurs.

The structure for composition now takes three type parameters.
In the \ROUNDTWO{\goinl{main} method}, type parameters are added and the type
assertion at the end is no longer required.

\begin{Figure}
\begin{ruledgo}
\begin{golsttwo}
type Function(type a Any, b Any) interface {
  Apply(x a) b
}
type incr struct { n int }
func (this incr) Apply(x int) int {
  return x + this.n
}
type pos struct {}
func (this pos) Apply(x int) bool {
  return x > 0
}
(* \newpage *)
type compose(type a Any, b Any, c Any) struct {
  f Function(a, b)
  g Function(b, c)
}
func (this compose(type a Any, b Any, c Any))
    Apply(x a) c {
  return this.g.Apply(this.f.Apply(x))
}
func main() {
  var f Function(int,bool) =
    compose(int,int,bool){incr{-5},pos{}}
  var _ bool = f.Apply(3) // false
}
\end{golsttwo}
\end{ruledgo}
\caption{Functions in FGG}
\label{fig:fgg-func}

\vspace{2ex}
\begin{ruledGo}
\begin{golsttwo}
type Eq(type a Eq(a)) interface {
  Equal(that a) bool
}
type Int int
func (this Int) Equal(that Int) bool {
  return this == that
}
type Pair(type a Any, b Any) struct {
  left a
  right b
}
(* \newpage *)
func (this Pair(type a Eq(a), b Eq(b)))
    Equal(that Pair(a,b)) bool {
  return this.left.Equal(that.left) &&
    this.right.Equal(that.right)
}
func main() {
  var i, j Int = 1, 2
  var p Pair(Int, Int) = Pair(Int, Int){i, j}
  var _ bool = p.Equal(p) // true
}
\end{golsttwo}
\end{ruledGo}
\caption{\ROUNDTWO{Equality in FGG}}
\label{fig:fgg-eq}

\vspace{2ex}
\begin{ruledGo}
\begin{golsttwo}
type List(type a Any) interface {
  Map(type b Any)(f Function(a, b)) List(b)
}
type Nil(type a Any) struct {}
type Cons(type a Any) struct {
  head a
  tail List(a)
}
func (xs Nil(type a Any))
    Map(type b Any)(f Function(a,b)) List(b) {
  return Nil(b){}
}
(* \newpage *) 
func (xs Cons(type a Any))
    Map(type b Any)(f Function(a,b)) List(b) {
  return Cons(b)
    {f.Apply(xs.head), xs.tail.Map(b)(f)}
}
func main() {
  var xs List(int) =
    Cons(int){3, Cons(int){6, Nil(int){}}}
  var ys List(int) = xs.Map(int)(incr{-5})
  var _ List(bool) = ys.Map(bool)(pos{})
}
\end{golsttwo}
\end{ruledGo}
\caption{Lists in FGG}
\label{fig:fgg-list}

\vspace{2ex}
\begin{ruledGo}
\begin{golsttwo}
type Edge(type e Edge(e,v),
               v Vertex(e,v)) interface {
   Source() v
   Target() v
}
(* \newpage *)
type Vertex(e Edge(e,v),
            v Vertex(e,v)) interface {
   Edges() List(e)
}
\end{golsttwo}
\end{ruledGo}
\caption{Mutually recursive bounds}
\label{fig:mutual}
\end{Figure}

\ROUNDTWO{
\paragraph{Equality in FGG}
Figure~\ref{fig:fgg-eq} shows equality in FGG.

The interface for equality is now written \goinl{Eq(type a Eq(a))}.
It accepts a type parameter \goinl{a} where the bound
is itself \goinl{Eq(a)}. The method has signature
\goinl{Equal(that a) bool}.  
The situation where a type parameter appears in
its own bound is known as \emph{F-bounded polymorphism}
\citep{Canning-et-al-1989}, and a similar idiom is common in Java with
generics \citep{Bracha-et-al-1998,Naftalin-and-Wadler-2006}.

\ROUNDTWO{Since we use a type parameter for the argument to \goinl{Equal},
in the method declaration for \goinl{Int}
the argument must now have type \goinl{Int} instead of type \goinl{Eq}.
A type assertion in the method body is no longer required,
increasing efficiency and reliability.}

The type declaration for \goinl{Pair} now take two type parameters,
\goinl{a} and \goinl{b}, which are both bounded by \goinl{Any}.
The method declaration for equality on pairs also uses two type
parameters, \goinl{a} and \goinl{b}, bounded by
\goinl{Eq(a)} and \goinl{Eq(b)} respectively,
so the call to \goinl{Equality} on the components of the pair
is permitted.

Crucially, FGG permits the
bounds on the type parameter in a receiver to \emph{implement} the
bound on the type parameter in the corresponding structure declaration
(receiver type parameters are \emph{covariant}).  This is in
contrast to method signatures, which must exactly match the signature
in the interface (signatures are \emph{nonvariant}).
In this case, the bounds on \goinl{a} and \goinl{b} in the type
declaration for pairs are both \goinl{Any}, while the bounds on \goinl{a}
and \goinl{b} in the receiver are \goinl{Eq(a)} and \goinl{Eq(b)}.
By covariance, since \goinl{Eq(a)} and \goinl{Eq(b)} implement
\goinl{Any}, the method declaration is allowed.

The Go team's current design does not support covariant receiver
typing.  Instead, receivers are nonvariant, just like method
signatures.  With that design, the bounds on \goinl{a} and \goinl{b}
in the declaration for pairs must exactly match those in the method
receiver.  Either the type and method declarations must both use bounds \goinl{Eq(a)} and \goinl{Eq(b)} (in which case one
cannot have pairs where the components do not support equality, even
if don't need that pair to itself support equality, reintroducing type
pollution and reducing flexibility), or they must both use bounds
\goinl{Any} (in which case the method body will need to add type
assertions to \goinl{Eq(a)} and \goinl{Eq(b)}, reducing efficiency and
reliability).}

\paragraph{Lists in FGG}
Figure~\ref{fig:fgg-list} shows lists in FGG.

Interface \goinl{List} now takes as a parameter the type of the
elements of the list, bounded by \goinl{Any}.
The signature for the map method is now
\goinl{Map(type b Any)(f Function(a, b)) List(b)}.
The list interface takes a parameter \goinl{a}
for the type of elements of the receiver list, while the map method itself
takes an additional parameter \goinl{b} for the type of elements of
the result list.

The two structures that implement lists now also take a type
parameter, again bounded by \goinl{Any}.  It may seem odd that
\goinl{Nil} requires a parameter, since it represents a list with no
elements.  However, without this parameter we could not declare that
\goinl{Nil} has method \goinl{Map}, whose signature mentions the
type of the list elements.

The main method simply adds type parameters.
In Go, no name can be bound to both a value and a type in a given
scope, so it is always unambiguous as to whether one is parsing a type
or an expression.  In practice, writing out all type parameters in
full can be tedious, and generic Go permits such
parameters to be omitted when they can be inferred.  Here we always
require type parameters, leaving inference for future work.

Type parameter names and variable names are irrelevant
when comparing method signatures, but method names,
bounds on type parameters, and type names must match exactly.
For example, the following two signatures are considered equivalent:
\begin{center}
\goinl{Map(type b Any)(f Function(a, b)) List(b)}
\quad
\goinl{Map(type bob Any)(fred Function(a, bob)) List(bob)}
\end{center}

If we wanted to define equality on lists without type
assertions, we would need to bound the elements of the list
so that they support equality, changing every occurrence
of \goinl{(type a Any)}, in the code to \goinl{(type a Eq(a))},
and similarly for \goinl{b} in the signature of \goinl{Map}.
\ROUNDTWO{This is a form of type pollution.
An alternative design that avoids pollution,
based on our solution to The Expression Problem,
can be found in {\iflong Appendix~\ref{sec:expression-list}\else
\cite{techreport}\fi}.}

\paragraph{Mutual recursion in type bounds}
In a declaration that introduces a list of type parameters,
the bounds of each may refer to any of the others.
Figure~\ref{fig:mutual} shows two mutually-recursive interface
declarations that may be useful in representing graphs.
It is parameterised over types for edges and vertexes.
Each edge has a source and target vertex, while each vertex
has a list of edges.

\subsection{The Expression Problem}
\label{sec:expression-fgg}

Following \citet{Wadler-1998}, we present The Expression Problem pared
to a minimum.  Our solution appears in Figure~\ref{fig:expression-fgg}.
There are just two structures that construct expressions,
\goinl{Num} and \goinl{Plus}, which denote numbers and the sum of two
expressions, respectively; and two methods that operate on
expressions, \goinl{Eval} and \goinl{String}, which evaluate an
expression and convert it to a string, respectively.
We show that each constructor and operation can be added
independently, proceeding in four steps:
\begin{center}
\begin{tabular}{l@{\qquad\qquad\qquad\qquad}l}
(1) define \goinl{Eval} on \goinl{Num}  & (3) define \goinl{String} on \goinl{Num} \\
(2) define \goinl{Eval} on \goinl{Plus} & (4) define \goinl{String} on \goinl{Plus}
\end{tabular}
\end{center}
The order of steps 2 and 3 can be reversed:
we may extend either by adding a new constructor or a new operation.
We assume availability of the library function
\goinl{fmt.Sprintf} to format strings.

Structure \goinl{Num} contains an integer value.
Structure \goinl{Plus} contains two fields, left and right,
which are themselves expressions.  Typically, we might expect
the type of these fields to be an interface specifying all operations we
wish to perform on expressions. But the whole point of the
expression problem is that we may add other operations later!
Thus, plus takes a type parameter for the type of these fields.
We bound the type parameter with \goinl{Any}, permitting it to be
instantiated by any type.

For each operation, \goinl{Eval} and \goinl{String}, we define a
corresponding interface to specify that operation, \goinl{Evaler} and
\goinl{Stringer}.
(The naming convention is typical of Go.)
When defining \goinl{Eval} on \goinl{Plus}, the receiver's type parameter
is bounded by interface \goinl{Evaler}, allowing \goinl{Eval} to be
recursively invoked on the left and right expressions.  Similarly,
when defining \goinl{String} on \goinl{Plus}, the receiver's type
parameter is bounded by interface \goinl{Stringer}, allowing
\goinl{String} to be recursively invoked.
\ROUNDTWO{Note this depends crucially on FGG's support for covariant receivers
in method declarations. Since the bounds or the receivers in the
method declarations, \goinl{Evaler} and \goinl{Stringer}, implement
the bounds in the type declarations, \goinl{Any}, 
the method declarations are allowed.}

\begin{Figure}
\begin{ruledgo}
\begin{golsttwo}
// Eval on Num
type Evaler interface {
  Eval() int
}
type Num struct {
  value int
}
func (e Num) Eval() int {
  return e.value
}
// Eval on Plus
type Plus(type a Any) struct {
  left a
  right a
}
func (e Plus(type a Evaler)) Eval() int {
  return e.left.Eval() + e.right.Eval()
}
(* \newpage *)
// String on Num
type Stringer interface {
  String() string
}
func (e Num) String() string {
  return fmt.Sprintf("
}
// String on Plus
func (e Plus(type a Stringer)) String() string {
  return fmt.Sprintf("(
    e.left.String(), e.right.String())
}
// tie it all together
type Expr interface {
  Evaler
  Stringer
}
func main() {
  var e Expr = Plus(Expr){Num{1}, Num{2}}
  var _ Int = e.Eval() // 3
  var _ string = e.String() // "(1+2)"
}
\end{golsttwo}
\end{ruledgo}
\caption{Expression problem in FGG}
\label{fig:expression-fgg}
\end{Figure}

A last step shows how to tie it all together. We define
an interface \goinl{Expr} embedding \goinl{Evaler}
and \goinl{Stringer}, and show how to build an \goinl{Expr} value 
which we can both evaluate and convert to a string.

How close could we get without generics? If we know all operations
in advance, then in place of the type parameter in \goinl{Plus} we can
use interface \goinl{Expr}, defining all required operations;
but that violates the requirement that we can add operations later.
Alternatively, in place of the type parameter in \goinl{Plus} we can
use interface \goinl{Any}, with type assertions to \goinl{Evaler} or
\goinl{Stringer} before the recursive calls; that allows us to add
operations later, but violates the requirement that all types be
checked statically.

\begin{Figure}
\begin{ruledgo}
\begin{golsttwo}
type Top struct {}
type Function<int,int> interface {
  Apply<0> Top
  Apply(x int) int
}
type incr struct { n int }
func (this incr) Apply<0> Top {
  return Top{}
}
func (this incr) Apply(x int) int {
  return x + this.n
}
type Function<int,bool> interface {
  Apply<1> Top
  Apply(x int) bool
}
type pos struct {}
func (this pos) Apply<1> Top {
  return Top{}
}
func (this pos) Apply(x int) bool {
  return x > 0
}
type List<int> interface {
  Map<2>() Top
  Map<int>(f Function<int,int>) List<int>
  Map<bool>(f Function<int,bool>) List<bool>
}
type Nil<int> struct {}
type Cons<int> struct {
  head int
  tail List<int>
}
func (xs Nil<int>) Map<2>() Top {
  return Top{}
}
func (xs Cons<int>) Map<2>() Top {
  return Top{}
}
func (xs Nil<int>)
    Map<int>(f Function<int,int>) List<int> {
  return Nil<int>{}
}
func (xs Cons<int>)
   Map<int>(f Function<int,int>) List<int> {
 return Cons<int>
   {f.Apply(xs.head), xs.tail.Map<int>(f)}
}
func (xs Nil<int>)
   Map<bool>(f Function<int,bool>) List<bool> {
 return Nil<bool>{}
}
func (xs Cons<int>)
   Map<bool>(f Function<int,bool>) List<bool> {
 return Cons<bool>
   {f.Apply(xs.head), xs.tail.Map<bool>(f)}
}
type List<bool> interface {
  Map<3>() Top
}
type Nil<bool> struct {}
type Cons<bool> struct {
 head bool
 tail List<bool>
}
func (xs Nil<bool>) Map<3>() Top {
 return Top{}
}
func (xs Cons<bool>) Map<3>() Top {
 return Top{}
}
func main() {
 var xs List<int> =
   Cons<int>{3, Cons<int>{6, Nil<int>}}
 var ys List<int> = xs.Map<int>(incr{-5})
 var _ List<bool> = ys.Map<bool>(pos{})
}
\end{golsttwo}
\end{ruledgo}
\caption{Monomorphisation: example of FG translation}
\label{fig:mono-ex-fg}

\vspace{2ex}
\begin{ruledgo}
\begin{golsttwo}
type Box(type a Any) struct {
  value a
}
(* \newpage *)
func (this Box(type a Any)) Nest(n int) Any {
  if (n == 0) { return this }
  else { return Box(Box(a)){this}.Nest(n-1) }
}
\end{golsttwo}
\end{ruledgo}
\caption{FGG code that cannot be monomorphised}
\label{fig:nomono}
\end{Figure}

\subsection{Monomorphisation by example}

We translate FGG into FG via \emph{monomorphisation}.
As an example, consider the FGG code in Figures~\ref{fig:fgg-func}
and~\ref{fig:fgg-list}.  We only include the code relevant to the main
method, so omit composition and equality.  The given code
monomorphises to the FG program shown in Figure~\ref{fig:mono-ex-fg}.

Each parametric type and method in FGG is translated to a family of
types and methods in FG, one for each possible instantiation of the
type parameters. FGG type \goinl{List(a)} is instantiated at types
\goinl{int} and \goinl{bool}, so it translates to the two FG types
\goinl{List<int>} and \goinl{List<bool>}.  For convenience, we assume
that angle brackets and commas ``\goinl{<,>}'' may appear in FG
identifiers, although that is not allowed in Go.  In our prototype, we
use Unicode letters that resemble angle brackets and a dash: Canadian
Syllabics Pa (U+1438), Po (U+1433), and Final Short Horizontal Stroke
(U+1428).

Monomorphisation tracks for each method
the possible types of its receiver and type parameters. In this
particular program, we need two instances of \goinl{Map} over lists of
integers, one that yields a list of integers and
one that yields a list of booleans,
and none for \goinl{Map} over lists of booleans.

Each interface also contains an instance of a
\emph{dummy} version of \goinl{Apply} or \goinl{Map},
here called, e.g., \goinl{Map<2>}, where the number in
brackets stands for a hash computed from the method signature.
A dummy method is provided for every source FGG method; these
dummy methods are needed to ensure a correct implementation relation
between structures and interfaces is maintained at runtime.  For
instance, if \goinl{f} is bound to
\goinl{incr\{1\}} then the type assertion
\goinl{f.(List<bool>)} should fail; but without the dummy,
interface \goinl{List<bool>} would have no methods
and hence any structure or interface would implement it.

Monomorphisation yields specialised type declarations for structures
and interfaces, and specialised method declarations, plus the required
dummy methods.
The source FGG and its translation to FG are both well-typed,
and both evaluate to corresponding terms: we will show the
translation preserves typing and is a bisimulation.

Not all typable FGG source can be monomorphised.
Figure~\ref{fig:nomono} shows a program that exhibits
\emph{polymorphic recursion}, where a method called at one type
recursively calls itself at a different type.
Here, calling method \goinl{Nest}
on a receiver of type \goinl{Box(a)}
leads to a recursive call
on a receiver of type \goinl{Box(Box(a))}.  Monomorphisation is
impossible because we cannot determine in advance to what depth the
types will nest.
We will present a theorem stating that if source code does not exhibit
problematic polymorphic recursion then it can be monomorphised.

 \section{Featherweight Go}
\label{sec:fg}

\subsection{FG syntax}
\label{sub:fg-syntax}

\begin{Figure}
\begin{ruled}
\begin{minipage}[t]{\textwidth}
\begin{tabular}[t]{ll}
Field name                      & $f$ \\
Method name                     & $m$ \\
Variable name                   & $x$ \\
Structure type name             & $t_S, u_S$ \\
Interface type name             & $t_I, u_I$ \\
Type name                       & $t, u$ ::= $t_S \mid t_I$ \\
Method signature                & $M$ ::= $(\ov{x~t})~t$ \\
Method specification            & $S$ ::= $mM$ \\
Type Literal                    & $T$ ::= \\
\quad Structure                 & \quad $\struct~\br{\ov{f~t}}$ \\
\quad Interface                 & \quad $\interface~\br{\ov{S}}$ \\
Declaration                     & $D$ ::= \\
\quad Type declaration          & \quad $\type~t~T$ \\
\quad Method declaration        & \quad $\func~(x~t_S)~mM~\br{\return~e}$ \\
Program                         & $P$ ::= $\package~\main;~\ov{D}~\func~\main()~\br{\un=e}$
\end{tabular}
\end{minipage}
\hspace{-0.5\textwidth}
\begin{minipage}[t]{0.4\textwidth}
\begin{tabular}[t]{ll}
Expression                      & $d, e$ ::= \\
\quad Variable                  & \quad $x$ \\
\quad Method call               & \quad $e.m(\ov{e})$ \\
\quad Structure literal         & \quad $t_S\br{\ov{e}}$ \\
\quad Select                    & \quad $e.f$ \\
\quad Type assertion            & \quad $e.(t)$
\end{tabular}
\end{minipage}
\end{ruled}
\caption{FG syntax}
\label{fig:fg-syntax}
\end{Figure}

Figure~\ref{fig:fg-syntax} presents FG syntax.
We let $f$ range over field names, $m$ range over
method names, $x$ range over variable names,
$t_S, u_S$ range over structure names, and
$t_I, u_I$ range over interface names.

We let $t, u$ range over type names, which are
either structure or interface type names.
We let $d$ and $e$ range over expressions,
which have five forms:
variable $x$,
method call $e.m(\ov{e})$, 
structure literal $t_S\br{\ov{e}}$,
selection $e.f$,
and type assertion $e.(t)$.
By convention, $\ov{e}$ stands for the sequence $e_1,\ldots,e_n$.
We consider $e$ and $\ov{e}$ to be distinct metavariables.

A method signature $M$ has the form $(\ov{x~t})~t$.
Here $\ov{x}$ stands for $x_1,\ldots,x_n$
and $\ov{t}$ stands for $t_1,\ldots,t_n$,
and hence $\ov{x~t}$ stands for $x_1~t_1,\ldots,x_n~t_n$.
We use similar conventions extensively.

A method specification $S$ is a method name followed
by a method signature.
A type literal $T$ is either
a structure $\struct~\br{\ov{f~t}}$
or an interface $\interface~\br{\ov{S}}$.
A declaration $D$ is either a
type declaration $\type~t~T$
or a method declaration $\func~(x~t_S)~mM~\br{\return~e}$.
In our examples, interface declarations may 
contain interface embeddings, i.e., a reference to
another interface;
for our formalism, we assume these are always expanded out to the
corresponding method specifications.

A program $P$ consists of a
sequence of declarations $\ov{D}$
and a top-level expression $e$,
written in the stylised form shown in the figure
to make it legal Go.
We sometimes abbreviate it as $\ov{D} \prog e$.

\subsection{Auxiliary functions}
\label{sub:fg-aux}

Figure~\ref{fig:fg-aux} presents several auxiliary
definitions.  All definitions assume a given program
with a fixed sequence of declarations $\ov{D}$.

Function $\fields(t_S)$ looks up the structure
declaration for $t_S$ and returns a sequence $(\ov{f~t})$
of field names and their types.
Write $t_S.m$ to refer to the method declaration
with receiver type $t_S$ and name $m$.
Function $\mbody(t_S.m)$ returns $(x:t_S,\ov{x:t}).e$,
where $x:t_S$ is the receiver parameter and its type,
$\ov{x:t}$ the argument parameters and their types,
and $e$ the body from the declaration 
of a method with receiver of type $t_S$ and name $m$.
In the phrase $\func~(x~t_S)~m(\ov{x~t})~t$, each of
$t_S$, $\ov{t}$, and $t$ is considered a distinct metavariable,
and similarly for $x$ and $\ov{x}$.

Function $\vtype(v)$ is explained in Section~\ref{sub:fg-reduction}.
Predicate $\unique(\ov{S})$ holds if for every
method specification $mM$ in $\ov{S}$ the method name $m$
uniquely determines the method signature $M$.

Function $\tdecls(\ov{D})$ returns a sequence with
the name of every type declared in $\ov{D}$.
Function $\mdecls(\ov{D})$ returns a sequence with
a pair $t_S.m$ for every method declared in $\ov{D}$.
\ROUNDTWO{Predicate $\distinct$, not defined in the figure, takes a sequence,
and holds if no item in the sequence is duplicated.  We are careful to
distinguish between sets and sequences.  Predicate $\distinct$ must
take a sequence rather than a set, because items in a set are distinct
by definition.  Sequences may implicitly coerce to sets, but not
vice-versa.  When comparing method signatures, the names of formal
parameters are ignored; signatures are considered equal if they contain
the same types in the same sequence.}

\begin{Figure}
\begin{ruled}
\begin{mathpar}
\inferrule
  {(\type~t_S~\struct\br{\ov{f~t}}) \in \ov{D}}
  {\fields(t_S) = \ov{f~t}}

\inferrule
  {(\func~(x~t_S)~m(\ov{x~t})~t~\br{\return~e}) \in \ov{D}}
  {\mbody(t_S.m) = (x:t_S,\ov{x:t}).e}
\\
\inferrule
  {}
  {\vtype(t_S\br{\ov{v}}) = t_S}

\inferrule
  {\text{$mM_1, mM_2 \in \ov{S}$ implies $M_1 = M_2$}}
  {\unique(\ov{S})}
\\
\inferrule
  {}
  {\tdecls(\ov{D}) = \lst{t \mid (\type~t~T) \in \ov{D}}}

\inferrule
  {}
  {\mdecls(\ov{D}) = \lst{t_S.m \mid
      (\func~(x~t_S)~mM~\br{\return~e}) \in \ov{D}}}
\\
\inferrule
  {}
  {\methods(t_S) = \set{ mM \mid (\func~(x~t_S)~mM~\br{\return~e}) \in \ov{D}}}

\inferrule
  {\type~t_I~\interface\br{\ov{S}} \in \ov{D}}
  {\methods(t_I) = \ov{S}}
\end{mathpar}
\end{ruled}
\caption{FG auxiliary functions}
\label{fig:fg-aux}
\end{Figure}

\begin{Figure}
\begin{ruled}
Implements, well-formed type
  \hfill \fbox{$t \imp u$} \qquad \fbox{$t \ok$}
\begin{mathpar}
\inferrule[<:$_S$]
  { ~ }
  { t_S \imp t_S }

\inferrule[<:$_I$]
  {
    \methods(t) \supseteq \methods(t_I)
  }
  { t \imp t_I }

\inferrule[t-named]
  {
    (\type~t~T) \in \ov{D}
  }
  { t \ok }
\end{mathpar}

Well-formed method specifications and type literals
  \hfill \fbox {$S \ok$} \qquad \fbox{$T \ok$}
\begin{mathpar}
\inferrule[t-specification]
  {
    \distinct(\ov{x}) \\
    \ov{t \ok} \\
    t \ok
  }
  { m(\ov{x~t})~t \ok }

\inferrule[t-struct]
  {
    \distinct(\ov{f}) \\
    \ov{t \ok}
  }
  { \struct~\br{\ov{f~t}} \ok }

\inferrule[t-interface]
  {
     \unique(\ov{S}) \\
     \ov{S \ok}
  }
  { \interface~\br{\ov{S}} \ok }
\end{mathpar}

Well-formed declarations \hfill \fbox{$D \ok$}
\vspace{-2ex}
\begin{mathpar}
\inferrule[t-type]
  {
    T \ok
  }
  { \type~t~T \ok }

\inferrule[t-func]
  {
    \distinct(x, \ov{x}) \\\\
    t_S \ok \\
    \ov{t \ok} \\
    u \ok \\
    x : t_S \comma \ov{x : t} \vdash e : t \\
    t \imp u
  }
  { \func~(x~t_S)~m(\ov{x~t})~u~\br{\return~e} \ok }
\end{mathpar}

Expressions \hfill \fbox{$\Gamma \vdash e : t$}
\begin{mathpar}
\inferrule[t-var]
  { (x : t) \in \Gamma }
  { \Gamma \vdash x : t }

\inferrule[t-call]
  {
    \Gamma \vdash e : t \\
    \Gamma \vdash \ov{e : t} \\
    (m(\ov{x~u})~u) \in \methods(t) \\
    \ov{t \imp u}
  }
  { \Gamma \vdash e.m(\ov{e}) : u }
\\
\inferrule[t-literal]
  {
    t_S \ok \\
    \Gamma \vdash \ov{e : t} \\
    (\ov{f~u}) = \fields(t_S) \\
    \ov{t \imp u}
  }
  { \Gamma \vdash t_S\br{\ov{e}} : t_S }

\inferrule[t-field]
  {
    \Gamma \vdash e : t_S \\
    (\ov{f~u}) = \fields(t_S)
  }
  { \Gamma \vdash e.f_i : u_i }
\\
\inferrule[t-assert$_I$]
  {
    t_I \ok \\
    \Gamma \vdash e : u_I
  }
  { \Gamma \vdash e.(t_I) : t_I }

\inferrule[t-assert$_S$]
  {
    t_S \ok \\
    \Gamma \vdash e : u_I \\
    t_S \imp u_I
  }
  { \Gamma \vdash e.(t_S) : t_S }

\fbox{
\inferrule[\textsc{t-stupid}]
  { 
    t \ok \\
    \Gamma \vdash e : u_S
  }
  { \Gamma \vdash e.(t) : t }
}
\end{mathpar}

Programs \hfill \fbox{$P \ok$}
\begin{mathpar}
  \inferrule[t-prog]
  {   
     \distinct(\tdecls(\ov{D})) \\
     \distinct(\mdecls(\ov{D})) \\
     \ov{D \ok} \\
     \emptyset \vdash e : t
  }
  { \package~\main;~\ov{D}~\func~\main()~\br{\un=e} \ok }
\end{mathpar}
\end{ruled}
\caption{FG typing}
\label{fig:fg-typing}
\vspace{3ex}
\end{Figure}

Function $\methods(t)$ returns the set of all
method specifications belonging to type $t$.
If $t$ is a structure type the method specifications
are those from the method declarations with a
receiver of the given type.  It $t$ is an interface
type, the method specifications are those given in
the interface.

Figure~\ref{fig:fg-typing} presents the FG typing rules.
Let $\Gamma$ range over environments, which are sequences
of variables paired with type names, $\ov{x : t}$.
We write $\emptyset$ for the empty environment.

Judgement $t \imp u$ holds if type $t$ \emph{implements}
type $u$.  A structure type $t_S$ is only implemented by itself,
while type $t$ implements interface type $t_I$ if the
methods defined on $t$ are a superset of those defined on $t_I$.
It follows from the definition that $\imp$
is reflexive and transitive.

\subsection{FG Typing}
\label{sub:fg-typing}

\begin{Figure}
\begin{ruled}
\begin{center}
Value \qquad $v$ ::= $t_S\br{\ov{v}}$
\end{center}

\begin{center}
\begin{minipage}{0.5\textwidth}
\begin{tabular}{ll}
  Evaluation context	        & $E$ ::= \\
  \quad Hole                    & \quad $\Hole$ \\
  \quad Method call receiver    & \quad $E.m(\ov{e})$ \\
  \quad Method call arguments	& \quad $v.m(\ov{v},E,\ov{e})$
\end{tabular}
\end{minipage}
\begin{minipage}{0.4\textwidth}
\begin{tabular}{ll}
  ~\\
  \quad Structure	        & \quad $t_S\br{\ov{v},E,\ov{e}}$ \\
  \quad Select		        & \quad $E.f$ \\
  \quad Type assertion	        & \quad $E.(t)$
\end{tabular}
\end{minipage}
\end{center}

Reduction \hfill \fbox{$d \becomes e$}
\begin{mathpar}
\inferrule[r-field]
  { (\ov{f~t}) = \fields(t_S) }
  { t_S\br{\ov{v}}.f_i \becomes v_i }
\qquad
\inferrule[r-call]
  { (x : t_S,\ov{x: t}).e = \mbody(\vtype(v).m) }
  { v.m(\ov{v}) \becomes e[x \by v, \ov{x \by v}] }
\qquad
\inferrule[r-assert]
  { \vtype(v) \imp t }
  { v.(t) \becomes v }
\qquad
\inferrule[r-context]
  { d \becomes e }
  { E[d] \becomes E[e] }
\end{mathpar}
\end{ruled}
\caption{FG reduction}
\label{fig:fg-reduction}
\end{Figure}

We write $\meta{ok}$ to indicate a construct is well-formed.
Judgement $t \ok$ holds if type $t$ is declared.
Judgement $S \ok$ holds if method specification $S$ is well formed:
all formal parameters $\ov{x}$ in it are distinct,
and all the types $\ov{t}, t$ in it are declared.
Judgement $T \ok$ holds if type literal $T$ is well formed:
for a structure, all field names must be distinct and
all types declared; for an interface, all its method
specifications must be well formed.
Judgement $D \ok$ holds if declaration $D$ is well formed:
for a type declaration,
its type literal must be well formed;
for a method declaration,
its receiver and formal parameters must be distinct,
all types must be declared,
the method body must be well typed in the appropriate environment,
and the expression type must implement the declared return type.

Judgement $\Gamma \vdash e:t$ holds if in environment $\Gamma$
expression $e$ has type $t$.  Rules for variable,
method call, structure literal, and field selection are straightforward.
For instance, the four hypotheses for a method
call check the type of the receiver, check the types of the arguments,
look up the signature of the method, and confirm the types of the
arguments implement the types of the parameters.

\ROUNDTWO{Type assertions are more interesting.  A type assertion $e.(t)$ always
returns a value of type $t$ (if it doesn't panic).  Let $e$ have type
$u$.  There are three cases.  If $u$ and $t$ are both interface types
(\tyrulename{t-assert$_I$}) then the assertion is always allowed, since
$e$ could always conceivably evaluate to a structure that implements
$t$.  If $u$ is an interface but $t$ is a structure
(\tyrulename{t-assert$_S$}) then the assertion is allowed only if $t$
implements $u$, since otherwise $e$ could not possibly contain a
structure of type $t$.  If $u$ is a structure type
(\tyrulename{t-stupid}) then it is stupid to write the assertion in
source code, since the assertion could be checked at compile time,
making it pointless.  Nonetheless, during reduction a variable of
interface type will be replaced by a value of structure type, so
without such stupid type assertions an expression would become
ill-typed during reduction.  We write a box around this rule to
indicate that it doesn't apply to source terms, but may apply to terms
that result from reducing the source.  Stupid type assertions are
similar to stupid casts as found in Featherweight Java.}

Judgement $P \ok$ holds if program $P$ is well formed:
all its type declarations are distinct,
all its method declarations are distinct
(each pair of a receiver type with a method name is distinct),
all its declarations are well formed,
and its body is well typed in the empty environment.

\subsection{FG Reduction}
\label{sub:fg-reduction}

Figure~\ref{fig:fg-reduction} presents the FG reduction rules.
A value $v$ is a structure literal $t_S\br{\ov{v}}$
where each field is itself filled with a value.
The auxiliary function $\vtype(v)$ returns $t_S$
when $v = t_S\br{\ov{v}}$.  Evaluation contexts $E$
are defined in the usual way.
Judgement $d \becomes e$ holds if expression $d$
steps to expression $e$.  There are four rules, for
field selection, method call, type assertion, and
closure under evaluation contexts.  All are straightforward.

\subsection{FG properties}
\label{sub:fg-prop}

We have the usual results relating typing and reduction.

\begin{lemma}[Well formed]
  If\/ $\Gamma \vdash e : t$ then\/ $t \ok$.
\end{lemma}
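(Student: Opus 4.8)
The plan is to proceed by induction on the derivation of $\Gamma \vdash e : t$, with one case per expression typing rule of Figure~\ref{fig:fg-typing}. Four of the cases are immediate: in \tyrulename{t-literal}, \tyrulename{t-assert$_I$}, \tyrulename{t-assert$_S$}, and \tyrulename{t-stupid}, the conclusion type is precisely a type whose well-formedness already appears as a premise of the rule, so $t \ok$ follows directly without appeal to the induction hypothesis.

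The two cases that require real work are \tyrulename{t-call} and \tyrulename{t-field}, and in both the well-formedness of the result type is recovered from the standing assumption that the ambient program is well typed, so that every declaration satisfies $D \ok$. For \tyrulename{t-field}, the premises give $\Gamma \vdash e : t_S$ and $(\ov{f~u}) = \fields(t_S)$; since $\fields(t_S)$ is defined only when the structure declaration for $t_S$ exists, and that declaration is well formed, \tyrulename{t-struct} yields $\ov{u \ok}$ and in particular $u_i \ok$. For \tyrulename{t-call}, the conclusion type is the return type $u$ of a specification $(m(\ov{x~u})~u) \in \methods(t)$; this specification originates either from an interface declaration (if $t$ is an interface) or from a method declaration (if $t$ is a structure). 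In the interface case the declaration's well-formedness forces each specification $S \ok$, so \tyrulename{t-specification} gives $u \ok$; in the structure case the method declaration is well formed, so \tyrulename{t-func} supplies $u \ok$ for the declared return type directly. (The induction hypothesis applied to the receiver premise $\Gamma \vdash e : t$ additionally gives $t \ok$, which justifies that $\methods(t)$ and $\fields(t)$ are well defined, though it is not otherwise needed for the result type.)

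The main obstacle is the \tyrulename{t-var} case, where the conclusion $\Gamma \vdash x : t$ rests only on $(x:t) \in \Gamma$, and nothing in the rule forces $t$ to be declared. As literally stated the lemma therefore fails for environments containing undeclared types, so I would carry the invariant that every type occurring in $\Gamma$ is well formed. This invariant holds for all environments arising in the typing of a well-formed program: the empty environment of \tyrulename{t-prog} satisfies it vacuously, and the only rule that introduces a non-empty environment is \tyrulename{t-func}, whose premises $t_S \ok$ and $\ov{t \ok}$ guarantee that the receiver and argument types populating $x : t_S \comma \ov{x : t}$ are all well formed. I would accordingly prove the lemma under the standing assumption $P \ok$ (or, equivalently, strengthen the statement to: if every type in $\Gamma$ is well formed and $\Gamma \vdash e : t$, then $t \ok$), after which the variable case is immediate from the invariant and the induction closes uniformly.
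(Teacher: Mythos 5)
Your proof is correct and follows the same route the paper takes: the paper offers no detailed argument for this lemma (the FGG analogue in the appendix is dispatched with ``straightforward induction on typing,'' and the FG version is not proved at all), and that induction on the typing derivation is exactly what you carry out, with the four assertion/literal cases closed by their well-formedness premises and the \tyrulename{t-call}/\tyrulename{t-field} cases closed by appeal to declaration well-formedness under the standing assumption $P \ok$. Your treatment of \tyrulename{t-var} identifies a genuine subtlety the paper leaves implicit---the statement as written fails for environments mentioning undeclared types---and your fix (carrying the invariant that all types in $\Gamma$ are well formed, which \tyrulename{t-prog} and \tyrulename{t-func} guarantee for every environment that actually arises) is precisely the strengthening needed to make the induction close.
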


The substitution lemma is straightforward. It is sufficient
to consider empty environments for the substituted terms, since
FG has no binding constructs (such as lambda) in expressions.

\begin{lemma}[Substitution]
  If\/ $\emptyset \vdash \ov{v:t}$
  and $\ov{x:u} \vdash e:u$
  and $\ov{t \imp u}$
  then $\emptyset \vdash e[\ov{x \by v}]:t$
  for some type $t$ with $t \imp u$.
\end{lemma}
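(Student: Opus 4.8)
The plan is to prove this by induction on the derivation of $\ov{x:u} \vdash e:u$, equivalently on the structure of $e$, doing case analysis on the final typing rule. The crucial preliminary observation is that every value has a structure type: a value is a structure literal $t_S\br{\ov{v}}$, the only expression form typed by \tyrulename{t-literal}, so each hypothesis $\emptyset \vdash v_i : t_i$ forces $t_i$ to be a structure type. I expect to need a slightly strengthened induction hypothesis: the type $t$ produced for $e[\ov{x \by v}]$ is not merely some $t \imp u$, but is moreover \emph{either} $u$ itself \emph{or} a structure type. The observation above makes this true in the variable case, where $e = x_i$, rule \tyrulename{t-var} gives the declared type $u_i$, substitution yields $v_i$, and $\emptyset \vdash v_i : t_i$ with $t_i \imp u_i$ and $t_i$ a structure type; so I take $t = t_i$.

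For the remaining expression forms the result type is pinned down by the construct itself, so it equals $u$ and the strengthened claim holds trivially; the work is in checking the subexpressions still type-check after substitution. Two auxiliary facts drive this. First, from rule \tyrulename{<:$_S$}, if $t \imp t_S$ with $t_S$ a structure type then $t = t_S$. This settles \tyrulename{t-field}: the receiver already has a structure type in the premise, so by the induction hypothesis its substituted type cannot drop, the $\fields$ lookup is unchanged, and the selected field keeps its type. Second, a method-specification preservation fact: if $t \imp u'$ and a specification $m\,M$ lies in $\methods(u')$, then $m\,M$ lies in $\methods(t)$ with the same signature $M$ --- when $u'$ is an interface this is the $\supseteq$ condition of \tyrulename{<:$_I$}, and when $u'$ is a structure the first fact gives $t = u'$. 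This settles \tyrulename{t-call}: although the substituted receiver type may drop, the looked-up method keeps its signature, the argument types still implement the parameter types by transitivity of $\imp$, and the return type $u$ is unchanged. \tyrulename{t-literal} is analogous, using transitivity for the field-initialiser types.

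The delicate step, which I expect to be the main obstacle, is the type-assertion cases, and it is precisely where the stupid rule \tyrulename{t-stupid} earns its place. Substitution can legitimately change the source type of an assertion from an interface type to a structure type --- exactly when an interface-typed variable is replaced by a value. In \tyrulename{t-assert$_S$} the premise has $e_0$ of interface type $u_I$ with $t_S \imp u_I$ and result type $t_S$; after substitution the induction hypothesis gives $e_0[\ov{x \by v}]$ a type $t_0 \imp u_I$ that, by the strengthened invariant, is either $u_I$ or a structure type. If it is a structure type, I re-type the assertion using \tyrulename{t-stupid}; if it is $u_I$, the original rule \tyrulename{t-assert$_S$} still applies, its side condition $t_S \imp u_I$ being unchanged; in both cases the result type remains $t_S$. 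The case \tyrulename{t-assert$_I$} is handled the same way. The strengthening is essential here: without it one could not exclude the substituted source type being a strict interface subtype of $u_I$, a situation for which neither \tyrulename{t-assert$_S$} nor \tyrulename{t-stupid} directly applies. This mirrors the treatment of stupid casts in Featherweight Java.
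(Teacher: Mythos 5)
Your proof is correct and follows essentially the same approach as the paper's: induction on the typing derivation, the fact that $t \imp t_S$ forces $t = t_S$, preservation of method specifications along $\imp$, and falling back to \tyrulename{t-stupid} in the assertion cases when substitution turns an interface-typed subexpression into a structure-typed one. Where the paper handles the critical \tyrulename{t-assert$_S$} subcase by an informal appeal to a ``value restriction'' (if the substituted subexpression's type is still an interface type, it must be unchanged), you promote that same fact to an explicit strengthened induction hypothesis --- the result type is either the original type or a structure type --- which is a cleaner, more rigorous rendering of the identical idea (and also disposes of the \tyrulename{t-stupid} case, which your case analysis should list explicitly since the lemma quantifies over arbitrary derivations).
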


The following are straightforward adaptions of the usual
results. We say expression $e$ \emph{panics} if
there exist evaluation context $E$, value $v$, and type $t$
such that $e = E[v.(t)]$ and $\vtype(v) \notimp t$.

\begin{theorem}[Preservation]
  If\/ $\emptyset \vdash d : u$ and $d \becomes e$
  then\/ $\emptyset \vdash e : t$ for some $t$
  with\/ $t \imp u$.
\end{theorem}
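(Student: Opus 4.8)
The plan is to induct on the derivation of $d \becomes e$, with one case per reduction rule. Since \tyrulename{r-context} is the only rule whose premise is itself a reduction, the three computation rules \tyrulename{r-field}, \tyrulename{r-call}, \tyrulename{r-assert} are base cases and \tyrulename{r-context} is the inductive step. Throughout I would lean on inversion (generation) for the typing judgement: because FG has no subsumption rule, the type of each expression form is fixed by its shape (the only branching is among the three assertion rules), so inversion is clean. I would first record a small value-typing fact, immediate from \tyrulename{t-literal} and the absence of subsumption: $\emptyset \vdash t_S\br{\ov v} : u$ forces $u = t_S$. This pins down $\vtype(v)$ and lets me read off receivers' exact types in the base cases.

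For \tyrulename{r-field}, inverting \tyrulename{t-field} and then \tyrulename{t-literal} on the receiver literal shows the selected value $v_i$ has some structure type $s_i$ with $s_i \imp u_i$, where $u_i$ is the declared field type (the type of $d$); take $t = s_i$. For \tyrulename{r-assert}, the type of $v.(t)$ is $t$ by whichever assertion rule applied, and the rule's side condition $\vtype(v) \imp t$ with the value-typing fact gives $\emptyset \vdash v : \vtype(v)$ and $\vtype(v) \imp t$. The substantive base case is \tyrulename{r-call}: inverting \tyrulename{t-call} gives the receiver type, which for a value is exactly $t_S = \vtype(v)$, the specification $m(\ov{x~p})~p_0 \in \methods(t_S)$ drawn from the very declaration that supplies $\mbody(t_S.m)$, and arguments $\ov v$ whose types implement $\ov p$. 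Well-formedness of that declaration (\tyrulename{t-func}) gives that the body types, under $x : t_S, \ov{x : p}$, at some type implementing the declared return $p_0$. I would then apply the Substitution lemma to the combined receiver-and-argument substitution to conclude $\emptyset \vdash e : t$ for some $t \imp p_0 = u$. Because the receiver carries its exact structure type, aligning the declared signature with the one read off by \tyrulename{t-call} is immediate here.

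For \tyrulename{r-context} I would prove a replacement lemma by induction on the shape of $E$: if $\emptyset \vdash E[d'] : u$, $\emptyset \vdash d' : w'$, and $\emptyset \vdash e' : w''$ with $w'' \imp w'$ (supplied by the main induction hypothesis via the premise $d' \becomes e'$), then $\emptyset \vdash E[e'] : u'$ for some $u' \imp u$. For most frames the narrowing $w'' \imp w'$ is benign. For $E.f$ the hole must have structure type, and structures implement only themselves, so $w'' = w'$. For the method-receiver frame $E.m(\ov e)$ and the argument and field frames, monotonicity of the implements side conditions under transitivity keeps the frame well-typed; for the method frame I additionally use $\unique(\methods(\cdot))$ together with the superset reading of $\imp$ on interfaces to see that a narrower receiver type carries the same method signature, so the result type is unchanged.

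\textbf{The main obstacle is the type-assertion frame $E.(t)$.} A single step can strictly narrow the static type of the subject, most sharply when a method whose declared return is an interface $u_I$ steps to a body of a smaller interface type $u_I' \imp u_I$. If the original assertion used \tyrulename{t-assert$_S$} to a structure $t_S$ with $t_S \imp u_I$, then after the step the side condition becomes $t_S \imp u_I'$, which can fail ($t_S \notimp u_I'$), so neither \tyrulename{t-assert$_S$} nor \tyrulename{t-assert$_I$} applies. This is exactly the kind of degenerate assertion that arises only under reduction and that the boxed \tyrulename{t-stupid} rule exists to absorb, in the same spirit as stupid casts in Featherweight Java. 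The crux of the whole proof is therefore to verify that, along every way a subject's type can narrow in one step --- including when the narrowed subject is still of interface type --- the enclosing assertion remains typable by falling back to \tyrulename{t-stupid}, so that its type $t$, and hence $u$, is preserved. I expect this assertion case to contain essentially all the real content of the preservation argument.
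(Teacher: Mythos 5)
Your plan is the paper's plan: the proof of Theorem~\ref{thm:pres} in Appendix~\ref{app:fg_tpres} also inducts on the reduction derivation (with explicit congruence rules in place of \tyrulename{r-context}), discharges \tyrulename{r-call} via the Substitution lemma (Lemma~\ref{lem:subst}), and uses Lemma~\ref{lem:methsub} for receiver narrowing; and you have correctly isolated the assertion frame as the one delicate case. The gap is in how you propose to close that case: \tyrulename{t-stupid} cannot absorb it. Its premise is $\Gamma \vdash e : u_S$, so it applies only when the \emph{subject} has a structure type, whereas in the scenario you yourself set up the narrowed subject still has interface type. Concretely, declare \goinl{type Any interface \{\}}, \goinl{type HasFoo interface \{ Foo() Any \}}, empty structures \goinl{A} and \goinl{B}, a method \goinl{Foo} on \goinl{B} only, and methods \goinl{func (x A) GetHasFoo() HasFoo \{ return B\{\} \}} and \goinl{func (x A) GetAny() Any \{ return x.GetHasFoo() \}}. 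The body \goinl{A\{\}.GetAny().(A)} is well typed by \tyrulename{t-assert$_S$}, since its subject has type \goinl{Any} and \goinl{A} $\imp$ \goinl{Any} holds vacuously. One \tyrulename{r-call} step under the assertion frame yields \goinl{A\{\}.GetHasFoo().(A)}, whose subject now has type \goinl{HasFoo}; since \goinl{A} $\notimp$ \goinl{HasFoo}, rule \tyrulename{t-assert$_S$} fails, \tyrulename{t-assert$_I$} fails because \goinl{A} is not an interface, and \tyrulename{t-stupid} fails because \goinl{HasFoo} is not a structure type. The reduct is untypable, so this case refutes your replacement lemma --- and, with the rules exactly as printed, the statement itself.

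You should also know that the paper's own proof does not genuinely close this case: in its \RULENAME{rc-assert} case, the subsubcase where the stepped subject's type $t_0$ is an interface and the asserted type is a structure is dismissed by asserting $t_0 = t_I$ ``by value restriction''. That dichotomy --- a type either stays fixed or narrows from interface to structure --- is sound where the paper invokes it inside Lemma~\ref{lem:subst}, because only \emph{values} (which have structure types) are substituted there; but it does not transfer to reduction, because \tyrulename{r-call} replaces a call, whose type is the declared return type, by a method body whose static type may be a strictly smaller \emph{interface} (above, \goinl{GetAny}'s body has type \goinl{HasFoo}, not \goinl{Any}). So your instinct that essentially all the real content of preservation lives in the assertion case is exactly right, and you even wrote down the failing configuration $t_S \imp u_I$, $t_S \notimp u_I'$; but the fallback you appeal to does not exist in the system as given. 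Completing the proof requires either amending the rules (e.g., permitting a stupid assertion on an interface-typed subject when the implements check fails) or establishing an invariant strictly stronger than $t \imp u$, not merely executing the induction more carefully.
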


\begin{theorem}[Progress]
  If\/ $\emptyset \vdash d:u$ then
  either\/ $d$ is a value,
  $d \becomes e$ for some $e$,
  or\/ $d$ panics.
\end{theorem}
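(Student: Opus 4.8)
The plan is to proceed by structural induction on the expression $d$ (equivalently, by induction on the derivation of $\emptyset \vdash d : u$), with case analysis on the outermost form of $d$. The one structural fact I rely on throughout is that FG has no subsumption rule: typing is syntax-directed, so a structure literal is always typed by \tyrulename{t-literal}, and hence every value $v = t_S\br{\ov{v}}$ is typed at the structure type $t_S = \vtype(v)$. I also use, without further comment, that evaluation contexts compose, so that if a subexpression $d'$ of $d$ satisfies $d' \becomes e'$ or $d'$ panics, then plugging the corresponding context yields a reduction or a panic for $d$ itself. For the panic case this appeals directly to the definition $e = E[v.(t)]$ with $\vtype(v) \notimp t$, observing that the composite of the ambient context with the inner one is again an evaluation context.

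The congruence cases are routine. The variable case is vacuous, since the environment is empty. For a structure literal $t_S\br{\ov{e}}$, I apply the induction hypothesis to the fields from left to right: if some field steps or panics, so does $d$ via the context $t_S\br{\ov{v},E,\ov{e}}$; otherwise every field is a value and $d$ is itself a value. For a field selection $e.f$, I apply the induction hypothesis to $e$; if $e$ is a value, then \tyrulename{t-field} forces $e$ to have a structure type, so $e = t_S\br{\ov{v}}$ with $f$ among the fields of $t_S$, and \tyrulename{r-field} fires. For a type assertion $e.(t)$, I apply the induction hypothesis to $e$; if $e$ is a value $v$, then either $\vtype(v) \imp t$, so \tyrulename{r-assert} fires, or $\vtype(v) \notimp t$, in which case $d = v.(t)$ is itself a panic by definition.

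The crux is the method call $e.m(\ov{e})$. Applying the induction hypothesis to the receiver $e$, if it steps or panics I am done via the context $E.m(\ov{e})$; so suppose $e$ is a value $v$. An inner induction on the argument sequence, using the contexts $v.m(\ov{v},E,\ov{e})$, reduces to the case where every argument is also a value. It then remains to show that \tyrulename{r-call} can fire, i.e.\ that $\mbody(\vtype(v).m)$ is defined. This is the only place where typing, rather than mere syntax, is essential, and it is a canonical-forms-style argument: by the no-subsumption observation the receiver value is typed at $t_S = \vtype(v)$, so the hypothesis $m(\ov{x~u})~u \in \methods(t_S)$ of \tyrulename{t-call} holds. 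Since $\methods(t_S)$ is defined to be exactly the set of specifications of method declarations whose receiver type is $t_S$, there is a declaration $\func~(x~t_S)~mM~\br{\return~e'}$, so $\mbody(\vtype(v).m)$ is defined and \tyrulename{r-call} applies.

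The main obstacle is thus not any single hard argument but the discipline of keeping the method-call case honest: one must confirm that the type-level method lookup used in \tyrulename{t-call} coincides with the declaration-level lookup used in \tyrulename{r-call}. Because FG values carry their structure type exactly and $\methods(t_S)$ is read off directly from the declarations, this coincidence is immediate, and every remaining case is pure congruence reasoning through evaluation contexts.
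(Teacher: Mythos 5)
Your proof is correct and takes essentially the same route as the paper's: induction on the typing derivation, a canonical-forms observation (which you phrase as syntax-directedness/no-subsumption, where the paper states it as a separate Canonical Forms lemma), and congruence reasoning (which you do via composition of evaluation contexts and the main-text definition of panics, where the appendix uses explicit congruence rules and an equivalent inductive panic predicate). Your uniform handling of $v.(t)$ — reduce if $\vtype(v) \imp t$, panic otherwise — is in fact slightly more careful than the paper's \tyrulename{t-stupid} case, which jumps straight to a panic and omits the subcase where the assertion succeeds.
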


 \section{Featherweight Generic Go}
\label{sec:fgg}

\subsection{FGG Syntax}
\label{sub:fgg-syntax}

Figure~\ref{fig:fgg-syntax} presents FGG syntax, with the
differences from FG syntax highlighted.
We let $\alpha$ range over type parameters
and let $\tau, \sigma$
range over types. A type is either a type parameter
$\alpha$ or a named type $t(\ov{\tau})$.
We also let $\tau_S, \sigma_S$
range over structure types of the form $t_S(\ov{\tau})$;
$\tau_I, \sigma_I$
range over interface types of the form $t_I(\ov{\tau})$;
and, $\tau_J, \sigma_J$ range over
types that are either type parameters or interfaces $\tau_I$.

Expressions and declarations are updated to replace type names by
types, and method calls are updated to include type parameters:
a structure declaration is now $\struct~\br{\ov{f~\tau}}$
and a method call is now $e.m(\ov{\tau})(\ov{e})$,
a structure literal is now $\tau_S\br{\ov{e}}$,
and a type assertion is now $e.(\tau)$.

We let $\Phi, \Psi$ range over type formals,
which have the form $\type~\ov{\alpha~\tau_I}$,
pairing type parameters with their bounds, which
are interface types.  The bounds in type formals
are mutually recursive, i.e., each interface in $\ov{\tau_I}$
may refer to any parameter in $\ov{\alpha}$.
Type declarations $\type~t(\Phi)~T$, and
signatures $(\Psi)(\ov{x~\tau})~\tau$,
and method declarations $\func~(x~t_S(\Phi))~mM~\br{\return~e}$
now include type formals.

We let $\phi, \psi$ range over type actuals,
which are sequences of types.

\begin{Figure}
\begin{ruled}
\gray{
\begin{minipage}[t]{\textwidth}
\begin{tabular}[t]{ll}
Field name                      & $f$ \\
Method name                     & $m$ \\
Variable name                   & $x$ \\
Structure type name             & $t_S, u_S$ \\
Interface type name             & $t_I, u_I$ \\
Type name                       & $t, u$ ::= $t_S \mid t_I$ \\
\black{Type parameter}          & \black{$\alpha$} \\
Method signature                & $M$ ::= $\black{(\Psi)}(\ov{x~\black{\tau}})~\black{\tau}$ \\
Method specification            & $S$ ::= $mM$ \\
Type Literal                    & $T$ ::= \\
\quad Structure                 & \quad $\struct~\br{\ov{f~\black{\tau}}}$ \\
\quad Interface                 & \quad $\interface~\br{\ov{S}}$ \\
Declaration                     & $D$ ::= \\
\quad Type declaration          & \quad $\type~t\black{(\Phi)}~T$ \\
\quad Method declaration        & \quad $\func~(x~t_S\black{(\Phi)})~mM~\br{\return~e}$ \\
Program                         & $P$ ::= $\package~\main;~\ov{D}~\func~\main()~\br{\un=e}$
\end{tabular}
\end{minipage}
\hspace{-0.5\textwidth}
\begin{minipage}[t]{0.4\textwidth}
\begin{tabular}[t]{ll}
\black{Type}                    & \black{$\tau, \sigma$ ::=} \\
\quad \black{Type parameter}    & \quad \black{$\alpha$} \\
\quad \black{Named type}        & \quad \black{$t(\ov{\tau})$} \\
\black{Structure type}          & \black{$\tau_S,\sigma_S$ ::= $t_S(\ov{\tau})$} \\
\black{Interface type}          & \black{$\tau_I,\sigma_I$ ::= $t_I(\ov{\tau})$} \\
\black{Interface-like type}     & \black{$\tau_J,\sigma_J$ ::= $\alpha \mid \tau_I$} \\
\black{Type formal}             & \black{$\Phi$, $\Psi$ ::= $\type~\ov{\alpha~\tau_I}$} \\
\black{Type actual}             & \black{$\phi$, $\psi$ ::= $\ov{\tau}$} \\
Expression                      & $e$ ::= \\
\quad Variable                  & \quad $x$ \\
\quad Method call               & \quad $e.m\black{(\ov{\tau})}(\ov{e})$ \\
\quad Structure literal         & \quad $\black{\tau_S}\br{\ov{e}}$ \\
\quad Select                    & \quad $e.f$ \\
\quad Type assertion            & \quad $e.(\black{\tau})$
\end{tabular}
\end{minipage}
}
\end{ruled}
\caption{FGG syntax}
\label{fig:fgg-syntax}
\vspace{3ex}
\end{Figure}

\subsection{Auxiliary functions}
\label{sub:fgg-aux}

Figure~\ref{fig:fgg-aux} presents several auxiliary definitions.
As before, $\Gamma$ ranges over environments, which are now
sequences that pair variables with types, $\ov{x : \tau}$.
In addition, $\Delta$ ranges over type environments, which
are sequences that pair type parameters with bounds,
$\ov{\alpha : \tau_I}$. Type formals $\Phi, \Psi$ may
implicitly coerce to type environments.

We write $\eta = (\Phi \by \phi)$ for the
substitution of formals $\Phi$ by actuals $\phi$,
and $\eta = (\Phi \by_\Delta \phi)$ for the
partial function that also checks that $\phi$
respects the bounds imposed by $\Phi$.
If a partial function that is undefined appears
in the hypothesis of a rule, then the corresponding premise does not hold.
We write $\hat\Phi$ for the type parameters of $\Phi$.

Functions $\fields(\tau_S)$
and $\mbody(\tau_S.m(\psi))$
are updated to replace type names by types,
and for the latter to include method type arguments.
The definitions are adjusted to include type
formals which are instantiated appropriately.
Functions $\vtype$, $\tdecls$, and $\mdecls$
and predicate $\unique$
are updated to replace type names by types and to include
type formals.
Function $\bounds_\Delta(\tau)$
takes a type parameter to its bound,
and leaves its argument unchanged otherwise.

Function $\methods_\Delta(\tau)$ is updated
to accept a type environment and
to replace type names by types.
The definition is adjusted to include type
formals which are instantiated appropriately.
If $\methods$ is applied to a type parameter,
that parameter behaves the same as its bounding interface,
so $\methods_\Delta(\tau) = \methods_\Delta(\bounds_\Delta(\tau))$ for
all $\tau$.

\begin{Figure}
\begin{ruled}
\begin{mathpar}
\inferrule
  {(\type~\ov{\alpha~\tau_I}) = \Phi \\  
   \eta = (\ov{\alpha \by \tau})}
  {(\Phi \by \ov{\tau}) = \eta}   

\inferrule
  {(\type~\ov{\alpha~\tau_I}) = \Phi \\
   \eta = (\Phi \by \phi) \\
   \Delta \vdash (\ov{\alpha \imp \tau_I})[\eta]}
  {(\Phi \by_\Delta \phi) = \eta}  
\\
\inferrule
  {
    (\type~t_S(\Phi)~\struct~\br{\ov{f~\tau}}) \in \ov{D} \\
    \eta = (\Phi \by \phi)
  }
  {\fields(t_S(\phi)) = (\ov{f~\tau})[\eta]}
\\
\inferrule
  {
    (\func~(x~t_S(\Phi))~m(\Psi)(\ov{x~\tau})~\tau~\br{\return~e}) \in \ov{D} \\
    \theta = (\Phi, \Psi \by \phi, \psi)
  }
  {\mbody(t_S(\phi).m(\psi)) = (x:t_S(\phi),\ov{x:\tau}).e[\theta]}
\\
\inferrule
  {(\type~\ov{\alpha~\tau_I}) = \Phi}
  {\hat\Phi = \ov{\alpha}}

\inferrule
  {}
  {\vtype(\tau_S\br{\ov{v}}) = \tau_S}

\inferrule
  {\text{$mM_1, mM_2 \in \ov{S}$ implies $M_1 = M_2$}}
  {\unique(\ov{S})}
\\
\inferrule
  {}
  {\tdecls(\ov{D}) = \lst{t \mid (\type~t(\Phi)~T) \in \ov{D}}}

\inferrule
  {}
  {\mdecls(\ov{D}) = \lst{t_S.m \mid
      (\func~(x~t_S(\Phi))~mM~\br{\return~e}) \in \ov{D}}}
\\
\inferrule
 {(\alpha : \tau_I) \in \Delta}
 {\bounds_\Delta(\alpha) = \tau_I}

\inferrule
 {}
 {\bounds_\Delta(\tau_S) = \tau_S}

\inferrule
 {}
 {\bounds_\Delta(\tau_I) = \tau_I}
\\
\inferrule  
  {}
  {\methods_\Delta(t_S(\phi)) =
     \set{(mM)[\eta]
       \mid (\func~(x~t_S(\Phi))~mM~\br{\return~e}) \in \ov{D} \comma
         \eta = (\Phi \by_\Delta \phi)}}
\\
\inferrule
  {\type~t_I(\Phi)~\interface~\br{\ov{S}} \in \ov{D} \\
   \eta = (\Phi \by \phi)}
  {\methods_\Delta(t_I(\phi)) = \ov{S}[\eta]}

\inferrule
  {(\alpha : \tau_I) \in \Delta}
  {\methods_\Delta(\alpha) = \methods_\Delta(\tau_I)}
\end{mathpar}
\end{ruled}
\caption{FGG auxiliary functions}
\label{fig:fgg-aux}
\end{Figure}

\subsection{FGG Typing}
\label{sub:fgg-typing}

\begin{Figure}
\begin{ruled}
Implements
  \hfill \fbox{$\Delta \vdash \tau \imp \sigma$}
  \qquad \fbox{$\Phi \imp \Psi$}
\begin{mathpar}
\inferrule[<:-param]
  { ~ }
  { \Delta \vdash \alpha \imp \alpha }

\inferrule[<:$_S$]
  { ~ }
  { \Delta \vdash \tau_S \imp \tau_S }

\inferrule[<:$_I$]
  { \methods_\Delta(\tau) \supseteq \methods_\Delta(\tau_I) }
  { \Delta \vdash \tau \imp \tau_I }

\inferrule[<:-formals]
  {  
    \emptyset \vdash \ov{\tau_I \imp \sigma_I}
  }
  {(\type~\ov{\alpha~\tau_I}) \imp (\type~\ov{\alpha~\sigma_I})}
\end{mathpar}

Well-formed type and actuals
  \hfill \fbox{$\Delta \vdash \tau \ok$}
  \qquad \fbox{$\Delta \vdash \phi \ok$}
\begin{mathpar}
\inferrule[t-param]
  { (\alpha : \tau_I) \in \Delta }
  { \Delta \vdash \alpha \ok }

\inferrule[t-named]
  {
    \Delta \vdash \phi \ok
    \and
    (\type~t(\Phi)~T) \in \ov{D}
    \and
    \eta = (\Phi \by_\Delta \phi)
  }
  { \Delta \vdash t(\phi) \ok }

\inferrule[t-actual]
  {
    \ov{\tau} = \phi \\
    \Delta \vdash \ov{\tau \ok}
  }
  { \Delta \vdash \phi \ok }
\end{mathpar}

Well-formed type formals and nested formals
  \hfill \fbox{$\Phi \vdash \Psi \ok$}
  \qquad \fbox{$\Phi \stoup \Psi \ok~\Delta$}
\begin{mathpar}
\inferrule[t-formal]
  {
    (\type~\ov{\alpha~\tau_I}) = \Psi \\
    \distinct(\hat\Phi, \ov{\alpha}) \\
    \Phi, \Psi \vdash \ov{\tau_I \ok}
  }
  {\Phi \vdash \Psi \ok}

\inferrule[t-nested]
  {
    \emptyset \vdash \Phi \ok \\
    \Phi \vdash \Psi \ok \\
    \Delta = \Phi, \Psi
  }
  {\Phi \stoup \Psi \ok~\Delta}

\end{mathpar}

Well-formed method specifications and type literals
  \hfill \fbox{$\Phi \vdash S \ok$} \qquad \fbox{$\Phi \vdash T \ok$}
\begin{mathpar}
\inferrule[t-specification]
  {
    \Phi \stoup \Psi \ok~\Delta \\\\
    \distinct(\ov{x}) \\
    \Delta \vdash \ov{\tau \ok} \\
    \Delta \vdash \tau \ok
  }
  { \Phi \vdash m(\Psi)(\ov{x~\tau})~\tau \ok }

\inferrule[t-struct]
  {
    \distinct(\ov{f}) \\
    \Phi \vdash \ov{\tau \ok}
  }
  { \Phi \vdash \struct~\br{\ov{f~\tau}} \ok }

\inferrule[t-interface]
  {
    \unique(\ov{S}) \\
    \Phi \vdash \ov{S \ok}
  }
  { \Phi \vdash \interface~\br{\ov{S}} }
\end{mathpar}

Well-formed declarations \hfill \fbox{$D \ok$}
\begin{mathpar}
\inferrule[t-type]
 {
   \emptyset \vdash \Phi \ok \\
   \Phi \vdash T \ok
 }
 { \type~t(\Phi)~T \ok }

\inferrule[t-func]
 {
   \distinct(x, \ov{x}) \\
   (\type~t_S(\Phi')~T) \in \ov{D} \\
   \Phi \imp \Phi' \\
   \Phi \stoup \Psi \ok~\Delta \\\\
   \Delta \vdash \ov{\tau \ok} \\
   \Delta \vdash \sigma \ok \\
   \Delta \stoup
     x : t_S(\hat{\Phi}) \comma \ov{x : \tau} \vdash e : \tau \\
   \Delta \vdash \tau \imp \sigma
 }
 { \func~(x~t_S(\Phi))~m(\Psi)(\ov{x~\tau})~\sigma~\br{\return~e} \ok }
\end{mathpar}

Expressions \hfill \fbox{$\Delta \stoup \Gamma \vdash e : \tau<$}
\begin{mathpar}
\inferrule[t-var]
 {
   (x : \tau) \in \Gamma
 }
 { \Delta \stoup \Gamma \vdash x : \tau }

\inferrule[t-call]
 {
   (m(\Psi)(\ov{x~\sigma})~\sigma) \in \methods_\Delta(\tau)  \\\\
   \Delta \stoup \Gamma \vdash e : \tau \\
   \Delta \stoup \Gamma \vdash \ov{e : \tau} \\
   \eta = (\Psi \by_\Delta \psi) \\
   \Delta \vdash (\ov{\tau \imp \sigma})[\eta]
 }
 { \Delta \stoup \Gamma \vdash e.m(\psi)(\ov{e}) : \sigma[\eta] }
\\
\inferrule[t-literal]
 {
   \Delta \vdash \tau_S \ok
   \quad
   \Delta \stoup \Gamma \vdash \ov{e : \tau}
   \quad
   (\ov{f~\sigma}) = \fields(\tau_S)
   \quad
   \Delta \vdash \ov{\tau \imp \sigma}
 }
 { \Delta \stoup \Gamma \vdash \tau_S\br{\ov{e}} : \tau_S }

\inferrule[t-field]
 {
   \Delta \stoup \Gamma \vdash e : \tau_S 
   \quad
   (\ov{f~\tau}) = \fields(\tau_S)
 }
 { \Delta \stoup \Gamma \vdash e.f_i : \tau_i }
\\
\inferrule[t-assert$_I$]
 {
   \Delta \vdash \tau_J \ok \gap
   \Delta \stoup \Gamma \vdash e : \sigma_J
 }
 { \Delta \stoup \Gamma \vdash e.(\tau_J) : \tau_J }

\squeeze
\inferrule[t-assert$_S$]
 {
   \Delta \vdash \tau_S \ok \gap
   \Delta \stoup \Gamma \vdash e : \sigma_J \gap
   \tau_S \imp \bounds_\Delta(\sigma_J)
 }
 { \Delta \stoup \Gamma \vdash e.(\tau_S) : \tau_S }
\squeeze
 
\fbox{
\inferrule[t-stupid]
 {
  \Delta \vdash \tau  \ok \gap
  \Delta \stoup \Gamma \vdash e : \sigma_S
 }
 { \Delta \stoup \Gamma \vdash e.(\tau) : \tau }
}
\end{mathpar}

Programs  \hfill \fbox{$P \ok$}
\begin{mathpar}
  \inferrule[t-prog]
  {
    \distinct(\tdecls(\ov{D})) \\
    \distinct(\mdecls(\ov{D})) \\
    \ov{D \ok} \\
    \emptyset \stoup \emptyset \vdash e : \tau
  }
  { \package~\main;~\ov{D}~\func~\main()~\br{\un=e} \ok }
\end{mathpar}
\end{ruled}
\caption{FGG typing}
\label{fig:fgg-typing}
\end{Figure}

Figure~\ref{fig:fgg-typing} presents the FGG typing rules.
Judgement $\Delta \vdash \tau \imp \sigma$ now depends
on a type environment and relates types rather than type names.
The definition is adjusted so that a type
parameter implements its bound.  It still follows from
the definition that $\imp$ is reflexive and transitive.
Judgement $\Phi \imp \Psi$ compares the
corresponding bounds of two type formals under
an empty type environment.

Judgement $\Delta \vdash \tau \ok$ holds if a type is
well formed: all type parameters in it must be
declared in $\Delta$ and all named types must be
instantiated with type \ROUNDTWO{arguments} that satisfy the bounds of
the corresponding type parameters.
Judgement $\Delta \vdash \phi \ok$ holds if under
environment $\Delta$ all types in
$\phi$ are well formed.

Judgement $\Phi \vdash \Psi \ok$ holds if 
under type environment $\Phi$ the type formal $\Psi$
is well formed: all type parameters
bound by $\Phi$ and $\Psi$ are distinct, and all
the bounds in $\Psi$ are well formed in the type environment
that results from combining $\Phi$ and $\Psi$.  Note
this permits mutually recursive bounds in a type formal.
Judgement $\Phi \stoup \Psi \ok~\Delta$ holds if
a method declaration with receiver formals $\Phi$
and method formals $\Psi$ is well formed,
yielding type environment $\Delta$:
it requires that
$\Phi$ is well formed under the empty environment,
$\Psi$ is well formed under $\Phi$,
and $\Delta$ is the concatenation of $\Phi$ and $\Psi$.
Hence, the type formals of the receiver are in scope when
declaring the type formals of the method, but not vice
versa, and both are in scope for declaring the types of
the arguments and result.

Judgement $\Phi \vdash S \ok$ holds if under type environment $\Phi$
method specification $S$ is well formed: it requires
$\Phi, \Psi \ok~\Delta$ where $\Psi$ is the type formals of the method
specification, and the rest is similar to before but now under type
environment $\Delta$.
Judgement $\Phi \vdash T \ok$ holds
if under type formals $\Phi$ type literal $T$ is well formed,
and again is a straightforward adjustment of its earlier definition.

Judgement $D \ok$ holds if declaration $D$ is well formed.
The definitions are similar to previous definition.
For a type declaration,
its type formals must be well formed under the empty type environment,
and its type literal must be well formed under the environment given by
the type formals.
For a method declaration,
we require $\Phi \stoup \Psi \ok~\Delta$, 
where $\Phi$ are the type formals of the receiver
and $\Psi$ are the type formals of the method.
The receiver type must be declared with formals $\Phi'$,
where $\Phi \imp \Phi'$.
An alternative, simpler design would require $\Phi$ and $\Phi'$
to be identical; but that would rule out the solution to
the expression problem given in Section~\ref{sec:expression-fgg}.
The rest is a straightforward adjustment of its earlier definition.

Judgement $\Delta \stoup \Gamma \vdash e : \tau$ holds
if under type environment $\Delta$ and environment $\Gamma$
expression $e$ has type $\tau$.  The adjustments are
straightforward.  Method calls are
adjusted so that the type of the arguments and result
are instantiated by the method type arguments.
Type assertions are adjusted to take into account
that type names are replaced by types.  In method calls
and type assertions, type parameters are treated
as equivalent to the parameters' bound.

\subsection{FGG Reduction}
\label{sub:fgg-reduction}

\begin{Figure}
\begin{ruled}
\gray{
\begin{center}
Value \qquad $v$ ::= $\black{\tau_S}\br{\ov{v}}$
\end{center}

\hspace{-1em}
\begin{minipage}{0.5\textwidth}
\begin{tabular}{ll}
 Evaluation context             & $E$ ::= \\
 \quad Hole                     & \quad $\Hole$ \\
 \quad Method call receiver     & \quad $E.m\black{(\ov{\tau})}(\ov{e})$ \\
 \quad Method call arguments    & \quad $v.m\black{(\ov{\tau})}(\ov{v},E,\ov{e})$
\end{tabular}
\end{minipage}
\begin{minipage}{0.4\textwidth}
\begin{tabular}{ll}
  ~\\
 \quad Structure                & \quad $\black{\tau_S}\br{\ov{v},E,\ov{e}}$ \\
 \quad Select                   & \quad $E.f$ \\
 \quad Type assertion           & \quad $E.(\black{\tau})$
\end{tabular}
\end{minipage}
}

Reduction \hfill \fbox{$d \becomes e$}
\begin{gather*}
\inferrule[r-field]
 { (\ov{f~\tau}) = \fields(\tau_S) }
 { \tau_S\br{\ov{v}}.f_i \becomes v_i }
\quad
\inferrule[r-call]
 { (x : \tau_S ,\ov{x : \tau}).e = \mbody(\vtype(v).m(\psi)) }
 { v.m(\psi)(\ov{v}) \becomes e[x \by v, \ov{x \by v}] }
\quad
\inferrule[r-assert]
 {\emptyset\vdash  \vtype(v) \imp \tau }
 { v.(\tau) \becomes v }
\quad
\inferrule[r-context]
 { d \becomes e }
 { E[d] \becomes E[e] }
\end{gather*}
\end{ruled}
\caption{FGG reduction}
\label{fig:fgg-reduction}
\end{Figure}

Figure~\ref{fig:fgg-reduction} presents the FGG reduction rules.

The adjustment to values, the auxiliary
function $\vtype$, and to evaluation contexts are
simple, replacing type names by types and adding
type arguments as appropriate.
Judgement $d \becomes e$ holds if expression $d$
steps to expression $e$.  Again, the adjustments
are all simple.

\subsection{FGG properties}
\label{sub:fgg-prop}

The results of the previous section adapt straightforwardly.

\begin{restatable}[Well formed]{lemma}{fggwftype}
  If\/ $\Delta \stoup \Gamma \vdash e : \tau$
  then\/ $\Delta \vdash \tau \ok$.
\end{restatable}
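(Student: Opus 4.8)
The plan is to proceed by induction on the derivation of $\Delta \stoup \Gamma \vdash e : \tau$, mirroring the proof of the corresponding FG lemma but now tracking the type environment $\Delta$ and the substitutions introduced by the auxiliary functions. Throughout I rely on the standing invariant that $\Delta$ is a well-formed type environment and that every type in $\Gamma$ is well formed under $\Delta$. This invariant is genuinely needed — a variable $x : \alpha$ with $\alpha \notin \Delta$ would otherwise refute the statement — and it is maintained wherever $\Gamma$ is extended, namely in \tyrulename{t-func}, where the receiver type $t_S(\hat{\Phi})$ and each argument type are checked $\ok$ under $\Delta$. Four of the seven cases are then immediate: in \tyrulename{t-literal}, \tyrulename{t-assert$_I$}, \tyrulename{t-assert$_S$}, and \tyrulename{t-stupid} the conclusion $\Delta \vdash \tau \ok$ for the result type appears verbatim as a premise, and \tyrulename{t-var} is discharged directly by the invariant on $\Gamma$, since $(x : \tau) \in \Gamma$.

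The two remaining cases, \tyrulename{t-field} and \tyrulename{t-call}, are where the real work lies, because the result type is produced by an auxiliary function rather than stated as a premise. For \tyrulename{t-field}, the induction hypothesis applied to the receiver gives $\Delta \vdash \tau_S \ok$, so by \tyrulename{t-named} the struct's type formals $\Phi$ admit a bound-respecting instantiation $\eta = (\Phi \by_\Delta \phi)$; the declared field types are $\ok$ under $\Phi$ (from $D \ok$ via \tyrulename{t-struct}), and $\fields(\tau_S)$ returns them under exactly this $\eta$. For \tyrulename{t-call}, applying the induction hypothesis to the receiver subderivation again gives $\Delta \vdash \tau \ok$, which ensures the receiver's formals have been instantiated respecting their bounds — crucially, even though $\methods_\Delta$ uses a plain substitution for interfaces, well-formedness of $\tau$ recovers the bound-check. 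The specification $(m(\Psi)(\ov{x~\sigma})~\sigma) \in \methods_\Delta(\tau)$ thus has components well formed under $\Delta, \Psi$, and the result is $\sigma[\eta]$ with $\eta = (\Psi \by_\Delta \psi)$.

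I expect the main obstacle to be isolating and applying a \emph{substitution lemma for well-formedness}: if $\tau$ is well formed under some $\Delta', \Phi$ and $\eta = (\Phi \by_\Delta \phi)$ maps $\Phi$ to actuals respecting their bounds, then $\Delta \vdash \tau[\eta] \ok$. The delicate point is that the auxiliary functions already apply substitutions, so in \tyrulename{t-call} one must compose the receiver-formal substitution with the method-argument substitution and argue that the return type $\sigma$ drawn from the declaration — well formed under $\Phi, \Psi$ by \tyrulename{t-func} and \tyrulename{t-specification} — stays well formed through both stages. Keeping the two substitution stages and their separate bound-checks straight, and confirming that $\methods_\Delta$ yields specifications whose components are $\ok$ in the expected environment, is the principal bookkeeping challenge; the underlying inductions are otherwise routine.
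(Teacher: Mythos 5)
Your proposal is correct and takes essentially the same approach as the paper, whose entire proof of this lemma reads ``Straightforward induction on typing'': you carry out precisely that induction, filling in the details the paper leaves implicit. Your two added observations---the standing invariant that every type in $\Gamma$ is well formed under $\Delta$ (needed for \textsc{t-var}), and the appeal to a type-substitution-preserves-well-formedness lemma in the \textsc{t-field} and \textsc{t-call} cases---correspond to assumptions and auxiliary results (notably the paper's ``Type Substitution Preserves Well-Formedness'' lemma) that the paper states elsewhere in its appendix rather than in this proof.
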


The substitution lemma is adapted to take into account that
in a method declaration the type parameters of the receiver
are substituted before the type parameters of the method.

\begin{lemma}[Substitution] Let $\eta = (\ov{\alpha \by \tau})$ be a substitution.
  \begin{itemize}
  \item
    If\/ $\emptyset \vdash \ov{\alpha \imp \tau_I}[\eta]$
    and\/ $\ov{\alpha : \tau_I} \comma \Delta \vdash \tau \ok$
    then\/ $\Delta[\eta] \vdash \tau[\eta] \ok$.
  \item
    If\/ $\emptyset \vdash \ov{\alpha \imp \tau_I}[\eta]$
    and\/ $\ov{\alpha : \tau_I} \comma \Delta \stoup \Gamma \vdash e:\tau$
    then\/ $\Delta[\eta] \stoup \Gamma[\eta] \vdash e[\eta] : \tau[\eta]$.
  \item
    If\/ $\emptyset \stoup \emptyset \vdash \ov{v:\tau}$
    and\/ $\emptyset \stoup \ov{x:\sigma} \vdash e:\sigma$
    and\/ $\emptyset \vdash \ov{\tau \imp \sigma}$
    then\/ $\emptyset \stoup \emptyset \vdash e[\ov{x \by v}]:\tau$
    for some type $\tau$ with $\emptyset \vdash \tau \imp \sigma$.
  \end{itemize}
\end{lemma}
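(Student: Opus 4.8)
The plan is to treat the three items as two separate developments: the first two concern \emph{type} substitution $\eta=(\ov{\alpha\by\tau})$ and are proved together, whereas the third is the familiar \emph{term} substitution argument and is essentially independent of the first two. Before attacking the main statements I would establish the key auxiliary facts that type substitution commutes with the auxiliary functions and is compatible with the implements relation. Assuming $\emptyset\vdash\ov{\alpha\imp\tau_I}[\eta]$, I would show (i) that $\fields$ and $\methods$ commute with $\eta$ on named types, so $\fields(\tau_S)[\eta]=\fields(\tau_S[\eta])$ and $\methods_{\ov{\alpha:\tau_I},\Delta}(\sigma_I)[\eta]=\methods_{\Delta[\eta]}(\sigma_I[\eta])$ after $\alpha$-renaming each declaration's formals apart from $\ov{\alpha}$; and (ii) that substituting a bounded parameter can only \emph{enlarge} its method set, since $\alpha_i[\eta]=\tau_i$ implements $\tau_{I,i}[\eta]$ and hence $\methods_{\Delta[\eta]}(\tau_i)\supseteq\methods_{\Delta[\eta]}(\tau_{I,i}[\eta])$ by \tyrulename{<:$_I$} and weakening of the type environment. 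Combining (i) and (ii) yields the central lemma: if $\ov{\alpha:\tau_I},\Delta\vdash\sigma\imp\rho$ then $\Delta[\eta]\vdash\sigma[\eta]\imp\rho[\eta]$, because the method-set inclusion witnessing the original judgement is preserved (the supertype's methods commute, the subtype's can only grow).

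With these in hand, item (1) is a routine induction on the derivation of $\ov{\alpha:\tau_I},\Delta\vdash\tau\ok$. In \tyrulename{t-param} a substituted parameter $\alpha_i$ becomes $\tau_i$, whose well-formedness follows from the hypothesis $\emptyset\vdash\ov{\alpha\imp\tau_I}[\eta]$, while a parameter left in $\Delta$ remains well formed. In \tyrulename{t-named} I must re-check that the substituted actuals satisfy the bounds, i.e.\ that $(\Phi\by_{\Delta[\eta]}\phi[\eta])$ is defined, and this is exactly where the implements-preservation lemma is used. Item (2) is then an induction on $\ov{\alpha:\tau_I},\Delta\stoup\Gamma\vdash e:\tau$, appealing to item (1) for the well-formedness side conditions, to the commutation facts for \tyrulename{t-field} and \tyrulename{t-literal}, and to implements-preservation for the subtyping premises of \tyrulename{t-literal} and the assertion rules.

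I expect the main obstacle to be the \tyrulename{t-call} case of item (2), where the outer substitution $\eta$ must be interleaved with the method's own type-argument substitution. Writing $\eta'$ for the underlying map $(\hat\Psi\by\psi)$ of $(\Psi\by_\Delta\psi)$, the fiddly point is the commutation identity $\sigma[\eta'][\eta]=\sigma[\eta][\hat\Psi\by\psi[\eta]]$ together with the claim that definedness of $(\Psi\by_\Delta\psi)$ entails definedness of $(\Psi[\eta]\by_{\Delta[\eta]}\psi[\eta])$. Both require $\alpha$-renaming $\hat\Psi$ apart from $\ov{\alpha}$ so the two substitutions cannot capture one another, and then transporting the bound checks through $\eta$ via implements-preservation. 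Once this is settled the instantiated result type $\sigma[\eta']$ substitutes correctly and the method lookup $\methods_\Delta(\tau)$ commutes with $\eta$ as in the auxiliary facts.

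Finally, item (3) is proved by induction on $\emptyset\stoup\ov{x:\sigma}\vdash e:\sigma$, entirely within the empty type environment, so no type substitution is involved. The only delicate cases are those where replacing a variable by a value refines its type to a proper subtype: in \tyrulename{t-var} with $e=x_i$ the result $v_i$ has type $\tau_i$ with $\emptyset\vdash\tau_i\imp\sigma_i$; in \tyrulename{t-call} the receiver's type may shrink to a subtype, but since method signatures are \emph{nonvariant} and a subtype's method set includes the supertype's, the same specification is still available and the result type is unchanged, while the argument-subtyping premises are re-established by transitivity of $\imp$. The \tyrulename{t-literal}, \tyrulename{t-field}, and assertion cases propagate subtypes in the same way. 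This mirrors the corresponding FG substitution lemma and the classic Featherweight Java argument.
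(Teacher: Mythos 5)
Your proposal is correct and follows essentially the same route as the paper's appendix: a type-substitution-preserves-subtyping lemma (your ``implements-preservation'') proved first via commutation of $\methods$ with substitution, then inductions giving items (1) and (2) with the bounds re-check in \tyrulename{t-named} and the composed substitution in \tyrulename{t-call} handled exactly as the paper does, and item (3) proved independently by induction in the empty type environment using nonvariance of signatures, method-set inclusion under $\imp$, and transitivity. The only cosmetic difference is that you make explicit the method-set enlargement argument for substituted parameters, which the paper compresses into a ``by definition'' step, and you leave implicit the small fact (the paper's inversion lemma) that a subtype of a structure type is that structure itself, needed in the \tyrulename{t-field} case.
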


The remaining results are easy to adjust.

\begin{theorem}[Preservation]
  If\/ $\emptyset \stoup \emptyset \vdash d : \sigma$ and $d \becomes e$
  then\/ $\emptyset \stoup \emptyset \vdash e : \tau$ for some $\tau$
  with $\emptyset \vdash \tau \imp \sigma$.
\end{theorem}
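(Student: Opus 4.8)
The plan is to argue by induction on the derivation of the reduction $d \becomes e$, performing a case analysis on the final rule. The three computation rules \tyrulename{r-field}, \tyrulename{r-call}, and \tyrulename{r-assert} form the base cases, each handled by inverting the typing derivation of the redex $d$, while the congruence rule \tyrulename{r-context} is the inductive case. Throughout, the running term is closed, so every subterm is typed under empty environments $\emptyset \stoup \emptyset$ and no free type parameters ever arise; this keeps $\Delta$ empty and lets me invoke the closed instances of the Substitution lemma. The two workhorses are inversion of the expression typing rules and the Substitution lemma, with the Well formed lemma used to discharge the $\ok$ side conditions.

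For \tyrulename{r-field}, inverting \tyrulename{t-field} on $\tau_S\br{\ov v}.f_i : \sigma$ gives $(\ov{f~\tau}) = \fields(\tau_S)$ with $\sigma = \tau_i$, and inverting \tyrulename{t-literal} on the receiver shows each stored value $v_j$ has a type implementing $\tau_j$; taking $j = i$ yields the result with the required subtyping. For \tyrulename{r-assert}, the reduction premise $\emptyset \vdash \vtype(v) \imp \tau$, together with the fact that a value $v$ has exactly type $\vtype(v)$ by \tyrulename{t-literal}, shows $v$ has type $\vtype(v)$, which implements the asserted type $\tau = \sigma$; so I take the witness type to be $\vtype(v)$. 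Neither case uses the induction hypothesis.

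The crux is \tyrulename{r-call}, where $v.m(\psi)(\ov v) \becomes e'[\theta][x \by v, \ov{x \by v}]$, with $\mbody(\vtype(v).m(\psi))$ supplying the body $e'$ and $\theta = (\Phi, \Psi \by \phi, \psi)$, writing $\vtype(v) = t_S(\phi)$. Inverting \tyrulename{t-call} locates the method specification in $\methods_\emptyset(t_S(\phi))$; the key observation is that its defining side condition $\eta = (\Phi \by_\emptyset \phi)$ is precisely the check that the receiver actuals $\phi$ satisfy the method's \emph{covariant} receiver bounds $\Phi$, not merely the structure's declared bounds $\Phi'$. This is exactly the information the Substitution lemma needs. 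From \tyrulename{t-func} for the corresponding declaration I obtain that $e'$ types under $\Phi, \Psi$ at some $\tau' \imp \sigma$, where $\sigma$ is the declared return type; I first apply the type-substitution clause of the Substitution lemma with $\theta$ (licensed by $\phi$ satisfying $\Phi$ and $\psi$ satisfying $\Psi$) to obtain a closed typing of $e'[\theta]$ at $\tau'[\theta] \imp \sigma[\theta]$, then apply the value-substitution clause for the receiver and arguments, whose types implement the substituted parameter types by inversion of \tyrulename{t-call}. Since the result type computed by \tyrulename{t-call} is exactly $\sigma[\theta]$ (its receiver-parameter part being folded into the specification drawn from $\methods_\emptyset$, and its method-parameter part being the $\eta$ of \tyrulename{t-call}), the composite type implements the original, as required.

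Finally, \tyrulename{r-context} follows by a frame-by-frame analysis of the evaluation context $E$. Given the induction hypothesis that the reduced subterm drops to a subtype, I check that each frame --- receiver position $E.m(\ov\tau)(\ov e)$, argument position, structure field, selection, and assertion --- preserves well-typedness and does not raise the top-level type. This rests on two monotonicity facts: subtyping on interfaces is superset inclusion of method sets, so a method available at a type remains available with an unchanged signature at any subtype; and structure subtyping is mere equality, so field and selection types are unaffected. I expect the main obstacle to be the \tyrulename{r-call} case, specifically getting the bookkeeping of the two substitutions $\theta$ and $\eta$ to line up and verifying that the covariant-receiver side condition hidden inside $\methods_\Delta$ delivers exactly the bound-satisfaction hypotheses the Substitution lemma demands; the congruence case is routine once the monotonicity of subtyping through contexts is made explicit.
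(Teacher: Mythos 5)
Your proposal takes essentially the same route as the paper's proof: induction on the reduction derivation, with \tyrulename{r-call} discharged by inverting \tyrulename{t-call}, applying the type-substitution clause of the Substitution lemma for $\theta = (\Phi,\Psi \by \phi,\psi)$ (the paper packages this step as a separate method-body typing lemma), and then the value-substitution clause, while the context case is handled frame by frame exactly as the paper's explicit congruence rules do. The one detail your frame analysis leaves implicit is that when the subject of a type assertion reduces from interface type to struct type, re-typing the frame requires the boxed \tyrulename{t-stupid} rule --- precisely the case split the paper performs in its \textsc{rc-assert} case.
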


Expression $d$ \emph{panics} if
there exist evaluation context $E$, value $v$, and type $\tau$
such that $d = E[v.(\tau)]$ and $\emptyset \vdash \vtype(v) \notimp \tau$.

\begin{theorem}[Progress]
  If\/ $\emptyset \stoup \emptyset \vdash d:\sigma$ then
  either\/ $d$ is a value,
  $d \becomes e$ for some $e$,
  or\/ $d$ panics.
\end{theorem}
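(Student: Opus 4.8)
The plan is to argue by induction on the derivation of $\emptyset \stoup \emptyset \vdash d : \sigma$, performing a case analysis on the last typing rule applied (equivalently, structural induction on $d$). Rule \tyrulename{t-var} cannot be the last rule, since the term environment is empty and so no variable is in scope. For every compound expression the strategy is uniform: apply the induction hypothesis to an immediate subexpression; if that subexpression can take a step, then $d$ steps by \tyrulename{r-context} with the appropriate evaluation context $E$; if that subexpression panics, say it contains $v.(\tau)$ with $\emptyset \vdash \vtype(v) \notimp \tau$ as a redex under some context, then $d$ panics as well, since the panicking subterm sits under a larger context $E$; otherwise every relevant subexpression is already a value, and I fire the matching reduction rule.

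The straightforward cases are as follows. For \tyrulename{t-literal}, $d = \tau_S\br{\ov{e}}$ is a value once every $e_i$ is a value, and otherwise the leftmost non-value component makes progress. For \tyrulename{t-field}, $d = e.f_i$, once the receiver is a value it must, by inversion of the typing (values are introduced only by \tyrulename{t-literal}), have the form $\tau_S\br{\ov{v}}$ with $\vtype(v) = \tau_S$ and $f_i \in \fields(\tau_S)$, so \tyrulename{r-field} applies. For \tyrulename{t-assert$_I$}, \tyrulename{t-assert$_S$}, and \tyrulename{t-stupid}, once the operand is a value $v$ I inspect whether $\emptyset \vdash \vtype(v) \imp \tau$: if it holds, \tyrulename{r-assert} fires; if not, then $d = \Hole[v.(\tau)]$ exhibits $d$ as a panic. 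This is the sole origin of the third disjunct.

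The crux is \tyrulename{t-call}, $d = e.m(\psi)(\ov{e})$. After driving the receiver and then each argument to values via the induction hypothesis, I must fire \tyrulename{r-call}, which requires that $\mbody(\vtype(v).m(\psi))$ be defined. The key observation is that FGG typing is syntax-directed, with no subsumption rule, so a well-typed value is typed at exactly its own structure type; hence the receiver subderivation $\emptyset \vdash v : \tau$ forces $\tau = \vtype(v)$. The \tyrulename{t-call} premise then gives $m \in \methods_\emptyset(\vtype(v))$, and by the definition of $\methods_\Delta$ on a structure type this witnesses an actual method declaration $\func~(x~t_S(\Phi))~m(\Psi)(\ov{x~\tau})~\sigma~\br{\return~e}$ in $\ov{D}$ together with a defined substitution $(\Phi \by_\emptyset \phi)$; I would package this as a canonical-forms / method-existence lemma. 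It then remains to check that the plain substitution $\theta = (\Phi, \Psi \by \phi, \psi)$ used by $\mbody$ is defined, which follows because the arity of $\phi$ matches $\Phi$ (from $(\Phi \by_\emptyset \phi)$ being defined) and the arity of $\psi$ matches $\Psi$ (from the premise that $(\Psi \by_\emptyset \psi)$ is defined in \tyrulename{t-call}).

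I expect this method-existence step to be the main obstacle, since it is where the structural typing discipline must be reconciled with the operational availability of a method body at the right instantiation. The delicate point is aligning the \emph{checked} substitution $\by_\Delta$ appearing in \tyrulename{t-call} with the \emph{plain} substitution used by $\mbody$, and ensuring that the receiver formals $\Phi$ and method formals $\Psi$ (which may be mutually constrained, and whose receiver bounds are matched covariantly by \tyrulename{t-func}) receive consistent actuals $\phi, \psi$. Once the canonical-forms lemma is in place, everything else is a routine adaptation of the FG progress argument, so the bulk of the work is concentrated in this single lemma.
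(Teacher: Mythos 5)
Your proposal is correct and follows essentially the same route as the paper: the paper proves FGG Progress by remarking that it follows the lines of the FG Progress proof, which is exactly your argument---induction on the typing derivation with a canonical-forms lemma (a well-typed value is a structure literal typed at its own structure type), failed assertions on values as the sole source of panics, propagation of steps and panics through evaluation contexts, and method-body existence for \tyrulename{t-call} extracted from the definition of $\methods$ on structure types. Your observation about aligning the checked substitution $\by_\Delta$ with the plain substitution in $\mbody$ is a fair reading of the only FGG-specific bookkeeping, and it goes through by the arity argument you give.
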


 \section{Monomorphisation}
\label{sec:mono}
The monomorphisation process consists of two phases.
In the first phase, a set of types and method instantiations are
collected from an FGG program.
In the second phase, an FGG program is translated to its FG equivalent
following the instance set computed in the first phase.

\begin{Figure}
\begin{ruledgo}
\begin{golsttwo}
type Any interface {}
type Int struct {}

type Event interface {
  Process(type b Any)(y b) Int
}
type UIEvent struct {}
func (x UIEvent) Process(type b Any)(y b) Int {
  return Int{} 
}

type Dispatcher struct {}
func (x Dispatcher) Dispatch(y Event) Int {
  return y.Process(Int)(Int{})
}

func main() {
  _ = Dispatcher{}.Dispatch(UIEvent{}) 
}(* \newpage *)
type Top struct {}
type Int struct {}
type Event interface {
  Process<Int>(y Int) Int
  Process<1>() Top 
}
type UIEvent struct {}
func (x UIEvent) Process<Int>(y Int) Int {
  return Int{} }
func (x UIEvent) Process<1>() Top {
  return Top{} }
type Dispatcher struct {}
func (x Dispatcher) Dispatch(y Event) Int {
  return y.Process<Int>(Int{}) }
func (x Dispatcher) Dispatch<1>() Top {
  return Top{} }
func main() {
_ = Dispatcher{}.Dispatch(UIEvent{}) 
}
\end{golsttwo}
\end{ruledgo}
\caption{Dispatcher example: FGG source (left) and FG translation (right)}
\label{fig:dispatch-fgg}
\end{Figure}

\ROUNDTWO{
Throughout this section, we illustrate the monomorphisation process
with the FGG program in Figure~\ref{fig:dispatch-fgg} (left).
This program contains a \goinl{Dispatcher} structure which processes
abstract \goinl{Event}s.
A  \goinl{Dispatcher} processes events, and events are objects that
can be processed.
For the sake of space, the program in Figure~\ref{fig:dispatch-fgg}
includes only one implementation of \goinl{Events}, i.e.,
\goinl{UIEvents}, but other implementations may be easily added
following the same pattern.  }

\subsection{Collecting type and method instances}
Let $\omega, \Omega$ range over instance sets, which contain elements
of type $\tau$ or pairs of a type with a method and its type
arguments, $\tau.m(\psi)$.
In  Figure~\ref{fig:fgg-omega-new}
we define a judgement $P \yields \Omega$ which computes the set of
instances of types and methods required to correctly monomorphise an
FGG program.

Judgement $\Delta \stoup \Gamma \vdash e \yields \omega$ holds if
$\omega$ is the instance set for expression $e$, given environments
$\Delta$ and $\Gamma$.
In the rules for variables, structure literals, field selections and type
assertions, we simply collect the occurrences of type instances and
proceed inductively. 
The rule for method calls additionally records the instantiation of the
method $\tau.m(\psi)$ where $\tau$ is the type of its receiver.
In the rules for structure literals and method calls, we assume that
sequences of instance sets, e.g., $\ov{\omega}$, coerce to a set
consisting of the union of the elements of the sequence.

The instance set of a program $P$ is the limit of function $G$ applied
to the instance set of its body.
$G_\Delta(\omega)$ is defined via four auxiliary functions that compute the
type and method instances required by $\omega$. 
It returns all type and method instantiations that are required
to monomorphise declarations ($\Fclo$, $\Mclo$) and to preserve the
$\imp$ relation ($\Iclo$, $\Sclo$).

$\Fclo$ finds all the type instances that occur in the declarations of
structures, while $\Mclo$ finds all the type instances that occur in
the declarations of method instances.

$\Iclo$ finds all the method signature instances that are required to
preserve the $\imp$ relation over interfaces.
$\Sclo$ finds all the type and method sets required by method calls,
inter-procedurally.
For each method instance $\tau.m(\psi)$ in $\omega$ it finds all
instance sets of all implementations of method $m$, following the
$\imp$ relation.
Intuitively, $\Iclo$ and $\Sclo$ are used to guarantee that
if $\tau \imp \sigma$, then the monomorphised version of $\tau$ also
implements the monomorphised version of $\sigma$.

\begin{Figure}
  \begin{ruled}
  Instance sets \hfill \fbox{$\omega, \Omega$}
  \begin{mathpar}
    \omega, \Omega
    \text{ range over sets containing elements of the form }
    \tau \text{ or } \tau.m(\psi).
  \end{mathpar}
  
  Expressions and programs \hfill \fbox{$ \Delta \stoup \Gamma \vdash e \yields \omega$} \quad \fbox{$P \yields \Omega$}
  \begin{mathpar}
    \inferrule[I-var]
    { 
    }
    {
     \Delta \stoup \Gamma \vdash x \yields \emptyset
    }

    \inferrule[I-literal]
    {
     \Delta \stoup \Gamma \vdash \ov{e \yields \omega}
    }
    {
      \Delta \stoup\Gamma \vdash \tau_S\br{\ov{e}} \yields \set{\tau_S} \cup \ov{\omega}
    }

    \inferrule[I-field]
    {
      \Delta \stoup\Gamma \vdash e \yields \omega
    }
    {
    \Delta \stoup  \Gamma \vdash e.f_i \yields \omega
    }

    \inferrule[I-assert]
    {
    \Delta \stoup  \Gamma \vdash e \yields \omega
    }
    {
      \Delta \stoup\Gamma \vdash e.(\tau) \yields \set{\tau} \cup \omega
    }

    \inferrule[I-call]
    {
      \Delta  \stoup \Gamma \vdash e : \tau \quad
      \Delta \stoup\Gamma \vdash e \yields \omega \quad
     \Delta \stoup \Gamma \vdash \ov{e \yields \omega}
    } 
    {
      \Delta \stoup \Gamma \vdash e.m(\psi)(\ov{e}) \yields
      \set{\tau \comma \tau.m(\psi)} \cup \omega \cup \ov{\omega}
    }

    \inferrule[I-prog]
    {
      \emptyset \stoup \emptyset \vdash e \yields \omega \\
      \Omega = \lim_{n \rightarrow \infty}  G_{\emptyset}^{n}(\omega)      
    }    
    {
      \package~\main;~\ov{D}~\func~\main()~\br{\un=e} \yields \Omega
    }
  \end{mathpar}

  Auxiliary functions  \hfill \ \vphantom{\fbox{$\Gamma$}} \begin{mathpar}
    G_\Delta(\omega)  =  \omega \cup                  
    \FExtensionD{\omega}{\Delta} \cup
    \MExtensionD{\omega}{\Delta} \cup
    \IExtensionD{\omega}{\Delta} \cup
    \SExtensionD{\omega}{\Delta}
    \\
    \FExtensionD{\omega}{\Delta} = \bigcup\left\{\Strut
      \ov{\tau}
      \;\middle|\;
      \tau_S \in \omega \comma
      (\ov{f~\tau}) = \fields(\tau_S)                  
    \right\}
    \\
    \MExtensionD{\omega}{\Delta}  = \bigcup\left\{\Strut
      \ov{\sigma}[\eta]  \cup
      \set{\sigma[\eta] }
      \;\middle|\;
      \tau.m(\psi) \in \omega \comma
      (m(\Psi)(\ov{x~\sigma})~\sigma) \in \methods_\Delta (\tau) \comma
      \eta = (\Psi \by \psi)
    \right\}
    \\
    \IExtensionD{\omega}{\Delta} = \left\{\Strut
      \tau'_I.m(\psi)
      \;\middle|\;
      \tau_I.m(\psi) \in \omega \comma
      \tau'_I \in \omega \comma
      \Delta \vdash \tau'_I \imp \tau_I  \right\}
    \\
    \SExtensionD{\omega}{\Delta} = \bigcup\left\{\Strut
      \set{\tau_S.m(\psi)} \cup \Omega
      \middle|
      \begin{array}{l}
        \tau.m(\psi) \in \omega \comma
        \tau_S \in \omega \comma
        \Delta \vdash \tau_S \imp \tau 
        \comma
\Delta \stoup x : \tau_S,\ov{x : \sigma} \vdash e \yields \Omega
        \\
        (x: \tau_S,\ov{x: \sigma}).e = \mbody(\tau_S.m(\psi))
      \end{array}
    \right\}
\end{mathpar}
  \end{ruled}
  \caption{Computing instance sets}
  \label{fig:fgg-omega-new}
\end{Figure}

\begin{Figure}
  \begin{ruled}
  Types and methods \hfill
  \fbox{$\eta \vdash \tau \mapsto t^\dagger$} \quad
  \fbox{$\eta \vdash t(\Phi) \mapsto t^\dagger$} \quad
  \fbox{$\eta \vdash m(\psi) \mapsto m^\dagger$} \quad
  \fbox{$\eta \vdash m(\Psi) \mapsto m^\dagger$}
  \begin{mathpar}
    \inferrule[m-type]
    { t^\dagger = \monoid{\tau[\eta]} }
    { \eta \vdash \tau \mapsto  t^\dagger }

    \inferrule[m-tformal]
    { t^\dagger = \monoid{t(\ov{\alpha}[\eta])} }
    { \eta \vdash t(\type~\ov{\alpha~\tau_I}) \mapsto t^\dagger }

    \inferrule[m-method]
    { m^\dagger = \monoid{m(\psi[\eta])} } 
    { \eta \vdash m(\psi) \mapsto m^\dagger }

    \inferrule[m-mformal]
    { m^\dagger = \monoid{m(\ov{\alpha}[\eta])} }
    { \eta \vdash m(\type~\ov{\alpha~\tau_I}) \mapsto m^\dagger }
  \end{mathpar}

  Expression \hfill
  \fbox{$\eta \vdash e \mapsto e^\dagger$}
  \vspace{-2ex}
  \begin{mathpar}
    \inferrule[m-var]
    { ~ }
    {  \eta \vdash x \mapsto x }
    \qquad
    \inferrule[m-value]
    {
      \eta \vdash \tau_S \mapsto t_S^\dagger
      \\
      \eta \vdash \ov{e \mapsto e^\dagger}
    }
    {
      \eta \vdash
      \tau_S\{\ov{e}\} \mapsto t^\dagger_S\{\ov{e^\dagger}\}
    }
    \qquad
    \inferrule[m-select]
    { \eta \vdash e \mapsto e^\dagger }
    { \eta \vdash e.f \mapsto e^\dagger.f }
    \\
    \inferrule[m-call]
    {
      \eta \vdash e \mapsto e^\dagger
      \\
      \eta \vdash m(\psi) \mapsto m^\dagger
      \\
      \eta \vdash \ov{e \mapsto e^\dagger}
    }
    {
      \eta \vdash
      e.m(\psi)(\ov{e}) \mapsto e^\dagger.m^\dagger(\ov{e^\dagger})
    }
    \qquad
    \inferrule[m-assert]
    {
      \eta \vdash e \mapsto e^\dagger
      \\
      \eta \vdash \tau \mapsto t^\dagger
    }
    { \eta \vdash e.(\tau) \mapsto e^\dagger.(t^\dagger) }
  \end{mathpar}

  Method signature \hfill
  \fbox{$\eta \vdash M \mapsto M^\dagger$}
  \quad
    \fbox{$\eta \vdash S \mapsto S^\dagger$}  
  \vspace{-2ex}
  \begin{mathpar}
    \inferrule[m-sig]
    {
      \eta \vdash \ov{\tau \mapsto t^\dagger}
      \\
      \eta \vdash \tau \mapsto u^\dagger
    }
    {
      \eta \vdash
      (\ov{x~\tau})~\tau \mapsto (\ov{x~t^\dagger})~u^\dagger
    } 

      \inferrule[m-id]
      {
        \mkdummy(mM[\eta]) = m^\ast
      }
      {
        \eta \vdash mM  \mapsto m^\ast()~{\dummytype}
      }
  \end{mathpar}

  Type literal \hfill
  \fbox{$\eta \stoup \mu \vdash T \mapsto T^\dagger$}
  \begin{mathpar}
    \inferrule[m-struct]
    {
      \eta \vdash \ov{\tau \mapsto t^\dagger}
    }
    {
      \eta \stoup \mu \vdash
      \struct\br{\ov{f~\tau}} \mapsto
      \struct\br{\ov{f~t^\dagger}}
    }
    \qquad
    \inferrule[m-interface] 
    {
      \eta \stoup \mu \vdash \ov{S \mapsto \calS}
    }
    { \eta \stoup \mu \vdash
      \interface\br{\ov{S}} \mapsto
      \interface\br{\bigcup \ov{\calS}}
    }
  \end{mathpar}
  \end{ruled}
  \caption{Monomorphisation of FGG into FG --- name mapping}
  \label{fig:rcver-mono-subs}
\end{Figure}

\begin{Figure}
  \begin{ruled}
  Interface specification \hfill
  \fbox{$\eta \stoup \mu \vdash S \mapsto \calS$}
  \begin{mathpar}
    \inferrule[m-spec]
    {
      \calS = \left\{\; \Strut
        m^\dagger N^\dagger
        \;\middle|\; 
        m(\psi) \in \mu \comma
        \theta = (\eta, \Psi \by \psi) \comma
        \theta \vdash m(\Psi) \mapsto m^\dagger \comma
        \theta \vdash N \mapsto N^\dagger
        \;\right\}
      \and
      \eta \vdash m(\Psi)N \mapsto S^\dagger
    }
    {
      \eta \stoup \mu \vdash
      m(\Psi)N\mapsto \calS~
      \cup~
      \{  S^\dagger \}
    }
  \end{mathpar}

  Declaration \hfill \fbox{$\Omega \vdash D \mapsto \mathcal{D}$}
  \begin{mathpar}
    \inferrule[m-type]
    {
      \calD = \left\{\; \Strut
        \type~t^\dagger~T^\dagger
        \;\middle|\;
        t(\phi) \in \Omega \comma
        \eta = (\Phi \by \phi) \comma
        \mu = \set{ m(\psi) \mid t(\phi).m(\psi) \in \Omega } \comma
        \eta \stoup \mu \vdash T \mapsto T^\dagger
        \;\right\}
    }
    { 
      \Omega \vdash \type~t(\Phi)~T \mapsto \calD
    }
    \\
    \inferrule[m-func]
    {
      \calD = \left\{\;
        {
            \func~(x~t_S^\dagger)~m^\dagger N^\dagger~\br{\return~e^\dagger}
        }
        \;\middle|
        { \begin{array}{l}
            t_S(\phi).m(\psi) \in \Omega \comma
            \theta = (\Phi \by \phi, \Psi \by \psi) \comma
            \theta \vdash t_S(\Phi) \mapsto t_S^\dagger
            \\
            \theta \vdash m(\Psi) \mapsto m^\dagger \comma
            \theta \vdash N \mapsto N^\dagger \comma
            \theta \vdash  e \mapsto e^\dagger
          \end{array}
        }
      \right\}
      \\\\
        \calD' = \left\{\;
          {
            \func~(x~t_S^\dagger)~S^\dagger~\br{\return~\dummytype\br{}}
          }
          \;\middle| \;
          {
            t_S(\phi) \in \Omega \comma
            \eta = (\Phi \by \phi) \comma
            \eta \vdash t_S(\Phi) \mapsto t_S^\dagger \comma
            \eta \vdash m(\Psi)N \mapsto S^\dagger
          }
        \right\}
    }
    {
      \Omega \vdash
      \func~(x~t_S(\Phi))~m(\Psi)N~\br{\return~e} \mapsto \calD \cup\calD'
    }
  \end{mathpar}

  Program \hfill \fbox{$\vdash P \mapsto P^\dagger$}
  \begin{mathpar}
    \inferrule[m-program]
    {
      \package~\main;~\ov{D}~\func~\main()~\br{\un=e} \yields \Omega
      \\\\
      \Omega \vdash \ov{D \mapsto \calD}
      \and
      \ov{D^\dagger} = \{ \type~\dummytype~\struct~\br{} \}  \cup  \bigcup\ov{\calD}~
      \and
      \emptyset \vdash e \mapsto e^\dagger
    }
    {
      \vdash
      \package~\main;~\ov{D}~\func~\main()~\br{\un=e} \mapsto
      \package~\main;~\ov{D^\dagger}~
     \func~\main()~\br{\un=e^\dagger}
    }
  \end{mathpar}
  \end{ruled}
  \caption{Monomorphisation of FGG into FG --- instance generation, where $N$
    ranges over $(\ov{x~\tau})~\tau$}
  \label{fig:rcver-mono-new}
\end{Figure}

\ROUNDTWO{
Consider the example in Figure~\ref{fig:dispatch-fgg} (left). For the
top-level method, we have
\begin{center}
  $
\emptyset \stoup \emptyset \vdash
\text{\goinl{Dispatcher\{\}.Dispatch(UIEvent\{\})}}
\yields
\{
\text{\goinl{Dispatcher}}, \ \text{\goinl{Dispatcher.Dispatch()}}
\}
$
\end{center}
Posing
$\omega_0 = \{ \text{\goinl{Dispatcher}}, \
\text{\goinl{Dispatcher.Dispatch()}}
\} $, we compute the limit of $G$ applied to $\omega_0$.
We have $G_\emptyset( \omega_0 ) = \omega_0 \cup \{
\text{\goinl{Event}}, \
\text{\goinl{Int}}, \
\text{\goinl{Dispatcher.Dispatch()}}, \
\text{\goinl{Event}}, \
\text{\goinl{Event.Process(Int)}}
\}$
where
\goinl{Event} and \goinl{Int} are obtained from
$\MExtensionD{\omega_0}{\emptyset}$, while
\goinl{Dispatcher.Dispatch()},
\goinl{Event}, and
\goinl{Event.Process(Int)} are obtained from
$\SExtensionD{\omega_0}{\emptyset}$. 

The limit of function $G$ is reached after two iterations, i.e.,
$ \lim_{n \rightarrow \infty}  G_{\emptyset}^{n}(\omega_0)  =
G_\emptyset( G_\emptyset (\omega_0)) = G_\emptyset (\omega_0) \cup
\{ \text{\goinl{UIEvent.Process(Int)}} \}$.
Note that \goinl{UIEvent.Process(Int)} is obtained via $\Sclo$ using the
fact that \goinl{UIEvent} $\imp$ \goinl{Event} holds.
}

\subsection{Monomorphisation judgement}

We now define a judgement $\vdash P \mapsto P^\dagger$ where $P$ is a
program in FGG, and $P^\dagger$ is a corresponding monomorphised
program in FG.
This judgement is in turn defined by judgements for each of the
syntactic categories in FGG.  Some of the judgements are also
parameterised by instance sets (ranged over by $\Omega$),
substitutions that map type parameters to ground types (ranged over by
$\eta$), or method instance sets (ranged over by $\mu$).

Figure~\ref{fig:rcver-mono-subs} formalises how we recursively apply a
consistent renaming to generate FG code.
To monomorphise types, type formals, method names, and method formals,
given a substitution $\eta$, we assume a map from closed types to
identifiers.
For instance, if \id{f} is a type with two \ROUNDTWO{arguments}, and \id{g} and
\id{h} are types with no \ROUNDTWO{arguments}, then closed type
$\id{f}(\id{g}(),\id{h}())$ might correspond to the identifier
``$\id{f<g<>,h<>}\id{>}$'' assuming ``$\id{<,>}$'' are allowed as
letters in identifiers.  We write $t^\dagger = \monoid{\tau}$ to
compute the identifier $t^\dagger$ that corresponds to closed type
$\tau$. Similarly, we write $m^\dagger = \monoid{m(\psi)}$ to compute
the identifier $m^\dagger$ that corresponds to closed method
instantiation $m(\psi)$.

To monomorphise an expression given a substitution $\eta$, we
recursively monomorphise all the types and expressions contained
within this expression.
We proceed similarly to monomorphise method signatures in Rule
\textsc{m-sig}.

Rule \textsc{m-id} is used to generate a dummy method signature that
represents uniquely the alpha-equivalence class of its FGG
counterpart. The signature specifies no parameters and the return type
$\dummytype$.
It is necessary to generate such methods to ensure that if a type does
not implement another in an FGG program, then this is also the case in its
monomorphised counterpart.
We assume that $\mkdummy(mM_1) = \mkdummy(mM_2) $ for all $M_1 = M_2$,
using the same notion of equality as in $\unique$ and $\imp$.

To monomorphise a structure given a substitution, we recursively
monomorphise all the types contained within its field declarations.
To monomorphise an interface, we recursively monomorphise each of its
signatures and flatten the result in a single sequence of declarations.

Figure~\ref{fig:rcver-mono-new} formalises how declarations are
generated from $\Omega$. Here we let $N$ range over
$(\ov{x~\tau})~\tau$.

To monomorphise an interface specification we pass \emph{two}
environments. 
One is a substitution from type parameters to ground types $\eta$ and
the other is a set of method instances $\mu$, i.e., a set of entries
of type $m(\psi)$.
For each entry in $\mu$, we compute a new substitution $\theta$ which
extends $\eta$ and is used to generate a monomorphised instance of the
corresponding signature.
In addition, we generate dummy signature $S^\dagger$ that uniquely
identifies the FGG signature.
Each parameterised method may produce zero or more
monomorphised instances, plus a dummy method signature.

To monomorphise the declaration of a type $t$ given an instance set,
for each instance of $t$, we generate a substitution $\eta$ and
a method instance set $\mu$.
Then we recursively produce a monomorphised declaration for each
generated pair of $\eta$ and $\mu$. Note that each type declaration
may produce zero or more monomorphised declarations.

To monomorphise a method declaration given an instance set, we compute
a substitution $\theta$ for each method instance in $\Omega$.
Then we produce a monomorphised version of a method for each of its
instantiations. Note that each method declaration may produce zero or
more monomorphised declarations.
In addition, for each type instance and each of its methods, we
generate a dummy method that returns an instance of $\dummytype$.

To monomorphise a program $P$, we compute its instance set $\Omega$
then monomorphise its declaration and body given with respect to
$\Omega$. We additionally add the declaration of the empty structure
$\dummytype$.

\ROUNDTWO{
The FGG program in Figure~\ref{fig:dispatch-fgg} (left) is translated to
the FG program on the righthand side of the figure.
The translation starts with rule \tyrulename{m-program} where $\Omega
= \{$\goinl{Int},
\goinl{Event}, 
\goinl{Event.Process(Int)}, 
\goinl{UIEvent}, 
\goinl{UIEvent.Process(Int)},
\goinl{Dispatcher},
\goinl{Dispatcher.Dispatch()}$\}$.
Rule \tyrulename{m-type} is used to generate the instances of types
\goinl{Int}, \goinl{Event}, \goinl{UIEvent}, and
\goinl{Dispatcher}.
Rule \tyrulename{m-func} is used to generate the methods
\goinl{UIEvent.Process(Int)} and \goinl{Dispatcher.Dispatch()}, as
well as their dummy counterparts \goinl{Process<1>()} and
\goinl{Dispatch<1>()}.
Rule \tyrulename{m-spec} is used to generate the method signatures of
interface \goinl{Event}, i.e., \goinl{Event.Process(Int)}  and its dummy
counterpart \goinl{Process<1>()}.
}

\subsection{Ruling out non-monomorphisable programs}
\label{sub:polyrec}

It is not possible to monomorphise all FGG programs since programs
that contain polymorphic recursive methods may produce infinitely many
type instances.
To address this issue, we propose a predicate
$P \notmonomorphisable$ which conservatively identifies programs that
can produce infinitely many type instances.
Programs for which this predicate does \emph{not} hold are guaranteed to be
monomorphisable. \ROUNDTWO{Note that there exist
programs which produce finitely many type instances but for which
the predicate does hold, e.g., programs containing a polymorphic
recursive method that is never called.}

The predicate is formally defined in
Figure~\ref{fig:fgg-polyrec-check}, notably reusing our instance
generation procedures.
$P \notmonomorphisable$ holds if $D \notmonomorphisable$ holds for at
least one of its method declarations.
$D \notmonomorphisable$ holds if it is possible to find an element in
its instance set, inductively constructed using function $\Sclo$, such
that the occurs check is satisfied.
The occurs check holds when a type variable appears under a type
constructor in a type instance or method call (in the same position it
occupies in the type or method formal).
We write $\FV{\tau}$ for the set of type parameters occurring in $\tau$.

\begin{Figure}
  \begin{ruled}
  Not monomorphisable\hfill 
  \fbox{$P \notmonomorphisable$} \qquad \fbox{$D\notmonomorphisable$} 
  \begin{mathpar}
    \inferrule[mn-program]
    {
      D_i \notmonomorphisable
    }
    { 
      \package~\main;~\ov{D}~\func~\main()~\br{\un=e} \notmonomorphisable 
    }

    \inferrule[mn-func]
    {
      \Delta = 
      {\Phi, \Psi} 
      \and
      \text{exists}~n \in \naturals~\text{and}~
      (t_S(\phi).m(\psi)) \in G_{\Delta}^{n}(\{ 
      t_S(\hat\Phi),
      t_S(\hat\Phi).m(\hat{\Psi})
      \}) ~\text{s.t.}~
      \Phi \prec  \phi  ~\text{or}~ \Psi \prec \psi
}
    {  
      \func~(x~t_S(\Phi))~m(\Psi)(\ov{y~\tau})~\tau~\br{\return~e}  \notmonomorphisable 
    }
  \end{mathpar}

  Occurs check \hfill
  \fbox{$\Phi \prec \phi $} \qquad \fbox{$\alpha \prec \tau$}
  \begin{mathpar}
    \inferrule
    {
      \phi = \ov{\tau}
      \and
      \alpha_i \prec \tau_i 
    }
    { (\type~\ov{\alpha~\tau_I}) \prec \phi}

    \inferrule
    {
      \tau \neq \alpha
      \and
      \alpha \in \FV{\tau}
    }
    { \alpha \prec \tau } 
  \end{mathpar}
\end{ruled}
  \caption{Monomorphisability check}
  \label{fig:fgg-polyrec-check}
\end{Figure}

\subsection{Monomorphisation properties}
\label{sub:mono-prop}

\ROUNDTWO{
Not all programs are monomorphisable, however we can decide whether a
program is monomorphisable with the $P \notmonomorphisable$
  predicate}.

\begin{theorem}[Decidability]\label{thm:polyrec-decide}
  If\/ $P \ok$ then it is decidable whether or not\/
  $P \notmonomorphisable$ holds.
\end{theorem}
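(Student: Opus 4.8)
The plan is to reduce the whole-program predicate to finitely many per-declaration checks and then show each such check can be carried out by a \emph{bounded} search. By rule \textsc{mn-program}, $P \notmonomorphisable$ holds iff $D \notmonomorphisable$ holds for one of the finitely many method declarations $D$ of $P$, so it suffices to decide $D \notmonomorphisable$ for a fixed $D = \func~(x~t_S(\Phi))~m(\Psi)(\ov{y~\tau})~\tau~\br{\return~e}$. Unfolding \textsc{mn-func}, this asks whether there exist $n \in \naturals$ and an instance $t_S(\phi).m(\psi) \in G_\Delta^n(\omega_0)$, with $\Delta = \Phi,\Psi$ and seed $\omega_0 = \{t_S(\hat\Phi), t_S(\hat\Phi).m(\hat\Psi)\}$, satisfying the occurs check $\Phi \prec \phi$ or $\Psi \prec \psi$. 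Everything inside a single stage is effectively computable: each $G_\Delta^n(\omega_0)$ is a finite set, since $G_\Delta$ only consults the finitely many declarations through $\fields$, $\mbody$, and $\methods_\Delta$, and through the subtyping relation $\imp$, which is decidable because (given $P \ok$) the sets $\methods_\Delta(\tau)$ are finite and computable and $\tau \imp \sigma$ reduces to inclusion of method specifications up to the decidable signature equality used by $\unique$; the occurs check itself is a purely syntactic test on $\FV{\cdot}$. Hence the only missing ingredient is a computable upper bound $N$ on the stage at which a witness, if one exists, must appear: given such $N$, the procedure computes $G_\Delta^N(\omega_0)$ and tests the occurs check on every $t_S.m$-instance it contains (earlier witnesses persist, since $G_\Delta$ is increasing).

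The delicate point is that we cannot in general iterate $G_\Delta$ to a fixed point. A \emph{different}, polymorphically recursive method reachable from the seed may force $\bigcup_n G_\Delta^n(\omega_0)$ to be infinite even when $D$ itself is monomorphisable (the seed can reach calls that nest a closed type indefinitely, producing infinitely many instances of another method, none of them a $t_S.m$-instance with a nested parameter). So the search must be truncated at a precomputed $N$ and must nonetheless be \emph{complete}. To obtain $N$ I would abstract each generated instance to a finite \emph{nesting profile}. First observe that every element of $\bigcup_n G_\Delta^n(\omega_0)$ is a type or method instance over the finite vocabulary of names declared in $P$ whose free type variables lie in the fixed finite set $V = \hat\Phi \cup \hat\Psi$, because the substitutions arising in $\Fclo$, $\Mclo$, $\Iclo$, and $\Sclo$ only instantiate the formals of other declarations and never introduce fresh variables. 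For the occurs check only the depth --- zero versus positive --- at which each $\alpha \in V$ sits in each argument slot matters, and closed subterms (however deep) are irrelevant. I would therefore map each instance to the profile recording, for the finitely many (name, argument-slot) positions, whether each $\alpha \in V$ occurs there at depth $0$ or at depth ${\ge}\,1$. The transition on profiles induced by $G_\Delta$ is determined by the finitely many syntactic call sites of $P$, so the reachable profiles form a finite transition system; a witness for $D \notmonomorphisable$ is exactly a reachable profile with a seed variable at positive depth in a $t_S.m$-instance. Reachability in a finite system is decidable and, if it holds, is witnessed within a number of steps bounded by the number of profiles, so taking $N$ to be that (computable) count yields the required bound.

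The main obstacle is justifying this truncation --- equivalently, proving the \emph{faithfulness} of the profile abstraction in both directions: that a profile passes from ``all seed variables flat'' to ``some seed variable nested'' under $G_\Delta$ precisely when the underlying concrete instance first nests a seed variable, and hence that the absence of nesting through stage $N$ genuinely implies its absence at every stage. I would establish this by a pumping argument on the chain of $\Sclo$-steps generating a hypothetical minimal violating instance: if the chain is longer than the number of profiles, some profile repeats with no intervening nesting, so the intervening loop can be excised to produce an earlier violating instance, contradicting minimality. Care is needed because $G_\Delta$ bundles several closure operators and because the abstraction must correctly track how nesting depths \emph{compose} along a call chain --- in particular that a depth-$0$ occurrence of a seed variable can only become nested by passing through a call site that syntactically nests it, never spontaneously. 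Verifying this compositional behaviour is the technical heart of the proof; once the bound $N$ is justified, decidability follows immediately from the effectiveness observations of the first paragraph.
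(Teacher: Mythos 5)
Your reduction to per-declaration checks and your observation that each stage $G_\Delta^n(\omega_0)$ is finite and computable are both fine, and you have correctly spotted the real difficulty: the iteration seeded by a declaration $D$ can grow forever even when $D\notmonomorphisable$ does not hold, because the seed can reach some \emph{other} polymorphically recursive method. But your way out --- a per-declaration bound $N$ obtained from a nesting-profile abstraction --- has a genuine gap, and it is exactly the step you defer to ``the technical heart''. The closure operators $\Sclo$ and $\Iclo$ fire only under side conditions of the form $\Delta \vdash \tau_S \imp \tau$ and $\Delta \vdash \tau'_I \imp \tau_I$, and these depend on the concrete type arguments, \emph{including the closed subterms your profiles discard}: $u_S(\mathtt{int})$ and $u_S(\mathtt{bool})$ have the same profile, yet $u_S(\mathtt{int})$ may implement $t_I(\mathtt{int})$ while $u_S(\mathtt{bool})$ does not, since instantiated method signatures must match exactly. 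So the transition relation on profiles is \emph{not} ``determined by the finitely many syntactic call sites'': a profile does not determine which $\Sclo$/$\Iclo$ steps are enabled. This breaks the pumping argument at its core --- when a profile repeats, you cannot replay the later generation steps from the earlier, concretely different, instance, because those steps may be guarded by $\imp$ facts that fail for it. A further structural problem is that $G_\Delta$-derivations are not chains: $\Sclo$ and $\Iclo$ each consume \emph{two} elements of the current set, so ``excising a loop'' is not a well-defined operation on the derivation structure without substantial extra work.

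For contrast, the paper's proof needs no per-declaration bound at all. It dovetails the iterations for \emph{all} declarations simultaneously, stopping when either (i) every declaration's iteration reaches a fixpoint, or (ii) some declaration's occurs check fires. If $P\notmonomorphisable$ holds, (ii) occurs at a finite stage by definition; if it does not hold, Lemma~\ref{lem:finite-gen} shows every iteration reaches a finite fixpoint --- proved not by an abstraction but by factoring every generated instance as $\rho_0[\eta_1 \cdots \eta_n]$ through ``good'' (non-nesting) substitutions drawn from the finitely many syntactic call sites (Lemma~\ref{lem:types-subs}), together with a well-founded order counting solved variables. In your problematic scenario (a monomorphisable $D$ reaching a polymorphically recursive $D'$), the dovetail terminates via $D'$'s own occurs check; the algorithm never needs to decide the individual disjunct ``$D\notmonomorphisable$'', only the disjunction, which is precisely what lets it sidestep the bound you are trying to construct.
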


\ROUNDTWO{
For all well-typed programs $P$ such that $P \notmonomorphisable$ does
not hold, their instance sets are finite.}
\begin{theorem}[Monomorphisable]\label{thm:nopolyrec-mono}
  If\/ $P \ok$
  and\/ $P \notmonomorphisable$ doesn't hold
  then\/ $P \yields \Omega$
  with\/ $\Omega$ finite.
\end{theorem}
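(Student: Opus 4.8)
The plan is to prove finiteness of $\Omega$ by bounding the syntactic \emph{size} (number of type-constructor applications) of the closed types occurring anywhere in $\Omega$, whether as bare instances $\tau$ or inside method instances $\tau.m(\psi)$. Since $G_\emptyset$ is inflationary and monotone on sets of instances, the limit in rule \textsc{I-prog} is just the union $\Omega = \bigcup_{n} G_\emptyset^{n}(\omega)$, where $\omega$ is the finite instance set of the program body. The program declares only finitely many type names and method names, so for any fixed bound $X$ there are only finitely many closed types of size at most $X$, and finitely many method instances over them. Hence it suffices to exhibit a single $X$, depending only on $P$, such that every type appearing in $\Omega$ has size at most $X$.

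Next I would classify the four closure operators by whether they can enlarge types. $\Iclo$ only copies a method instance onto an implementing interface and never alters the underlying types, so it cannot contribute to growth. $\Fclo$ and $\Mclo$ each perform a single-level expansion: $\Fclo$ replaces a struct instance by its substituted field types and $\Mclo$ replaces a method instance by the substituted types of its signature. Because FG(G) forbids a structure from recursing on itself, directly or mutually (the finite-layout restriction, cf.\ the rejected \goinl{Bad} declaration), the struct-to-struct field relation is acyclic; consequently repeated application of $\Fclo$ and $\Mclo$ saturates after a number of steps bounded by the number of declarations, producing types of size bounded by a constant determined by the syntactic types of $P$. The only operator that can \emph{compound} sizes across iterations is $\Sclo$, which follows a method call into the callee's body via $\mbody$ and substitutes the call's type arguments; this is exactly the locus of polymorphic recursion.

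The heart of the argument is a \emph{genericity} (substitution) lemma for the generation relation: whenever one instance is added from another by $\Fclo$, $\Mclo$, $\Iclo$ or $\Sclo$, the corresponding step read back through the relevant declaration's formals is a step of that declaration's \emph{generic} closure, and instantiating the generic step by the ground substitution $(\hat\Phi,\hat\Psi \by \phi,\psi)$ recovers the concrete one. This follows by induction on the number of $G$-iterations, using that $\mbody$, $\fields$, $\methods_\Delta$ and the $\imp$ checks all commute with substitution on closed arguments. Granting this lemma I would argue by contradiction: suppose $\Omega$ were infinite. By the reduction above, the sizes of its types are then unbounded, so $\bigcup_n G_\emptyset^{n}(\omega)$ contains an infinite chain of instances of strictly increasing size, each generated from an earlier one by some closure. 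Since $\Iclo,\Fclo,\Mclo$ cannot sustain unbounded growth, infinitely many links are $\Sclo$-steps, and (finitely many declarations, K\"onig's lemma) the chain passes through a single method declaration infinitely often, its instantiation strictly enlarged on each return. By the genericity lemma this enlargement already appears in that method's generic closure, i.e.\ some $t_S(\phi).m(\psi) \in G_\Delta^{n}(\{t_S(\hat\Phi), t_S(\hat\Phi).m(\hat\Psi)\})$ has a formal occurring strictly under a constructor in $\phi$ or $\psi$; that is precisely $\Phi \prec \phi$ or $\Psi \prec \psi$, so rule \textsc{mn-func} fires and $P \notmonomorphisable$ holds, contradicting the hypothesis.

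I expect the genericity lemma and the extraction of the growing cycle to be the main obstacles. Commuting the entire instance-generation machinery with substitution requires care because $\Sclo$ ranges over all $\tau_S \imp \tau$, so one must check that the set of implementing structures is stable under grounding and that the bodies collected for a concrete instance are exactly the ground images of those collected generically. The cycle-extraction step must turn ``unbounded size'' into ``the \emph{same} parameter reappears strictly nested after one trip around a call cycle,'' matching the same-position requirement baked into the occurs check $\alpha_i \prec \tau_i$; making this precise will likely need a measure tracking, for each parameter of each declaration, whether it flows verbatim, is discarded, or is enlarged along each $\Sclo$-edge, together with an argument that only an enlarging cycle can defeat the size bound. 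Mutual recursion is handled uniformly, since the generic closure of any method follows calls transitively and hence detects enlargement of that method's own instances regardless of which intermediate methods are traversed.
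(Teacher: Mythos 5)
Your high-level reduction is sound in outline (finiteness of $\Omega$ follows from a size bound; $\Iclo$ never enlarges; $\Fclo$ and $\Mclo$ are one-level expansions tamed by the assumption that structures are not recursive, an assumption the paper's own proof also invokes; only $\Sclo$ compounds), and your contrapositive-plus-K\"onig skeleton is a genuinely different route from the paper's. But the two points you defer as ``obstacles'' are not technicalities --- they are the theorem --- and the first is wrong as you state it. You claim that a strictly enlarging return to the same declaration yields an element of its generic closure with ``a formal occurring strictly under a constructor in $\phi$ or $\psi$; that is precisely $\Phi \prec \phi$ or $\Psi \prec \psi$.'' This conflates ``some formal is nested somewhere in the actuals'' with the \emph{positional} check $\alpha_i \prec \tau_i$. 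Consider a parameterless-field struct $t_S$ with two parameters whose method $m$ recurses at receiver type $t_S(\beta, \mathtt{Box}(\alpha))$: one trip around the cycle strictly enlarges the instance and nests $\alpha$ under a constructor, but in the \emph{other} position, so $\Phi \prec (\beta, \mathtt{Box}(\alpha))$ fails; the check only fires at the second iterate $t_S(\mathtt{Box}(\alpha), \mathtt{Box}(\beta))$. So you still owe the core argument that any size-increasing call cycle, iterated inside the closure, eventually nests some variable in \emph{its own} position (a variable-flow-graph argument: a cycle containing a strict edge, traversed a multiple of its length). That is exactly what the paper's well-founded ordering on ``solved variables'' of composed good substitutions accomplishes in Lemma~\ref{lem:finite-gen}; without it your contradiction is never reached.

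Second, your genericity lemma is false as stated, because the implements relation gating $\Sclo$ (and $\Iclo$) is not stable under grounding --- the very point you say ``one must check.'' If the body of $t_S(\Phi).m(\Psi)$ calls $m'$ on a receiver whose declared type is a formal $\alpha \in \hat\Phi$, the generic closure contains $\alpha.m'(\psi)$ and stops there: no structure type $\tau_S$ satisfies $\Delta \vdash \tau_S \imp \alpha$, since the interface rule requires an interface type on the right and reflexivity requires syntactic equality. After grounding $\alpha$ to a struct $u_S$, the concrete closure \emph{does} take the $\Sclo$ step into $u_S$'s method body. Hence concrete generation steps are not in general ground images of generic steps of the enclosing declaration, and the read-back breaks precisely at $\Sclo$, the one operator your argument depends on. The paper avoids needing any such per-declaration commutation at the program level: Theorem~\ref{thm:nopolyrec-mono} follows by treating the main body as a declaration with empty formals, so that its ``generic'' closure literally is $\Omega$, and the cross-declaration bookkeeping is done instead by Lemma~\ref{lem:types-subs}, which represents every closure element as a syntactic base instance under a composition of substitutions, every prefix of which is good. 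Repairing your plan would require an analogous composition-of-substitutions invariant threaded through the whole call chain, not a per-declaration read-back lemma.
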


Monomorphisation preserves typing, \ROUNDTWO{i.e., the translation of
  a well-typed FGG program is a well-typed FG program.
}
\begin{theorem}[Sound]
  \label{thm:main:monotsound}
  If \/ $P \ok$
  and\/ $\vdash P \mapsto P^\dagger$
  then $P^\dagger \ok$.
\end{theorem}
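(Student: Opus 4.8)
The plan is to verify the four premises of FG's \tyrulename{t-prog} rule for $P^\dagger$: that its type declarations are distinct, its method declarations are distinct, each declaration is well-formed, and its body is well-typed in the empty environment. Since the translation generates the FG declarations $\ov{D^\dagger}$ by instantiating each FGG declaration at every relevant element of the instance set $\Omega$ (rules \tyrulename{m-type} and \tyrulename{m-func}), I would organise the argument around a central lemma stating that monomorphisation preserves typing. Concretely, for each grounding substitution $\eta = (\Phi \by \phi)$ arising from an instance $t(\phi) \in \Omega$ (and analogously for method instances $t_S(\phi).m(\psi) \in \Omega$), I would show: if an FGG construct is well-formed or well-typed under the type environment $\Delta$ induced by its formals, then its monomorphised image under $\eta$ is well-formed or well-typed in FG. This lemma is proved by mutual induction over the FGG typing derivations, following the syntactic categories of Figures~\ref{fig:rcver-mono-subs} and~\ref{fig:rcver-mono-new}.

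The crucial companion to the typing lemma is a preservation-of-implements statement: whenever $\Delta \vdash \tau \imp \sigma$ holds in FGG and $\eta$ grounds $\Delta$ consistently with $\Omega$, then $\monoid{\tau[\eta]} \imp \monoid{\sigma[\eta]}$ holds in FG. The key to both lemmas is that $\Omega$ is a fixed point of $G_\emptyset$ (it equals $\lim_{n \rightarrow \infty} G_\emptyset^n(\omega)$, which is a finite set since the translation is assumed to be defined), hence $\Omega$ is closed under $\Fclo$, $\Mclo$, $\Iclo$, and $\Sclo$. Closure under $\Fclo$ and $\Mclo$ guarantees that every type mentioned in a monomorphised field declaration, method signature, or method body is itself a member of $\Omega$, so by \tyrulename{m-type} it receives an FG type declaration and \tyrulename{t-named} is satisfied. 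Closure under $\Iclo$ and $\Sclo$ is what makes the implements relation go through: for an interface inclusion it ensures that the required method-signature instances are generated on both sides, and for a structure implementing an interface it ensures---via the $\mbody$ lookup inside $\Sclo$---that all needed concrete method instances of the receiver are emitted, so that $\methods(t_S^\dagger) \supseteq \methods(t_I^\dagger)$ as demanded by the FG interface-implementation rule.

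For the individual declarations I would then discharge the premises of FG's \tyrulename{t-type} and \tyrulename{t-func} rules directly from the induction hypotheses. Well-typedness of each monomorphised body $e^\dagger$ under its monomorphised signature follows from the expression case of the typing lemma, using that the covariant-receiver premise $\Phi \imp \Phi'$ of FGG's \tyrulename{t-func} is respected by the instances collected in $\Sclo$. Distinctness of the generated FG type and method names reduces to injectivity of the identifier map $\monoid{\cdot}$ on closed types and method instantiations, together with the assumption that $\mkdummy$ is injective on signature equivalence classes and lands in a namespace disjoint from the real method names; distinct elements of $\Omega$ thus yield distinct declarations. The dummy declarations produced by $\calD'$ in \tyrulename{m-func} and by \tyrulename{m-spec} are each trivially well-typed, since their body is simply $\dummytype\br{}$.

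The main obstacle is the preservation-of-implements lemma in the presence of covariant receivers and the dummy methods. Getting the bookkeeping right is delicate: because receiver formals may be strictly stronger than the declared formals ($\Phi \imp \Phi'$), the method instances demanded by a use site must be matched against instances generated from possibly different bounds, and one must check that $\Sclo$ has in fact collected every such instance before concluding that the FG method set is large enough. Symmetrically, the dummy methods must make the FG implements relation neither too permissive nor too restrictive: without them a parametric interface whose real methods are never instantiated would monomorphise to the empty interface and be implemented by every type, breaking the faithfulness of type assertions; with them, one must verify that their uniquely-hashed signatures reproduce exactly the FGG $\imp$ relation. Establishing that $\Iclo$ and $\Sclo$ together close the instance set under precisely the obligations imposed by every occurrence of $\imp$ in the FGG typing derivation is the heart of the proof.
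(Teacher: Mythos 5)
Your proposal is correct and takes essentially the same route as the paper's proof: the paper likewise assembles the result from the premises of FG's \tyrulename{t-prog} rule, combining a subtyping-preservation lemma (whose proof hinges, exactly as you identify, on closure of $\Omega$ under $\Iclo$ and $\Sclo$ and on the dummy methods) with lemmas showing that monomorphisation preserves well-formedness of type declarations, typing of expressions, and well-formedness of method declarations, and dispatches distinctness of the generated names via the identifier map.
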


\ROUNDTWO{
The reduction behaviour of well-typed FGG programs is preserved and
reflected by their monomorphised counterpart (see
Figure~\ref{fig:bisim}).}
Write $\vdash d \mapsto d^\dagger$ to abbreviate
$\emptyset \stoup \emptyset \vdash d \mapsto d^\dagger$.

\begin{theorem}[Bisimulation]
  \label{thm:main:correspondence}
  Let \/ $P \ok$
  and\/ $\vdash P \mapsto P^\dagger$
  with\/ $P = \ov{D} \prog d$
  and\/ $P^\dagger = \ov{D^\dagger} \prog d^\dagger$.
  Then:
  \begin{itemize}
  \item[(a)]
  if\/ $d \becomes e$ 
  then there exists\/ $e^\dagger$
  such that\/ $d^\dagger \becomes e^\dagger$
  and\/ $\vdash e \mapsto e^\dagger$;
  \item[(b)]
  if\/ $d^\dagger \becomes e^\dagger$
  then there exists\/ $e$ 
  such that\/ $d \becomes e$ 
  and\/ $\vdash  e \mapsto e^\dagger$.
  \end{itemize}
\end{theorem}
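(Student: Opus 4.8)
The plan is to prove both (a) and (b) by case analysis on the reduction rule that fires, with \textsc{r-context} as the single inductive case, and to reduce every case to a small family of commutation lemmas describing how the monomorphisation relation $\eta \vdash e \mapsto e^\dagger$ interacts with the reduction machinery. The supporting lemmas I would establish first are: (i) a \emph{context lemma}, stating that monomorphisation is compositional, so $\vdash E[d_0] \mapsto d^\dagger$ holds iff $d^\dagger = E^\dagger[d_0^\dagger]$ for some FG context $E^\dagger$ with $\vdash d_0 \mapsto d_0^\dagger$, and that plugging a related redex back into $E^\dagger$ yields the monomorphisation of $E[e_0]$; since \textsc{m-value} sends FGG values to FG values bijectively, the decomposition of a term into evaluation-context-plus-redex is preserved in both directions, which discharges \textsc{r-context} and aligns the two evaluation strategies; (ii) a \emph{substitution-absorption lemma}, namely that because \textsc{m-type} sets $t^\dagger = \monoid{\tau[\eta]}$, we have $\eta \vdash e \mapsto e^\dagger$ iff $\emptyset \vdash e[\eta] \mapsto e^\dagger$ for ground $\eta$; and (iii) a \emph{value-substitution lemma}, namely that since values carry no free type parameters, $\emptyset \vdash e[\ov{x \by v}] \mapsto e^\dagger[\ov{x \by v^\dagger}]$ whenever $\emptyset \vdash e \mapsto e^\dagger$ and $\emptyset \vdash \ov{v \mapsto v^\dagger}$.

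With these in hand I would treat the three non-context rules for direction (a). For \textsc{r-field}, \textsc{m-struct} preserves field names, so the FG literal carries the corresponding field and steps to the translation of $v_i$. For \textsc{r-call}, inverting $\vdash v.m(\psi)(\ov e)\mapsto d^\dagger$ gives $d^\dagger = v^\dagger.\monoid{m(\psi)}(\ov{v^\dagger})$, and because the instance $t_S(\phi).m(\psi)$ lies in $\Omega$ (guaranteed by the $\Sclo$/$\Mclo$ operators folded into $G$) rule \textsc{m-func} has emitted exactly the method whose body is the declaration body monomorphised under $\theta = (\Phi \by \phi, \Psi \by \psi)$; the absorption lemma discharges $\theta$ (identifying the FGG reduct $\mbody$ body $e_0 = e_{\mathrm{decl}}[\theta]$ with the FG body $e_{\mathrm{decl}}^\dagger$) and the value-substitution lemma then shows the FG reduct $e_{\mathrm{decl}}^\dagger[\ov{x \by v^\dagger}]$ is precisely the monomorphisation of $e_0[\ov{x \by v}]$. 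For \textsc{r-assert}, the FG step fires iff $\vtype(v^\dagger) \imp \monoid{\tau}$, so I need a bidirectional \emph{implements lemma}: $\vtype(v) \imp \tau$ in FGG iff $\monoid{\vtype(v)} \imp \monoid{\tau}$ in FG. Direction (b) is the mirror image: one inverts $\vdash d \mapsto d^\dagger$ to recover an FGG redex of the same syntactic form, observes that $d^\dagger$ never mentions a dummy name $\mkdummy(\cdot)$ because these never occur in the image of a source expression (so no spurious FG step exists), and replays the same lemmas — using the reflection direction of the implements lemma for the assertion side condition — to pull the FG step back.

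The hard part will be the bidirectional implements lemma in the \textsc{r-assert} case, and specifically its \emph{reflection} direction. Preservation of $\imp$ (FGG implies FG) is comparatively routine and is already needed for Soundness (Theorem~\ref{thm:main:monotsound}), so I would reuse it. Reflection requires that monomorphisation neither conflates nor erases the method sets that distinguish interfaces: a structure must implement $\monoid{\tau_I}$ in FG \emph{exactly} when it implemented $\tau_I$ in FGG. This is the purpose of the dummy specifications generated by \textsc{m-spec} and \textsc{m-id} together with the $\Iclo$-closure of $\Omega$ — they inject a signature uniquely identifying each source method specification, so that two interfaces whose concrete instantiated methods happen to coincide in FG are still kept apart when their FGG origins differ, and so that an interface with no live instances is not vacuously implemented by every type. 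Verifying that $\Omega$ contains every dummy and every real instance needed for the method-set inclusions to match on both sides is exactly the delicate bookkeeping flagged in the introduction, and is where I would concentrate the proof effort.
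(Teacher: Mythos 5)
Your proposal is correct and follows essentially the same route as the paper's proof: case analysis on the reduction rule with \textsc{r-context} as the inductive case (the paper handles your ``context lemma'' inline via five subcases), your substitution-absorption and value-substitution lemmas are the paper's Lemma~\ref{lem:app_substcommute} and the Compositionality Lemma~\ref{lem:mono_opcorr_comp}, your method-body correspondence is Lemmas~\ref{lem:opcorr_methbodies} and~\ref{lem:opcorr_methbodies_completeness}, and your bidirectional implements lemma splits into the paper's Lemma~\ref{lem:mono_subt} (preservation) and Theorem~\ref{thm:mono_refl_subt} (reflection). You also correctly locate the crux where the paper does: the reflection direction of the implements relation, which hinges on the dummy signatures from \textsc{m-id}/\textsc{m-spec} together with the $\Iclo$/$\Sclo$ closure of $\Omega$.
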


 \section{Implementation}
\label{sec:implementation}

We have made available a prototype implementation,\footnote{\url{https://github.com/rhu1/fgg/}}
which contains FG and FGG type checkers and interpreters,
and a monomorphiser from FGG to FG
(including the $\notmonomorphisable*$ check).
We wrote the implementation in Go to facilitate interactions with the
Go designers and community.
Our interface includes some extensions to our tiny syntax for FG and FGG,
such as direct support for interface embedding, some primitive data
types, and minimal I/O.
Versions of all the presented examples have been tested using the implementation.

We took advantage of the implementation to perform extensive testing.
FG evaluation results are compared to those using the official Go
compiler, and the FG and FGG interpreters support dynamic checking of
preservation and progress.
To test monomorphisation, we added the
test of bisimulation depicted in Figure~\ref{fig:bisim}:
given a well-typed FGG program
we generate its FG monomorphisation;
we step the FGG and FG terms and
confirm that the new FGG term monomorphises to the new FG term;
and repeat until termination.

Besides handcrafted examples and stress tests, we 
used NEAT~\citep{Claessen15,Duregard16} to lazily enumerate all well-typed
programs (up to some size relative to the total number of occurrences
of method and type symbols) from a subset of FGG (similar to
\textsc{SmallCheck}~\citep{Runciman08}).
The subset consists of programs which have:
\begin{enumerate*}
\item at least one method and one field;
\item at most one empty interface; and
\item at most two empty structs.
\end{enumerate*}
And where:
\begin{enumerate*}
\item each method has at most two arguments;
\item each struct has at most two fields;
\item each interface has at most two members and two parents; and
\item each method or type constructor has at most two type parameters.
\end{enumerate*}
Moreover, we disallow mutually recursive type definitions.
These measures are taken to truncate the space of possible programs.
We generate all FGG programs in our subset up to size 20,
and confirm they pass the bisimulation test described above.

\begin{Figure}
\begin{ruled}
\begin{tikzpicture}
\node (p0) {$d$};
\node[anchor=center,below=of p0.center] (m0) {$d^\dagger$};
\node[anchor=center,right=of p0.center] (p1) {$e_1$};
\node[anchor=center,right=of m0.center] (m1) {$e_1^\dagger$};
\node[anchor=center,right=of p1.center] (pdots) {$...$};
\node[anchor=center,right=of m1.center] (mdots) {$...$};
\node[anchor=center,right=of pdots.center] (pi) {$e_n$};
\node[anchor=center,right=of mdots.center] (mi) {$e_n^\dagger$};
\draw[->,densely dashed] (p0) --
  node[left]{$\vdash d \mapsto d^\dagger$} (m0);
\draw[->] (p0) -- (p1);
\draw[->] (m0) -- (m1);
\draw[->,densely dashed] (p1) --
node[right]{$\vdash e_1 \mapsto e_1^\dagger$} (m1);
\draw[->] (p1) -- (pdots);
\draw[->] (m1) -- (mdots);
\draw[->] (pdots) -- (pi);
\draw[->] (mdots) -- (mi);
\draw[->,densely dashed] (pi) --
  node[right]{$\vdash e_n \mapsto e_n^\dagger$} (mi);
\node[anchor=south east,left=20mm of m0.south west,align=left,yshift=10mm,text width=35mm]
{Assume $P \ok$ \\
  and $\vdash P \mapsto P^\dagger$ \\
  with $P = \ov{D} \prog d$ \\
  and $P^\dagger = \ov{D^\dagger} \prog d^\dagger$};
\end{tikzpicture}
\end{ruled}
\caption{Bisimulation}
\label{fig:bisim}
\end{Figure}

 \section{Related Work}
\label{sec:related}

This paper is the first to present a core formalism
for the Go language.
Our presentation is styled after that of Featherweight
Java by \citet{Igarashi-et-al-2001}.
Like them, we focus on a tiny, functional subset of the language;
we define versions with and without generics;
and we consider translation from one to the other.
We also mark as ``stupid'' casts/type assertions
that are disallowed in source but are required for reduction
to preserve types.

Our work resembles the development of
generics for Java by \citet{Bracha-et-al-1998} and for
.NET by \cite{Kennedy-and-Syme-2001} in that we build on
a well-established base language. \ROUNDTWO{We note that in Featherweight Go,
Featherweight Generic Go (and in the Go language), since method
signatures are nonvariant, there are no fundamental decidability issues
related to F-bounded polymorphism and
variance~\cite{DBLP:conf/popl/Pierce92}, and so there is no need to
consider more sophisticated techniques such as those of
\citet{DBLP:conf/pldi/GreenmanMT14} to ensure decidability of type
checking.}

\ROUNDTWO{In terms of formalisations of generics}, prior work on generics adopts one, or a combination, of three main approaches,
\emph{erasure}, \emph{runtime representation of types as values}, and
\emph{monomorphisation}.

\paragraph{Erasure}
\citet{Bracha-et-al-1998} present a translation from Java with
generics to Java without generics that erases all information
about type parameters.  The translation relies on \emph{bridge} methods,
which in turn rely on method overloading, which is not supported in Go.
\citet{Igarashi-et-al-1999} formalised the translation for the FJ
subset of Java (avoiding bridge methods) and
proved it preserves typing and reductions.
Downsides of erasure are that casts to generic types must be
restricted and creation of generic arrays becomes tricky (see
\cite{Naftalin-and-Wadler-2006}).
Moreover, erased code is often less efficient than monomorphised code.
An upside is that erasure
is linear in the size of the source, whereas monomorphisation
can lead to an explosion in code size.

\paragraph{Runtime representation}
In contrast to Java erasure, \citet{Kennedy-and-Syme-2001} developed
an extension of the .NET Common Language Runtime (CLR) and \CS{}
with direct support for generics.
They generate a mixture of specialised and shared code:
the former is compiled separately for each primitive type
and is similar to monomorphisation;
the latter is compiled once for every object type
and is similar to erasure.
JIT compilation is exploited to perform
specialisation \emph{dynamically}, avoiding potential code bloat.
Code sharing is implemented by storing runtime 
\emph{type-reps} \cite{DBLP:conf/icfp/CraryWM98} for type parameters.

The overhead of runtime assembly of type-reps can be optimised by
pre-computing and caching maps from open types to their reps when a generic
type or method is instantiated
\cite{Kennedy-and-Syme-2001,DBLP:conf/oopsla/ViroliN00}.
In future work, we will also look to techniques of optimising the coexistence
of uniform (boxed) and non-uniform representations in polymorphic
code \cite{DBLP:conf/popl/Leroy92} for dealing with the analogous mixture of
struct and interface values in generic Go code.

\paragraph{Monomorphisation}
Although monomorphisation has been employed for languages such as
\Cpp{}, Standard ML and Rust~\cite{Turon-2015},
we found a relative lack of peer-reviewed literature on this topic.
This section discusses works related to monomorphisation that
do not state or prove any correctness results.

\citet[Chapter~26]{Stroustrup-2013} describes template instantiation in
\Cpp{}. It is widely used, and infamous for code bloat.

\citet{DBLP:conf/icfp/BentonKR98} describes a whole-program compiler
from SML'97 to Java bytecode, where polymorphism is fully
monomorphised. Monomorphisation is alway possible, since
Standard ML rules out polymorphic recursion (unlike FGG).
\citet{Fluet-2015} sketches a similar approach used in the SML
optimising compiler MLton.

\citet[Section 6]{Tolmach-and-Oliva-1998} develop a typed translation
from an ML-like language to Ada, based on monomorphisation and
presented in detail.  Unlike us, they do not address subtyping
(structural or otherwise) and they presume the absence of polymorphic
recursion.

\citet{Jones-1995} describes the use of specialisation to efficiently
compile type classes for Haskell, which bears some resemblance to
monomorphisation.

\paragraph{Formalisation}
We now consider works that formalise some aspect of monomorphisation.

\citet{DBLP:conf/popl/YuKS04} formalise the mixed specialisation and
sharing mechanisms of the .NET JIT
compiler~\cite{Kennedy-and-Syme-2001}.  The work describes a type and
semantics preserving translation to a \emph{polymorphic} .NET
Intermediate Language (IL), where polymorphic behaviours are driven by
\emph{type-reps} \cite{DBLP:conf/icfp/CraryWM98}, codifying runtime
type data that can be used in e.g. dynamic casts. Their approach only
generates code where type variables are instantiated with basic data
types, using a uniform (i.e.~boxed) representation for all other
types.  This sidesteps the key challenges of monomorphising code with
polymorphic recursion and parameterised methods. Notably,
\citet{Kennedy-and-Syme-2001} state that ``some polymorphism simply
cannot be specialized statically (polymorphic recursion, first-class
polymorphism)''. In contrast, we present an algorithm that can
determine whether monomorphisation is possible in the presence of
polymorphic recursion.

\citet{Siek-and-Taha-2006} formalise the \Cpp{} template instantiation
mechanism.  They model partial specialization, template
parameterisation and instantiation, and prove type soundness of
template expansion.  Unlike us, they do not state or prove
bisimulation or preservation of reductions.  Since
\Cpp{} templates are Turing-complete, their soundness results are
modulo termination, whereas our algorithms are guaranteed to
terminate (see Theorem~\ref{thm:polyrec-decide} \ROUNDTWO{\iflong
  and Appendix~\ref{app:nomono}\else
  and~\cite{techreport}\fi}).

\citet[Section 2]{TanakaAG18} report on a monomorphisation algorithm
for Coq (Gallina) used in generation of low-level C code. Unlike us,
they do not test for polymorphic recursion.

\paragraph{Monomorphisation and logic}
In a related area, \citet{BlanchetteB0S16,BobotP11} study
monomorphisation for polymorphic first-order logic formulas,
targeting the untyped or multi-sorted logics found in
automated theorem provers.

 \section{Conclusion}
\label{sec:conclusion}

\ROUNDTWO{In this
  work we studied generics for a minimal subset of Go and their
  compilation via monomorphisation. We chose monomorphisation since it
is a simple first way of concretely explaining the semantics of
generics using (simplified) Go and
it avoids restrictions required by the erasure-based approach used in
FGJ (e.g.~type assertions on type variables).
Another key benefit of monomorphisation
is that of enabling 0-cost abstractions -- programs that do not use
generics incur no runtime penalty and generic code is translated
into code that is equivalent to hand coded instantiations. The cost is
that of requiring a whole program analysis and
disallowing programs that would result in infinite instantiations
(Section~\ref{sub:polyrec}). Clearly, this is the beginning of the
story, not the end.}

In future
work, we plan to look at other methods of implementation
beside monomorphisation, and in particular \ROUNDTWO{we plan} to consider an
implementation based on passing runtime representations of
types, similar to that used for .NET generics
\citep{Kennedy-and-Syme-2001}. \ROUNDTWO{The idea is to automatically
  equip methods and
  structs with data that codifies type arguments used in generic
  code at runtime. For instance, a \goinl{Cons} struct would carry at
  runtime an additional field that specifies the type of the element
 contained in the cell, and a method that constructs a tree from a
 \goinl{List} would thread the runtime type information of the list
 cells into the tree. This approach requires translating all structs
 and methods of a program to account for runtime type passing and
 construction and thus the resulting programs can incur some
 runtime penalty.}
A mixed approach that uses monomorphisation sometimes
and passing runtime representations sometimes might be best,
again similar to that used for .NET generics. \ROUNDTWO{We will study the
trade-offs and performance impact of this spectrum of approaches in
future work.}

Featherweight
Go is restricted to a tiny subset of Go. We plan a model of
other important features such as assignments, arrays,
slices, and packages, which we will dub Bantamweight Go; and
a model of Go's innovative concurrency mechanism based on
``goroutines'' and message passing, which we will dub
Cruiserweight Go.

\begin{acks}
  We thank Nicholas Ng, Sam Lindley, \ROUNDTWO{and our referees}
  for comments and
  suggestions. This work was funded under EPSRC EP/K034413/1,
  EP/T006544/1, EP/K011715/1, EP/L00058X/1, EP/N027833/1,
  EP/N028201/1, EP/T006544/1, EP/T014709/1 and EP/V000462/1,
  NCSS/EPSRC VeTSS, NOVA LINCS (UIDB/04516/2020) with the financial
  support of FCT- Fundação para a Ciência e a Tecnologia, and EU
  MSCA-RISE BehAPI (ID:778233).
\end{acks}

\bibliography{main}

\iflong

\clearpage
\newpage
\appendix
\section{Further examples}
\label{sec:examples-more}

\subsection{Booleans in FG}
\label{sec:fg-bool}

\begin{Figure}
\begin{ruled}
\begin{golst}
    type Any interface {}
    type Eq interface {
        Equal(that Eq) Bool
    }
    type Bool interface {
        Not() Bool
        Equal(that Eq) Bool
        Cond(br Branches) Any
    }
    type Branches interface {
        IfTT() Any
        IfFF() Any
    }

    type TT struct {}
    type FF struct {}

    func (this TT) Not() Bool { return FF{} }
    func (this FF) Not() Bool { return TT{} }

    func (this TT) Equal(that Eq) Bool { return that.(Bool) }
    func (this FF) Equal(that Eq) Bool { return that.(Bool).Not() }

    func (this TT) Cond(br Branches) Any { return br.IfTT() }
    func (this FF) Cond(br Branches) Any { return br.IfFF() }
\end{golst}
\end{ruled}
\caption{Booleans in FG}
\label{fig:fg-bool}
\end{Figure}

Figure~\ref{fig:fg-bool} shows how to implement booleans in FG. We
begin by declaring two general-purpose interfaces. Interface
\goinl{Any} has no methods, and so is implemented by any
type. Interface \goinl{Eq} has one method with signature
\goinl{Equal(that Eq) Bool}, an equality test that
accepts an argument which is itself of type \goinl{Eq}
and returns a boolean.

Interface \goinl{Bool} has three methods. Method \goinl{Not() Bool}
computes the logical negation of its receiver. Method
\goinl{Equal(that Eq) Bool} checks whether its receiver is equal to
its argument. Method \goinl{Cond(br Branches) Any} executes one of two
branches depending on whether the receiver is true or false.  It
refers to interface \goinl{Branches}, which has two methods with
signatures \goinl{IfTT() Any} and \goinl{IfFF() Any}, one to be
invoked for a true conditional and one to be invoked for a false
conditional.

Go also supports \emph{interface embedding}.  Mentioning one interface
in another stands for all the methods declared in the first.  For
example, the \goinl{Bool} interface in the figure is equivalent to:
\begin{golst}
    type Bool interface {
      Not() Bool
      Eq
      Cond(br Branches) Any
    }
\end{golst}
where the method specification for \goinl{Equal} has been replaced by
the interface \goinl{Eq}.  We use interface embedding in examples, but
our formalism for FG assumes each interface embedding is expanded out
to the corresponding method specifications.

We then define two structure types, \goinl{TT} and \goinl{FF}, each
with no fields, and methods are defined with receivers of both types
for each of the three methods \goinl{Not}, \goinl{Equal}, and
\goinl{Cond}.  Since they have methods whose signatures match those in
interface \goinl{Bool}, we say that structure types \goinl{TT} and
\goinl{FF} both \emph{implement} the interface \goinl{Bool}.

Method \goinl{Not} returns false on receiver true, and true on receiver
false. For instance, \goinl{TT\{\}.Not()} returns \goinl{FF\{\}}.

Method \goinl{Equal} returns its argument on true and the negation of its
argument on false. The type assertion \goinl{that.(Bool)}
returns the argument \goinl{that} if it implements the interface
\goinl{Bool} and panics otherwise, where \emph{panic} is Go jargon for
indicating a runtime error has occurred. For instance,
\goinl{b.Equal(b)} returns true if \goinl{b} is a boolean,
while \goinl{b.Equal(x)} panics if \goinl{b} is a boolean but \goinl{x} is not.

Finally, method \goinl{Cond} invokes either \goinl{IfTT()} or
\goinl{IfFF()} on its argument, depending on whether the receiver is
true or false. For instance, assume \goinl{x} has type \goinl{Eq} and
\goinl{xs} has type \goinl{Cons}, and that \goinl{Equal} and
\goinl{Contains} are both methods that return booleans.
The conditional
\begin{golst}
    if x.Equal(xs.head) {
        return true
    } else {
        return xs.tail.Contains(x)
    }
\end{golst}
can be emulated in FG by introducing the declarations 
\begin{golst}
    type containsBr struct {
        xs List
        x Eq
    }
    func (this containsBr) IfTT() Bool { return TT{} }
    func (this containsBr) IfFF() Bool { return this.xs.tail.Contains(this.x) }
\end{golst}
and writing
\begin{golst}
    return x.Equal(xs.head).Cond(containsBr{x,xs}).(Bool)
\end{golst}
The type assertion \goinl{.(Bool)} is required since otherwise the
type of the expression would be \goinl{Any}, whereas \goinl{Bool} is
expected.  As one would expect, method \goinl{Contains} is called only
when the condition is true.

\subsection{Booleans in FGG}
\label{sec:fgg-bool}

\begin{Figure}
\begin{ruled}
\begin{golst}
    type Any interface {}
    type Eq(type a Eq(a)) interface {
        Equal(that a) Bool
    }
    type Bool interface {
        Not() Bool
        Equal(that Bool) Bool
        Cond(type a Any)(br Branches(a)) a
    }
    type Branches(type a Any) interface {
        IfTT() a
        IfFF() a
    }

    type TT struct {}
    type FF struct {}

    func (this TT) Not() Bool { return FF{} }
    func (this FF) Not() Bool { return TT{} }

    func (this TT) Equal(that Bool) Bool { return that }
    func (this FF) Equal(that Bool) Bool { return that.Not() }

    func (this TT) Cond(type a Any)(br Branches(a)) a { return br.IfTT() }
    func (this FF) Cond(type a Any)(br Branches(a)) a { return br.IfFF() }
\end{golst}
\end{ruled}
\caption{Booleans in FGG}
\label{fig:fgg-bool}
\end{Figure}

Booleans are adapted to generics in Figure~\ref{fig:fgg-bool}.
Interface \goinl{Any} is unchanged. The interface for equality is now
written \goinl{Eq(type a Eq(a))}, indicating that it
accepts a type parameter.  A type parameter is always followed by an
interface that it must implement, which is called its \emph{bound}.
Often the bound is \goinl{Any}, but in this case, the bound on
\goinl{a} is itself \goinl{Eq(a)}.  For instance, we will see
that \goinl{Bool} implements \goinl{Eq(Bool)}.  But
\goinl{Any} does not implement \goinl{Eq(Any)}, so the latter
is not a valid type.  The situation where a type parameter appears in
its own bound is known as \emph{F-bounded polymorphism}
\citep{Canning-et-al-1989}, and a similar idiom is common in Java with
generics \citep{Bracha-et-al-1998,Naftalin-and-Wadler-2006}.

In interface \goinl{Bool}, the signature for negation is
unchanged. The signature for equality is now
\goinl{Equal(that Bool) Bool}, where the argument type has now been
refined from \goinl{Eq} to \goinl{Bool}. The signature for
conditionals is now
\begin{golst}
    Cond(type a Any)(br Branches(a)) a
\end{golst}
where method \goinl{Cond} now accepts a type parameter, with bound
\goinl{Any}. Its argument type has been refined from type
\goinl{Branches} to type \goinl{Branches(a)}, and its result has been
refined from type \goinl{Any} to type \goinl{a}.  Using interface
embedding, we could equivalently write the interface as
\begin{golst}
    type Bool interface {
      Not() Bool
      Eq(Bool)
      Cond(type a Any)(br Branches(a)) a
    }
\end{golst}
which includes all members of the \goinl{Eq(Bool)} interface
in the \goinl{Bool} interface.

The interface for branches now also takes a type parameter,
\goinl{Branch(type a Any)}, and the signatures of its methods are now
\goinl{IfTT() a} and \goinl{IfFF() a}, where the result types have
been refined from \goinl{Any} to \goinl{a}.

The two structure types, \goinl{TT} and \goinl{FF} are just as before,
as is the method for logical negation. The method for equality are as
before, save that the argument type has changed from
\goinl{Eq} to \goinl{Bool}. The type assertions previously
required to convert the argument to a boolean are no longer
needed. Indeed, typing is now strong enough to assure a panic never
occurs when evaluating equality.

Finally, the method for conditionals is as before, save for
the refinement to its signature. Whereas in the previous example
we wrote
\begin{golst}
    return x.Equal(xs.head).Cond(containsBr{x,xs}).(Bool)
\end{golst}
now we write
\begin{golst}
    return x.Equal(xs.head).Cond(Bool)(containsBr{x,xs})
\end{golst}
Method \goinl{Cond} now takes a type parameter \goinl{Bool} specifying
its result type, so the type assertion \goinl{.(Bool)} can be removed.

\subsection{Lists in the style of The Expression Problem}
\label{sec:expression-list}

\begin{Figure}
\begin{ruled}
\begin{golsttwo}
// Nil, Cons
type Nil(type a Any, c Any) struct {}
type Cons(type a Any, c Any) struct {
  head a
  tail c
}
// Map on Nil, Cons
type Mapper(type a Any) interface {
  Map(type b Any, d Any)(f Function(a,b)) d
}
func (xs Nil(type a Any, c Mapper(a)))
    Map(type b Any, d Any)(f Function(a,b)) d {
  return Nil(b,d){}
}
func (xs Cons(type a Any, c Mapper(a)))
    Map(type b Any, d Any)(f Function(a,b)) d {
  return Cons(b,d)
    {f.apply(xs.head), xs.tail.Map(b,d)(f)}
}
// Equality on Nil, Cons
type Eq(type a Eq(a)) interface {
  Equal(that a) bool
}
func (xs Nil(type a Eq(a), c Eq(c))) Equal(ys c) bool {
  ys, ok := ys.(Nil(a,c))
  return ok
}
func (xs Cons(type a Eq(a), c Eq(c))) Equal(ys c) bool {
  ys, ok := ys.(Cons(a,c))
  return ok && xs.head.Equal(ys.head)
            && xs.tail.Equal(ys.tail)
}
// tie it all together
type List(type a Eq(a)) interface {
  Mapper(a)
  Eq(List(a))
}
\end{golsttwo}
\end{ruled}
\caption{Lists in the style of the expression problem}
\label{fig:expression-list}
\end{Figure}

We now present an alternative
design that permits a list with elements of any type when \goinl{Map}
is the only operation applied to the list, but requires elements to
support equality when \goinl{Equal} is applied.

Our alternative solution for lists appears in
Figure~\ref{fig:expression-list}. Structures \goinl{Nil} and
\goinl{Cons} now take not one but two type parameters, both bounded by
interface \goinl{Any}.  As before, parameter \goinl{a} is the type of
the elements of the list (the \goinl{head} field of \goinl{Cons}),
while the new parameter \goinl{c} is the type of the lists themselves
(the \goinl{tail} field of \goinl{Cons}).  The type parameters only
appear in the definition of \goinl{Cons}, but we also add them to
\goinl{Nil} because we may need to refer to them in the signatures of
operations. (As it happens, \goinl{c} doesn't appear in the signatures
here, but it would be needed if we wanted to define, for instance, a
method to append two lists.)

To avoid pollution, for each operation, \goinl{Map} and
\goinl{Equal} we define a corresponding interface to specify that
operation, \goinl{Mapper} and \goinl{Eq}.

Method \goinl{Map} now takes not one but two type
parameters, both bounded by interface \goinl{Any}.  As before,
parameter \goinl{b} is the type of the elements of the result list,
while the new parameter \goinl{d} is the type of the result list
itself.  When defining \goinl{Map} on \goinl{Cons}, the receiver's
first parameter \goinl{a} is bounded by \goinl{Any}
and its second type parameter \goinl{c} is bounded by \goinl{Mapper(a)},
allowing \goinl{Map} to be recursively invoked on the tail.

When defining \goinl{Equal} on \goinl{Cons}, the receiver's first
parameter \goinl{a} is bounded by \goinl{Eq(a)}, allowing list
elements to be tested for equality, and its second type parameter
\goinl{c} is bounded by \goinl{Eq(c)}, allowing \goinl{Equal} to be
recursively invoked on the tail.

The bodies of the methods for equality use a second form of type
assertion which returns both a value of the asserted type (\goinl{xs},
rebound) and a boolean saying whether the assertion succeeded
(\goinl{ok}). Unlike the other form of type assertions, these can
never panic, and so are not an issue with regard to static type
checking. They are not included in our formalisations of FG and FGG,
but would be easy to add.

As in the expression problem, for both \goinl{Map} and \goinl{Equal}
it is crucial that bounds on the type receiver are covariant.  Since
the bounds on the parameters in \goinl{Nil} and \goinl{Cons} are
\goinl{Any}, it is fine for the receivers to use tighter
bounds, such as \goinl{Mapper} or \goinl{Eq}.

A last step shows how to tie it all together. We define
an interface \goinl{List(a)} which embeds interfaces \goinl{Mapper(a)}
and \goinl{Eq(List(a))}.  At this point, the pollution occurs,
and we bound type parameter \goinl{a} by interface \goinl{Eq(a)},
since we must be able to test list elements for equality.

Not all applications will require this level of flexibility, or desire
the associated complexity. But it is good that this design pattern is
supported, and it is easy to see it may be valuable in some
situations.

 \newpage
\newpage

\section{FG Type Soundness}
\label{app:fgtsound}
This section develops FG type soundness in the form of Type
Preservation by evaluation (Appendix~\ref{app:fg_tpres} and Progress
(Appendix~\ref{app:fg_prog}).

The various substitution properties are often stated more generally in
this appendix for convenience (i.e., those in the main matter are
special cases of the results developed here). We make use of
a reduction relation that replaces the contextual rule
$\tyrulename{r-context}$ with the equivalent set of rules that
identify each possible reduction explicitly, implementing a
left-to-right call-by-value semantics. Uses of these congruence rules
are labelled with the prefix {\sc rc}. The development of progress
makes use of an inductively defined $e\PANIC$ predicate, which holds
iff expression $e$ causes a runtime panic (i.e., contains an invalid
type assertion). This predicate is equivalent to the definition of {\it
  panics} from the main matter.

\subsection{Type Preservation}
\label{app:fg_tpres}

\begin{lemma}[Weakening]\label{lem:weak}
   If $\Gamma \vdash e : t$ then,
   for all $\Gamma' \supseteq \Gamma$, $\Gamma' \vdash e : t$.
 \end{lemma}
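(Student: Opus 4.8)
The plan is to proceed by straightforward induction on the derivation of $\Gamma \vdash e : t$. The crucial observation is that the environment $\Gamma$ is consulted only in the variable rule \tyrulename{t-var}, and is otherwise threaded unchanged through every typing rule; in particular, none of the auxiliary side conditions---the well-formedness checks $t \ok$, the subtyping checks $t \imp u$, and the lookups $\methods(t)$ and $\fields(t_S)$---mention $\Gamma$ at all, since they depend only on the fixed sequence of declarations $\ov{D}$. This separation is what makes weakening immediate.

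For the base case \tyrulename{t-var}, suppose $\Gamma \vdash x : t$ holds because $(x : t) \in \Gamma$. Since $\Gamma' \supseteq \Gamma$ by assumption, we also have $(x : t) \in \Gamma'$, and hence $\Gamma' \vdash x : t$ by \tyrulename{t-var}. This is the only case in which the hypothesis $\Gamma' \supseteq \Gamma$ is actually used.

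For each of the remaining rules (\tyrulename{t-call}, \tyrulename{t-literal}, \tyrulename{t-field}, \tyrulename{t-assert$_I$}, \tyrulename{t-assert$_S$}, and \tyrulename{t-stupid}), every typing premise has the form $\Gamma \vdash e' : t'$ for an immediate subexpression $e'$. First I would apply the induction hypothesis to each such premise to obtain $\Gamma' \vdash e' : t'$; the remaining, $\Gamma$-independent premises transfer verbatim; then re-applying the same rule produces $\Gamma' \vdash e : t$, as required.

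I expect no genuine obstacle here: this is a routine structural weakening lemma, and its correctness rests entirely on the fact that the FG type system threads $\Gamma$ monotonically and cleanly separates variable typing from the program-level auxiliary judgements. The proof is essentially a case analysis in which all inductive cases are mechanical; the sole point of interest is the \tyrulename{t-var} case.
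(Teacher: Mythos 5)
Your proposal is correct and takes exactly the approach the paper intends: the paper's proof of this lemma is simply ``Straightforward induction on $\Gamma \vdash e : t$,'' and your write-up spells out precisely that induction, correctly identifying \tyrulename{t-var} as the only case where $\Gamma' \supseteq \Gamma$ is used and noting that the $\ok$, $\imp$, $\methods$, and $\fields$ side conditions are independent of $\Gamma$.
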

 \begin{proof}
Straightforward induction on $\Gamma \vdash e : t$.
\end{proof}

\begin{lemma}\label{lem:methstruct}
Let $t \ok$. If $t \imp t_S$ then $t = t_S$.
\end{lemma}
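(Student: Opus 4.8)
The plan is to argue by inversion on the derivation of $t \imp t_S$, using the fact that the implements relation of Figure~\ref{fig:fg-typing} is generated by exactly two rules, \tyrulename{<:$_S$} and \tyrulename{<:$_I$}. As the surrounding text notes, reflexivity and transitivity of $\imp$ are consequences of these two rules rather than primitive rules, so there are no further cases to consider in the inversion.

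First I would observe that the target of the judgement, $t_S$, is a \emph{structure} type name, and that by the grammar of Figure~\ref{fig:fg-syntax} the structure and interface type names form disjoint syntactic categories; in particular $t_S$ is never of the form $u_I$. Now I consider the last rule of the derivation of $t \imp t_S$. Rule \tyrulename{<:$_I$} has a conclusion of the shape $t \imp u_I$, whose right-hand side is an interface type; since $t_S$ is a structure type, this rule cannot have been the last step. Hence the derivation must conclude with \tyrulename{<:$_S$}, whose only possible conclusion has the form $u_S \imp u_S$. Matching this against $t \imp t_S$ forces $t = t_S$, which is exactly the desired conclusion.

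I expect no real obstacle: the whole argument is a single-step inversion together with the observation that structure and interface names are drawn from disjoint categories. I would also note that the hypothesis $t \ok$ is not actually needed here; it is stated because the lemma is applied in contexts where well-formedness of $t$ is already available, but the conclusion $t = t_S$ follows from the shape of the derivation alone.
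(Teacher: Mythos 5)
Your proof is correct and matches the paper's argument: the paper's entire proof is ``By definition of $\imp$,'' and your inversion on the two rules \tyrulename{<:$_S$} and \tyrulename{<:$_I$}, using disjointness of structure and interface names, is precisely the unpacking of that definition. Your side remark that $t \ok$ is not actually needed is also accurate.
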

\begin{proof}
By definition of $\imp$.
\end{proof}

\begin{lemma}\label{lem:methsub}
If $mM \in \methods(t)$ then, for all $t' \imp t$, $mM \in \methods(t')$.
\end{lemma}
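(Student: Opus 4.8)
My plan is to prove the lemma by a case analysis on the form of the type $t$, mirroring the case split in the definitions of $\methods$ and of the implements relation $\imp$. A type is either a structure $t_S$ or an interface $t_I$, so I would treat these two possibilities separately, in each case inspecting which rule could have derived the hypothesis $t' \imp t$. In the structure case, $t = t_S$, and I would observe that the conclusion of rule \tyrulename{<:$_I$} always has an interface on its right; hence the derivation of $t' \imp t_S$ can only end with \tyrulename{<:$_S$}, which forces $t' = t_S$ (this is precisely the content of Lemma~\ref{lem:methstruct}). Then $\methods(t') = \methods(t_S) = \methods(t)$, and the goal $mM \in \methods(t')$ is immediate from the hypothesis.

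In the interface case, $t = t_I$, so the hypothesis reads $mM \in \methods(t_I)$. Here I would invert the derivation of $t' \imp t_I$: because the right-hand side is an interface, the derivation must end with \tyrulename{<:$_I$}, whose premise is exactly the containment $\methods(t') \supseteq \methods(t_I)$. Combining this inclusion with $mM \in \methods(t_I)$ gives $mM \in \methods(t')$, which closes the case and completes the proof.

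I expect no genuinely difficult step, since the statement falls out directly from the definitions once the cases are set up. The only point requiring attention is the inversion of $\imp$ in each case, namely recognising that the shape of the target type uniquely determines the last rule of the derivation — \tyrulename{<:$_S$} for a structure target and \tyrulename{<:$_I$} for an interface target. This observation is what lets me conclude $t' = t_S$ in the structure case and extract the method-set inclusion in the interface case; note that the structure case needs no well-formedness assumption on $t'$, as the inversion already pins down $t' = t_S$.
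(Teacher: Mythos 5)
Your proof is correct and takes essentially the same route as the paper: the paper's proof is a one-liner asserting $\methods(t) \subseteq \methods(t')$ ``by definition of $\imp$'', which is exactly what your two-case inversion establishes explicitly (method sets coincide when the target is a structure, and the premise of \tyrulename{<:$_I$} gives the inclusion when the target is an interface). Your version merely spells out the case analysis that the paper compresses, including the correct observation that no well-formedness hypothesis is needed.
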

\begin{proof}
 ~
  \begin{description}
    \item
      \begin{tabbing}
        $\methods(t)\subseteq \methods(t')$ \` by definition
        of $\imp$\\
        $mM \in \methods(t')$ \` by $\subseteq$
      \end{tabbing}
    \end{description}
  \end{proof}

  \begin{lemma}
The $\imp$ relation is reflexive and transitive.
\end{lemma}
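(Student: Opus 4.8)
The plan is to establish reflexivity and transitivity separately, in each case reasoning by the syntactic form of the type on the right-hand side of $\imp$. The key observation is that the two defining rules are mutually exclusive with respect to that form: a structure type can appear on the right only via rule \tyrulename{<:$_S$}, which forces both sides to coincide, whereas an interface type on the right can only arise through rule \tyrulename{<:$_I$}, whose premise is $\methods(t) \supseteq \methods(t_I)$. Consequently the only type implementing $t_S$ is $t_S$ itself (Lemma~\ref{lem:methstruct}), and $t \imp t_I$ holds precisely when $\methods(t) \supseteq \methods(t_I)$. These two characterisations, together with reflexivity and transitivity of set inclusion $\supseteq$, are all the machinery the proof needs.

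For reflexivity I would split on whether the type $t$ is a structure or an interface. If $t = t_S$, then $t_S \imp t_S$ is immediate from rule \tyrulename{<:$_S$}. If $t = t_I$, then $\methods(t_I) \supseteq \methods(t_I)$ holds trivially by reflexivity of $\supseteq$, so rule \tyrulename{<:$_I$} yields $t_I \imp t_I$.

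For transitivity I would assume $t \imp u$ and $u \imp v$ and case-split on the form of $v$. If $v = v_S$ is a structure type, then Lemma~\ref{lem:methstruct} applied to $u \imp v_S$ gives $u = v_S$, and applying it once more to $t \imp u = v_S$ gives $t = v_S = v$, so $t \imp v$ follows by reflexivity. If $v = v_I$ is an interface type, then inverting $u \imp v_I$ through rule \tyrulename{<:$_I$} gives $\methods(u) \supseteq \methods(v_I)$, while Lemma~\ref{lem:methsub} applied to $t \imp u$ gives $\methods(t) \supseteq \methods(u)$; chaining these inclusions yields $\methods(t) \supseteq \methods(v_I)$, whence $t \imp v_I$ by rule \tyrulename{<:$_I$}. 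The one step to be careful about — really the only content of the argument — is the inversion in the structure case: one must notice that nothing other than $t_S$ can implement the structure type $t_S$, because the interface rule never applies when a structure sits on the right. This is exactly Lemma~\ref{lem:methstruct}, so invoking it discharges that case, and the remaining reasoning is just monotonicity of $\methods$ along $\imp$ and transitivity of $\supseteq$.
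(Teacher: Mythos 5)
Your proof is correct and is essentially the paper's proof: the paper disposes of this lemma with ``straightforward from the definition of $\imp$'', and your case analysis on the syntactic form of the right-hand type (structure forces rule \tyrulename{<:$_S$}, interface reduces to reflexivity/transitivity of $\supseteq$ via rule \tyrulename{<:$_I$}) is precisely what that remark unfolds to. One cosmetic point: Lemma~\ref{lem:methstruct} as stated carries a hypothesis $t \ok$ that you never discharge, but since the inversion it encodes holds unconditionally (no rule can place a structure name on the right except \tyrulename{<:$_S$}), this costs nothing.
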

\begin{proof}
Straightforward from the definition of $\imp$.
\end{proof}

 \begin{restatable}[Substitution]{lemma}{lemsubst}\label{lem:subst}
  If $\emptyset \vdash \ov{v} : \ov{t}$ and
  $\Gamma , \ov{x : t'} \vdash e' : t_b$, where
  $\ov{t} \imp \ov{t'}$, then
  $\Gamma \vdash e'[\ov{x} \by \ov{v}] : t_c$, for some
  $t_c \imp t_b$.
\end{restatable}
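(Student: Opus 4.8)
The plan is to proceed by induction on the derivation of $\Gamma, \ov{x : t'} \vdash e' : t_b$, using Weakening (Lemma~\ref{lem:weak}) to promote the value typings $\emptyset \vdash \ov{v:t}$ into the ambient environment $\Gamma$, and appealing to reflexivity and transitivity of $\imp$ throughout. The base case is \tyrulename{t-var}: if $e'$ is one of the substituted variables $x_i$, then $e'[\ov{x \by v}] = v_i$ and I take $t_c = t_i$, where the hypothesis $\ov{t \imp t'}$ gives $t_i \imp t'_i = t_b$ and Weakening lifts $\emptyset \vdash v_i : t_i$ to $\Gamma \vdash v_i : t_i$; if $x$ is not among the $\ov{x}$, substitution leaves it fixed and $t_c = t_b$ by reflexivity.

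For the composite cases I would apply the induction hypothesis to each immediate subexpression, obtaining possibly more specific types, and then check that the relevant typing rule still fires. The key observation, which I would isolate as a small sublemma, is that substitution can only \emph{narrow} a subexpression's type, and can do so only to a \emph{structure} type: inspecting the rules, solely \tyrulename{t-var} can change the type of an expression, and the substituted values are structure literals carrying structure types, whereas every composite form fixes its result type by an annotation or a declaration and is therefore invariant under substitution of its subterms. With this in hand, \tyrulename{t-field} uses Lemma~\ref{lem:methstruct} (a well-formed type implementing a structure type equals it) to conclude that the receiver of a field selection keeps its exact structure type, so the field type is unchanged; \tyrulename{t-call} uses Lemma~\ref{lem:methsub} (any $mM \in \methods(t)$ is inherited by every $t' \imp t$) to relocate the method specification to the narrowed receiver type, preserving the return type, while the argument premises $\ov{t \imp u}$ are re-established by transitivity. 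In each of these, $t_c = t_b$.

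The main obstacle, and the only genuinely delicate point, will be the type-assertion cases. Under \tyrulename{t-assert$_I$} and \tyrulename{t-assert$_S$} the receiver is required to have an \emph{interface} type $u_I$, but after substitution the induction hypothesis only yields a type $t_c \imp u_I$, and $t_c$ may now be a \emph{structure} type --- precisely when the receiver was a substituted variable that has been replaced by a value. When $t_c$ is still an interface, the sublemma forces $t_c = u_I$, so I can re-apply the same rule and recover the unchanged target type. When $t_c$ is a structure type, the original rule no longer applies, and I would instead invoke \tyrulename{t-stupid}, whose premises ask only that the asserted type be well formed and the receiver have structure type; this reproduces exactly the asserted target $t_b$. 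This is where the boxed rule \tyrulename{t-stupid} earns its keep, and the substitution lemma is the first place its necessity surfaces. Since the asserted target is a fixed annotation, $t_c = t_b$ in every assertion case, so $t_c \imp t_b$ by reflexivity and the existential in the statement is satisfied.
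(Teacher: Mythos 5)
Your proposal is correct and follows essentially the same route as the paper's proof: induction on the typing derivation, weakening for the variable case, Lemmas~\ref{lem:methstruct} and~\ref{lem:methsub} for field selection and method call, transitivity for argument premises, and a structure-versus-interface case split in the assertion cases with \tyrulename{t-stupid} as the fallback. The sublemma you isolate---that substitution can change a subexpression's type only to a structure type, composite forms being type-invariant---is precisely what the paper's proof invokes tersely as the ``value restriction'' in the \tyrulename{t-assert$_S$} case, so your argument is the same one made slightly more explicit.
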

\begin{proof}
By induction on the derivation of  $\Gamma , \ov{x : t'} \vdash e' :
t_b$.

\begin{description}
\item[Case:] Rule {\sc t-var}

  \begin{tabbing}
  $e' = y$ and $t_b = \Gamma(y)$ \\
  {\bf Sub}\={\bf case:} $y \not\in \ov{x}$ \\
  \>$y[\ov{x}\by \ov{v}] = y$ \` by definition\\
  \>$\Gamma \vdash y : t_b$ \` by {\sc t-var}\\
  \>$t_b \imp t_b$ \` by reflexivity\\
  {\bf Subcase:} $y \in \ov{x}$ \\
  \>$x[\ov{x}\by \ov{v}] = v$ and $\Gamma(x) = t'$ \\
 \>$\Gamma \vdash v : t$ and $t \imp t'$ \`
 assumption and Lemma~\ref{lem:weak}
\end{tabbing}

\item[Case:] Rule {\sc t-literal}
   \begin{tabbing}
     $e' =\STRUCTte{\tttS}{\ov{\eee_f}}$,$\ttt_b = \tttS$ and
     $\eee'\SUBS{\ov{\vvv}}{\ov{\xxx}} = \STRUCTte{\tttS}{ \eee_f\SUBS{\ov{\vvv}}{\ov{\xxx}}}$ \\
  $\JUDGEEXPR{\Ga,\ov{\xxx : \ttt'}}{\ov{\eee_f}}{\ov{\ttt_a}}$ \` by inversion\\
  $\FIELDSOFtD{\tttS} = \ov{\FDECLft{\fff}{\ttt_f}}$ \` by inversion\\
  $\impls{\ov{\ttt_a}}{\ov{\ttt_f}}$ \` by inversion\\
  $\JUDGEEXPR{\Ga}{\ov{\eee_f\SUBS{\ov{\vvv}}{\ov{\xxx}}}}{\ov{\ttt_c}}$, for
  some $\IMPLS{\ov{\ttt_c}}{\ov{\ttt_a}}$ \` by i.h.\\
  $\IMPLS{\ov{\ttt_c}}{\ov{\ttt_f}}$ \` by transitivity\\
  $\JUDGEEXPR{\Ga}{\STRUCTte{\tttS}{ \eee_f\SUBS{\ov{\vvv}}{\ov{\xxx}}}
  }{\tttS}$ \` by $\tyrulename{t-literal}$\\
 $\IMPLS{\tttS}{\tttS}$ \` by reflexivity
 \end{tabbing}

\item[Case:] Rule $\tyrulename{t-field}$

\begin{tabbing}
  $\eee' = \SELef{\eee_s}{\fff}$, $\ttt_b = \ttt_f$ and
$\eee'\SUBS{\ov{\vvv}}{\ov{\xxx}} = \SELef{\eee_s\SUBS{\ov{\vvv}}{\ov{\xxx}}}{\fff}$\\

$\JUDGEEXPR{\Ga, \ov{\xxx : \ttt'}}{\eee_s}{\tttS}$ \` by inversion\\
$\FDECLft{\fff}{\ttt_f} \in \FIELDSOFtD{\tttS}$ \` by inversion\\

$\JUDGEEXPR{\Ga}{\eee_s\SUBS{\ov{\vvv}}{\ov{\xxx}}}{\ttt_b}$, for some
$\IMPLS{\ttt_b}{\tttS}$ \` by i.h.\\
$\ttt_b = \tttS$ \` by Lemma~\ref{lem:methstruct}\\
$\JUDGEEXPR{\Ga}{\eee_s\SUBS{\ov{\vvv}}{\ov{\xxx}}}{\tttS}$ \` substituting for
equals\\
$\JUDGEEXPR{\Ga}{\SELef{\eee_s}{\fff}}{\ttt_f}$ \` by
$\tyrulename{t-field}$\\
$\IMPLS{\ttt_f}{\ttt_f}$ \` by reflexivity
  \end{tabbing}

  \item[Case:] Rule $\tyrulename{t-call}$

  \begin{tabbing}
    $\eee' = \MCALLeme{\eee_c}{\mmm}{\ov{\eee_a}}$, $\ttt_b = \ttt_r$ and
    $\eee'\SUBS{\ov{\vvv}}{\ov{\xxx}} =
    \MCALLeme{\eee_c\SUBS{\ov{\vvv}}{\ov{\xxx}}}{\mmm}{\ov{\eee_a\SUBS{\ov{\vvv}}{\ov{\xxx}}}}$\\

    $\JUDGEEXPR{\Ga, \ov{\xxx : \ttt'}}{\eee_c}{\ttt_c}$ \` by inversion\\
    $\mmm\MSIGpt{\ov{\pDECLxt{\xxx}{\ttt_p}}}{\ttt_r} \in
  \methodsof{\ttt_c}$ \` by inversion\\
    $\JUDGEEXPR{\Ga, \ov{\xxx : \ttt'}}{\ov{\eee_a}}{\ov{\ttt_a}}$ \` by
    inversion\\
    $\IMPLS{\ov{\ttt_a}}{\ov{\ttt_p}}$ \` by inversion\\

    $\JUDGEEXPR{\Ga}{\eee_c\SUBS{\ov{\vvv}}{\ov{\xxx}}}{\ttt_d}$, for some
    $\IMPLS{\ttt_d}{\ttt_c}$ \` by i.h.\\
    $\mmm\MSIGpt{\ov{\pDECLxt{\xxx}{\ttt_p}}}{\ttt_r} \in
  \methodsof{\ttt_d}$
    \` by Lemma~\ref{lem:methsub}\\
    $\JUDGEEXPR{\Ga}{\ov{\eee_a\SUBS{\ov{\vvv}}{\ov{\xxx}}}}{\ov{\ttt_e}}$, for
    some
    $\IMPLS{\ov{\ttt_e}}{\ov{\ttt_a}}$ \` by i.h.\\
    $\IMPLS{\ov{\ttt_e}}{\ov{\ttt_p}}$ \` by transitivity\\
    $\JUDGEEXPR{\Ga}{\MCALLeme{\eee_c\SUBS{\ov{\vvv}}{\ov{\xxx}}}{\mmm}{\ov{\eee_a\SUBS{\ov{\vvv}}{\ov{\xxx}}}}}{\ttt_r}$
    \` by $\tyrulename{t-call}$\\
    $\IMPLS{\ttt_r}{\ttt_r}$ \` by reflexivity
  \end{tabbing}

\item[Case:] Rule \tyrulename{t-assert$_\III$}

  \begin{tabbing}
    $\eee' = \ASSRTet{\eee_a}{\tttI}$, $\ttt_b = \tttI$ and
    $\eee'\SUBS{\ov{\vvv}}{\ov{\xxx}} =
    \ASSRTet{\eee_a\SUBS{\ov{\vvv}}{\ov{\xxx}}}{\tttI}$ \\
    $\JUDGEWF{\tttI}$ \` by inversion\\
    $\JUDGEEXPR{\Ga, \ov{\xxx : \ttt'}}{\eee_a}{\tttI'}$ \` by inversion\\
    $\JUDGEEXPR{\Ga}{\eee_a\SUBS{\ov{\vvv}}{\ov{\xxx}}}{\ttt_a}$, for some
    $\IMPLS{\ttt_a}{\tttI'}$ \` by i.h.\\
    {\bf Sub}\={\bf case:} $\ttt_a = \tttS$, for some struct. type $\tttS$\\
    \>$\ASSRTet{\eee_a\SUBS{\ov{\vvv}}{\ov{\xxx}}}{\tttI}$ \` by
    \tyrulename{t-stupid}\\
    \>$\IMPLS{\tttI}{\tttI}$ \` by reflexivity\\
    {\bf Subcase:} $\ttt_a = \tttI''$, for some interface type
    $\tttI''$\\
    \>$\JUDGEEXPR{\Ga}{\ASSRTet{\eee_a\SUBS{\ov{\vvv}}{\ov{\xxx}}}{\tttI}}{\tttI}$ \` by
    \tyrulename{t-assert$_I$}\\
   \>$\IMPLS{\tttI}{\tttI}$ \` by reflexivity
  \end{tabbing}

\item[Case:] Rule \tyrulename{t-assert$_\SSS$}

  \begin{tabbing}
    $\eee' = \ASSRTet{\eee_a}{\tttS}$, $\ttt_b = \tttS$ and
    $\eee'\SUBS{\ov{\vvv}}{\ov{\xxx}} = \ASSRTet{\eee_a\SUBS{\ov{\vvv}}{\ov{\xxx}}
    }{\tttS}$\\
    $\JUDGEWF{\tttS}$ \` by inversion\\
    $\JUDGEEXPR{\Ga, \ov{\xxx : \ttt'}}{\eee_a}{\tttI}$ \` by inversion\\
    $\IMPLS{\ttt_\SSS}{\ttt_\III}$ \` by inversion\\
    $\JUDGEEXPR{\Ga}{\eee_a\SUBS{\ov{\vvv}}{\ov{\xxx}}}{\ttt_a}$, for some
    $\IMPLS{\ttt_a}{\tttI}$ \` by i.h.\\
    {\bf Sub}\={\bf case:} $\ttt_a = \tttS'$, for some struct. type
    $\tttS'$\\
      \>$\JUDGEEXPR{\Ga}{\ASSRTet{\eee_a\SUBS{\ov{\vvv}}{\ov{\xxx}}}{\tttS}}{\tttS}$ \` by
    \tyrulename{t-stupid}\\
    {\bf Subcase:} $\ttt_a = \tttI'$, for some interface type
    $\tttI'$\\
    $\tttI' = \tttI$ \` by value restriction\\

    $\JUDGEEXPR{\Ga}{\ASSRTet{\eee_a}{\tttS}}{\tttS}$ \` by
    \tyrulename{t-assert$_\SSS$}\\
    $\IMPLS{\tttS}{\tttS}$ \` by reflexivity\\

  \end{tabbing}

\item[Case:] Rule \tyrulename{t-stupid}
  \begin{tabbing}
    $\eee' = \ASSRTet{\eee_a}{\ttt}$, $\ttt_b = \ttt$ and
    $\eee'\SUBS{\ov{\vvv}}{\ov{\xxx}} = \ASSRTet{\eee_a\SUBS{\ov{\vvv}}{\ov{\xxx}}}{\ttt}$
    \\
    $\JUDGEWF{\ttt}$ \` by inversion\\
    $\JUDGEEXPR{\Ga, \ov{\xxx : \ttt'}}{\eee_a}{\tttS}$ \` by inversion\\
    $\JUDGEEXPR{\Ga }{\eee_a\SUBS{\ov{\vvv}}{\ov{\xxx}}}{\ttt_c}$, for some
    $\IMPLS{\ttt_c}{\tttS}$ \`by i.h.\\
    $\ttt_c = \tttS$ \` by Lemma~\ref{lem:methstruct}\\
    $\JUDGEEXPR{\Ga}{\ASSRTet{\eee_a\SUBS{\ov{\vvv}}{\ov{\xxx}}}{\ttt}}{\ttt}$
    \` by \tyrulename{t-stupid}\\
    $\IMPLS{\ttt}{\ttt}$ \` by reflexivity
   \end{tabbing}

\end{description}

\end{proof}

  \begin{restatable}[Type Preservation]{theorem}{thmpres}\label{thm:pres}
If $\JUDGEEXPR{\emptyset}{\eee}{\ttt}$ and $\eee \REDUCE{} \eee'$ then
  $\JUDGEEXPR{\emptyset}{\eee}{\ttt'}$, for some $\IMPLS{\ttt'}{\ttt}$.
\end{restatable}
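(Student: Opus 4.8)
The plan is to prove this by induction on the derivation of the reduction $\eee \REDUCE \eee'$, with one case for each reduction rule (\tyrulename{r-field}, \tyrulename{r-call}, \tyrulename{r-assert}, and the congruence rule \tyrulename{r-context}). The key structural fact I would exploit throughout is that FG typing is syntax-directed: there is no subsumption rule, so the shape of $\eee$ determines which rule concluded $\JUDGEEXPR{\emptyset}{\eee}{\ttt}$, and each case begins by inverting that rule. I would also lean on reflexivity and transitivity of $\imp$ at nearly every step to thread the ``type may narrow'' slack through the argument.

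For the three non-congruence cases the work is light. In the \tyrulename{r-field} case the redex $\SELef{\STRUCTte{\tttS}{\ov{\vvv}}}{\fff_i}$ is typed by \tyrulename{t-field}; inverting \tyrulename{t-literal} on the receiver yields $\JUDGEEXPR{\emptyset}{\vvv_i}{\ttt_i}$ with $\ttt_i \imp \ttt$, which is exactly the conclusion. In the \tyrulename{r-assert} case, $\ASSRTet{\vvv}{\ttt}$ reduces to $\vvv$; whichever assertion rule typed it, the result type is $\ttt$, and combining the reduction side-condition $\vtype(\vvv) \imp \ttt$ with $\JUDGEEXPR{\emptyset}{\vvv}{\vtype(\vvv)}$ gives the result. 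The \tyrulename{r-call} case is the most involved of the three: the redex $\MCALLeme{\vvv}{\mmm}{\ov{\vvv}}$ is typed by \tyrulename{t-call}, and since $\vvv$ is a value, inverting \tyrulename{t-literal} pins the receiver type to $\vtype(\vvv) = \tttS$. I would recover well-typedness of the method body from \tyrulename{t-func}, identify its signature with the one read off by \tyrulename{t-call} using the $\unique$ predicate, and then apply the Substitution lemma (Lemma~\ref{lem:subst}) to the simultaneous substitution of the receiver and all arguments; transitivity of $\imp$ bridges the body's type and the declared return type.

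The main obstacle is the congruence case \tyrulename{r-context}, $E[d] \REDUCE E[e]$ with $d \REDUCE e$. Here the inductive hypothesis gives $\JUDGEEXPR{\emptyset}{e}{\ttt'}$ with $\ttt' \imp u$, where $u$ is the type of $d$ — so the type of the subterm in the hole may strictly narrow, and I must show the surrounding context re-types to something implementing the original type. I would case-split on the shape of $E$. For the two argument-position contexts (method-call arguments and structure-literal fields), transitivity of $\imp$ against the required parameter/field type suffices, leaving the enclosing result type unchanged. For the method-call-receiver context, Lemma~\ref{lem:methsub} guarantees the relevant method specification survives the narrowing, so \tyrulename{t-call} still applies with the same result type. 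For the field-selection context, Lemma~\ref{lem:methstruct} (using that the narrowed type is well-formed, by the Well formed lemma) forces the narrowed receiver type to coincide with the original structure type, so \tyrulename{t-field} applies verbatim. The genuinely delicate subcase is the type-assertion context $\ASSRTet{\Hole}{\ttt}$: when the subterm narrows from an interface type to a structure type, the assertion can no longer be justified by \tyrulename{t-assert$_I$} or \tyrulename{t-assert$_S$} and must be re-typed by the boxed rule \tyrulename{t-stupid}. This is precisely the phenomenon that motivates stupid type assertions, and carrying this subcase is the crux of the whole proof; once it goes through, the result type is the asserted type $\ttt$ in every subcase, so reflexivity of $\imp$ finishes the argument.
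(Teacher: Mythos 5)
Your proposal is correct and takes essentially the same approach as the paper's proof: induction on the reduction derivation with inversion of the syntax-directed typing rules, the Substitution lemma plus transitivity of $\imp$ for \tyrulename{r-call}, Lemma~\ref{lem:methsub} for receiver narrowing, Lemma~\ref{lem:methstruct} for the field-selection context, and re-typing by the boxed \tyrulename{t-stupid} rule when an interface-typed subterm narrows to a structure type. The only difference is presentational: the paper replaces \tyrulename{r-context} by an equivalent set of single-frame congruence rules (\RULENAME{rc-recv}, \RULENAME{rc-arg}, \RULENAME{rc-assert}, \ldots) so that each congruence case receives its induction hypothesis directly, whereas your formulation with a full evaluation context $E$ strictly requires an inner induction on the structure of $E$ (not merely a case split) because contexts nest.
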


\begin{proof}
  By induction on the derivation of $\eee \REDUCE{} \eee'$.

  \begin{description}
\item[Case:] Rule \RULENAME{r-field}

    \begin{tabbing}
      $\eee = \SELef{\STRUCTte{\tttS}{\ov{\eee_f}}}{\fff_i}$,
      $\eee' = \eee_{f_{i}}$, $\ttt = \ttt_{f_i}$\\
      $\FIELDSOFtD{\tttS} \EQ{} \ov{\FDECLft{\fff}{\ttt}}$ \` by
      inversion on $\REDUCE{}$\\
      $\JUDGEEXPR{\emptyset}{\STRUCTte{\tttS}{\ov{\eee_f}} }{\tttS}$ \` by
      inversion on typing\\
      $\FDECLft{\fff}{\ttt_f} \in \FIELDSOFtD{\tttS}$ \` by inversion
      on typing\\
      $\JUDGEEXPR{\emptyset}{\ov{\eee_f}}{\ov{\ttt_a}}$ \` by
      inversion on typing\\
      $\IMPLS{\ov{\ttt_a}}{\ov{\ttt_f}}$ \` by inversion on typing\\
      $\FIELDSOFtD{\tttS} = \ov{\FDECLft{\fff}{\ttt_f}}$ \` by
      inversion on typing\\
      $\JUDGEEXPR{\emptyset}{\eee_{f_i}}{\ttt_{a_i}}$ and
      $\IMPLS{\ttt_{a_i}}{\ttt_{f_i}}$ \\
    \end{tabbing}

\item[Case:] Rule $\RULENAME{r-call}$

    \begin{tabbing}
      $\eee =
      \MCALLeme{\STRUCTte{\ttt_S}{\ov{\eee}}}{\mmm}{\ov{\eee_a}}$,
      $\eee' =
      \eee_0\SUBS{\STRUCTte{\ttt_S}{\ov{\eee}}}{\xxx_0}\SUBS{\ov{\eee_a}}{\ov{\xxx}}$,
      $\ttt = \ttt_r$ \\
      $\MBODYmt{\mmm}{\tttS}
      \EQ{}
      \MBODYxxe{\xxx_0:\ttt_S}{ \ov{\xxx : \ttt_p} }{\eee_0}$
\` by inversion on $\REDUCE{}$\\

    $\JUDGEEXPR{\emptyset}{\STRUCTte{\ttt_S}{\ov{\eee}}}{\ttt_S}$ \` by inversion on
    typing\\
    $\mmm\MSIGpt{\ov{\pDECLxt{\xxx}{\ttt_p}}}{\ttt_r} \in
  \methodsof{\ttt_S}$ \` by inversion on typing\\
    $\JUDGEEXPR{\emptyset}{\ov{\eee_a}}{\ov{\ttt_a}}$ \` by
    inversion on typing\\
    $\IMPLS{\ov{\ttt_a}}{\ov{\ttt_p}}$ \` by inversion on typing\\
$\JUDGEEXPR{\xxx : \tttS, \ov{\xxx : \ttt_p}}{\eee_0}{\ttt_1}$ with
$\IMPLS{\ttt_1}{\ttt_r}$ \` by inversion on method typing\\
$\JUDGEEXPR{}{\eee_0\SUBS{\STRUCTte{\ttt_S}{\ov{\eee}}}{\xxx_0}\SUBS{\ov{\eee_a}}{\ov{\xxx}}}{\ttt_2}$,
for some $\IMPLS{\ttt_2}{\ttt_1}$ \` by Lemma~\ref{lem:subst}\\
$\IMPLS{\ttt_2}{\ttt_r}$ \` by transitivity\\
\end{tabbing}

\item[Case:] Rule \RULENAME{r-assert}

   \begin{tabbing}
     $\eee = \ASSRTet{\STRUCTte{\tttS}{\ov{\eee}}}{\ttt'}$,
     $\eee' = \STRUCTte{\tttS}{\ov{\eee}}$, $\ttt = \ttt'$ \\
     $\IMPLS{\tttS}{\ttt'}$ \` by inversion on $\REDUCE{}$\\
     $\JUDGEEXPR{\emptyset}{\STRUCTte{\tttS}{\ov{\eee}}}{\ttt'}$ \` by
     inversion on typing\\
   \end{tabbing}
\item[Case:] Rule \RULENAME{rc-recv}

  \begin{tabbing}
    $\eee = \MCALLeme{\eee_0}{\mmm}{\ov{\eee_a}}$,
    $\eee' = \MCALLeme{\eee_0'}{\mmm}{\ov{\eee_a}}$,
    $\ttt = \ttt_r$ \\
    $\eee_0 \REDUCE{} \eee_0'$ \` by inversion on $\REDUCE{}$\\
    $\JUDGEEXPR{\emptyset}{\eee_0}{\ttt_0}$   \` by inversion on typing\\
$\mmm\MSIGpt{\ov{\pDECLxt{\xxx}{\ttt_p}}}{\ttt_r} \in
  \methodsof{\ttt_0}$ \` by inversion on typing\\
    $\JUDGEEXPR{\emptyset}{\ov{\eee_a}}{\ov{\ttt_a}}$ \` by
    inversion on typing\\
    $\IMPLS{\ov{\ttt_a}}{\ov{\ttt_p}}$ \` by inversion on typing\\
$\JUDGEEXPR{\emptyset}{\eee_0}{\ttt_1}$, for some $\IMPLS{\ttt_1}{\ttt_0}$ \`
by i.h.\\
$\mmm\MSIGpt{\ov{\pDECLxt{\xxx}{\ttt_p}}}{\ttt_r} \in
  \methodsof{\ttt_1}$ \` by Lemma~\ref{lem:methsub}\\
$\JUDGEEXPR{\emptyset}{\MCALLeme{\eee_0'}{\mmm}{\ov{\eee_a}} }{\ttt_r}$ \`
by $\tyrulename{t-call}$\\
$\IMPLS{\ttt_r}{\ttt_r}$ \` by reflexivity
   \end{tabbing}
    
\item[Case:] Rule \RULENAME{rc-arg}

  \begin{tabbing}
    $\eee = \MCALLeme{\eee_0}{\mmm}{\ov{\eee_a}}$,
    $\eee' = \MCALLeme{\eee_0}{\mmm}{\ov{\eee'_a}}$,
    $\ttt = \ttt_r$ \\
    $\eee_{a_i} \REDUCE{} \eee'_{a_i} $ \` by inversion on $\REDUCE{}$\\
    $\JUDGEEXPR{\emptyset}{\eee_0}{\ttt_0}$   \` by inversion on typing\\
$\mmm\MSIGpt{\ov{\pDECLxt{\xxx}{\ttt_p}}}{\ttt_r} \in
  \methodsof{\ttt_0}$ \` by inversion on typing\\
    $\JUDGEEXPR{\emptyset}{\ov{\eee_a}}{\ov{\ttt_a}}$ \` by
    inversion on typing\\
    $\IMPLS{\ov{\ttt_a}}{\ov{\ttt_p}}$ \` by inversion on typing\\
$\JUDGEEXPR{\emptyset}{\eee'_{a_i}}{\ttt_1}$, for some $\IMPLS{\ttt_1}{\ttt_{a_i}}$ \`
by i.h.\\
$\IMPLS{\ttt_1}{\ttt_{p_i}}$ \` by transitivity\\
$\JUDGEEXPR{\emptyset}{\MCALLeme{\eee_0}{\mmm}{\ov{\eee'_a}} }{\ttt_r}$ \`
by $\tyrulename{t-call}$\\
$\IMPLS{\ttt_r}{\ttt_r}$ \` by reflexivity
\end{tabbing}

\item[Case:] Rule \RULENAME{rc-assert}

  \begin{tabbing}
    $\eee = \ASSRTet{\eee_0}{\ttt'}$,
    $\eee' = \ASSRTet{\eee'_0}{\ttt'}$, $\ttt = \ttt'$\\
    $\eee_0 \REDUCE{} \eee'_0$ \` by inversion on $\REDUCE{}$\\
    $\JUDGEWF{\ttt'}$ \` by inversion on typing\\
    {\bf Sub}\={\bf case:} $\JUDGEEXPR{\Ga}{\eee_0}{\tttI}$ \` by inversion on typing\\
    \> $\JUDGEEXPR{\Ga}{\eee'_0}{\ttt_0}$, for some
    $\IMPLS{\ttt_0}{\tttI}$ \` by i.h.\\
    \> {\bf Subsubcase:} $\ttt_0$ is an interface type and $\ttt'$ is
    a struct type\\
    \> $\ttt_0 = \tttI$ and $\IMPLS{\ttt'}{\tttI}$ \` by value restriction\\
    \> $\JUDGEEXPR{\Ga}{\ASSRTet{\eee'_0}{\ttt'}}{\ttt'}$ \` by
    \tyrulename{t-assert$_\SSS$}\\
    \> {\bf Subsubcase:} $\ttt_0$ is an interface type and $\ttt'$ is
    an interface type\\
     \> $\JUDGEEXPR{\Ga}{\ASSRTet{\eee'_0}{\ttt'}}{\ttt'}$ \` by
     \tyrulename{t-assert$_{\III}$} \\
     \> {\bf Subsubcase:} $\ttt_0$ is a struct. type\\
     \> $\JUDGEEXPR{\Ga}{\ASSRTet{\eee'_0}{\ttt'}}{\ttt'}$
     \` by    \tyrulename{t-stupid}\\
     
\end{tabbing}

  \end{description}
\end{proof}
  
\subsection{Progress}
\label{app:fg_prog}

The inductive definition of $\PANICe{\eee}$ is given
below:
\[
  \begin{array}{c}
    \inferrule[\tyrulename{p-struct}]
    {\PANICe{e_i}}
    {\PANICe{\STRUCTte{\tttS}{\ov{\eee}}}}
    \quad
    \inferrule[\tyrulename{p-sel}]
    {\PANICe{\eee}}
    {\PANICe{\SELef{\eee}{\fff}}}
    \quad
    \inferrule[\tyrulename{p-call-arg}]
    {\PANICe{\eee_{{i}}}} 
    {\PANICe{\MCALLeme{\vvv}{\mmm}{\ov{\eee}}}}
    \quad
    \inferrule[\tyrulename{p-call-body}]
    {\PANICe{\eee}}
    {\PANICe{\MCALLeme{\eee}{\mmm}{\ov{\eee}}}}
    \\[1em]
    \inferrule[\tyrulename{p-assert}]
    {\PANICe{\eee}}
    {\PANICe{\ASSRTet{\eee}{\ttt}}}
    \quad
    \inferrule[\tyrulename{p-cast}]
    {\NIMPLS{\tttS}{\ttt}}
    {\PANICe{\ASSRTet{\STRUCTte{\tttS}{\ov{\vvv}}}{\ttt}}}
    \end{array}
  \]

  \begin{lemma}[Canonical Forms]\label{lem:canforms}
If $\eee$ is a value and $\JUDGEEXPR{\emptyset}{\eee}{\ttt}$ then
$\ttt = \tttS$, for some $\tttS$ and $\eee =
\STRUCTte{\tttS}{\ov{\vvv}}$, for some $\ov{\vvv}$.
\end{lemma}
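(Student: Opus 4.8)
The plan is to prove the statement by a direct inversion argument, first pinning down the shape of $\eee$ from the definition of values and then pinning down $\ttt$ from the typing derivation. First I would appeal to the grammar of values, $\vvv ::= \STRUCTte{\tttS}{\ov{\vvv}}$: since $\eee$ is assumed to be a value, it must already be of the form $\eee = \STRUCTte{\tttS}{\ov{\vvv}}$ for some structure type name $\tttS$ and some sequence of values $\ov{\vvv}$. This immediately discharges the second conjunct of the conclusion and, crucially, records that the outermost syntactic form of $\eee$ is a structure literal.

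Next I would invert the typing derivation $\JUDGEEXPR{\emptyset}{\eee}{\ttt}$ using the fact, established above, that $\eee = \STRUCTte{\tttS}{\ov{\vvv}}$. The expression typing rules of Figure~\ref{fig:fg-typing} are syntax-directed in the outermost constructor of the subject expression: \tyrulename{t-var} applies only to variables, \tyrulename{t-call} only to method calls, \tyrulename{t-field} only to selections, and the three assertion rules only to type assertions. Hence the only rule whose conclusion can match a structure literal is \tyrulename{t-literal}, which derives $\JUDGEEXPR{\emptyset}{\STRUCTte{\tttS}{\ov{\vvv}}}{\tttS}$. By inversion it follows that $\ttt = \tttS$, which is a structure type, giving the first conjunct and completing the proof.

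There is no genuine obstacle here: the entire argument rests on the observation that the typing rules are syntax-directed at the top-level form of the subject, so exactly one rule applies to a structure literal. The only point requiring (routine) care is confirming that no other rule --- in particular none of the assertion rules nor \tyrulename{t-stupid} --- can conclude a judgement whose subject is a structure literal, which is immediate from inspecting their conclusions.
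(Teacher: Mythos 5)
Your proposal is correct and matches the paper's argument in substance: the paper disposes of this lemma with a one-line ``straightforward induction on typing,'' which for a value collapses to exactly the inversion you describe, since every case other than \tyrulename{t-literal} is vacuous for a structure literal. Your observation that no induction hypothesis is actually needed---the value grammar fixes the outermost form and syntax-directedness of the typing rules does the rest---is a fair (slightly sharper) presentation of the same routine argument.
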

\begin{proof}
Straightforward induction on typing.
\end{proof}

\begin{restatable}[Progress]{theorem}{thmprogpanic}\label{thm:progpanic}
If $\JUDGEEXPR{\emptyset}{\eee}{\ttt}$ then either $\eee$ is a value,
$\PANICe{\eee}$, or $\eee \REDUCE \eee'$.
\end{restatable}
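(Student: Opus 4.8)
The plan is to prove this by induction on the derivation of $\JUDGEEXPR{\emptyset}{\eee}{\ttt}$, case-splitting on the last typing rule used. The rule \tyrulename{t-var} is immediately vacuous, since the empty environment binds no variables. Every other rule forms $\eee$ from one or more immediate subexpressions, each of which is typable in the empty environment, so the induction hypothesis applies to each of them: each subexpression is a value, panics, or reduces. The first thing I would do in each remaining case is handle the \emph{propagation} of these two latter outcomes. Matching the left-to-right call-by-value evaluation contexts, I scan the relevant subexpressions in order: if the first non-value subexpression panics, then $\eee$ panics by the corresponding \textsc{p}-rule (e.g.\ \tyrulename{p-sel}, \tyrulename{p-assert}, \tyrulename{p-struct}, \tyrulename{p-call-body}, or \tyrulename{p-call-arg}); if instead it reduces, then $\eee$ reduces by the matching congruence (\textsc{rc}) rule. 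This reduces each case to the single remaining situation in which \emph{all} relevant subexpressions are already values.

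For the all-values base cases I would argue as follows. In \tyrulename{t-literal}, if every field expression is a value then $\eee = \STRUCTte{\tttS}{\ov{\vvv}}$ is itself a value, and we are done. In \tyrulename{t-field}, the receiver is a value, so by Canonical Forms (Lemma~\ref{lem:canforms}) it has the form $\STRUCTte{\tttS}{\ov{\vvv}}$; inversion on the typing gives that the selected field lies in $\fields(\tttS)$, so \tyrulename{r-field} fires. In the assertion cases \tyrulename{t-assert$_\III$}, \tyrulename{t-assert$_\SSS$}, and \tyrulename{t-stupid}, the subject is again a value $\STRUCTte{\tttS}{\ov{\vvv}}$ by Canonical Forms, so $\vtype(\vvv) = \tttS$; since $\imp$ is a decidable relation on declared types, either $\IMPLS{\tttS}{\ttt}$ and \tyrulename{r-assert} applies, or $\NIMPLS{\tttS}{\ttt}$ and \tyrulename{p-cast} yields $\PANICe{\eee}$. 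Thus the assertion never gets stuck.

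The main obstacle is the method-call case \tyrulename{t-call}, where I must show that when the receiver and all arguments are values the reduction \tyrulename{r-call} actually applies, i.e.\ that $\mbody(\vtype(\vvv).\mmm)$ is \emph{defined}. By Canonical Forms the receiver value is some $\STRUCTte{\tttS}{\ov{\vvv}}$, so $\vtype(\vvv) = \tttS$ is a structure type. Inversion on \tyrulename{t-call} gives $\mmm M \in \methodsof{\tttS}$ for the appropriate signature $M$, and by the definition of $\methods$ on a structure type this set is exactly the specifications arising from method declarations $\func~(x~\tttS)~\mmm M~\br{\return~e}$ in $\ov{D}$. Hence such a declaration exists, $\mbody(\tttS.\mmm)$ is defined, and \tyrulename{r-call} fires after the value substitution—whose arity matches, again by inversion on the typing. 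The only subtlety worth flagging is that all of this rests on having a well-formed program in scope (the auxiliary functions are defined relative to a fixed $\ov{D}$), so I would note that the lookups succeed precisely because the call was well-typed; no appeal to Preservation is needed, only the definitions of $\methods$ and $\mbody$ together with Lemma~\ref{lem:canforms}.
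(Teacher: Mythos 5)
Your proposal is correct and follows essentially the same argument as the paper's proof: induction on the typing derivation, propagation of panics and reductions through the panic and congruence rules, Canonical Forms (Lemma~\ref{lem:canforms}) in the value subcases, and, in the method-call case, definedness of $\mbody$ obtained from $\methods$ of a structure type via inversion. The only local difference is in the assertions: the paper notes the value subcase of \tyrulename{t-assert$_I$} and \tyrulename{t-assert$_S$} is impossible (a well-typed value cannot have interface type), whereas you handle all three assertion rules uniformly by splitting on whether $\imp$ holds — both are sound, and your explicit split is in fact slightly more careful than the paper's \tyrulename{t-stupid} subcase, which invokes \tyrulename{p-cast} without first checking that the assertion fails.
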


\begin{proof}

  By induction on typing. We show two illustrative but standard cases 
  and then all cases pertaining to type assertions.

  \begin{description}

 \item[Case:] \tyrulename{t-literal} 
    \begin{tabbing}
      $\JUDGEEXPR{\emptyset}{\STRUCTte{\tttS}{\ov{\eee_f}}}{\tttS}$ \`
      this case\\
      $\JUDGEEXPR{\emptyset}{\ov{\eee_f}}{\ov{\ttt}}$ \` by inversion\\
       $\FIELDSOFtD{\ttt_S} = \ov{\FDECLft{\fff}{\ttt_f}}$ \` by
       inversion\\
       $\impls{\ov{\ttt}}{\ov{\ttt_f}}$ \` by inversion\\
       $\ov{\eee_{f_i}}$ is a value, $\PANICe{\eee_{f_i}}$ or
       $\eee_{f_i} \REDUCE  \eee_{f_i}'$ \` by i.h., for all such
       $i$\\
       {\bf Subcase:} $\PANICe{\eee_{f_i}}$, for some $i$\\
       $\PANICe{\STRUCTte{\tttS}{\ov{\eee_f}}}$ \` by
       $\tyrulename{p-struct}$\\
       {\bf Subcase:} $\eee_{f_i}$ is a value, for all $i$\\
       $\STRUCTte{\tttS}{\ov{\eee_f}}$ is a value \` by definition\\
       {\bf Subcase:} $\eee_f$ = $\ov{\vvv}\cdot \eee_i\cdot
       \ov{\eee_{j}}$, where $\eee_i \REDUCE \eee_i'$\\
       $\STRUCTte{\tttS}{\ov{\eee_f}} \REDUCE
       \STRUCTte{\tttS}{\ov{\vvv}\cdot \eee_i'\cdot
         \ov{\eee_{j}}}$ \` by $\RULENAME{rc-literal}$
     \end{tabbing}

     \item[Case:] \tyrulename{t-field}

     \begin{tabbing}
       $\JUDGEEXPR{\emptyset}{\SELef{\eee}{\fff_i}}{\ttt}$ \` this case\\
       $\JUDGEEXPR{\emptyset}{\eee}{\tttS}$ \` by inversion\\
       $\FDECLft{\fff_i}{\ttt} \in \FIELDSOFtD{\tttS}$ \` by inversion\\
       $\eee$ is a value, $\PANICe{\eee}$ or $\eee \REDUCE \eee'$ \`
       by i.h.\\
       {\bf Subcase:} $\eee$ is a value\\
       $\eee = \STRUCTte{\tttS}{\ov{\vvv}}$ \` by
       Lemma~\ref{lem:canforms}\\
       $\SELef{\STRUCTte{\tttS}{\ov{\vvv}}}{\fff_i} \REDUCE
       \vvv_i$ \` by $\RULENAME{r-field}$\\
       {\bf Subcase:} $\PANICe{\eee}$ \\
       $\PANICe{\SELef{\eee}{\fff_i}}$ \` by \tyrulename{p-sel}\\
       {\bf Subcase:} $\eee \REDUCE \eee'$\\
       $\SELef{\eee}{\fff_i}
     \REDUCE
     \SELef{\eee'}{\fff_i}$ \` by $\RULENAME{rc-field}$
   \end{tabbing}

 \item[Case:] \tyrulename{t-call}

   \begin{tabbing}
$\JUDGEEXPR{\emptyset}{\MCALLeme{\eee}{\mmm}{\ov{\eee_a}}}{\ttt_r}$ \` this
case\\
$ \JUDGEEXPR{\emptyset}{\eee}{\ttt}$ \` by inversion\\
$\METHODmpt{\mmm}{\ov{\pDECLxt{\xxx}{\ttt_p}}}{\ttt_r} \in
\methodsof{\ttt}$ \` by inversion\\
$\JUDGEEXPR{\emptyset}{\ov{\eee_a}}{\ov{\ttt_a}}$ \` by inversion\\
$\impls{\ov{\ttt_a}}{\ov{\ttt_p}}$ \` by inversion\\
$\eee$ is a value, $\PANICe{\eee}$ or $\eee \REDUCE \eee'$ \` by
i.h.\\
{\bf Subcase:} $\PANICe{\eee}$\\
$\PANICe{\MCALLeme{\eee}{\mmm}{\ov{\eee_a}}}$ \` by
$\tyrulename{p-call-body}$\\
{\bf Subcase:} $\eee \REDUCE \eee'$\\
$\MCALLeme{\eee}{\mmm}{\ov{\eee_a}}
     \REDUCE
     \MCALLeme{\eee'}{\mmm}{\ov{\eee_a}}$ \` by $\RULENAME{rc-recv}$\\
     {\bf Subcase:} $\eee$ is a value\\
       $\eee = \STRUCTte{\tttS}{\ov{\vvv}}$ \` by
       Lemma~\ref{lem:canforms}, for some $\tttS$ and $\ov{\vvv}$\\
     $\ov{\eee_{a_i}}$ is a value, $\PANICe{\eee_{a_i}}$ or
       $\eee_{a_i} \REDUCE  \eee_{f_i}'$ \` by i.h., for all such
       $i$\\
       {\bf Subsubcase:} $\ov{\eee_{a_i}}$ is a value, for all $i$\\
       $\MBODYmt{\mmm}{\ttt_S}
  \EQ
  \MBODYxxe{\xxx}{\ov{\xxx}}{\eee_b}$, for some $\eee_b$  \`
  since $\METHODmpt{\mmm}{\ov{\pDECLxt{\xxx}{\ttt_p}}}{\ttt_r} \in
\methodsof{\tttS}$ \\
       $\MCALLeme{\eee}{\mmm}{\ov{\eee_a}}
    \REDUCE{}
    \eee_b\SUBS{\eee}{\xxx}\SUBS{\ov{\eee_a}}{\ov{\xxx}}$ \` by
    \RULENAME{r-call}\\
    {\bf Subsubcase:} $\PANICe{\eee_{a_i}}$, for some $i$\\
    $\PANICe{\MCALLeme{\eee}{\mmm}{\ov{\eee_a}}}$ \` by
    \tyrulename{p-call-arg}\\
    {\bf Subsubcase:} $\eee_a$ = $\ov{\vvv'}\cdot \eee_i\cdot
    \ov{\eee_{j}}$, where $\eee_i \REDUCE \eee_i'$\\
    $\MCALLeme{\STRUCTte{\tttS}{\ov{\vvv}}}{\mmm}{\ov{\vvv'}\cdot \eee_i\cdot \ov{\eee_j}}
     \REDUCE
     \MCALLeme{\STRUCTte{\tttS}{\ov{\vvv}}}{\mmm}{\ov{\vvv'}\cdot \eee_i' \cdot \ov{\eee_j}}$
     \` by \RULENAME{rc-arg}
     \end{tabbing}
     
     \item[Case:] \tyrulename{t-assert$_\III$}
\begin{tabbing}
       $\JUDGEEXPR{\emptyset}{\ASSRTet{\eee}{\tttI}}{\tttI}$ \` this
       case\\
       $\JUDGEEXPR{\emptyset}{\eee}{\tttI'}$ \` by inversion\\
       $\eee$ is a value, $\PANICe{\eee}$ or $\eee \REDUCE \eee'$ \`
       by i.h.\\
       {\bf Subcase:} $\eee$ is a value\\
       Impossible, deriving a contradiction with
       Lemma~\ref{lem:canforms}.\\
       {\bf Subcase:}$\PANICe{\eee}$ \\
       $\PANICe{\ASSRTet{\eee}{\tttI}}$\` by
       $\tyrulename{p-assert}$\\
       {\bf Subcase:} $\eee \REDUCE \eee'$\\
       $\ASSRTet{\eee}{\tttI} \REDUCE \ASSRTet{\eee'}{\tttI}$ \` by $\RULENAME{rc-assert}$
     \end{tabbing}
   \item[Case:] \tyrulename{t-assert$_\SSS$}   
     \begin{tabbing}
       $\JUDGEEXPR{\emptyset}{\ASSRTet{\eee}{\tttS}}{\tttS}$ \` this case\\
       $\JUDGEEXPR{\emptyset}{\eee}{\tttI}$ \` by inversion\\
       $\impls{\ttt_\SSS}{\ttt_\III}$ \` by inversion\\
       $\eee$ is a value, $\PANICe{\eee}$ or $\eee \REDUCE \eee'$ \`
       by i.h.\\
       {\bf Subcase:} $\eee$ is a value\\
       Impossible, deriving a contradiction with
        Lemma~\ref{lem:canforms}.\\
       {\bf Subcase:}$\PANICe{\eee}$ \\
       $\PANICe{\ASSRTet{\eee}{\tttS}}$\` by
       $\tyrulename{p-assert}$\\
       {\bf Subcase:} $\eee \REDUCE \eee'$\\
       $\ASSRTet{\eee}{\tttS} \REDUCE \ASSRTet{\eee'}{\tttS}$ \` by $\RULENAME{rc-assert}$
       
     \end{tabbing}

      \item[Case:] \tyrulename{t-stupid}

     \begin{tabbing}
       $\JUDGEEXPR{\emptyset}{\ASSRTet{\eee}{\ttt}}{\ttt}$ \` this case\\
       $\JUDGEEXPR{\emptyset}{\eee}{\tttS}$ \` by inversion\\
$\eee$ is a value, $\PANICe{\eee}$ or $\eee \REDUCE \eee'$ \`
       by i.h.\\
       {\bf Subcase:} $\eee$ is a value\\
       $\eee = \STRUCTte{\tttS}{\ov{\vvv}}$ \` by
        Lemma~\ref{lem:canforms} and $\ov{\vvv}$\\
       $\PANICe{\ASSRTet{\eee}{\ttt}}$ \` by
       $\tyrulename{p-cast}$\\
       {\bf Subcase:} $\PANICe{\eee}$\\
       $\PANICe{\ASSRTet{\eee}{\ttt}}$\` by
       $\tyrulename{p-assert}$\\
       {\bf Subcase:} $\eee \REDUCE \eee'$ \\
       $\ASSRTet{\eee}{\ttt} \REDUCE \ASSRTet{\eee'}{\ttt}$ \` by $\RULENAME{rc-assert}$
       \end{tabbing}
    
  \end{description}

 \end{proof}

 \newpage

\section{FGG Type Soundness}
\label{app:fggtsound}
This section develops FGG type soundness in the form of Type
Preservation by evaluation (Appendix~\ref{app:fgg_tpres} and Progress
(Appendix~\ref{app:fgg_prog}).

The various substitution properties are often stated more generally in
this appendix for convenience (i.e., those in the main matter are
special cases of the results developed here). We make use of
a reduction relation that replaces the contextual rule
$\tyrulename{r-context}$ with the equivalent set of rules that
identify each possible reduction explicitly, implementing a
left-to-right call-by-value semantics. Uses of these congruence rules
are labelled with the prefix {\sc rc}. The development of progress
makes use of an inductively defined $e\,\PANIC$ predicate, which holds
iff expression $e$ causes a runtime panic (i.e., contains an invalid
type assertion). This predicate is equivalent to the definition of {\it
  panics} from the main matter.

\subsection{Type Preservation}
\label{app:fgg_tpres}

Below we write $\Phi , \ov{\alpha : \tau_I}$ under the assumption that
$\distinct(\hat\Phi, \ov{\alpha})$ and treat typing contexts and
$\TENV$ and type formals $\Phi, \Psi$ interchangeably, as in the rules
of Section~\ref{sec:fgg}.

\begin{lemma}[Weakening]
  \label{lem:fgg_weak}
  Let $\JUDGEWFG{\TENV}{\ov{\tau_\III}}$ and $\JUDGEWFG{\TENV}{\ov{\tau_0}}$. Then:

  \begin{enumerate}
  \item $\JUDGEWFG{\TENV}{\tau}$ implies $\JUDGEWFG{\TENV , \ov{\alpha
        : \tau_\III}}{\tau}$;
  \item $\JUDGEWFG{\TENV}{\Psi}$ implies
    $\JUDGEWFG{\TENV, \ov{\alpha : \tau_\III}}{\Psi}$
    and $\JUDGEWFG{\TENV}{\Psi,\ov{\alpha : \tau_\III}}$;
  \item $\JUDGEFORMCTXT{\TENV}{\Psi}{\TENV'}$ implies
    $\JUDGEFORMCTXT{\TENV}{\Psi, \ov{\alpha : \tau_\III}}{(\TENV', \ov{\alpha : \tau_\III})}$;
  \item $\IMPLSG{\TENV}{\tau_1}{\tau_2}$ implies
    $\IMPLSG{\TENV , \ov{\alpha : \tau_\III} }{\tau_1}{\tau_2}$;
  \item $\JUDGEEXPRG{\TENV}{\Ga}{\eee}{\tau}$ implies
    $\JUDGEEXPRG{\TENV, \ov{\alpha : \tau_\III}}{\Ga}{\eee}{\tau}$
    and $\JUDGEEXPRG{\TENV}{\Ga , \ov{ x : \tau_0}}{\eee}{\tau}$.
  \end{enumerate}
\end{lemma}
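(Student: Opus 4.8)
The plan is to prove all five statements \emph{simultaneously} by induction on the structure of the derivation of the respective hypothesis (the type, the formal, the nesting judgement, the implements derivation, or the expression typing derivation). A single mutual induction is forced on us because the judgements are interdependent: well-formedness of a named type $t(\phi)$ in rule \tyrulename{t-named} invokes the bound check $\eta = (\Phi \by_\Delta \phi)$, which unfolds to an implements obligation $\Delta \vdash \ov{\alpha \imp \tau_I}[\eta]$, so part~(1) rests on part~(4); the implements rule for interfaces tests $\methods_\Delta(\tau) \supseteq \methods_\Delta(\tau_I)$; expression typing (part~(5)) appeals through \tyrulename{t-call}, \tyrulename{t-literal}, and the assertion rules to parts~(1) and~(4) and to method lookup; and the formal judgements (parts~(2),(3)) bottom out in part~(1). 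I would order the cases so that every appeal to the induction hypothesis goes to a strictly smaller subderivation of one of the five judgements. The two well-formedness preconditions $\Delta \vdash \ov{\tau_I} \ok$ and $\Delta \vdash \ov{\tau_0} \ok$, together with the distinctness convention stated just before the lemma, are used only to guarantee that the extended contexts $\Delta, \ov{\alpha : \tau_I}$ and $\Gamma, \ov{x : \tau_0}$ are themselves legitimate, so that the conclusions are meaningful; they are side conditions rather than objects consumed in the inductive steps.

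The one auxiliary fact I need up front is that $\methods$ is \emph{invariant} under extension of the type environment: for every $\tau$ with $\Delta \vdash \tau \ok$ we have $\methods_{\Delta,\,\ov{\alpha : \tau_I}}(\tau) = \methods_\Delta(\tau)$. This holds because $\methods_\Delta$ consults $\Delta$ only to resolve a leading type parameter to its bound, and the binding of any parameter already bound in $\Delta$ is untouched by the extension; the freshly added parameters $\ov{\alpha}$ are distinct from the old ones by the convention, and since the added bounds $\ov{\tau_I}$ are well formed under $\Delta$ alone, the extension is orthogonal to the existing types. With this in hand, part~(4) is immediate: the rules for $\alpha \imp \alpha$ and $\tau_S \imp \tau_S$ are context-free, and the interface case simply transports the superset inclusion across the equality of method sets.

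With part~(4) settled, the remaining parts are routine rule-by-rule reconstructions. For part~(1): \tyrulename{t-param} goes through because $(\alpha : \tau_I) \in \Delta$ still holds in the larger context; \tyrulename{t-named} is rebuilt by applying the part-(1) hypothesis to $\Delta \vdash \phi \ok$ and re-deriving the bound check $\Phi \by_\Delta \phi$, using part~(4) to move each implements premise into the extended environment. Parts~(2) and~(3) follow by feeding the part-(1) hypothesis to the bounds in \tyrulename{t-formal} and \tyrulename{t-nested}, with the distinctness convention keeping $\distinct(\hat\Phi, \ov{\alpha})$. The $\Gamma$-extension half of part~(5) is the easiest case, since adjoining $\ov{x : \tau_0}$ only enlarges $\Gamma$ and never invalidates a \tyrulename{t-var} lookup; the $\Delta$-extension half threads parts~(1) and~(4) and the $\methods$-invariance fact through \tyrulename{t-call}, \tyrulename{t-literal}, \tyrulename{t-field}, and the three assertion rules.

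The main obstacle is precisely the $\methods$-invariance step and the circularity it exposes: because well-formedness and implements are mutually recursive through $\by_\Delta$, one must confirm that the simultaneous induction is genuinely well founded, i.e. that every recursive appeal descends in the derivation being analysed rather than looping between parts~(1) and~(4). I expect this to be discharged by observing that each auxiliary function ($\methods_\Delta$, $\bounds_\Delta$, $\fields$) is \emph{defined} independently of these judgements and is merely invoked within them, so the only genuine inductive appeals are to strictly smaller subderivations, keeping the mutual induction sound.
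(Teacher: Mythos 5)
Your overall skeleton --- a mutual induction on the five judgements, with weakening of $\methods$ as the pivotal auxiliary fact --- is the right shape, and at that level it coincides with the paper's own (one-line) proof by induction on each given derivation. But your justification of the pivotal fact contains a genuine error. You claim that $\methods_\Delta$ ``consults $\Delta$ only to resolve a leading type parameter to its bound'' and, later, that the auxiliary functions are ``defined independently of these judgements.'' Neither claim is true in FGG: for a structure type, Figure~\ref{fig:fgg-aux} defines $\methods_\Delta(t_S(\phi)) = \set{(mM)[\eta] \mid (\func~(x~t_S(\Phi))~mM~\br{\return~e}) \in \ov{D} \comma \eta = (\Phi \by_\Delta \phi)}$, and $(\Phi \by_\Delta \phi)$ is the \emph{bounds-checking} substitution, whose definedness requires the implements judgements $\Delta \vdash (\ov{\alpha \imp \tau_I})[\eta]$. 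So $\methods_\Delta$ is itself defined in terms of the implements judgement, and the circularity you raise in your final paragraph is real, not apparent; your dismissal of it rests on a misreading of the definition.

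Concretely, your induction breaks in part~(4), case \tyrulename{<:$_I$}: its only premise is the set inclusion $\methods_\Delta(\tau) \supseteq \methods_\Delta(\tau_I)$, so the implements checks hidden inside the \emph{evaluation} of $\methods_\Delta(\tau)$ (when $\tau$ is a structure type) are not subderivations of the derivation being weakened, and ``induction on the derivation'' gives you nothing to apply the hypothesis to. The repair is a separate, well-founded argument: prove monotonicity of $\methods$ under context extension simultaneously with implements-weakening, by induction on the size of the left-hand type. This is well founded because the bounds checks embedded in $\methods_\Delta(t_S(\phi))$ have the form $\Delta \vdash \phi_i \imp \tau_I[\eta]$ with each $\phi_i$ a proper subterm of $t_S(\phi)$, while the interface and type-parameter cases of $\methods$ involve no bounds check at all (interface lookup uses the plain substitution $(\Phi \by \phi)$). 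Note also that only monotonicity, not equality, of method sets is needed, since the right-hand side of \tyrulename{<:$_I$} is an interface type, for which $\methods$ is $\Delta$-independent. With that lemma in place, the rest of your plan --- parts (1)--(3) and (5) by rule-by-rule reconstruction --- goes through as you describe.
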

\begin{proof}
Each statement above is proved by a straightforward induction on the
given derivation. 
\end{proof}

\begin{lemma}
  \label{lem:fgg_invsub}
If $\IMPLSG{\TENV}{\tau}{\tau_\SSS}$, for some $\JUDGEWFG{\TENV}{\tau_\SSS}$, then
$\tau = \tau_\SSS$.
\end{lemma}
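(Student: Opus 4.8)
The plan is to argue entirely by inversion on the derivation of $\IMPLSG{\TENV}{\tau}{\tau_\SSS}$, mirroring the FG analogue (Lemma~\ref{lem:methstruct}, whose proof was simply ``by definition of $\imp$''). The key observation is that the three rules defining the implements judgment in Figure~\ref{fig:fgg-typing} --- namely \tyrulename{<:-param}, \tyrulename{<:$_S$}, and \tyrulename{<:$_I$} --- are directed by the syntactic shape of the type standing on the right of $\imp$: \tyrulename{<:-param} requires a type parameter there, \tyrulename{<:$_S$} a structure type, and \tyrulename{<:$_I$} an interface type. Since the syntactic categories of types in Figure~\ref{fig:fgg-syntax} --- type parameters $\alpha$, structure types $\tau_S$, and interface types $\tau_I$ --- are pairwise disjoint, at most one of these rules can conclude a judgment with any fixed right-hand side.

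First I would note that the right-hand side in the hypothesis is a structure type $\tau_S$, and then inspect each rule in turn. Rule \tyrulename{<:-param} is ruled out, since its conclusion carries a type parameter on the right and $\tau_S$ is not a type parameter; rule \tyrulename{<:$_I$} is ruled out, since its conclusion carries an interface type on the right and $\tau_S$ is not an interface type. The only remaining possibility is \tyrulename{<:$_S$}, whose conclusion equates its two sides. Matching that conclusion against $\IMPLSG{\TENV}{\tau}{\tau_\SSS}$ forces $\tau = \tau_S$, which is precisely the claim.

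There is no genuine obstacle here: the whole argument is a single step of case analysis on the last rule of the derivation. The only point worth stating carefully is the pairwise disjointness of the sorts of the type grammar, which is what licenses discarding the two inapplicable rules; everything else is immediate. I would also remark that the well-formedness hypothesis $\JUDGEWFG{\TENV}{\tau_\SSS}$ is not actually used in the reasoning above --- it is retained in the statement only for uniformity with the contexts in which the lemma is applied, and to guarantee that $\tau_S$ names a genuine, declared structure type.
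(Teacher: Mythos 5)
Your proposal is correct and matches the paper's intent: the paper's own proof is simply ``Straightforward,'' and your inversion argument---observing that \tyrulename{<:-param} and \tyrulename{<:$_I$} cannot conclude a judgment with a structure type on the right, leaving only \tyrulename{<:$_S$}, which forces $\tau = \tau_\SSS$---is exactly the elaboration of that one-word proof (and of the FG analogue, Lemma~\ref{lem:methstruct}, proved ``by definition of $\imp$''). Your side remark about the well-formedness hypothesis is also consistent with how the paper uses it.
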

\begin{proof}
Straightforward.
\end{proof}

\fggwftype*
\begin{proof}
Straightforward induction on typing.
\end{proof}

\begin{lemma}
  \label{lem:fgg_subequiv}
Subtyping in FGG is reflexive and transitive, that is,
(1) if $\JUDGEWFG{\TENV}{\tau}$ then $\IMPLSG{\TENV}{\tau}{\tau}$ and (2) if
$\IMPLSG{\TENV}{\tau_1}{\tau_2}$ and $\IMPLSG{\TENV}{\tau_2}{\tau_3}$
then $\IMPLSG{\TENV}{\tau_1}{\tau_3}$
\end{lemma}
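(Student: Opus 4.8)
The plan is to prove both parts by case analysis on the subtyping rules, exploiting that the relation $\Delta \vdash \tau \imp \sigma$ is almost syntax-directed in its right-hand type: the three rules \tyrulename{<:-param}, \tyrulename{<:$_S$}, and \tyrulename{<:$_I$} are distinguished by whether the right-hand type is a parameter, a structure type, or an interface type, respectively. There is no transitive-closure rule and no structural-recursion rule, so the whole argument reduces to inversion together with reflexivity and transitivity of $\supseteq$ on method sets.

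For reflexivity, I would case-split on the shape of $\tau$. If $\tau$ is a type parameter $\alpha$, then $\Delta \vdash \alpha \imp \alpha$ holds immediately by \tyrulename{<:-param}; if $\tau$ is a structure type $\tau_S$, rule \tyrulename{<:$_S$} gives $\Delta \vdash \tau_S \imp \tau_S$; and if $\tau$ is an interface type $\tau_I$, then since $\supseteq$ is reflexive we have $\methods_\Delta(\tau_I) \supseteq \methods_\Delta(\tau_I)$, so \tyrulename{<:$_I$} yields $\Delta \vdash \tau_I \imp \tau_I$. The well-formedness hypothesis $\Delta \vdash \tau \ok$ is used only to confirm $\tau$ is of one of these three forms.

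For transitivity, the key auxiliary observation I would establish first is that subtyping is sound with respect to method containment: $\Delta \vdash \tau \imp \sigma$ implies $\methods_\Delta(\tau) \supseteq \methods_\Delta(\sigma)$. This follows by inversion on the last rule, since for \tyrulename{<:-param} and \tyrulename{<:$_S$} the two types coincide (so the containment is an equality), and for \tyrulename{<:$_I$} the premise is exactly the required containment. Given $\Delta \vdash \tau_1 \imp \tau_2$ and $\Delta \vdash \tau_2 \imp \tau_3$, I would then case-split on the shape of $\tau_3$. If $\tau_3$ is a parameter $\alpha$, inversion on $\Delta \vdash \tau_2 \imp \alpha$ (only \tyrulename{<:-param} can conclude a subtyping with a parameter on the right) forces $\tau_2 = \alpha = \tau_3$, so the goal $\Delta \vdash \tau_1 \imp \tau_3$ is just the first hypothesis. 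If $\tau_3$ is a structure type $\tau_S$, then by inversion only \tyrulename{<:$_S$} applies (equivalently, Lemma~\ref{lem:fgg_invsub}), forcing $\tau_2 = \tau_S = \tau_3$, and again the conclusion is the first hypothesis. Finally, if $\tau_3$ is an interface type $\tau_I$, then $\Delta \vdash \tau_2 \imp \tau_I$ must conclude by \tyrulename{<:$_I$}, giving $\methods_\Delta(\tau_2) \supseteq \methods_\Delta(\tau_I)$; combining this with $\methods_\Delta(\tau_1) \supseteq \methods_\Delta(\tau_2)$ from the auxiliary observation and transitivity of $\supseteq$ yields $\methods_\Delta(\tau_1) \supseteq \methods_\Delta(\tau_I)$, whence $\Delta \vdash \tau_1 \imp \tau_I$ by \tyrulename{<:$_I$}.

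I expect no genuine obstacle: the relation is shallow, so everything collapses to inversion plus set-theoretic properties of $\supseteq$. The one point requiring a little care is the auxiliary method-containment observation, which must be verified uniformly across all three derivation shapes so that the interface case of transitivity goes through; this is precisely where a subtlety would surface if $\methods_\Delta$ failed to be monotone under the parameter or structure rules, but it does not, because those rules equate the two types.
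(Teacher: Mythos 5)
Your proposal is correct and matches the paper's proof, which is simply stated as ``straightforward induction on the definition of $\IMPLOP$'': since the relation has no recursive premises, that induction degenerates into exactly the inversion/case analysis you carry out, with reflexivity and transitivity of $\supseteq$ on method sets doing the work in the interface cases. Your auxiliary method-containment observation and the syntax-directedness of the right-hand type (parameter, structure, or interface, matching rules \tyrulename{<:-param}, \tyrulename{<:$_S$}, \tyrulename{<:$_I$}) are precisely the details the paper leaves implicit.
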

\begin{proof}
Straightforward induction on the definition of $\IMPLOP$.
\end{proof}

\begin{lemma}
\label{lem:fgg_substbound}

      Assume  $\JUDGEWFG{\TENV_0 , \ov{\alpha : \tau_\III} ,
        \TENV_1}{\tau}$ and
      $\IMPLSG{\TENV_0}{\ov{\tau_0}}{\ov{\tau_\III}\SUBS{\ov{\tau_0}}{\ov{\alpha}}}$ with
      $\JUDGEWFG{\TENV_0}{\ov{\tau_0}}$. Let
      $\SUBSTBUILDER{\eta}{\ov{\alpha}}{\ov{\tau_0}}$, then 
      $\IMPLSG
      {\TENV_0 , \TENV_1[\eta]}
      {
\BOUNDTENVt{\TENV_0 , \TENV_1[\eta]}{\tau[\eta]}
      }{
(\BOUNDTENVt{\TENV_0 , \ov{\alpha : \tau_\III}, \TENV_1}{\tau})[\eta]
      }$

  \end{lemma}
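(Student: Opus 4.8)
The plan is to proceed by case analysis on the shape of the type $\tau$, exploiting that $\bounds$ always returns a named type and acts as the identity on named types, while on a type parameter it returns that parameter's declared bound. Throughout I will use that the target of the desired $\imp$ is an interface (bounds are interface types), so membership in $\imp$ is governed entirely by rule \tyrulename{<:$_I$}, i.e.\ by inclusion of method sets, together with the identity $\methodsofD{\TENV}{\tau} = \methodsofD{\TENV}{\BOUNDTENVt{\TENV}{\tau}}$ recorded in Section~\ref{sub:fgg-aux}.

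First I would dispatch the cases where $\tau$ is a named type $\tttS$ or $\tttI$: here $\BOUNDTENVt{\TENV}{\tau} = \tau$ for every $\TENV$, so both sides of the goal equal $\tau[\eta]$ and the conclusion is reflexivity (Lemma~\ref{lem:fgg_subequiv}), once $\JUDGEWFG{\TENV_0, \TENV_1[\eta]}{\tau[\eta]}$ is established from $\JUDGEWFG{\TENV_0, \ov{\alpha : \tau_\III}, \TENV_1}{\tau}$ by the FGG Substitution Lemma. The case $\tau = \beta$ with $\beta \notin \ov{\alpha}$ is similar: then $\beta[\eta] = \beta$, and I would argue that looking up the bound and applying $\eta$ commute. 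If $\beta$ is declared in $\TENV_0$ its bound cannot mention $\ov{\alpha}$ (they are declared later), so $\eta$ leaves it fixed; if $\beta$ is declared in $\TENV_1$, its bound in $\TENV_1[\eta]$ is exactly the $\eta$-image of its bound in $\TENV_1$. Either way both sides of the goal are syntactically equal and reflexivity finishes the case.

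The substantive case is $\tau = \alpha_i$ for some $\alpha_i \in \ov{\alpha}$, which is where the hypothesis on the actuals is consumed. Writing $\tau_{0,i}$ for the $i$-th actual (so $\alpha_i[\eta] = \tau_{0,i}$) and $\tau_{\III,i}$ for the $i$-th bound, the goal becomes $\IMPLSG{\TENV_0, \TENV_1[\eta]}{\BOUNDTENVt{\TENV_0, \TENV_1[\eta]}{\tau_{0,i}}}{\tau_{\III,i}[\eta]}$. Noting $\ov{\tau_\III}\SUBS{\ov{\tau_0}}{\ov{\alpha}} = \ov{\tau_\III}[\eta]$, the hypothesis $\IMPLSG{\TENV_0}{\ov{\tau_0}}{\ov{\tau_\III}[\eta]}$ gives in particular $\IMPLSG{\TENV_0}{\tau_{0,i}}{\tau_{\III,i}[\eta]}$, which I weaken to $\TENV_0, \TENV_1[\eta]$ using Lemma~\ref{lem:fgg_weak} (the added bounds are well formed there by substitution). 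Since $\tau_{\III,i}[\eta]$ is an interface, inverting \tyrulename{<:$_I$} gives $\methodsofD{\TENV_0, \TENV_1[\eta]}{\tau_{0,i}} \supseteq \methodsofD{\TENV_0, \TENV_1[\eta]}{\tau_{\III,i}[\eta]}$. Rewriting the left-hand side with $\methodsofD{\TENV}{\tau_{0,i}} = \methodsofD{\TENV}{\BOUNDTENVt{\TENV}{\tau_{0,i}}}$ and re-applying \tyrulename{<:$_I$} yields exactly the desired $\imp$.

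The main obstacle I anticipate is bookkeeping around the side conditions rather than any deep argument: I must keep straight in which component of $\TENV_0, \ov{\alpha : \tau_\III}, \TENV_1$ a given parameter is declared, verify that $\eta$ commutes with bound-lookup in the non-$\ov{\alpha}$ parameter cases (which rests on the scoping discipline that bounds in $\TENV_0$ cannot refer to $\ov{\alpha}$), and discharge the well-formedness premises of reflexivity and weakening via the FGG Substitution Lemma. The genuinely non-trivial observation, used only in the $\alpha_i$ case, is that passing to the bound preserves the method set, so that an $\imp$ into an interface is insensitive to replacing a type by its bound.
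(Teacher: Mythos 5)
Your proof is correct and follows the same skeleton as the paper's: a case analysis on the shape of $\tau$, with reflexivity (Lemma~\ref{lem:fgg_subequiv}) disposing of named types and of parameters outside $\ov{\alpha}$, and the hypothesis on the actuals plus weakening (Lemma~\ref{lem:fgg_weak}) handling the case $\tau = \alpha_i$. Where you genuinely diverge is the last step of that key case. The paper simply asserts $\bounds_{\TENV_0,\TENV_1[\eta]}(\tau_{0,i}) = \tau_{0,i}$ ``by definition'' and concludes directly from the hypothesis; that equality is literally valid only when the actual $\tau_{0,i}$ is a named (struct or interface) type, whereas the hypotheses only require $\JUDGEWFG{\TENV_0}{\ov{\tau_0}}$, so an actual could itself be a type parameter declared in $\TENV_0$, on which $\bounds$ is not the identity. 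Your detour---invert rule \tyrulename{<:$_I$} on the weakened hypothesis to obtain a method-set inclusion, rewrite the left side with $\methods_\Delta(\tau) = \methods_\Delta(\bounds_\Delta(\tau))$, and re-apply \tyrulename{<:$_I$}---establishes $\IMPLSG{\TENV_0,\TENV_1[\eta]}{\bounds_{\TENV_0,\TENV_1[\eta]}(\tau_{0,i})}{\tau_{I,i}[\eta]}$ with no assumption on the shape of $\tau_{0,i}$, so it also covers type-parameter actuals that the paper's wording glosses over; this is a modest but real gain in robustness. Two caveats: first, your inversion step needs the right-hand side to be an interface type, which does hold in this case (declared bounds in type formals are interface types and substitution preserves that), but not in the reflexivity cases, so your opening claim that the target of $\imp$ is always an interface should be confined to the $\alpha_i$ case, exactly as you in fact use it; second, your use of the type-substitution lemma to discharge well-formedness side conditions is a forward reference relative to the paper's ordering of lemmas, but it is not circular, since the paper's proof of that lemma does not depend on this one.
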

\begin{proof}
By case analysis on the structure of $\tau$.
    \begin{description}
 
    \item[Case:] $\tau$ is a non-variable type

      Trivial by reflexivity.

      \item[Case:] $\tau$ is a type variable $\beta$ and $\beta \in
        \TENV_0$

        Immediate by reflexivity.

        \item[Case:]$\tau$ is a type variable $\beta$ and $\beta \in
          \TENV_1$
          \begin{tabbing}
$\BOUNDTENVt{\TENV_0 ,
  \TENV_1[\eta]}{\tau[\eta]} =
\BOUNDTENVt{\TENV_0 , \TENV_1[\eta]}{\beta}$ \` by
definition\\
$= \TENV_1(\beta)[\eta] = (\BOUNDTENVt{\TENV_0 , \ov{\alpha :
  \tau_\III}, \TENV_1}{\beta})[\eta]$ \` by
definition\\
$\IMPLSG{\TENV_0 , \TENV_1[\eta]}
{\TENV_1(\beta)[\eta]
}{\TENV_1(\beta)[\eta] }$ \` by reflexivity
            \end{tabbing}
          \item[Case:]$\tau$ is a type variable $\beta$ and $\beta \in
            \ov{\alpha}$
            \begin{tabbing}
$\BOUNDTENVt{\TENV_0 ,
  \TENV_1[\eta] }{\tau[\eta]} =
\BOUNDTENVt{\TENV_0 , \TENV_1[\eta]}{\tau_0} =
\tau_0$ \` by definition\\
$(\BOUNDTENVt{\TENV_0 , \ov{\alpha : \tau_\III} ,
  \TENV_1}{\alpha})[\eta] =
\tau_\III[\eta]$ \` by definition\\
$\IMPLSG{\TENV_0}{\tau_0}{\tau_\III[\eta]}$ \`
assumption\\
$\IMPLSG{\TENV_0,
  \TENV_1[\eta]}{\tau_0}{\tau_\III[\eta]}$ \`
by Lemma~\ref{lem:fgg_weak}
              \end{tabbing}

      \end{description}
    
  \end{proof}

\begin{lemma}[Type Substitution Preserves Subtyping]
  \label{lem:fgg_tsubstsub}

Assume $\IMPLSG{\TENV_0 , \ov{\alpha : \tau_\III} , \TENV_1}{\tau_1}{\tau_2}$
, $\JUDGEWFG{\TENV_0}{\ov{\tau_0}}$ and
$\IMPLSG{\TENV_0}{(\ov{\tau_0}}{\ov{\tau_\III)}\SUBS{\ov{\tau_0}}{\ov{\alpha}}}$
and let $\SUBSTBUILDER{\eta}{\ov{\alpha}}{\ov{\tau_0}}$. We have that 
$\IMPLSG{\TENV_0 , \TENV_1[\eta]}{\tau_1  [\eta]}{\tau_2 [\eta]}$.

\end{lemma}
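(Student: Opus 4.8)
The plan is to argue by case analysis on the last rule of the given derivation $\IMPLSG{\TENV_0, \ov{\alpha : \tau_\III}, \TENV_1}{\tau_1}{\tau_2}$. Throughout I will write $\TENV$ for the source environment $\TENV_0, \ov{\alpha : \tau_\III}, \TENV_1$ and $\TENV'$ for the target environment $\TENV_0, \TENV_1[\eta]$, and I will use the basic facts that substitution commutes with the type constructors (so $\tau_\SSS[\eta]$ is again a structure type and $\tau_\III[\eta]$ again an interface type). Since $\imp$ is generated by the three rules \tyrulename{<:-param}, \tyrulename{<:$_S$}, \tyrulename{<:$_I$}, there are exactly three cases. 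For \tyrulename{<:-param} we have $\tau_1 = \tau_2 = \beta$, so $\tau_1[\eta] = \tau_2[\eta]$ and the goal follows by reflexivity of $\imp$ (Lemma~\ref{lem:fgg_subequiv}). For \tyrulename{<:$_S$} we have $\tau_1 = \tau_2 = \tau_\SSS$, and $\tau_\SSS[\eta]$ is a structure type, so $\IMPLSG{\TENV'}{\tau_\SSS[\eta]}{\tau_\SSS[\eta]}$ holds by the same rule. Both cases are immediate.

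The substantive case is \tyrulename{<:$_I$}, where $\tau_2 = \sigma_\III$ is an interface type with $\methodsofD{\TENV}{\tau_1} \supseteq \methodsofD{\TENV}{\sigma_\III}$, and I must establish $\methodsofD{\TENV'}{\tau_1[\eta]} \supseteq \methodsofD{\TENV'}{\sigma_\III[\eta]}$ so as to re-apply \tyrulename{<:$_I$}. First I would record a commutation property of method lookup with substitution: because the body of a top-level interface declaration has no free type variables other than its own formals, substituting the actuals commutes with $\eta$, giving $\methodsofD{\TENV'}{\sigma_\III[\eta]} = (\methodsofD{\TENV}{\sigma_\III})[\eta]$ and, more generally, $(\methodsofD{\TENV}{\tau_1})[\eta] = \methodsofD{\TENV'}{(\BOUNDTENVt{\TENV}{\tau_1})[\eta]}$, where I use the identity $\methodsofD{\TENV}{\tau} = \methodsofD{\TENV}{\BOUNDTENVt{\TENV}{\tau}}$ valid for all $\tau$. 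It then suffices to compare $\methodsofD{\TENV'}{\tau_1[\eta]} = \methodsofD{\TENV'}{\BOUNDTENVt{\TENV'}{\tau_1[\eta]}}$ with $\methodsofD{\TENV'}{(\BOUNDTENVt{\TENV}{\tau_1})[\eta]}$. This is exactly the situation handled by Lemma~\ref{lem:fgg_substbound}, which yields $\IMPLSG{\TENV'}{\BOUNDTENVt{\TENV'}{\tau_1[\eta]}}{(\BOUNDTENVt{\TENV}{\tau_1})[\eta]}$; by inversion on \tyrulename{<:$_I$} (or, in the degenerate case where the right-hand bound is a structure type, Lemma~\ref{lem:fgg_invsub}) this gives $\methodsofD{\TENV'}{\BOUNDTENVt{\TENV'}{\tau_1[\eta]}} \supseteq \methodsofD{\TENV'}{(\BOUNDTENVt{\TENV}{\tau_1})[\eta]}$. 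Chaining this with the substituted hypothesis $(\methodsofD{\TENV}{\tau_1})[\eta] \supseteq (\methodsofD{\TENV}{\sigma_\III})[\eta]$ and the commutation equalities above delivers the desired containment, whence \tyrulename{<:$_I$} concludes. Applying Lemma~\ref{lem:fgg_substbound} requires its premises, namely $\JUDGEWFG{\TENV}{\tau_1}$ (obtained from well-formedness of the source subtyping) and the side hypothesis $\IMPLSG{\TENV_0}{\ov{\tau_0}}{\ov{\tau_\III}[\eta]}$, which is available and can be weakened from $\TENV_0$ to $\TENV'$ via Lemma~\ref{lem:fgg_weak}.

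The hard part will be getting the \emph{direction} of the method-set containment right in the sub-case where $\tau_1$ is one of the substituted parameters in $\ov{\alpha}$: there $\tau_1[\eta]$ is the corresponding actual, which may expose strictly more methods than the (substituted) declared bound, so only $\supseteq$ rather than equality holds. This is precisely the inclusion that is needed, and it is exactly what Lemma~\ref{lem:fgg_substbound} has been set up to supply at the level of bounds; the remaining obstacle is the bookkeeping ensuring that $\BOUNDTENVt{\TENV'}{\cdot}$ of a variable drawn from $\TENV_1$ agrees with the substituted bound and that method lookup commutes with $\eta$ on closed interface bodies, both of which are routine once the commutation property is isolated as a preliminary claim.
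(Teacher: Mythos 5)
Your proposal is correct, and its decomposition is the same as the paper's: a case analysis on the last rule of the given derivation, with \tyrulename{<:-param} and \tyrulename{<:$_S$} dispatched essentially as you do (the paper splits \tyrulename{<:-param} into the subcases $\beta \in \ov{\alpha}$ and $\beta \notin \ov{\alpha}$, using Lemma~\ref{lem:fgg_weak} plus reflexivity in the former, which your appeal to Lemma~\ref{lem:fgg_subequiv} and weakening covers). The substantive difference is in the \tyrulename{<:$_I$} case. Writing $\TENV$ for the source environment $\TENV_0 , \ov{\alpha : \tau_\III} , \TENV_1$ and $\TENV'$ for $\TENV_0 , \TENV_1[\eta]$, the paper's entire argument there is the one-line assertion that $\methodsofD{\TENV'}{\tau_1[\eta]} \supseteq \methodsofD{\TENV'}{\tau_2[\eta]}$ follows from the unsubstituted containment ``by definition''; it never invokes Lemma~\ref{lem:fgg_substbound} (the paper uses that lemma only later, in the typing-substitution lemma). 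Your route --- reducing both sides to bounds via $\methodsofD{\TENV}{\tau} = \methodsofD{\TENV}{\BOUNDTENVt{\TENV}{\tau}}$, applying Lemma~\ref{lem:fgg_substbound}, inverting the resulting judgement (or using Lemma~\ref{lem:fgg_invsub} in the structure-bound case), and chaining with the substituted hypothesis --- is a genuine elaboration of what ``by definition'' must mean, and it is not circular, since the paper proves Lemma~\ref{lem:fgg_substbound} using only weakening and reflexivity. In particular you correctly isolate exactly the point the paper's wording hides: when $\tau_1 \in \ov{\alpha}$, the method set of the actual $\tau_1[\eta]$ need not equal that of the substituted bound, and the required inclusion comes from the side hypothesis $\IMPLSG{\TENV_0}{\ov{\tau_0}}{\ov{\tau_\III}[\eta]}$, not from unfolding definitions.

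Two caveats, the first of which applies equally to the paper's own proof. Your commutation claim $(\methodsofD{\TENV}{\sigma})[\eta] = \methodsofD{\TENV'}{\sigma[\eta]}$ is genuinely routine only for interface types, whose method sets are computed by \emph{unchecked} instantiation of a closed declaration. When $\tau_1$ is a structure type (so $\BOUNDTENVt{\TENV}{\tau_1} = \tau_1$), your chain needs $\methodsofD{\TENV'}{\tau_1[\eta]} \supseteq (\methodsofD{\TENV}{\tau_1})[\eta]$, and structure method lookup is defined through the \emph{checked} instantiation $(\Phi \by_\TENV \phi)$, i.e., through further subtyping judgements; preserving those checks under $\eta$ is an instance of the very lemma being proved, on derivations that are not subderivations of the given one. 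So for structure receivers your ``routine'' preliminary claim conceals the same issue that the paper's ``by definition'' conceals, and neither proof discharges it under the stated induction; a fully careful treatment would need a strengthened induction hypothesis (e.g., simultaneous induction covering preservation of $(\Phi \by_\TENV \phi)$). Second, a minor point: in this system $\IMPLSG{\TENV}{\tau_1}{\tau_2}$ does not imply $\JUDGEWFG{\TENV}{\tau_1}$, so the well-formedness premise of Lemma~\ref{lem:fgg_substbound} is not literally ``obtained from well-formedness of the source subtyping''; for a parameter it instead follows from $\methodsofD{\TENV}{\tau_1}$ being defined, and for non-variable types the lemma's proof never actually uses it.
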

 \begin{proof}
By induction on the derivation of $\IMPLSG{\TENV_0 , \ov{\alpha : 
    \tau_\III} , \TENV_1}{\tau_1}{\tau_2}$.

\begin{description}
\item[Case:] $\RULENAME{$\IMPLOP_\III$}$

  \begin{tabbing}
    $\tau_ 1 = \tau$ and $\tau_2 = \tau_\III'$ \` this case\\
        $ \methodsofD{\TENV_0 , \ov{\alpha : 
    \tau_\III} , \TENV_1}{\tau}
        \supseteq 
        \methodsofD{\TENV_0 , \ov{\alpha : 
    \tau_\III} , \TENV_1}{\tau_\III'}$ \` by inversion\\
        $\methodsofD{\TENV_0,\TENV_1[\eta]}{\tau[\eta]} \supseteq
        \methodsofD{\TENV_0,\TENV_1[\eta]}{\tau_\III'[\eta]}$ \` by definition\\
        $\IMPLSG{\TENV_1[\eta]}{\tau[\eta]}
        {\tau_\III'[\eta]}$ \` by rule $\RULENAME{$\IMPLOP_\III$}$    
      \end{tabbing}

       \item[Case:] $\RULENAME{$\IMPLOP$-param}$

     \begin{tabbing}
       $\tau_1 = \tau_2 = \beta \in (\TENV_0 , \ov{\alpha : 
  \tau_\III} , \TENV_1)$ \` this case\\
{\bf Subcase:} $\beta \in \ov{\alpha}$\\
$\JUDGEWFG{\TENV_0, \TENV_1[\eta]}{\tau_0}$
\` by Lemma~\ref{lem:fgg_weak}\\
$\IMPLSG{\TENV_0 ,
  \TENV_1[\eta]}{\tau_0}{\tau_0}$ \` by reflexivity \\

{\bf Subcase:} $\beta \notin \ov{\alpha}$\\
$\IMPLSG{\TENV_0 , \TENV_1[\eta] }
{\beta}{\beta}$ \` by rule $\RULENAME{$\IMPLOP$-param}$

\end{tabbing}

\item[Case:] $\RULENAME{$\IMPLOP_\SSS$}$
      \begin{tabbing}
        $\tau_1 = \tau_2 = \tau_\SSS$ \` this case\\
        $\IMPLSG{\TENV_0 , \TENV_1 [\eta]}
        {\tau_\SSS  [\eta]}{\tau_\SSS
          [\eta]}$ \` by rule $\RULENAME{$\IMPLOP_\SSS$}$
       \end{tabbing}
  
\end{description}

\end{proof}

   \begin{lemma}[Type Substitution Preserves Well-Formedness]
      \label{lem:fgg_tsubstwf}
      Let  $\JUDGEWFG{\TENV_0 , \ov{\alpha : \tau_\III}  , 
        \TENV_1}{\tau}$, $\JUDGEWFG{\TENV_0}{\ov{\tau_0}}$,
      $\SUBSTBUILDER{\eta}{\ov{\alpha}}{\ov{\tau_0}}$ and
      $\IMPLSG{\TENV_0}{(\ov{\tau_0}}{\ov{\tau_\III)}[\eta]}$
      then
      $\JUDGEWFG{\TENV_0,\TENV_1[\eta] }{\tau[\eta]}$.
     \end{lemma}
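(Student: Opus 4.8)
The plan is to induct on the derivation of $\JUDGEWFG{\TENV_0 \comma \ov{\alpha : \tau_\III} \comma \TENV_1}{\tau}$, splitting on the last rule applied, which must be either \tyrulename{t-param} or \tyrulename{t-named}. The already-established substitution lemma for subtyping (Lemma~\ref{lem:fgg_tsubstsub}) and weakening (Lemma~\ref{lem:fgg_weak}) will do the heavy lifting, so no mutual induction is needed here.

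In case \tyrulename{t-param}, $\tau = \beta$ for some parameter $\beta$ bound in $\TENV_0 \comma \ov{\alpha : \tau_\III} \comma \TENV_1$, and I would distinguish where $\beta$ lies. If $\beta \in \ov{\alpha}$, then $\beta[\eta]$ is the corresponding member of $\ov{\tau_0}$; it is well formed in $\TENV_0$ by the hypothesis $\JUDGEWFG{\TENV_0}{\ov{\tau_0}}$, and weakening lifts this to $\JUDGEWFG{\TENV_0 \comma \TENV_1[\eta]}{\beta[\eta]}$. If $\beta \in \TENV_0$ then $\beta[\eta] = \beta$ remains bound in $\TENV_0 \comma \TENV_1[\eta]$, so \tyrulename{t-param} re-applies; the subcase $\beta \in \TENV_1$ is analogous, since $\eta$ rewrites only the \emph{bound} of $\beta$ inside $\TENV_1[\eta]$ while keeping $\beta$ itself bound.

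In case \tyrulename{t-named}, $\tau = t(\phi)$ with $\phi = \ov{\sigma}$; by inversion $(\type~t(\Phi)~T) \in \ov{D}$, each component of $\phi$ is well formed in $\TENV_0 \comma \ov{\alpha : \tau_\III} \comma \TENV_1$ (via \tyrulename{t-actual}), and the partial substitution $(\Phi \by_{\TENV_0 \comma \ov{\alpha : \tau_\III} \comma \TENV_1} \phi)$ is defined. Since $\tau[\eta] = t(\phi[\eta])$, I would re-derive the premises of \tyrulename{t-named} over $\TENV_0 \comma \TENV_1[\eta]$: well formedness of the actuals $\JUDGEWFG{\TENV_0 \comma \TENV_1[\eta]}{\phi[\eta]}$ follows from the induction hypothesis on each component, and the declaration premise is untouched.

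The hard part is re-establishing definedness of $(\Phi \by_{\TENV_0 \comma \TENV_1[\eta]} \phi[\eta])$, i.e. that the substituted actuals still satisfy $\Phi$'s bounds. Writing $\Phi = \type~\ov{\beta~\sigma_\III}$ and $\zeta = (\Phi \by \phi)$, definedness of the original substitution yields $\TENV_0 \comma \ov{\alpha : \tau_\III} \comma \TENV_1 \vdash (\ov{\beta \imp \sigma_\III})[\zeta]$. The crucial observation is that $\Phi$ is \emph{closed}: its declaration requires $\emptyset \vdash \Phi \ok$ (rule \tyrulename{t-type}), so by \tyrulename{t-formal} every bound $\sigma_\III$ mentions only the parameters $\ov{\beta}$ and none of the $\ov{\alpha}$. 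Hence $\sigma_\III[\zeta][\eta] = \sigma_\III[(\Phi \by \phi[\eta])]$, i.e. the outer substitution $\eta$ commutes into the actuals. Transporting the subtyping judgment through $\eta$ with Lemma~\ref{lem:fgg_tsubstsub} — whose side conditions $\JUDGEWFG{\TENV_0}{\ov{\tau_0}}$ and $\IMPLSG{\TENV_0}{\ov{\tau_0}}{\ov{\tau_\III}[\eta]}$ are exactly the standing hypotheses — gives $\TENV_0 \comma \TENV_1[\eta] \vdash (\ov{\beta \imp \sigma_\III})[(\Phi \by \phi[\eta])]$, which is precisely the condition making $(\Phi \by_{\TENV_0 \comma \TENV_1[\eta]} \phi[\eta])$ defined. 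Applying \tyrulename{t-named} then concludes $\JUDGEWFG{\TENV_0 \comma \TENV_1[\eta]}{t(\phi[\eta])}$.
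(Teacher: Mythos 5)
Your proposal is correct and follows essentially the same route as the paper's proof: induction on the well-formedness derivation, discharging the \tyrulename{t-param} case via weakening (Lemma~\ref{lem:fgg_weak}) and the \tyrulename{t-named} case by re-deriving the actuals' well-formedness from the induction hypothesis and transporting the bounds-satisfaction condition through $\eta$ with Lemma~\ref{lem:fgg_tsubstsub}. Your explicit observation that the formal bounds of $\Phi$ are closed (mentioning only $\ov{\beta}$, by $\emptyset \vdash \Phi \ok$), so that $\eta$ commutes into the actuals, is a detail the paper leaves implicit; it is correct and tightens a step the paper glosses over.
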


     \begin{proof}
By induction on the derivation of $\JUDGEWFG{\TENV_0 , \ov{\alpha :
  \tau_\III} , \TENV_1}{\tau}$.

\begin{description}
\item[Case:]$\RULENAME{t-named}$

  \begin{tabbing}
    $\tau = \TYPINSTtn{\ttt}{\ov{\tau'}}$ \` this case\\
    $\JUDGEWFG{\TENV_0 , \ov{\alpha : \tau_\III} ,
      \TENV_1}{\ov{\tau'}}$ \` by inversion\\
$ (\kw{type}~t(\kw{type}~\ov{\beta~\tau_\III})~T) \in \ov{D}$ \` by
inversion\\
$\eta' = (\ov{\beta~\tau_\III} \by_{\TENV_0 , \ov{\alpha : \tau_\III} ,
      \TENV_1} \ov{\tau'})$ \` by inversion\\
    $\IMPLSG{\TENV_0 , \ov{\alpha : \tau_\III},
      \TENV_1}{(\ov{\beta}}{\ov{\tau_\III}})\SUBS{\ov{\tau}}{\ov{\beta}}$
    \` by inversion\\
   $\JUDGEWFG{\TENV_0 ,
      \TENV_1[\eta]
    }{\ov{\tau'[\eta]}}$ \` by i.h.\\
    $\IMPLSG{\TENV_0 , \TENV_1[\eta]}
    {(\ov{\beta}}{\ov{\tau_\III}[\eta]})\SUBS{\ov{\tau}[\eta]}{\ov{\beta}}$
\` by Lemma~\ref{lem:fgg_tsubstsub}\\
    
    $\JUDGEWFG{\TENV_0 ,
      \TENV_1[\eta]}{\TYPINSTtn{\ttt}{\ov{\tau'[\eta]}}}$
    \` by rule $\RULENAME{t-named}$
  \end{tabbing}

\item[Case:] $\RULENAME{t-param}$
  \begin{tabbing}
    $\tau = \beta$ \` this case\\
    {\bf Subcase:} $\beta \in \ov{\alpha}$ \\
    $\tau[\eta] = \tau_0$ \` by definition\\
    $\JUDGEWFG{\TENV_0,\TENV_1[\eta]}{\tau_0}$ \` by
    Lemma~\ref{lem:fgg_weak}\\
    {\bf Subcase:} $\beta \not\in \ov{\alpha}$ \\
    $\tau[\eta] = \beta$ \` by definition\\
    $\JUDGEWFG{\TENV_0,\TENV_1[\eta]}{\beta}$ \`
    by rule $\RULENAME{t-param}$

  \end{tabbing}
  
 \end{description}

\end{proof}

  \begin{lemma}
    \label{lem:fgg_methodssub}
If $\JUDGEWFG{\TENV}{\tau}$ and 
 $\mmm  \
          \GPARAMSk{\ov{\PDECLkN{\kkk}{\tau_\III}}}
          \
          \MSIGpt{\ov{\pDECLxt{\yyy}{\tau_p}}}{\tau_r}
          \in
          \methodsofD{\TENV}{
          \tau
          }$
then for any $\tau_0$ such that 
$\IMPLSG{\TENV}{\tau_0}{\tau}$ and $\JUDGEWFG{\TENV}{\tau_0}$
we have that 
 $\mmm  \
          \GPARAMSk{\ov{\PDECLkN{\kkk}{\tau_\III}}}
          \
          \MSIGpt{\ov{\pDECLxt{\yyy}{\tau_p}}}{\tau_r}
          \in
          \methodsofD{\TENV}{
          \tau_0
          }$
        \end{lemma}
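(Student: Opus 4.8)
The plan is to follow exactly the shape of the FG argument for Lemma~\ref{lem:methsub}, reducing the whole claim to the single set inclusion $\methodsofD{\TENV}{\tau} \subseteq \methodsofD{\TENV}{\tau_0}$. Once that inclusion is in hand the conclusion is immediate: the given specification $\mmm\,\GPARAMSk{\ov{\PDECLkN{\kkk}{\tau_\III}}}\,\MSIGpt{\ov{\pDECLxt{\yyy}{\tau_p}}}{\tau_r}$ belongs to $\methodsofD{\TENV}{\tau}$ by hypothesis, hence to $\methodsofD{\TENV}{\tau_0}$ by the inclusion. As in the FG case, the internal structure of the specification --- its type formals, value parameters, and return type --- plays no role whatsoever; only membership-as-a-set is used.

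To establish the inclusion I would do a case analysis on the rule deriving $\IMPLSG{\TENV}{\tau_0}{\tau}$. Since the three subtyping rules are syntax-directed on the right-hand type, and since each is a single rule application (reflexivity and transitivity being derived rather than primitive), exactly one of them applies. In rule $\RULENAME{$\IMPLOP$-param}$ both sides are the same type parameter, so $\tau_0 = \tau$; in rule $\RULENAME{$\IMPLOP_\SSS$}$ both sides are the same structure type $\tau_\SSS$, so again $\tau_0 = \tau$ (this is also precisely the content of Lemma~\ref{lem:fgg_invsub}, which I could cite in place of inverting the rule). In either of these cases $\methodsofD{\TENV}{\tau} = \methodsofD{\TENV}{\tau_0}$ and the inclusion is trivial. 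In rule $\RULENAME{$\IMPLOP_\III$}$ the type $\tau$ is an interface and the premise of the rule is exactly $\methodsofD{\TENV}{\tau_0} \supseteq \methodsofD{\TENV}{\tau}$, so the inclusion holds directly.

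I do not expect any genuine obstacle: the result is essentially bookkeeping. The one point worth care is that $\methods$ is computed against the \emph{same} type environment $\TENV$ on both sides of the judgement, so that the premise of $\RULENAME{$\IMPLOP_\III$}$ lands in exactly the form required, with no re-instantiation needed. The well-formedness hypotheses $\JUDGEWFG{\TENV}{\tau}$ and $\JUDGEWFG{\TENV}{\tau_0}$ serve only to guarantee that both method sets are well defined (and to license the appeal to Lemma~\ref{lem:fgg_invsub} in the structure case); they are not otherwise used.
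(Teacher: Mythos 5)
Your proposal is correct and follows essentially the same route as the paper: inversion/case analysis on the derivation of $\IMPLSG{\TENV}{\tau_0}{\tau}$, with the $\RULENAME{$\IMPLOP_\SSS$}$ and $\RULENAME{$\IMPLOP$-param}$ cases forcing $\tau_0 = \tau$ and the $\RULENAME{$\IMPLOP_\III$}$ case discharged directly by its premise $\methodsofD{\TENV}{\tau_0} \supseteq \methodsofD{\TENV}{\tau}$. The only cosmetic difference is that the paper splits the interface case into subcases according to whether $\tau_0$ is a named type or a type parameter (to spell out that $\methodsofD{\TENV}{\alpha}$ is computed via its bound), but that distinction is bookkeeping that your inclusion argument already covers.
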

        \begin{proof}
          By inversion on $\IMPLSG{\TENV}{\tau_0}{\tau}$.
\begin{description}
\item[Case:] $\RULENAME{$\IMPLOP_\III$}$
  \begin{tabbing}
$\methodsofD{\TENV}{\tau_0} \supseteq \methodsofD{\TENV}{\tau}$ \` by inversion\\

    {\bf Subcase:} $\tau_0$ is some struct or interface type\\
$\mmm  \
          \GPARAMSk{\ov{\PDECLkN{\kkk}{\tau_\III}}}
          \
          \MSIGpt{\ov{\pDECLxt{\yyy}{\tau_p}}}{\tau_r}
          \in
          \methodsofD{\TENV}{
          \tau_0
        }$ \` by definition\\
        {\bf Subcase:} $\tau_0 = \alpha$\\
        $\methodsofD{\TENV}{\tau_0} = \methodsofD{\TENV}{\tau'_I}$, for
        some $\tau'_I$ such that $(\alpha : \tau'_I) \in \TENV$ \` by
        definition\\
$\mmm  \
          \GPARAMSk{\ov{\PDECLkN{\kkk}{\tau_\III}}}
          \
          \MSIGpt{\ov{\pDECLxt{\yyy}{\tau_p}}}{\tau_r}
          \in
          \methodsofD{\TENV}{
          \BOUNDTENVt{\TENV}{\tau_0}
        }$ \` by above reasoning
\end{tabbing}

\item[Case:] $\RULENAME{$\IMPLOP_\SSS$}$

  \begin{tabbing}
Trivial since $\tau_0 = \tau = \tau_\SSS$, for some $\tau_\SSS$.
\end{tabbing}

\item[Case:] $\RULENAME{$\IMPLOP$-param}$

  \begin{tabbing}
    Trivial since $\tau_0 = \tau = \alpha$.
  \end{tabbing}

   \end{description}
          
        \end{proof}

        \begin{lemma}
\label{lem:fgg_methodssubst}
Let $\mmm  \
          \GPARAMSk{\ov{\PDECLkN{\beta}{\tau_\III}}}
          \
          \MSIGpt{\ov{\pDECLxt{\yyy}{\tau_p}}}{\tau_r}
          \in
          \methodsofD{
          \TENV_0,\ov{\alpha : \tau_\III} ,
      \TENV_1}{\tau_\eee}
        $, $\JUDGEWFG{\TENV_0}{\ov{\tau_0}}$, 
$\eta = (\ov{\alpha} \by \ov{\tau_0})$ and
$\IMPLSG{\TENV_0}{(\ov{\tau_0}}{\ov{\tau_\III})[\eta]}$. It follows
that
 $\mmm  \
          \GPARAMSk{\ov{\PDECLkN{\beta}{\tau_\III[\eta]}}}
          \
          \MSIGpt{\ov{\pDECLxt{\yyy}{\tau_p[\eta]}}}{\tau_r[\eta]}
          \in
          \methodsofD{
          \TENV_0,\TENV_1 [\eta] }
  {\tau_\eee [\eta]}
        $
        \end{lemma}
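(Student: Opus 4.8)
The plan is to argue by case analysis on the syntactic form of $\tau_\eee$, since this determines which defining clause of $\methods$ produces the given specification; I write $mM$ for that specification, so that $M = \GPARAMSk{\ov{\beta\ \tau_\III}}\ \MSIGpt{\ov{\pDECLxt{\yyy}{\tau_p}}}{\tau_r}$. Before splitting into cases I would fix a freshness convention, alpha-renaming the method's own formals so that $\ov{\beta}$ is disjoint from both $\ov{\tvar}$ and the domain of $\eta$. With this convention $\eta$ leaves the parameters bound by the method untouched, so applying $\eta$ to the whole of $mM$ merely substitutes into the argument and result types and into the formal bounds, producing exactly $\mmm \ \GPARAMSk{\ov{\beta\ \tau_\III[\eta]}} \ \MSIGpt{\ov{\pDECLxt{\yyy}{\tau_p[\eta]}}}{\tau_r[\eta]}$, the specification named in the conclusion.

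First I would treat $\tau_\eee = \tau_\III = \tttI(\phi)$. Here $mM$ arises from an interface declaration $\type\ \tttI(\Phi)\ \interface\ \br{\ov{S}}$ via the instantiation $\eta' = (\Phi \by \phi)$, so $mM = S_j[\eta']$ for some $j$. The crux is the substitution-composition identity $S_j[\eta'][\eta] = S_j[\eta'']$ with $\eta'' = (\Phi \by \phi[\eta])$: this holds because $\Phi$ is declared under the empty environment by rule \tyrulename{t-type}, so its parameters are closed and fixed by $\eta$, and because capture-avoiding substitutions commute under the freshness convention above. Since $\tttI(\phi)[\eta] = \tttI(\phi[\eta])$, the interface clause of $\methods$ then places $mM[\eta]$ in $\methodsofD{\TENV_0, \TENV_1[\eta]}{\tau_\eee[\eta]}$. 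The case $\tau_\eee = \tau_\SSS = \tttS(\phi)$ is analogous, except that its clause uses the bounds-checking instantiation $\eta' = (\Phi \by_{\TENV_0, \ov{\tvar:\tau_\III}, \TENV_1} \phi)$, so I must verify its bounds premise survives $\eta$. That premise is a subtyping judgement in $\TENV_0, \ov{\tvar:\tau_\III}, \TENV_1$, and Lemma~\ref{lem:fgg_tsubstsub} transports it to the corresponding judgement in $\TENV_0, \TENV_1[\eta]$; hence $(\Phi \by_{\TENV_0, \TENV_1[\eta]} \phi[\eta])$ is defined and, by the same composition identity, yields $mM[\eta]$ as a member of $\methodsofD{\TENV_0, \TENV_1[\eta]}{\tau_\eee[\eta]}$.

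Next I would handle $\tau_\eee = \gamma$, a type parameter (renamed to avoid clashing with $\ov{\beta}$). Since $\methodsofD{\TENV}{\gamma} = \methodsofD{\TENV}{\BOUNDTENVt{\TENV}{\gamma}}$ and the bound is an interface type, this reduces to the interface case already settled, split on membership of $\gamma$ in $\ov{\tvar}$. If $\gamma \notin \ov{\tvar}$ then $\gamma[\eta] = \gamma$; for $\gamma \in \TENV_0$ its bound cannot mention $\ov{\tvar}$ by the scoping discipline on environments, so $\eta$ acts as the identity on it, while for $\gamma \in \TENV_1$ the bound of $\gamma$ in $\TENV_0, \TENV_1[\eta]$ is precisely its original bound under $\eta$; in either subcase the interface case applied to that bound gives the result. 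If $\gamma = \tvar_i \in \ov{\tvar}$ with bound $\tau_{\III,i}$, then $\gamma[\eta] = \tau_{0,i}$; the interface case applied to $\tau_{\III,i}$ gives $mM[\eta] \in \methodsofD{\TENV_0,\TENV_1[\eta]}{\tau_{\III,i}[\eta]}$, and from the hypothesis $\IMPLSG{\TENV_0}{\tau_{0,i}}{\tau_{\III,i}[\eta]}$, weakened to $\TENV_0,\TENV_1[\eta]$ via Lemma~\ref{lem:fgg_weak}, Lemma~\ref{lem:fgg_methodssub} gives $\methodsofD{\TENV_0,\TENV_1[\eta]}{\tau_{0,i}} \supseteq \methodsofD{\TENV_0,\TENV_1[\eta]}{\tau_{\III,i}[\eta]}$, so $mM[\eta] \in \methodsofD{\TENV_0,\TENV_1[\eta]}{\tau_{0,i}}$, as needed.

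I expect the main obstacle to be the substitution bookkeeping rather than any deep argument: making precise that the method's own bound formals $\ov{\beta}$ remain disjoint from $\ov{\tvar}$ and are fixed by $\eta$ (so that $mM[\eta]$ has the claimed shape), and that each declaration's formals $\Phi$ are closed so that instantiation and $\eta$ genuinely commute. The only step requiring a semantic fact beyond renaming is preservation of the structure clause's bounds check, which is exactly the content of Lemma~\ref{lem:fgg_tsubstsub}.
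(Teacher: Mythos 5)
Your proof is correct and follows essentially the same route as the paper, whose entire proof is a one-line appeal to the definition of $\methods_\Delta$, the bounds-checking substitution $(\Phi \by_\Delta \phi)$, and Lemma~\ref{lem:fgg_tsubstsub} --- exactly the ingredients of your interface and struct cases. Your explicit treatment of the type-parameter receiver (reducing to its bound, and, when the parameter is one of the $\ov{\alpha}$, invoking the hypothesis $\IMPLSG{\TENV_0}{\ov{\tau_0}}{\ov{\tau_\III}[\eta]}$ together with Lemmas~\ref{lem:fgg_weak} and~\ref{lem:fgg_methodssub}) spells out detail that the paper's terse proof leaves implicit.
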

        \begin{proof}
          By the definition of $\methods$, $(\eta = (\Psi \by_\Delta \psi)$ and Lemma~\ref{lem:fgg_tsubstsub} 
         \end{proof}

\begin{lemma}[Type Substitution Preserves Typing]
  \label{lem:fgg_tsubsttyp}
Let
$\JUDGEEXPRG{\TENV_0,\ov{\alpha : \tau_\III} ,
  \TENV_1}{\Ga}{\eee}{\tau}$, $\JUDGEWFG{\TENV_0}{\ov{\tau_0}}$, 
$\eta = (\ov{\alpha} \by \ov{\tau_0})$ and
$\IMPLSG{\TENV_0}{(\ov{\tau_0}}{\ov{\tau_\III})[\eta]}$.
It follows that 
$\JUDGEEXPRG{\TENV_0,\TENV_1[\eta]}
{\Ga[\eta]}{\eee[\eta]}{\tau[\eta]}$.
\end{lemma}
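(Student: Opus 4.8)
The plan is to proceed by induction on the derivation of $\JUDGEEXPRG{\TENV_0,\ov{\alpha : \tau_\III},\TENV_1}{\Ga}{\eee}{\tau}$, generalising over $\TENV_1$, $\Ga$, $\eee$ and $\tau$ so that the induction hypothesis applies to each subexpression typed in the same kind of split environment. Throughout I would lean on the substitution lemmas already established: Lemma~\ref{lem:fgg_tsubstsub} to carry the $\imp$ premises through $\eta$, Lemma~\ref{lem:fgg_tsubstwf} for the well-formedness premises, and Lemma~\ref{lem:fgg_substbound} to relate $\BOUNDTENVt{\cdot}{\tau[\eta]}$ with $(\BOUNDTENVt{\cdot}{\tau})[\eta]$, together with Lemma~\ref{lem:fgg_weak} to reconcile environments.

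The structural cases are routine. For \tyrulename{t-var} the substitution leaves the variable untouched and reads its type off $\Ga[\eta]$. For \tyrulename{t-literal} and \tyrulename{t-field} I would apply the induction hypothesis to the subexpressions, observe that $\fields$ commutes with $\eta$ — because the formal parameters of the looked-up struct declaration are bound and hence disjoint from $\ov\alpha$, so instantiating the declaration and then applying $\eta$ agrees with applying $\eta$ to the already-instantiated field types — and reassemble the derivation, using Lemma~\ref{lem:fgg_tsubstsub} for the $\ov{\tau \imp \sigma}$ premise of \tyrulename{t-literal} and Lemma~\ref{lem:fgg_invsub} to keep the struct subject type exact in \tyrulename{t-field}.

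The first genuinely fiddly case is \tyrulename{t-call}. After applying the induction hypothesis to the receiver and arguments I would transport the method specification $m(\Psi)(\ov{x~\sigma})~\sigma \in \methodsofD{\cdot}{\tau}$ through $\eta$ using Lemma~\ref{lem:fgg_methodssubst}, obtaining the correspondingly substituted specification on $\tau[\eta]$. The delicate bookkeeping is the interaction of the outer substitution $\eta$ with the method's own instantiation $\eta' = (\Psi \by \psi)$: I must choose $\Psi$ apart from $\ov\alpha$ and show the two substitutions compose, so that the result type satisfies $\sigma[\eta'][\eta] = \sigma[\eta][\Psi \by \psi[\eta]]$. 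This lets me rebuild the argument-subtyping premises $\ov{\tau \imp \sigma}[\eta']$ under $\eta$ via Lemma~\ref{lem:fgg_tsubstsub} and read off the instantiated result type, concluding with \tyrulename{t-call}.

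The main obstacle is the type-assertion cases, because substitution can change the \emph{syntactic shape} of the asserted or the subject type, and hence which of \tyrulename{t-assert$_\III$}, \tyrulename{t-assert$_\SSS$}, \tyrulename{t-stupid} must conclude the substituted derivation. If the subject type becomes a struct under $\eta$ (for instance $\sigma_J = \beta$ with $\eta(\beta)$ a structure) I would fall back to \tyrulename{t-stupid}, which imposes no relation between subject and target; when the subject stays interface-like and the target is a $\tau_S$, I would re-establish the side condition $\tau_S \imp \BOUNDTENVt{\cdot}{\sigma_J}$ under $\eta$ by combining Lemma~\ref{lem:fgg_substbound} with Lemma~\ref{lem:fgg_tsubstsub}. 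I expect the hardest sub-case to be a \tyrulename{t-assert$_\III$} whose interface-like target is a type parameter $\alpha$ that $\eta$ instantiates to a structure: here the conclusion must be re-derived by whichever of the remaining two rules the (substituted) subject shape permits, and verifying that the requisite side condition is actually \emph{available} rather than merely plausible is where I would concentrate the care.
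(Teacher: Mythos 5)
Your outline reproduces the paper's proof almost step for step: the same induction on the typing derivation, with Lemma~\ref{lem:fgg_tsubstwf} and Lemma~\ref{lem:fgg_tsubstsub} discharging the well-formedness and implements premises, Lemma~\ref{lem:fgg_methodssubst} together with a composition of the outer substitution with the method instantiation (the paper's $\eta'' = (\eta \circ \eta')$) in \tyrulename{t-call}, and Lemma~\ref{lem:fgg_substbound} in the assertion cases. The variable, literal, field and call cases match the paper's proof in substance.

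However, the sub-case you flag as the hard one and leave open --- \tyrulename{t-assert$_\III$} with a type-parameter target that $\eta$ instantiates to a structure --- is a genuine gap, and it cannot be closed from the stated hypotheses. Declare an empty interface $\id{Any}$, an interface $\id{I}$ with one method, and a struct $\id{S}$ with no methods. Then $\JUDGEEXPRG{\beta : \id{Any}()}{y : \id{I}()}{\ASSRTet{y}{\beta}}{\beta}$ holds by \tyrulename{t-assert$_\III$}, and $\tau_0 = \id{S}()$ satisfies both side conditions of the lemma, since $\emptyset \vdash \id{S}() \ok$ and $\IMPLSG{\emptyset}{\id{S}()}{\id{Any}()}$; yet $(\ASSRTet{y}{\beta})[\eta] = \ASSRTet{y}{\id{S}()}$ is untypable under $y : \id{I}()$: \tyrulename{t-assert$_\SSS$} demands $\IMPLS{\id{S}()}{\id{I}()}$, which fails because $\methods(\id{S}())$ is empty; \tyrulename{t-assert$_\III$} needs an interface-like target; and \tyrulename{t-stupid} needs a struct-typed subject. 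So the side condition you hoped to ``verify is actually available'' is provably unavailable, and no care in the induction can produce it. Notably, the paper's own proof never meets this sub-case: it writes both the asserted type and the subject's type in the assertion cases as named interface types $\tau_\III'$, whose shape substitution preserves, whereas your reading with $\tau_J$ and $\sigma_J$ is the one the rules actually state.

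The same defect undermines your proposed repair of \tyrulename{t-assert$_\SSS$}. When the \emph{subject's} type is a parameter in $\ov{\alpha}$ that $\eta$ instantiates to an interface, combining Lemma~\ref{lem:fgg_tsubstsub} with Lemma~\ref{lem:fgg_substbound} yields only that $\tau_S[\eta]$ and the new subject type each implement the instantiated old bound --- two types below a common bound --- not that $\tau_S[\eta]$ implements the new subject's bound, which is what the rule requires. The same $\id{S}$/$\id{I}$ example applies, starting from $\JUDGEEXPRG{\gamma : \id{Any}()}{y : \gamma}{\ASSRTet{y}{\id{S}()}}{\id{S}()}$ and instantiating $\gamma \by \id{I}()$. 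In short: your structural cases are correct and identical to the paper's, but the assertion cases are not completable as the lemma and typing rules stand; your hesitation there is not a personal failure of technique but a flaw that the paper's proof conceals rather than resolves.
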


\begin{proof}
By induction on the derivation of $\JUDGEEXPRG{\TENV_0,\ov{\alpha :
  \tau_\III }, \TENV_1}{\Ga}{\eee}{\tau}$. 

\begin{description}
\item[Case:] $\RULENAME{t-var}$

\begin{tabbing}
$\JUDGEEXPRG{\TENV_0,\TENV_1[\eta]}{\Ga[\eta]}
{\xxx}{\Ga (\xxx) [\eta]}$ \` trivially by
$\RULENAME{t-var}$\\
\end{tabbing}

\item[Case:] $\RULENAME{t-literal}$
  \begin{tabbing}
    $\eee = \STRUCTte{\tau_\SSS'}{\ov{\eee}}$ and $\tau = \tau_\SSS'$ \` this case\\
   $\JUDGEWFG{\TENV_0,\ov{\alpha : \tau_\III} , \TENV_1}{\tau_\SSS'}$,
    $\JUDGEEXPRG{\TENV_0,\ov{\alpha : \tau_\III} , \TENV_1}{\Ga}{\ov{\eee}}{\ov{\tau}}$,
    $\FIELDSOFtD{\tau_\SSS'} = \ov{\FDECLft{\fff}{
       \tau_\fff
       }}$, $\IMPLSG{\TENV_0,\ov{\alpha : \tau_\III },
       \TENV_1}{\ov{\tau}}{\ov{\tau_\fff}}$ \\\` by inversion\\
     $\JUDGEWFG{\TENV_0 , \TENV_1\SUBS{\ov{\tau_0}}{\ov{\alpha}}}{\tau_\SSS'\SUBS{\ov{\tau_0}}{\ov{\alpha}}}$
     \` by Lemma~\ref{lem:fgg_tsubstwf}\\
     $\JUDGEEXPRG{\TENV_0,
       \TENV_1[\eta]}{\Ga[\eta]}{\ov{\eee}[\eta]}{\ov{\tau}[\eta]}$
    \` by i.h.\\
$\IMPLSG{\TENV_0,\TENV_1[\eta]}{\ov{\tau}[\eta]}{\ov{\tau_\fff}[\eta]}$
\` by Lemma~\ref{lem:fgg_tsubstsub}\\
$\FIELDSOFtD{\tau_\SSS'[\eta]} = \ov{\FDECLft{\fff}{
       \tau_\fff[\eta]
     }}$ \` by definition\\
   $\JUDGEEXPRG{\TENV_0,
     \TENV_1[\eta]}{\Ga[\eta]}{\STRUCTte{\tau_\SSS'[\eta]}{\ov{\eee}[\eta]}}
   {\tau_\SSS'[\eta]}$ \` by $\RULENAME{t-literal}$\\
 \end{tabbing}

 \item[Case:] $\RULENAME{t-field}$
  \begin{tabbing}
    $e = \SELef{\eee}{\fff}$ \` this case\\
    $\JUDGEEXPRG{\TENV_0,\ov{\alpha : \tau_\III },
      \TENV_1}{\Ga}{\eee}{\tau_\SSS}$ and $\FDECLft{\fff}{\tau} \in
    \FIELDSOFtD{\tau_\SSS}$\` by inversion\\
    $\JUDGEEXPRG{\TENV_0,\TENV_1[\eta]}{\Ga[\eta]}
    {\eee[\eta]}{\tau_\SSS[\eta]}$ \` by i.h.\\
    $\FDECLft{\fff}{\tau[\eta]} \in
    \FIELDSOFtD{\tau_\SSS[\eta]}$ \` by definition\\
    $\JUDGEEXPRG{\TENV_0,\TENV_1[\eta]}{\Ga[\eta]}
    {\SELef{\eee[\eta]}{\fff}}{\tau[\eta]}$
    \` by $\RULENAME{t-field}$\\
   \end{tabbing}

\item[Case:] $\RULENAME{t-call}$
  \begin{tabbing}
    $e = \MCALLemne{\eee}{\mmm}{\psi}{\ov{\eee}}$ and
    $\tau = \tau_r[\eta']$\` this case\\
    $\mmm  \
          \GPARAMSk{\ov{\PDECLkN{\beta}{\tau_\III}}}
          \
          \MSIGpt{\ov{\pDECLxt{\yyy}{\tau_p}}}{\tau_r}
          \in
          \methodsofD{
          \TENV_0,\ov{\alpha : \tau_\III} ,
      \TENV_1}{\tau_\eee}
        $ \` by inversion\\
        $\JUDGEEXPRG{\TENV_0,\ov{\alpha : \tau_\III} ,
      \TENV_1}{\Ga}{\eee}{\tau_\eee}$ \` by inversion\\
    $\SUBSTBUILDERCTXT{\eta'}{(\TENV_0,\ov{\alpha : \tau_\III} , \TENV_1)}
    {\GPARAMSk{\ov{\PDECLkN{\beta}{\tau_\III}}}}{\psi}$ and
    $\IMPLSG{\TENV_0,\ov{\alpha : \tau_\III} , \TENV_1}{(\ov{\beta}}{\ov{\tau_I})[\eta']}$
\` by inversion\\
        $\JUDGEEXPRG{\TENV_0,\ov{\alpha : \tau_\III },
      \TENV_1}{\Ga}{\ov{\eee}}{\ov{\tau_a}}$ \` by inversion\\
        $\IMPLSG{\TENV_0,\ov{\alpha : \tau_\III },
      \TENV_1}{(\ov{\tau_a}}{\ov{\tau_p})[\eta']
    }$ \` by inversion\\
    $\JUDGEEXPRG{\TENV_0,
      \TENV_1[\eta]}{\Ga[\eta]}{\eee[\eta]}
    {\tau_\eee[\eta]}$
   \` by i.h.\\
$\IMPLSG{\TENV_0,
      \TENV_1[\eta]}{\psi[\eta]}{\ov{\tau_\III}
          [\eta'][\eta] }$ \` by
        Lemma~\ref{lem:fgg_tsubstsub}\\
        $\JUDGEEXPRG{\TENV_0,
          \TENV_1[\eta]}{\Ga[\eta]}{\ov{\eee [\eta]}
        }{\ov{\tau_a [\eta]}}$
\` by i.h.\\
$\IMPLSG{\TENV_0,
      \TENV_1[\eta]}{\ov{\tau_a}[\eta'][\eta]}{\ov{\tau_p}
      [\eta'][\eta]}$ \` by
    Lemma~\ref{lem:fgg_tsubstsub}\\

    $\mmm  \
          \GPARAMSk{\ov{\PDECLkN{\beta}{\tau_\III[\eta]}}}
          \
          \MSIGpt{\ov{\pDECLxt{\yyy}{\tau_p[\eta]}}}{\tau_r[\eta]}
          \in
          \methodsofD{
          \TENV_0,\TENV_1 [\eta] }
  {\tau_\eee [\eta]}
  $ \` by  Lemma~\ref{lem:fgg_methodssubst}\\
  Let $\eta'' = (\eta \circ \eta')$\\
  $\SUBSTBUILDERCTXT{\eta''}{(\TENV_0, \TENV_1[\eta])}
    {\GPARAMSk{\ov{\PDECLkN{\beta}{\tau_\III[\eta'][\eta]}}}}{\psi[\eta'][\eta]}$
    \` by definition of $\SUBSTBUILDERCTXT{\eta}{\TENV}{\Psi}{\psi}$
    and above\\
$\JUDGEEXPRG{\TENV_0,
          \TENV_1[\eta]}{\Ga[\eta]}
        {\MCALLemne{\eee [\eta]}{\mmm}{\psi[\eta]}{\ov{\eee [\eta]}}}
        {\tau_r [\eta'][\eta]
        }$
\` by rule $\RULENAME{t-call}$
    \end{tabbing}

\item[Case:] $\RULENAME{t-assert$_\III$}$
\begin{tabbing}
$\eee= \ASSRTet{\eee}{\tau_\III'}$ and $\tau = \tau_\III'$ \` this
case\\
$\JUDGEWFG{\TENV_0, \ov{\alpha :\tau_\III }, \TENV_1 }{\tau_\III'}$ and 
$\JUDGEEXPRG{\TENV_0, \ov{\alpha : \tau_\III} , \TENV_1
}{\Ga}{\eee}{\tau_\III'}$ \` by inversion\\
$\JUDGEWFG{\TENV_0,\TENV_1[\eta] }{\tau_\III'[\eta]}$ \` by
Lemma~\ref{lem:fgg_tsubstwf} \\
$\JUDGEEXPRG{\TENV_0, \TENV_1 [\eta]
}{\Ga [\eta]}{\eee [\eta]}{\tau_\III' [\eta]}$
\` by i.h.\\
$\JUDGEEXPRG{\TENV_0, \TENV_1 [\eta]
}{\Ga [\eta]}
{\ASSRTet{\eee [\eta]}{\tau_\III' [\eta]}}
{ \tau_\III' [\eta]}$ \` by rule $\RULENAME{t-assert$_\III$}$

\end{tabbing}

\item[Case:] $\RULENAME{t-assert$_\SSS$}$
  
  \begin{tabbing}
$\eee = \ASSRTet{\eee}{\tau_\SSS}$ and $\tau = \tau_\SSS$ \` this
case\\
$\JUDGEWFG{\TENV_0, \ov{\alpha : \tau_\III} , \TENV_1 }{\tau_\SSS}$ \` by inversion\\

$ \JUDGEEXPRG{\TENV_0, \ov{\alpha : \tau_\III }, \TENV_1 }{\Ga}{\eee}{\tau_\III'}$ \` by inversion\\
$\IMPLSG{\TENV_0, \ov{\alpha : \tau_\III} , \TENV_1 }{\tau_\SSS}{
  \BOUNDTENVt{\TENV_0, \ov{\alpha : \tau_\III} , \TENV_1}{\tau_\III'}}$ \` by inversion\\

$\JUDGEWFG{\TENV_0, \TENV_1 [\eta] }{\tau_\SSS [\eta]}$ \` by
Lemma~\ref{lem:fgg_tsubstwf} \\
$\IMPLSG{\TENV_0, \TENV_1 [\eta] }{\tau_\SSS [\eta] }{
  \BOUNDTENVt{\TENV_0, \ov{\alpha : \tau_\III} , \TENV_1}{\tau_\III'}[\eta] }$ \` by
Lemma~\ref{lem:fgg_tsubstsub}\\
$\IMPLSG{\TENV_0, \TENV_1 [\eta] }
{\tau_\SSS [\eta]}
{\BOUNDTENVt{\TENV_0,\TENV_1[\eta]}{\tau_\III'[\eta]}}$
\` by Lemma~\ref{lem:fgg_substbound} and Lemma~\ref{lem:fgg_subequiv}\\
$ \JUDGEEXPRG{\TENV_0 , \TENV_1 [\eta] }{\Ga
  [\eta]}
{\eee [\eta]}{\tau_\III'
  [\eta]}$
\` by i.h.\\

  $ \JUDGEEXPRG{\TENV_0 , \TENV_1 [\eta] }{\Ga
  [\eta]}
{ \ASSRTet{\eee [\eta] }{\tau_\SSS
    [\eta]}}
{\tau_\SSS
  [\eta]}$ \` by rule $\RULENAME{t-assert$_\SSS$}$\\

    \end{tabbing}

    \item[Case:] $\tyrulename{t-stupid}$

    \begin{tabbing}
      $\eee = \ASSRTet{\eee}{\tau}$ and $\tau = \tau$ \` this case\\
      $ \JUDGEWFG{\TENV_0, \ov{\alpha : \tau_\III} , \TENV_1}{\tau_\SSS}$ \` by
inversion\\
$ \JUDGEEXPRG{\TENV_0, \alpha : \tau_\III ,
  \TENV_1}{\Ga}{\eee}{\tau_\SSS}$ \` by inversion\\
$\JUDGEWFG{\TENV_0, \TENV_1 [\eta]}{\tau_\SSS[\eta]}$ \` by Lemma~\ref{lem:fgg_tsubstwf} \\
   $ \JUDGEEXPRG{\TENV_0, 
        \TENV_1 [\eta]}{\Ga
        [\eta]}{\eee [\eta] }{\tau_\SSS [\eta]}$
\` by i.h.\\

$ \JUDGEEXPRG{\TENV_0, 
        \TENV_1 [\eta] }{\Ga
        [\eta]}
        {\ASSRTet{\eee [\eta] }{\tau
            [\eta]}}
        {\tau[\eta]}$ 
        \`  by rule
        $\tyrulename{t-stupid}$ \\
      \end{tabbing}

\end{description}
\end{proof}

\begin{lemma}[Value Substitution Preserves Typing]
\label{lem:fgg_subst}
If $\JUDGEEXPRG{\emptyset}{\Ga , \ov{\xxx : \tau}}{\eee}{\tau'}$ and 
$\JUDGEEXPRG{\emptyset}{\emptyset}{\ov{\vvv_0}}{\ov{\tau_0}}$ where
$\IMPLSG{}{\ov{\tau_0}}{\ov{\tau}}$ then 
$\JUDGEEXPRG{\emptyset}{\Ga}{\eee\SUBS{\ov{\vvv_0}}{\ov{\xxx}}}{\tau''}$, for some
$\tau''$ with $\IMPLSG{\emptyset}{\tau''}{\tau'}$.
\end{lemma}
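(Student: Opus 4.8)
The plan is to prove the statement by induction on the derivation of $\JUDGEEXPRG{\emptyset}{\Ga, \ov{x : \tau}}{\eee}{\tau'}$, mirroring the FG substitution lemma (Lemma~\ref{lem:subst}) but appealing to the FGG auxiliaries established above. Two simplifications are worth recording up front: the type environment is empty throughout, so there is no type-level substitution to track and every type in sight is closed; and because $\Delta = \emptyset$ there are no type parameters in scope, so every interface-like type $\sigma_J$ arising by inversion is in fact a concrete interface type and $\bounds_\emptyset(\sigma_J) = \sigma_J$. The only substitution performed is of the closed values $\ov{v_0}$ for the term variables $\ov{x}$.

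For the base case \tyrulename{t-var} with $\eee = y$: if $y \notin \ov{x}$ the substitution is the identity and we conclude by reflexivity (Lemma~\ref{lem:fgg_subequiv}); if $y = x_i$ then $\tau' = \tau_i$ and the substitution yields $v_{0,i} : \tau_{0,i}$ with $\IMPLSG{\emptyset}{\tau_{0,i}}{\tau_i}$ by hypothesis, after Weakening (Lemma~\ref{lem:fgg_weak}) to reinstate $\Ga$. The congruence cases \tyrulename{t-literal}, \tyrulename{t-field}, and \tyrulename{t-call} then follow the FG template essentially verbatim: invert the rule, apply the induction hypothesis to each immediate subexpression to obtain a subtype of its original type, and recombine with transitivity of $\imp$. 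In \tyrulename{t-field} (and likewise \tyrulename{t-stupid}) the induction hypothesis returns a subtype of a structure type, which Lemma~\ref{lem:fgg_invsub} forces to equal that structure type, so the field lookup (resp.\ stupid assertion) reapplies unchanged. In \tyrulename{t-call}, after the hypothesis gives a receiver type $\tau''_0 \imp \tau_{\eee}$ I would invoke Lemma~\ref{lem:fgg_methodssub} to transport the method specification to $\tau''_0$, re-establish the argument-subtyping premises by transitivity, and reconstruct the call; since the method type arguments $\psi$ and hence $\eta = (\Psi \by_\emptyset \psi)$ are untouched by the value substitution, the result type $\sigma[\eta]$ is reproduced exactly, and reflexivity closes the case.

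The delicate cases, and the reason the conclusion must be stated up to $\imp$, are the type assertions \tyrulename{t-assert$_I$} and \tyrulename{t-assert$_S$}. Here the subject $\eee$ is inverted to an interface type $\sigma_I$, but substituting closed values can collapse an interface-typed subexpression into a structure-typed value, so the induction hypothesis only yields some $\tau_a \imp \sigma_I$. I would split on the shape of $\tau_a$ exactly as in the FG proof: when $\tau_a$ is a structure type the assertion is retyped through \tyrulename{t-stupid}; when $\tau_a$ remains an interface type the original assertion rule reapplies. For \tyrulename{t-assert$_S$} the latter subcase needs $\tau_a = \sigma_I$ on the nose, which is the FGG analogue of the ``value restriction'' step: the only way the induction hypothesis can strictly lower an interface type is through the variable case, and that case produces a structure type, so whenever the substituted subject stays interface-typed its type is preserved exactly. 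Verifying this interface-preservation observation across the inductive cases is the main obstacle; once it is in hand, the remaining assertion bookkeeping is routine.
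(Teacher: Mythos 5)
Your proposal is correct and follows essentially the same route as the paper's proof: induction on the typing derivation with the same supporting lemmas (Weakening, reflexivity/transitivity of $\IMPLOP$, Lemma~\ref{lem:fgg_invsub} for structure types, and Lemma~\ref{lem:fgg_methodssub} to transport method specifications to the receiver's subtype), and the same case split in the assertion rules between retyping via \tyrulename{t-stupid} and reapplying the original rule. Your unpacking of why the interface type must be preserved \emph{exactly} in the \tyrulename{t-assert$_S$} case is precisely what the paper compresses into the phrase ``by value restriction.''
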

\begin{proof}
By induction on the derivation of $\JUDGEEXPRG{\emptyset}{\Ga , \ov{x :
    \tau}}{\eee}{\tau'}$.

\begin{description}
\item[Case:] Rule $\tyrulename{t-var}$

  \begin{tabbing}
  $\eee = \yyy$ and $\tau' = \Ga(y)$ \\
  {\bf Sub}\={\bf case:} $\xxx \neq \yyy$ \\
  \>$\yyy\SUBS{\ov{\vvv_0}}{\ov{\xxx}} = \yyy$ \` by definition\\
  \>$\JUDGEEXPRG{\emptyset}{\Ga}{\yyy}{\tau}$ \` by $\tyrulename{t-var}$\\
  \>$\IMPLSG{\emptyset}{\tau}{\tau}$ \` by reflexivity\\
  {\bf Subcase:} $\xxx = \yyy$ \\
  \>$\xxx\SUBS{\ov{\vvv_0}}{\ov{\xxx}} = \vvv_0$ and $\Ga(x) = \tau'$ \\
 \>$\JUDGEEXPRG{\emptyset}{\Ga}{\vvv}{\tau_0}$ and $\IMPLS{\tau_0}{\tau'}$ \` assumption
\end{tabbing}

\item[Case:] Rule $\tyrulename{t-literal}$
   \begin{tabbing}
     $\eee =\STRUCTte{\tau_\SSS}{\ov{\eee_f}}$,$\tau' = \tttS$ and
     $\eee\SUBS{\ov{\vvv_0}}{\ov{\xxx}} = \STRUCTte{\tau_\SSS}{ \eee_f\SUBS{\ov{\vvv_0}}{\ov{\xxx}}}$ \\
  $\JUDGEEXPRG{\emptyset}{\Ga,\ov{\xxx : \tau}}{\ov{\eee_f}}{\ov{\ttt_a}}$ \` by inversion\\
  $\FIELDSOFtD{\tttS} = \ov{\FDECLft{\fff}{\ttt_f}}$ \` by inversion\\
  $\impls{\ov{\ttt_a}}{\ov{\ttt_f}}$ \` by inversion\\
  $\JUDGEWFG{\emptyset}{\tau_\SSS}$ \` by inversion\\
  $\JUDGEEXPRG{\emptyset}{\Ga}{\ov{\eee_f\SUBS{\ov{\vvv_0}}{\ov{\xxx}}}}{\ov{\ttt_c}}$, for
  some $\IMPLSG{\emptyset}{\ov{\ttt_c}}{\ov{\ttt_a}}$ \` by i.h.\\
  $\IMPLSG{\emptyset}{\ov{\ttt_c}}{\ov{\ttt_f}}$ \` by transitivity\\
  $\JUDGEEXPRG{\emptyset}{\Ga}{\STRUCTte{\tau_\SSS}{ \eee_f\SUBS{\ov{\vvv_0}}{\ov{\xxx}}}
  }{\tau_\SSS}$ \` by $\tyrulename{t-literal}$\\
 $\IMPLSG{\emptyset}{ \tau_\SSS}{\tau_\SSS}$ \` by reflexivity
\end{tabbing}

\item[Case:] Rule $\tyrulename{t-field}$

\begin{tabbing}
  $\eee = \SELef{\eee_s}{\fff}$, $\tau' = \ttt_f$ and
$\eee\SUBS{\ov{\vvv_0}}{\ov{\xxx}} = \SELef{\eee_s\SUBS{\ov{\vvv_0}}{\ov{\xxx}}}{\fff}$\\

$\JUDGEEXPRG{\emptyset}{\Ga, \ov{\xxx : \tau}}{\eee_s}{\tau_\SSS}$ \` by inversion\\
$\FDECLft{\fff}{\ttt_f} \in \FIELDSOFtD{\tau_\SSS}$ \` by inversion\\

$\JUDGEEXPRG{\emptyset}{\Ga}{\eee_s\SUBS{\ov{\vvv_0}}{\ov{\xxx}}}{\ttt_b}$, for some
$\IMPLS{\ttt_b}{\tau_\SSS}$ \` by i.h.\\
$\ttt_b = \tau_\SSS$ \` by Lemma~\ref{lem:fgg_invsub}\\
$\JUDGEEXPRG{\emptyset}{\Ga}{\eee_s\SUBS{\ov{\vvv_0}}{\ov{\xxx}}}{\tau_\SSS}$ \` substituting for
equals\\
$\JUDGEEXPR{\emptyset}{\Ga}{\SELef{\eee_s}{\fff}}{\ttt_f}$ \` by
$\tyrulename{t-field}$\\
$\IMPLSG{\emptyset}{\ttt_f}{\ttt_f}$ \` by reflexivity

\end{tabbing}

\item[Case:] Rule $\tyrulename{t-call}$
  \begin{tabbing}
$\mmm  \
          \GPARAMSk{\ov{\PDECLkN{\kkk}{\tau_\III}}}
          \
          \MSIGpt{\ov{\pDECLxt{\yyy}{\tau_p}}}{\tau_r}
          \in
          \methodsofD{
          \emptyset}{\tau_m}
        $ \` by inversion\\
  $\JUDGEEXPRG{\emptyset}{\Ga, \ov{\xxx : \tau}}{\eee}{\tau_m}$ \` by
  inversion\\
    $\JUDGEEXPRG{\emptyset}{\Ga ,\ov{\xxx : \tau}}{\ov{\eee}}{\ov{\tau_a}}$ \` by
    inversion\\

$\eta = (\GPARAMSk{\ov{\PDECLkN{\kkk}{\tau_\III}}} \by \psi)$ \` by inversion\\
      $\IMPLSG{\emptyset}{(\ov{\tau_a}}{\ov{\tau_p}
        )[\eta]}$ \` by inversion\\
        $\JUDGEEXPRG{\emptyset}{\Ga}{\eee\SUBS{\ov{\vvv_0}}{\ov{\xxx}}}{\tau_m'}$
        for some $\IMPLSG{\emptyset}{\tau_m'}{\tau_m}$ \` by
        i.h.\\
        $\JUDGEEXPRG{\emptyset}{\Ga
        }{\ov{\eee\SUBS{\ov{\vvv_0}}{\ov{\xxx}}}}{\ov{\tau_a'}}$ for
        some $\IMPLSG{\emptyset}{\ov{\tau_a'}}{\ov{\tau_a}}$ \` by
        i.h.\\ 
        $\IMPLSG{\emptyset}{\ov{\tau_a'}}{\ov{\tau_p}
          [\eta] }$ \` by transitivity\\
        $\mmm  \
          \GPARAMSk{\ov{\PDECLkN{\kkk}{\tau_\III}}}
          \
          \MSIGpt{\ov{\pDECLxt{\yyy}{\tau_p}}}{\tau_r}
          \in
          \methodsofD{
          \emptyset}{\tau_m'}
        $ \` by Lemma~\ref{lem:fgg_methodssub}\\
        $ \JUDGEEXPRG{\emptyset}{\Ga}
       {
       \MCALLemne{\eee\SUBS{\ov{\vvv_0}}{\ov{\xxx}}}{\mmm}{\psi}{\ov{\eee}\SUBS{\ov{\vvv_0}}{\ov{\xxx}}}
       }
       {\tau_r [\eta] }$ \` by rule $\tyrulename{t-call}$
    \end{tabbing}

\item[Case:] Rule $\tyrulename{t-assert$_\III$}$
    \begin{tabbing}
$  \JUDGEWFG{\emptyset}{\tau_\III}$ and
$\JUDGEEXPRG{\emptyset}{\Ga, \ov{\xxx : \tau}}{\eee}{\tau_\III'}$ \` by inversion\\
$\JUDGEEXPRG{\emptyset}{\Ga}{\eee\SUBS{\ov{\vvv_0}}{\ov{\xxx}}}{\tau''}$ with
$\IMPLSG{\emptyset}{\tau''}{\tau_\III'}$ \` by i.h.\\
{\bf Subcase:} $\tau''$ is an interface type\\
$\JUDGEEXPRG{\emptyset}{\Ga}{\ASSRTet{\eee\SUBS{\ov{\vvv_0}}{\ov{\xxx}}}{\tau_\III}}{\tau_\III}$
\` by rule $\tyrulename{t-assert$_\III$}$\\
{\bf Subcase:} $\tau''$ is some struct type $\tau_\SSS$\\
$\JUDGEEXPRG{\emptyset}{\Ga}{\ASSRTet{\eee\SUBS{\ov{\vvv_0}}{\ov{\xxx}}}{\tau_\III}}{\tau_\III}$
\` by rule $\tyrulename{t-stupid}$ 
      \end{tabbing}
    
\item[Case:] Rule $\tyrulename{t-assert$_\SSS$}$
  \begin{tabbing}
 $\JUDGEWFG{\emptyset}{\tau_\SSS}$ \` by inversion\\
   
    $\JUDGEEXPRG{\emptyset}{\Ga, \ov{\xxx : \tau} }{\eee}{\tau_\III}$ \` by
    inversion\\
    $\IMPLSG{\emptyset}{\tau_\SSS}{  \BOUNDTENVt{\emptyset}{\tau_\III}}$ \` by
    inversion\\
$\JUDGEEXPRG{\emptyset}{\Ga}{\eee\SUBS{\ov{\vvv_0}}{\ov{\xxx}}}{\tau''}$ with
$\IMPLSG{\emptyset}{\tau''}{\tau_\III}$  \` by i.h.\\
{\bf Subcase:} $\tau''$ is an interface type\\
$\tau'' = \tau_\III$ \` by value restriction\\
$\JUDGEEXPRG{\emptyset}{\Ga}{\ASSRTet{\eee\SUBS{\ov{\vvv_0}}{\ov{\xxx}}}{\tau_\SSS}}{\tau_\SSS}$
\` by $\tyrulename{t-assert$_\SSS$}$ \\
{\bf Subcase:} $\tau''$ is a struct type\\
$\JUDGEEXPRG{\emptyset}{\Ga}{\ASSRTet{\eee\SUBS{\ov{\vvv_0}}{\ov{\xxx}}}{\tau_\SSS}}{\tau_\SSS}$
\` by $\tyrulename{t-stupid}$ \\
\end{tabbing}

 \item[Case:] Rule $\tyrulename{t-stupid}$
   \begin{tabbing}
     $\JUDGEWFG{\emptyset}{\tau_\SSS}$ \` by inversion\\
   
      $\JUDGEEXPRG{\emptyset}{\Ga, \ov{\xxx : \tau}}{\eee}{\tau_\SSS}$ \` by inversion\\
$\JUDGEEXPRG{\emptyset}{\Ga}{\eee\SUBS{\ov{\vvv_0}}{\ov{\xxx}}}{\tau''}$ such
    that $\IMPLSG{\emptyset}{\tau''}{\tau_\SSS}$ \` by i.h.\\
    $\tau''$ is a struct type \` by Lemma~\ref{lem:fgg_invsub}\\
    $\JUDGEEXPRG{\emptyset}{\Ga}{\ASSRTet{\eee\SUBS{\ov{\vvv_0}}{\ov{\xxx}}}{\tau}}{\tau}$
    \` by rule $\tyrulename{t-stupid}$
    \end{tabbing}

  \end{description}

\end{proof}

 \begin{lemma}
   \label{lem:fgg_methtyp}
If $ \mmm  \
          \GPARAMSk{\ov{\PDECLkN{\kkk}{\tau_\III}}}
          \
          \MSIGpt{\ov{\pDECLxt{\yyy}{\tau_p}}}{\tau_r}
          \in
          \methodsof{\NTYPEtt{\ttt_\SSS}{\ov{\tau_\ttt}}}
          $ and
          $\MBODYmtN{\TENV}{\mmm}{\ov{\tau_\mmm}}{\NTYPEtt{\ttt_\SSS}{\ov{\tau_\ttt}}}
          = \MBODYxxe{\xxx}{\ov{\yyy}}{\eee_0}$
          where
          $\JUDGEWFG{\TENV}{\NTYPEtt{\ttt_\SSS}{\ov{\tau_\ttt}}}$,
          $\JUDGEWFG{\TENV}{\ov{\tau_\mmm}}$ and
          $\IMPLSG{\TENV}{\ov{\tau_\mmm}}{\ov{\tau_\III\SUBS{\ov{\tau_\mmm}}{\ov{\kkk}}}}$
          then there exists $\tau'$ such that
          $\JUDGEWFG{\TENV}{\tau'}$, 
          $\IMPLSG{\TENV}{\tau'}{\tau_r\SUBS{\ov{\tau_\mmm}}{\ov{\kkk}}}$
          and $\JUDGEEXPRG{\TENV}{\ov{\yyy} :
            \ov{\tau_p}\SUBS{\ov{\tau_\mmm}}{\ov{\kkk}} , \xxx :
            \NTYPEtt{\ttt_\SSS}{\ov{\tau_\ttt}}}{\eee_0}{\tau'}$.
  \end{lemma}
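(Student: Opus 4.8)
The plan is to reduce the claim to the well-formedness of the underlying method declaration, which is available because the ambient program is well typed (so every declaration in $\ov{D}$ satisfies $D \ok$). First I would invert the membership $\mmm\,\GPARAMSk{\ov{\PDECLkN{\kkk}{\tau_\III}}}\,\MSIGpt{\ov{\pDECLxt{\yyy}{\tau_p}}}{\tau_r} \in \methodsofD{\TENV}{\NTYPEtt{\ttt_\SSS}{\ov{\tau_\ttt}}}$ using the clause of $\methods$ for a structure type. This produces a declaration $\func~(x~t_S(\Phi))~m(\Psi)(\ov{y~\sigma})~\sigma_r~\br{\return~e} \in \ov{D}$ together with the substitution $\eta_0 = (\Phi \by_\TENV \ov{\tau_\ttt})$, where, writing $\Psi = \type~\ov{\kkk~\sigma_I}$, we have $\ov{\tau_\III} = \ov{\sigma_I}[\eta_0]$, $\ov{\tau_p} = \ov{\sigma}[\eta_0]$ and $\tau_r = \sigma_r[\eta_0]$. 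Crucially, the definedness of the partial operation $\eta_0 = (\Phi \by_\TENV \ov{\tau_\ttt})$ records that $\ov{\tau_\ttt}$ implements the $\eta_0$-instance of the bounds of $\Phi$ under $\TENV$; this is the only source of the receiver-bound information, and it is stronger than what $\JUDGEWFG{\TENV}{\NTYPEtt{\ttt_\SSS}{\ov{\tau_\ttt}}}$ alone gives, because receivers are covariant. By the definition of $\mbody$ the body is $e_0 = e[\theta]$, where $\theta = (\Phi, \Psi \by \ov{\tau_\ttt}, \ov{\tau_\mmm})$ is the combined receiver-then-method substitution.

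Next I would use $D \ok$ for this declaration and invert \tyrulename{t-func}. This yields $\Phi \stoup \Psi \ok~\Delta_0$ with $\Delta_0 = \Phi, \Psi$, the judgements $\JUDGEWFG{\Delta_0}{\ov{\sigma}}$ and $\JUDGEWFG{\Delta_0}{\sigma_r}$, the body typing $\JUDGEEXPRG{\Delta_0}{x : t_S(\hat\Phi), \ov{y : \sigma}}{e}{\tau_e}$ for some $\tau_e$, and $\IMPLSG{\Delta_0}{\tau_e}{\sigma_r}$. I take $\tau' = \tau_e[\theta]$ as the witness.

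The core step is a single application of Type Substitution Preserves Typing (Lemma~\ref{lem:fgg_tsubsttyp}) with outer environment $\TENV$, substituted block $\Phi, \Psi$ taken as a whole (this is essential, since the bounds $\ov{\sigma_I}$ of $\Psi$ may mention $\hat\Phi$, and the lemma already admits mutually recursive bounds inside the block), and substitution $\eta = \theta$. After weakening the body typing from $\Delta_0$ to $\TENV, \Phi, \Psi$ (Lemma~\ref{lem:fgg_weak}, inserting $\TENV$ at the front, which is sound since the rules only inspect membership and $\hat\Phi, \hat\Psi$ are distinct from the parameters of $\TENV$), the two side conditions are: (i) $\JUDGEWFG{\TENV}{\ov{\tau_\ttt}}$ and $\JUDGEWFG{\TENV}{\ov{\tau_\mmm}}$, where the former follows by inverting $\JUDGEWFG{\TENV}{\NTYPEtt{\ttt_\SSS}{\ov{\tau_\ttt}}}$ via \tyrulename{t-named} and the latter is a hypothesis; and (ii) that $\ov{\tau_\ttt}, \ov{\tau_\mmm}$ implement the $\theta$-instance of the bounds of $\Phi, \Psi$ under $\TENV$. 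I would discharge (ii) by splitting along the block: the bounds of $\Phi$ mention only $\hat\Phi$, so their $\theta$-instance coincides with their $\eta_0$-instance, and the required subtyping is exactly what the definedness of $\eta_0$ recorded; the bounds of $\Psi$ instantiate to $\ov{\sigma_I}[\theta] = \ov{\sigma_I}[\eta_0]\SUBS{\ov{\tau_\mmm}}{\ov{\kkk}} = \ov{\tau_\III}\SUBS{\ov{\tau_\mmm}}{\ov{\kkk}}$, so the obligation is precisely the remaining hypothesis $\IMPLSG{\TENV}{\ov{\tau_\mmm}}{\ov{\tau_\III}\SUBS{\ov{\tau_\mmm}}{\ov{\kkk}}}$.

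The lemma then delivers $\JUDGEEXPRG{\TENV}{x : \NTYPEtt{\ttt_\SSS}{\ov{\tau_\ttt}}, \ov{y : \sigma[\theta]}}{e_0}{\tau_e[\theta]}$. Since $\theta$ factors as $\eta_0$ followed by $\SUBS{\ov{\tau_\mmm}}{\ov{\kkk}}$ on the method parameters, $\sigma[\theta] = \tau_p\SUBS{\ov{\tau_\mmm}}{\ov{\kkk}}$; reordering the (distinct) environment entries matches the stated context $\ov{\yyy} : \ov{\tau_p}\SUBS{\ov{\tau_\mmm}}{\ov{\kkk}}, \xxx : \NTYPEtt{\ttt_\SSS}{\ov{\tau_\ttt}}$, so the typing conclusion holds with $\tau' = \tau_e[\theta]$. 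For well-formedness I would apply the Well-formed lemma to the body typing to get $\JUDGEWFG{\Delta_0}{\tau_e}$ and then Lemma~\ref{lem:fgg_tsubstwf} (with the same $\theta$ and side conditions) to obtain $\JUDGEWFG{\TENV}{\tau'}$. For the subtyping conclusion I would apply Lemma~\ref{lem:fgg_tsubstsub} to $\IMPLSG{\Delta_0}{\tau_e}{\sigma_r}$, obtaining $\IMPLSG{\TENV}{\tau_e[\theta]}{\sigma_r[\theta]}$, and use $\sigma_r[\theta] = \tau_r\SUBS{\ov{\tau_\mmm}}{\ov{\kkk}}$. The main obstacle is the two-level substitution bookkeeping underlying step (ii): factoring $\theta$ into its receiver and method halves and recognising that the receiver-bound obligation must be supplied by the definedness of the $\methods$ lookup rather than by well-formedness of the receiver type, which is exactly the point where the covariant receiver bounds are used.
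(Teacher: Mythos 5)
Your proof is correct and follows essentially the same route as the paper's: invert the $\methods$/$\mbody$ lookups to recover the underlying declaration, invert \tyrulename{t-func} for the body typing and return-type subtyping, and then apply the type-substitution lemmas (Lemmas~\ref{lem:fgg_tsubsttyp}, \ref{lem:fgg_tsubstwf}, \ref{lem:fgg_tsubstsub}) under the combined receiver-plus-method substitution. If anything, your handling of the side conditions---in particular sourcing the receiver-bound obligation from the definedness of the checked substitution $(\Phi \by_\Delta \phi)$ in the $\methods$ lookup, and explicitly discharging $\JUDGEWFG{\TENV}{\tau'}$---is more careful than the paper's own rather terse proof.
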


  \begin{proof}
    ~
    \begin{tabbing}
    $\MBODYxxe{\xxx}{\ov{\yyy}}{\eee_0} = \MBODYxxe{\xxx}{\ov{\yyy}}
  {\eee_0' \SUBS{\ov{\tau_\mmm}}{\ov{\kkk}}\SUBS{\ov{\tau_\ttt}}{\hat\Phi}}
  $ \` by inversion on $\mathit{body}$\\
 $\MDECLpmGe{\pDECLxt{\xxx}{\NTYPEtt{\ttt_\SSS}{{\Phi}}}}{\mmm}
    {
      \MSIGPpt{\GPARAMSk{\ov{\PDECLkN{\kkk}{{\tau_\III}}}}}
      {\ov{\pDECLxt{\yyy}{{\tau_p}}}}{{\tau_r}}}
    {\eee_0'}
    \in 
    \ov{D}$ \` by inversion on $\mathit{body}$\\
     $\JUDGEEXPRG{
       \Phi,
        \ov{\kkk: \tau_\III}
        }{\xxx : {\TYPINSTtn{\tttS}{\Phi}}, \ov{\yyy:
            \tau_p}}{\eee_0'}{\tau}$ \` by inversion on
        well-formedness\\
        $\IMPLSG{\Phi, \ov{\kkk:
            \tau_\III}}{\tau}{\tau_r}$ \` by inversion on
        well-formedness\\
        $\IMPLSG{\emptyset}{\tau\SUBS{\ov{\tau_\ttt}}{\ov{\kkk}}\SUBS{\ov{\tau_\mmm}}{\hat\Phi}}
        {\tau_r\SUBS{\ov{\tau_\ttt}}{\ov{\kkk}}\SUBS{\ov{\tau_\mmm}}{\hat\Phi}}$
        \` by Lemma~\ref{lem:fgg_tsubstsub}\\
   $\JUDGEEXPRG{\emptyset}{\ov{\yyy} :
            \ov{\tau_p}\SUBS{\ov{\tau_\mmm}}{\ov{\kkk}} , \xxx :
            \NTYPEtt{\ttt_\SSS}{\ov{\tau_\ttt}}}{\eee_0\SUBS{\ov{\tau_\ttt}}{\ov{\kkk}}\SUBS{\ov{\tau_\mmm}}{\hat\Phi}}{\tau\SUBS{\ov{\tau_\ttt}}{\ov{\kkk}}\SUBS{\ov{\tau_\mmm}}{\hat\Phi}}$

     \` by Lemma~\ref{lem:fgg_tsubsttyp} \\
\end{tabbing}
   \end{proof}

The proof of type preservation below relies on explicit congruence
rules for the reduction semantics.
   
\begin{theorem}[Type Preservation]
\label{thm:fgg_pres}

If $\JUDGEEXPRG{\emptyset}{\emptyset}{\eee}{\tau}$ and $\eee \REDUCE{} \eee'$
then
$\JUDGEEXPRG{\emptyset}{\emptyset}{\eee}{\tau'}$, for some $\tau'$ such that $\IMPLSG{\emptyset}{\tau'}{\tau}$.

\end{theorem}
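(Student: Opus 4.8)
The plan is to proceed by induction on the derivation of $\eee \REDUCE \eee'$, mirroring the FG preservation proof (Theorem~\ref{thm:pres}) but now threading the type environment and type substitutions through every case. As in the FG development, I would work with the explicit congruence rules that expand \RULENAME{r-context}, so the base cases are the three computational rules \RULENAME{r-field}, \RULENAME{r-call}, and \RULENAME{r-assert}, and the inductive cases are the congruences \RULENAME{rc-recv}, \RULENAME{rc-arg}, \RULENAME{rc-assert}, \RULENAME{rc-field}, and \RULENAME{rc-literal}. Throughout I would repeatedly invert the typing judgement, use reflexivity and transitivity of $\imp$ (Lemma~\ref{lem:fgg_subequiv}), and appeal to an FGG analogue of the canonical-forms property (Lemma~\ref{lem:canforms}) that a well-typed value has a structure type and is a structure literal.

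The routine cases go as expected. For \RULENAME{r-field}, inverting \tyrulename{t-field} and \tyrulename{t-literal} on $\tau_S\br{\ov{v}}.f_i$ shows each stored value implements its declared field type, so the projected $v_i$ has a type $\imp \tau_i$. For \RULENAME{r-assert}, inverting the non-stupid assertion rules together with the reduction premise $\emptyset \vdash \vtype(v) \imp \tau$ shows the residual value already carries the asserted type. The congruence cases follow the FG pattern: apply the induction hypothesis to the reducing subterm and rebuild the derivation, using Lemma~\ref{lem:fgg_methodssub} in \RULENAME{rc-recv} to transfer the method specification onto the (possibly smaller) receiver type, transitivity of $\imp$ in \RULENAME{rc-arg} to re-establish the argument constraints, and in \RULENAME{rc-assert} the usual case split on whether the reduced subterm acquired a structure or interface type, falling back on \tyrulename{t-stupid} when it became a structure.

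I expect the main obstacle to be the \RULENAME{r-call} case, $v.m(\psi)(\ov{v}) \becomes e[x \by v, \ov{x \by v}]$. Here the static type of the receiver is some $\tau_J$ whereas its runtime type $\vtype(v) = \tau_S$ is a structure type with $\tau_S \imp \tau_J$ by canonical forms, so I would first apply Lemma~\ref{lem:fgg_methodssub} to move the method specification obtained from inverting \tyrulename{t-call} onto $\tau_S$. I would then invoke the method-body lemma (Lemma~\ref{lem:fgg_methtyp}), which yields that the body is well typed at a type implementing the instantiated return type, binding the receiver and arguments; crucially this lemma already performs the two-level substitution of the receiver formals $\Phi$ and the method formals $\Psi$ and relies on covariant receiver bounds. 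Finally, value substitution (Lemma~\ref{lem:fgg_subst}) replaces the receiver and argument variables by $v$ and $\ov{v}$, lowering the type by at most $\imp$; composing these subtyping steps by transitivity delivers a $\tau' \imp \sigma[\eta]$ as required. The delicate points are checking that the method type arguments $\psi$ respect the bounds, so that $\eta = (\Psi \by_\Delta \psi)$ is defined, and that the subtyping chains arising from both the type-level and the value-level substitutions compose correctly.
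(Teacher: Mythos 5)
Your proposal matches the paper's proof essentially step for step: induction over the reduction derivation with explicit congruence rules, the method-body lemma (Lemma~\ref{lem:fgg_methtyp}) followed by value substitution (Lemma~\ref{lem:fgg_subst}) and transitivity in the \RULENAME{r-call} case, Lemma~\ref{lem:fgg_methodssub} in \RULENAME{rc-recv}, and the structure/interface case split falling back on \tyrulename{t-stupid} in \RULENAME{rc-assert}. The only cosmetic difference is in \RULENAME{r-call}: since FGG typing of values is syntax-directed, canonical forms gives that the receiver value's static type \emph{equals} its structure type (not merely implements it), so your extra appeal to Lemma~\ref{lem:fgg_methodssub} there is harmless but unnecessary --- the paper reads the method specification off that type directly.
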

\begin{proof}
  
By induction on the derivation of $\eee \REDUCE{} \eee'$.

\begin{description}
\item[Case:] $\RULENAME{r-field}$
\begin{tabbing}
$\FIELDSOFtD{\tau_\SSS} \EQ{}
\ov{\FDECLft{\fff}{{\tau}}}$ \` by inversion on the reduction
relation\\
$\JUDGEWFG{\emptyset}{\tau_\SSS}$,
$\JUDGEEXPRG{\emptyset}{\emptyset}{\ov{\vvv}}{\ov{\tau}}$ \` by inversion on
typing\\
$\JUDGEEXPRG{\emptyset}{\emptyset}{\SELef{\STRUCTte{\tau_\SSS}{\ov{\vvv}}}{\fff}}{\tau_\fff}$ \` by inversion
on typing\\
$\FIELDSOFtD{\tau_\SSS} = \ov{\FDECLft{\fff}{
       \tau_\fff}}$ and $\IMPLSG{\emptyset}{\ov{\tau}}{\ov{\tau_\fff}}$ \` by inversion on typing\\
$\JUDGEEXPRG{\emptyset}{\Ga}{\vvv_i}{\tau_i}$ with
$\IMPLSG{\emptyset}{\tau_i}{\tau_\fff}$ \` by the above\\
\end{tabbing}

\item[Case:] $\RULENAME{r-call}$

\begin{tabbing}
$	\MBODYmtN{\emptyset}{\mmm}{\ov{\tau}}{\vtype(\vvv)}
	=
	\MBODYxxe{\xxx}{\ov{\yyy}}{\eee}$ \` by inversion on the
        reduction relation\\
  $ \MCALLemne{\vvv}{\mmm}
   {\ov{\tau}}
   {\ov{\vvv}}
   \REDUCE{}
   \eee
   \SUBS{\vvv}{\xxx}
   \SUBS{\ov{\vvv}}{\ov{\yyy}}$ \` by inversion on the reduction
   relation\\
$\mmm  \
          \GPARAMSk{\ov{\PDECLkN{\kkk}{\tau_\III}}}
          \
          \MSIGpt{\ov{\pDECLxt{\yyy}{\tau_p}}}{\tau_r}
          \in
          \methodsofD{
          \emptyset}{\tau}
          $ \` by inversion on typing\\
$\JUDGEEXPRG{\emptyset}{\emptyset}{\vvv}{\tau}$ \` by inversion on typing\\
$\IMPLSG{\emptyset}{\ov{\tau}}{\ov{\tau_\III} \SUBS{\ov{\tau}}{\ov{\alpha}}
}$ \` by inversion on typing\\
 $\JUDGEEXPRG{\emptyset}{\emptyset}{\ov{\vvv}}{\ov{\tau_a}}$ \` by inversion on typing\\
$\IMPLSG{\emptyset}{\ov{\tau_a}}{\ov{\tau_p}
  \SUBS{\ov{\tau}}{\ov{\alpha}} }$
\` by inversion on typing\\
$\exists \tau'$ s. t.
          $\JUDGEWFG{\emptyset}{\tau'}$, 
          $\IMPLSG{\emptyset}{\tau'}{\tau_r\SUBS{\ov{\tau_\mmm}}{\ov{\kkk}}}$
          and $\JUDGEEXPRG{\emptyset}{\ov{\yyy} :
            \ov{\tau_p}\SUBS{\ov{\tau_\mmm}}{\ov{\kkk}} , \xxx :
            \NTYPEtt{\ttt_\SSS}{\ov{\tau_\ttt}}}{\eee}{\tau'}$\\
        \` by Lemma~\ref{lem:fgg_methtyp} \\
$\JUDGEEXPRG{\emptyset}{\emptyset}{\eee
   \SUBS{\vvv}{\xxx}
   \SUBS{\ov{\vvv}}{\ov{\yyy}}}{\tau''}$, for some $\tau''$ s.t. $\IMPLSG{\emptyset}{\tau''}{\tau'}$ \` by
 Lemma~\ref{lem:fgg_subst}\\
 $\IMPLSG{\emptyset}{\tau''}{\tau_r\SUBS{\ov{\tau_\mmm}}{\ov{\kkk}}}$
\` by transitivity\\
\end{tabbing}

\item[Case:] $\RULENAME{r-assert}$
  \begin{tabbing}
$\ASSRTet
   {\vvv}
   {\tau}
   \REDUCE
   \vvv$ \` this case\\
   $	\IMPLSG{\emptyset}{\vtype(\vvv)}{\tau}$ \` by inversion on
   the reduction relation\\
   $\JUDGEEXPRG{\emptyset}{\emptyset}{\vvv}{\vtype(\vvv)}$ \` by
   definition\\
   $\vvv = \STRUCTte{\tau_\SSS}{\ov{\vvv}}$, for some $\tau_\SSS$ and
   $\ov{\vvv}$ \` by definition of value\\
   $\JUDGEEXPRG{\emptyset}{\emptyset}{\vvv}{\tau_\SSS}$ \` by inversion on
   typing\\
  \end{tabbing}

\item[Case:] $\RULENAME{rc-recv}$
  \begin{tabbing}
$\MCALLemne{\eee}{\mmm}{\ov{\tau}}{\ov{\eee}}
     \REDUCE
     \MCALLemne{\eee'}{\mmm}{\ov{\tau}}{\ov{\eee}}$ \` this case\\
     $\eee \REDUCE \eee'$ \` by inversion on the reduction relation\\
     $\mmm  \
          \GPARAMSk{\ov{\PDECLkN{\ov{\alpha}}{\tau_\III}}}
          \
          \MSIGpt{\ov{\pDECLxt{\yyy}{\tau_p}}}{\tau_r}
          \in
          \methodsofD{
          \emptyset}{\tau}
        $ \` by inversion on typing\\
        $  \JUDGEEXPRG{\emptyset}{\emptyset}{\eee}{\tau}$ \` by inversion on
        typing\\
        $\IMPLSG{\emptyset}{\ov{\tau}}{\ov{\tau_\III}
          \SUBS{\ov{\tau}}{\ov{\alpha}} }$ \` by inversion on typing\\
         $\JUDGEEXPRG{\emptyset}{\emptyset}{\ov{\eee}}{\ov{\tau_a}}$ \` by
         inversion on typing\\
         $\IMPLSG{\emptyset}{\ov{\tau_a}}{\ov{\tau_p}
           \SUBS{\ov{\tau}}{\ov{\alpha}} }$ \` by inversion on typing\\
         $  \JUDGEEXPRG{\emptyset}{\emptyset}{\eee'}{\tau'}$, for some $\tau'$
         such that $\IMPLSG{\emptyset}{\tau'}{\tau}$ \` by i.h.\\
         $\mmm  \
          \GPARAMSk{\ov{\PDECLkN{\kkk}{\tau_\III}}}
          \
          \MSIGpt{\ov{\pDECLxt{\yyy}{\tau_p}}}{\tau_r}
          \in
          \methodsofD{
          \emptyset}{\tau'}
        $ \` by Lemma~\ref{lem:fgg_methodssub}\\
       $ \JUDGEEXPRG{\emptyset}{\emptyset}
       {
       \MCALLemne{\eee'}{\mmm}{\ov{\tau}}{\ov{\eee}}
       }
       {\tau_r \SUBS{\ov{\tau}}{\ov{\alpha}}}$ \` by rule  $\RULENAME{t-call}$
     \end{tabbing}

     \item[Case:] $\RULENAME{rc-field}$
  \begin{tabbing}
$\SELef{\eee}{\fff}
     \REDUCE
     \SELef{\eee'}{\fff}$ \` this case\\
     $\eee \REDUCE \eee'$ \` by inversion on the reduction relation\\
     $\JUDGEEXPRG{\emptyset}{\emptyset}{\eee}{\tau_\SSS}$
       and
       $\FDECLft{\fff}{\tau} \in \FIELDSOFtD{\tau_\SSS}$ \` by
       inversion on typing\\
       $\JUDGEEXPRG{\emptyset}{\emptyset}{\eee'}{\tau}$ with
       $\IMPLSG{\emptyset}{\tau}{\tau_\SSS}$ \` by i.h.\\
       $\tau = \tau_\SSS$ \` by Lemma~\ref{lem:fgg_invsub}\\
       $\JUDGEEXPRG{\emptyset}{\emptyset}{\SELef{\eee'}{\fff}}{\tau}$ \` by
       rule $\RULENAME{t-field}$

     \end{tabbing}

     \item[Case:] $\RULENAME{rc-literal}$
     \begin{tabbing}
$\STRUCTte{\tau_\SSS}{\ov{\vvv}\cdot \eee\cdot \ov{\eee}}
     \REDUCE
     \STRUCTte{\tau_\SSS}{\ov{\vvv}\cdot \eee'\cdot \ov{\eee}}$ \`
     this case\\
     $ \eee \REDUCE \eee'$ \` by inversion on the reduction relation\\
     $\JUDGEWFG{\emptyset}{\tau_\SSS}$ \` by inversion on typing\\
  
       $\JUDGEEXPRG{\emptyset}{\emptyset}{\ov{\vvv}\cdot \eee\cdot \ov{\eee}}{\ov{\tau}}$ \` by
       inversion on typing\\
  
       $\FIELDSOFtD{\tau_\SSS} = \ov{\FDECLft{\fff}{
       \tau_\fff
       }}$ \` by inversion on typing\\
      
       $\IMPLSG{\emptyset}{\ov{\tau}}{\ov{\tau_\fff}}$ \` by inversion
       on typing\\
       $\JUDGEEXPRG{\emptyset}{\emptyset}{\eee'}{\tau'}$ with
       $\IMPLSG{\emptyset}{\tau'}{\tau}$ \` by i.h.\\
       $\IMPLSG{\emptyset}{\tau'}{\tau_\fff}$ \` by transitivity\\
       $\JUDGEEXPRG{\emptyset}{\emptyset}{\STRUCTte{\tau_\SSS}{\ov{\vvv}\cdot \eee'\cdot \ov{\eee}}}{\tau_\SSS}$
       \` by rule $\RULENAME{t-literal}$
     \end{tabbing}

   \item[Case:] $\RULENAME{rc-assert}$

     \begin{tabbing}
$    \ASSRTet{\eee}{\tau}
     \REDUCE
     \ASSRTet{\eee'}{\tau}
   $ \` this case\\
   $ \eee \REDUCE \eee'$ \` by inversion on the reduction relation\\
   $\JUDGEWFG{\emptyset}{\tau}$ \` by inversion on typing\\
   {\bf Subcase:}  $\JUDGEEXPRG{\emptyset}{\emptyset}{\eee}{\tau_\III'}$ \\
   $\JUDGEEXPRG{\emptyset}{\emptyset}{\eee'}{\tau'}$ with
   $\IMPLSG{\emptyset}{\tau'}{\tau_\III'}$ \` by i.h.\\
   {\bf Subsubcase:} $\tau'$ is an interface type, 
   $\IMPLSG{\emptyset}{\tau}{\tau'}$ and $\tau$ is a
   struct type\\
   $\JUDGEEXPRG{\emptyset}{\emptyset}{\ASSRTet{\eee'}{\tau}}{\tau}$ \` by rule
   $\RULENAME{t-assert$_\SSS$}$\\
    {\bf Subsubcase:} $\tau'$ is an interface type and $\tau$ is an interface type\\
    $\JUDGEEXPRG{\emptyset}{\emptyset}{\ASSRTet{\eee'}{\tau}}{\tau}$ \` by rule
    $\RULENAME{t-assert$_\III$}$\\
    {\bf Subsubcase:} $\tau'$ is an interface type, 
   $\NIMPLSG{\emptyset}{\tau}{\tau'}$ and $\tau$ is a
   struct type\\
     Impossible\\
{\bf Subsubcase:} $\tau'$ is a struct type\\
    $\JUDGEEXPRG{\emptyset}{\emptyset}{\ASSRTet{\eee'}{\tau}}{\tau}$ \` by rule
    $\RULENAME{t-stupid}$\\
{\bf Subcase:}  $\JUDGEEXPRG{\emptyset}{\emptyset}{\eee}{\tau_\SSS}$ \\
        $\JUDGEEXPRG{\emptyset}{\emptyset}{\eee'}{\tau'}$ with
        $\IMPLSG{\emptyset}{\tau'}{\tau_\SSS}$ \` by i.h.\\
        $\tau' = \tau_\SSS$ \` by Lemma~\ref{lem:fgg_invsub}\\
$\JUDGEEXPRG{\emptyset}{\emptyset}{\ASSRTet{\eee'}{\tau}}{\tau}$
        \` by rule $\RULENAME{t-stupid}$\\
\end{tabbing}

\end{description}
\end{proof}

\subsection{Progress}
\label{app:fgg_prog}

We extend the definition of $\PANICe{\eee}$ straightforwardly from
FG to FGG.

\begin{lemma}[FGG Canonical Forms]\label{lem:fgg_canforms}
If $\eee$ is a value and $\JUDGEEXPRG{\emptyset}{\emptyset}{\eee}{\ttt}$ then
$\ttt = \tau_\SSS$, for some $\tau_\SSS$ and $\eee =
\STRUCTte{\tau_\SSS}{\ov{\vvv}}$, for some $\ov{\vvv}$.
\end{lemma}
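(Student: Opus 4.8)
The plan is to derive the result by combining the syntactic shape of values with a single inversion on the typing derivation; no genuine induction is actually required, despite the phrasing used for the analogous FG lemma (Lemma~\ref{lem:canforms}). The key observation is that the grammar of FGG values (Figure~\ref{fig:fgg-reduction}) stipulates $v ::= \tau_\SSS\br{\ov{\vvv}}$, so any value is \emph{by construction} a structure literal whose head is a structure type and whose fields are themselves values.

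First I would use this to fix the shape of $\eee$: since $\eee$ is a value, there exist a structure type $\tau_\SSS$ and values $\ov{\vvv}$ with $\eee = \STRUCTte{\tau_\SSS}{\ov{\vvv}}$. This already discharges the second conjunct of the conclusion, since the $\ov{\vvv}$ named in the statement range over values. It then remains only to identify the type $\ttt$.

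For that I would invert the derivation of $\JUDGEEXPRG{\emptyset}{\emptyset}{\eee}{\ttt}$. The FGG expression typing rules are syntax-directed on the form of the expression, and the only rule whose subject is a structure literal $\STRUCTte{\tau_\SSS}{\ov{\eee}}$ is \tyrulename{t-literal}, whose conclusion assigns exactly the type $\tau_\SSS$ appearing in the literal's head. Inversion therefore forces $\ttt = \tau_\SSS$, a structure type, completing the argument.

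The step requiring the most care — though still routine — is justifying that \tyrulename{t-literal} is the unique applicable rule: one checks that none of the other rules (\tyrulename{t-var}, \tyrulename{t-call}, \tyrulename{t-field}, \tyrulename{t-assert$_\III$}, \tyrulename{t-assert$_\SSS$}, \tyrulename{t-stupid}) can conclude with a structure literal on the left, which is immediate from their syntactic subjects. There is no real obstacle here; the lemma is essentially bookkeeping that pins the type of a fully reduced term to a structure type, and it feeds directly into the progress argument, where a value receiver must expose a concrete structure type before \tyrulename{r-call} or field selection can fire.
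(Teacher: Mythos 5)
Your proof is correct and is essentially the paper's own argument: the paper disposes of this lemma with ``straightforward induction on typing,'' and that induction degenerates to exactly your single inversion, since the value grammar forces $\eee = \STRUCTte{\tau_\SSS}{\ov{\vvv}}$ and \tyrulename{t-literal} is the only rule whose subject is a structure literal. Your observation that no genuine induction is needed is a fair refinement, not a departure.
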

\begin{proof}
Straightforward induction on typing.
\end{proof}

\begin{theorem}[FGG Progress]\label{thm:fgg_prog}
  If $\JUDGEEXPRG{\emptyset}{\emptyset}{\eee}{\tau}$ then either
  $\eee$ is a value, $\PANICe{\eee}$ or $\eee \REDUCE{} \eee'$.
  \end{theorem}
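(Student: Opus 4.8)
The plan is to prove Progress by induction on the derivation of $\JUDGEEXPRG{\emptyset}{\emptyset}{\eee}{\tau}$, mirroring the FG development (Theorem~\ref{thm:progpanic}) and relying on three ingredients: the FGG Canonical Forms property (Lemma~\ref{lem:fgg_canforms}), the panic predicate $\PANICe{\eee}$ extended straightforwardly to FGG, and the explicit congruence reduction rules (labelled {\sc rc}). For each closed, well-typed expression I would show it falls into exactly one of the three outcomes, choosing the rule by the shape of the last typing step.

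First I would dispatch the easy cases. Rule \tyrulename{t-var} cannot arise, since $\Gamma = \emptyset$ contains no bindings. The structural rules \tyrulename{t-literal}, \tyrulename{t-field}, and \tyrulename{t-call} all follow the familiar pattern: apply the induction hypothesis to the immediate subexpressions in left-to-right order. If some subexpression panics, the whole expression panics by the corresponding congruence panic rule (\tyrulename{p-struct}, \tyrulename{p-sel}, \tyrulename{p-call-body} or \tyrulename{p-call-arg}); if the leftmost non-value subexpression steps, the whole expression steps by the matching {\sc rc} rule (\RULENAME{rc-literal}, \RULENAME{rc-field}, \RULENAME{rc-recv} or \RULENAME{rc-arg}); otherwise every subexpression is a value. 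In that all-values situation a literal is itself a value, a selection $\SELef{\eee}{\fff}$ reduces by \RULENAME{r-field} (using Lemma~\ref{lem:fgg_canforms} to see the receiver is a struct literal so its fields are present), and a call $\MCALLemne{\vvv}{\mmm}{\psi}{\ov{\vvv}}$ reduces by \RULENAME{r-call}. For the call case I additionally need $\mbody(\vtype(\vvv).m(\psi))$ to be defined: by canonical forms the receiver is typed at the struct type $\vtype(\vvv)$, and by inversion the invoked specification lies in $\methodsofD{\emptyset}{\vtype(\vvv)}$, which for a struct type is populated exactly by its declared methods, so a matching declaration — hence a body — exists.

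The delicate cases, which I would handle last, are the three type-assertion rules, and this is where the main obstacle lies: being careful that the rules exhaustively and correctly dispatch interface-typed versus struct-typed subexpressions against canonical forms. For \tyrulename{t-assert$_\III$} and \tyrulename{t-assert$_\SSS$} the subexpression is typed at an interface-like type $\sigma_J$, which in the empty environment is an interface type; by Lemma~\ref{lem:fgg_canforms} it therefore cannot be a value, so the value subcase is vacuous, and the panic and step subcases propagate via \tyrulename{p-assert} and \RULENAME{rc-assert}. For \tyrulename{t-stupid} the subexpression is typed at a struct type, so when it is a value $\eee_0 = \STRUCTte{\tau_S}{\ov{\vvv}}$ the assertion is a redex, and I must split on the runtime subtype test: reduce by \RULENAME{r-assert} when $\IMPLSG{\emptyset}{\vtype(\vvv)}{\tau}$ holds, and panic by \tyrulename{p-cast} when $\NIMPLSG{\emptyset}{\vtype(\vvv)}{\tau}$. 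This split, together with the reliance on FGG's type-environment-indexed $\imp$ in the reduction premise, is the only point where the argument differs in substance from the FG proof; everything else is a routine adaptation.
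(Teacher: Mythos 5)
Your proposal is correct and follows essentially the same route as the paper, whose proof of FGG Progress is simply the FG Progress argument (Theorem~\ref{thm:progpanic}) adapted: induction on typing with Canonical Forms (Lemma~\ref{lem:fgg_canforms}), the extended $\PANIC$ predicate, and the \textsc{rc} congruence rules, with the value subcase impossible under \tyrulename{t-assert$_\III$}/\tyrulename{t-assert$_\SSS$} and the redex analysis concentrated in \tyrulename{t-stupid}. If anything, your explicit split in the \tyrulename{t-stupid} value subcase between $\RULENAME{r-assert}$ (when $\IMPLSG{\emptyset}{\vtype(\vvv)}{\tau}$) and \tyrulename{p-cast} (when $\NIMPLSG{\emptyset}{\vtype(\vvv)}{\tau}$) is slightly more careful than the paper's FG write-up, which records only the panic branch there.
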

  \begin{proof}
The proof follows the same lines of Theorem~\ref{thm:progpanic}.
  \end{proof}

 \newpage
\subsection{Monomorphisability}
\label{app:nomono}

\begin{lemma}\label{lem:expr-omega-finite}
  If $\Delta \stoup \Gamma \vdash e : \tau$ and
  $\Delta \stoup \Gamma \vdash e \yields \omega$, then $\omega$ is
  finite.
\end{lemma}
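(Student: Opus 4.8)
The plan is to proceed by a straightforward structural induction on the derivation of $\Delta \stoup \Gamma \vdash e \yields \omega$ (equivalently, on the syntax of $e$), the essential observation being that the instance set of an expression is assembled by a single finite syntactic traversal. The typing hypothesis $\Delta \stoup \Gamma \vdash e : \tau$ plays no load-bearing role in the finiteness argument itself; it is needed only to guarantee that the receiver type $\tau$ mentioned in rule \textsc{I-call} is well defined, so that the $\yields$ derivation under consideration exists at all. Accordingly, I would phrase the induction purely over the $\yields$ rules and invoke the typing premise only where \textsc{I-call} is concerned.

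First I would dispatch the base case: rule \textsc{I-var} gives $\omega = \emptyset$, which is finite. The remaining rules \textsc{I-field}, \textsc{I-assert}, \textsc{I-literal}, and \textsc{I-call} are the inductive cases, and in each the conclusion's instance set is the union of (i) a fixed finite collection of freshly introduced elements---nothing for \textsc{I-field}, the singleton $\set{\tau_S}$ for \textsc{I-literal}, the singleton $\set{\tau}$ for \textsc{I-assert}, and the two-element set $\set{\tau, \tau.m(\psi)}$ for \textsc{I-call}---together with (ii) the instance sets of the immediate subexpressions, each of which is finite by the induction hypothesis. The only mildly delicate point is the coercion, in \textsc{I-literal} and \textsc{I-call}, of a sequence of sub-instance-sets $\ov{\omega}$ to their union: since $\ov{e}$ is a finite sequence (FGG syntax admits only finite argument lists), this is a finite union of finite sets and hence finite. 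Combining the finitely many freshly introduced elements with the finite union of finite subexpression sets yields a finite $\omega$ in every inductive case, closing the induction.

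I expect no genuine obstacle here: finiteness at this stage is purely a statement about a single finite expression, computed \emph{before} any fixed-point iteration. The substantive finiteness result---that the limit $\lim_{n \rightarrow \infty} G_{\emptyset}^{n}(\omega)$ of the closure operator $G$ is finite precisely for well-typed programs on which $P \notmonomorphisable$ does not hold---is an entirely separate matter, handled by Theorem~\ref{thm:nopolyrec-mono}, and is not touched by this lemma. The present statement is best read as the trivial but necessary first ingredient feeding into that later argument.
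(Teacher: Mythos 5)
Your proposal is correct and matches the paper's proof, which is exactly a straightforward induction on the rules of Figure~\ref{fig:fgg-omega-new} (the paper states this in one line; you have simply spelled out the cases). Your added observations---that the typing hypothesis is only needed to make the receiver type in \textsc{I-call} well defined, and that this lemma is independent of the finiteness of the fixed point $\lim_{n \rightarrow \infty} G^{n}_{\emptyset}(\omega)$ handled by Theorem~\ref{thm:nopolyrec-mono}---are accurate and consistent with the paper's development.
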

\begin{proof}
  By straightforward induction on the rules from Figure~\ref{fig:fgg-omega-new}.
\end{proof}

Hereafter, we let $\eo$ range over elements of $\Omega$ and $\omega$, i.e.,
$\eo$ is of the form $\tau$ or $\tau.m(\psi)$.
Also, we assume that all formal type parameters are pairwise
distinct, without loss of generality.

Given $\eta_1 = (\ov{\alpha \by \tau})$, we write
$[\eta_1 \cdot \eta_2]$ for $(\ov{\alpha \by \tau [\eta_2]})$.
Also, we write $[\eta_1 \cdot \eta_2 \cdot \eta_3]$ for $[\eta_1 \cdot
[\eta_2 \cdot \eta_3]]$. 
Next, we extend naturally the occurs check from
Figure~\ref{fig:fgg-polyrec-check} as follows:
\begin{mathpar}
  \inferrule
  {
    \phi = \ov{\tau}
    \and
    \alpha_i \prec \tau_i 
  }
  { (\type~\ov{\alpha~\tau_I}) \prec \phi}

  \inferrule
  {
    \tau \neq \alpha
    \and
    \alpha \in \FV{\tau}
  }
  { \alpha \prec \tau } 

  \inferrule
  {
    \alpha_i \prec \tau_i 
  }
  { \ov{\alpha} \prec \ov{\tau}} 
\end{mathpar}

We say that a substitution
$(\ov{\alpha \by \tau})$ is
\goodsub\ if $\neg ( \ov{\alpha}
\prec \ov{\tau})$.

\begin{lemma}\label{lem:types-subs}
  Given $P = \ov{D} \prog d$, if $P \ok$ holds
  and $P \notmonomorphisable$ does not hold, then
  for each declaration
  $\func~(x~t_S(\Phi))~m(\Psi)N~\br{\return~e} \in \ov{D}$,
  posing $\Delta = 
  {\Phi, \Psi}$,
for all $n$ and
  $\eo \in G_{\Delta}^{n}(\{ t_S(\hat\Phi),
  t_S(\hat\Phi).m(\hat{\Psi}) \})$,
there are
  $\eta_1 \cdots \eta_n$ such that
$\eo = \eo_0 [\eta_1 \cdots \eta_n]$, such that for all
  $1 \leq i \leq n$, $[\eta_1 \cdots \eta_i]$ is a \goodsub\
  substitution.
\end{lemma}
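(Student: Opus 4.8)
The plan is to argue by induction on $n$, tracking how each application of $G_\Delta$ contributes exactly one further substitution to a composite that witnesses the decomposition. For the base case $n=0$ the iterate is the seed $\{t_S(\hat\Phi),\, t_S(\hat\Phi).m(\hat\Psi)\}$; each such $\eo$ is its own template $\eo_0$ under the empty composition, and the requirement that $[\eta_1\cdots\eta_i]$ be \goodsub{} for all $1\le i\le 0$ holds vacuously. For the inductive step I would fix $\eo \in G_\Delta^{n+1}(\mathit{seed}) = G_\Delta(G_\Delta^{n}(\mathit{seed}))$ and read off, from the defining union $\omega \cup \FExtensionD{\omega}{\Delta} \cup \MExtensionD{\omega}{\Delta} \cup \IExtensionD{\omega}{\Delta} \cup \SExtensionD{\omega}{\Delta}$, which component produced $\eo$ from an element (or elements) of $\omega = G_\Delta^{n}(\mathit{seed})$. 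The inductive hypothesis supplies a template $\eo_0$ and a chain $\eta_1\cdots\eta_n$ with good prefixes for that predecessor; the goal is then to exhibit the extended chain $\eta_1\cdots\eta_{n+1}$ for $\eo$ and to verify goodness of the new prefix.

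The inductive step proceeds by case analysis on the five components. The reflexive summand $\omega$ contributes the identity for $\eta_{n+1}$, leaving the composite unchanged. The summands $\FExtensionD{\omega}{\Delta}$ and $\MExtensionD{\omega}{\Delta}$ read fresh type instances out of a struct's field types via $\fields$ or a signature via $\methods_\Delta$, so the new substitution is the formals-to-actuals map of the enclosing declaration applied to a source template; $\IExtensionD{\omega}{\Delta}$ keeps the method actuals $\psi$ fixed and only re-bases the receiver along $\imp$, introducing no deepening of the arguments. The genuinely interprocedural case is $\SExtensionD{\omega}{\Delta}$: here $\mbody(\tau_S.m(\psi))$ instantiates the source body by $\theta = (\Phi',\Psi \by \phi,\psi)$, and every instance collected from the body is a source template composed with $\theta$, so $\eta_{n+1}$ is the relevant restriction of $\theta$. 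Composing via the stated convention then yields the required decomposition $\eo = \eo_0[\eta_1\cdots\eta_n\cdot\eta_{n+1}]$.

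The crux is goodness of each prefix, and this is exactly where the hypothesis that $P \notmonomorphisable$ fails is consumed. By \tyrulename{mn-func}, $P \notmonomorphisable$ holds precisely when some self-instance $t_S(\phi).m(\psi) \in G_\Delta^{k}(\mathit{seed})$ satisfies $\Phi \prec \phi$ or $\Psi \prec \psi$; its failure therefore yields $\neg(\Phi\prec\phi)$ and $\neg(\Psi\prec\psi)$ for every such instance and every $k$. The key observation is that, by construction of the chain, applying $[\eta_1\cdots\eta_i]$ to the seed formals $\hat\Phi,\hat\Psi$ reproduces exactly the actuals $\phi,\psi$ of the $i$-th self-instance of $t_S.m$ arising along the path. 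Unfolding the definitions of the extended occurs check and of \goodsub{}, the conditions $\neg(\Phi\prec\phi)$ and $\neg(\Psi\prec\psi)$ are literally the assertion that $[\eta_1\cdots\eta_i]$ is \goodsub{}, so every prefix is good.

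The main obstacle I expect is the bookkeeping in the $\Sclo$ case combined with faithfully matching the \emph{per-instance} occurs check of \tyrulename{mn-func} against the \emph{per-prefix} goodness demanded here. Concretely, I anticipate needing an auxiliary associativity fact so that $[\eta_1\cdots\eta_n\cdot\theta]$ is again a composite of the required form, and—more delicately—a careful treatment of elements $\eo$ that are \emph{not} themselves self-instances of $t_S.m$. For those, the occurs-check witness is not $\eo$ but the accompanying method instance $\tau_S.m(\psi)$ recorded by the $\{\tau_S.m(\psi)\}$ summand of $\SExtensionD{\omega}{\Delta}$, so goodness of the argument substitution must be inherited from that self-instance rather than read off $\eo$ directly. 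Establishing that the substitution refining $\hat\Delta$ is genuinely common to $\eo$ and to the self-instance that vouches for it is the subtle point; the remaining cases should be routine once the composition convention is fixed.
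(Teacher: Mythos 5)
Your skeleton (induction on $n$, case analysis over the five summands of $G_\Delta$, with $\Sclo$ as the interprocedural crux) matches the paper's proof, but your ``key observation'' for goodness is false, and it is exactly the step the lemma turns on. You claim that applying $[\eta_1\cdots\eta_i]$ to the seed formals $\hat\Phi,\hat\Psi$ reproduces the actuals of ``the $i$-th self-instance of $t_S.m$ arising along the path'', so that goodness of every prefix is literally the failure of the occurs check supplied by \tyrulename{mn-func} for the fixed declaration. This presupposes that every prefix of the chain corresponds to a self-instance of $t_S.m$. It does not: $\Sclo$ follows calls inter-procedurally, so after one step the chain is inside the body of some \emph{other} method $t'_S(\Phi').m'(\Psi')$, and the substitution introduced there has domain $\Phi',\Psi'$, not $\hat\Phi,\hat\Psi$. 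A path may never revisit $t_S.m$ at all, in which case the only self-instance is the seed, and the occurs check for the fixed declaration says nothing about the prefixes you must certify. Your closing paragraph gestures at this difficulty, but the repair you sketch still tries to inherit goodness from a self-instance of the \emph{original} declaration, which is the same mistake.

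The two missing ingredients are (i) the blanket assumption, stated just before the lemma, that all formal type parameters in the program are pairwise distinct, and (ii) the fact that the failure of $P \notmonomorphisable$ is a statement about \emph{every} declaration (by \tyrulename{mn-program}, $P \notmonomorphisable$ holds as soon as \emph{one} declaration is bad), so it may be invoked for declarations other than the one fixed in the lemma. The paper's argument in the $\Sclo$ case is a proof by contradiction using both: if some prefix $(\Phi',\Psi' \by \ov{\tau[\eta_1\cdots\eta_i]})$ is not \goodsub, then a parameter of $\Phi',\Psi'$ occurs under a constructor in the instantiating types; by global distinctness of parameters this can only happen if the chain visited $t'_S(\Phi').m'(\Psi')$ \emph{before}, and hence the offending instance also lies in $G^{i}_{\Delta'}$ of \emph{that} declaration's own seed $\{t'_S(\hat\Phi'),\, t'_S(\hat\Phi').m'(\hat\Psi')\}$, where it exhibits a failing occurs check for that declaration --- contradicting the hypothesis (equivalently, contradicting the induction hypothesis for that declaration at step $i \le n$, since the lemma is proved simultaneously for all declarations). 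Without routing the contradiction through the revisited declaration's own iterates, prefixes whose domains are other methods' formals cannot be certified, so the proof as proposed does not go through.
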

\begin{proof}
By induction on $n$.
Pose $\omega = \{ t_S(\hat\Phi),
  t_S(\hat\Phi).m(\hat{\Psi}) \}$.

  Let $n =1$. 
If $\eo \in \omega$ then we have the result with the empty
  substitution.
If $\eo \in G(\omega) \setminus \omega$, then it has the form
  $\eo_0 [\eta] $ where $\eo_0$ occurs in the
  body of method $t_S(\hat\Phi).m(\hat{\Psi})$ with
$\eta = (\Phi, \Psi \by \Phi, \Psi)$, which is trivially
  \goodsub.

  Assume the result holds for $n$, let us show it holds for $n+1$.
By definition of $G$, we have 
  \[
    \begin{array}{l}
      G_\Delta^{n+1}(\omega) = 
      \\
      \quad
      G_\Delta^{n}(\omega) \cup 
      \FExtensionD{G_\Delta^{n} (\omega)}{\Delta} \cup
      \MExtensionD{G_\Delta^{n}(\omega) }{\Delta} \cup
      \IExtensionD{G_\Delta^{n} (\omega)}{\Delta} \cup
      \SExtensionD{G_\Delta^{n}(\omega) }{\Delta}
    \end{array}
  \] 
  By induction, we have that for all $\eo \in G_\Delta^{n}(\omega)$,
$\eo = \eo_0 [\eta_1 \cdots \eta_n]$ such that for all $1 \leq i
  \leq n$,  substitution $[\eta_1 \cdots \eta_i]$ is \goodsub.
  \begin{itemize}
  \item If $\eo$ is generated by $\Fclo$, then we have $\eo = \eo_0
    [\eta \cdot \eta_1 \cdots \eta_n ]$ where $\eo_0$ is a field of some $\tau \in
    G_\Delta^{n}(\omega)$.
The fact that all $[\eta \cdot \eta_1 \cdots \eta_i]$ are
    \goodsub\ follows by induction hypothesis and the assumption that
    structures are not recursive.
\item If $\eo$ is generated by $\Mclo$, then it is a type that
    occurs in the signature of some
    $\tau.m(\psi) \in G_\Delta^{n}(\omega)$, with
    $\tau.m(\psi) = \eo' [\eta_1 \cdots \eta_n ]$ by induction
    hypothesis, hence we have $\eo = \eo_0 [\eta_1 \cdots \eta_n ]$.
\item If $\eo$ is generated by $\Iclo$, then we have that $\tau'_I$
    and $\tau_I.m'(\psi)$ are in $G_\Delta^{n}(\omega)$.
By induction hypothesis, we have 
    $\tau'_I = \rho [ \eta_1 \cdots \eta_n]$
    and
    $\tau_I.m'(\psi) = \rho_2 [ \eta'_1 \cdots \eta'_n]$,
    thus we have $\eo = \rho_3 [(\eta_1, \eta'_1) \cdots
    (\eta_n,\eta'_n)]$.
All $[(\eta_1, \eta'_1) \cdots (\eta_i,\eta'_i)]$ are good by
    induction hypothesis and assumptions that type parameters are
    pairwise distinct.
\item If $\eo$ is generated by $\Sclo$, then $\eo = \eo_0 (\Phi',
    \Psi' \by \ov{ \tau [\eta_1 \cdots \eta_n]})$, with $\eo_0$ an
    instance occurring in the body of a method $t'_S(\Phi').m'(\Psi')$ and 
    $ \tau [\eta_1 \cdots \eta_n] \in G_\Delta^{n}(\omega)$.

    We show that for all $1 \leq i \leq n$, each
    $(\Phi', \Psi' \by \ov{ \tau [\eta_1 \cdots \eta_i]})$ is a \goodsub\
    substitution, by contradiction.
If the substitution is not \goodsub, then one of the type
    parameter in $\Phi', \Psi'$ must occur in  $\ov{ \tau [\eta_1
      \cdots \eta_i]}$. Since all type parameters are distinct, it
    means that we have visited $t'_S(\Phi').m'(\Psi')$ before. 
Hence all substitutions must have occurred in
    $G_{\Delta'}^{i}(\{ t'_S(\hat\Phi'), t'_S(\hat\Phi').m(\hat{\Psi'})
    \})$, with $\Delta' = \Psi', \Phi'$.
We know by induction ($i \leq n$) that all substitutions
    in this set are \goodsub, thus we have a contradiction.
\end{itemize}
\end{proof}

Give $\eta = (\ov{\alpha \by \tau})$ and
$\eta' = (\ov{\alpha \by \sigma})$, we write $[\eta] \sorder [\eta']$,
iff the number of solved variables in $\eta'$ is strictly greater than
the number of solved variables in $\eta$.
Variable $\alpha$ is \emph{solved} in $(\ov{\alpha \by \tau})$ if
$\alpha \in \ov{\alpha}$ and $\alpha \notin \FV{\ov{\tau}}$.
We say that $\eta$ is solved if all its variables are solved,
otherwise it is unsolved.

\begin{lemma}\label{lem:finite-gen}
  Given $P = \ov{D} \prog d$, if $P \ok$ holds
  and $P \notmonomorphisable$ does not hold, then
  for each declaration
  $\func~(x~t_S(\Phi))~m(\Psi)N~\br{\return~e} \in \ov{D}$,
  posing $\Delta = 
  {\Phi, \Psi}$,
there is
  $n$ (finite) such that 
  $G_{\Delta}^{n}(\{ t_S(\hat\Phi),
  t_S(\hat\Phi).m(\hat{\Psi}) \})
  =
  G_{\Delta}^{n+1}(\{ t_S(\hat\Phi),
  t_S(\hat\Phi).m(\hat{\Psi}) \})$.
\end{lemma}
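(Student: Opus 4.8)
The plan is to show that the ascending chain $G^0_\Delta(\omega_0) \subseteq G^1_\Delta(\omega_0) \subseteq \cdots$, where $\omega_0 = \{t_S(\hat\Phi)\comma t_S(\hat\Phi).m(\hat\Psi)\}$, stabilises, by reducing this to finiteness of its union. First observe that $G_\Delta$ is inflationary ($\omega \subseteq G_\Delta(\omega)$ directly from its definition), so the chain is increasing and a fixed point $G^n_\Delta(\omega_0) = G^{n+1}_\Delta(\omega_0)$ exists for some finite $n$ precisely when the union $\Omega_\infty = \bigcup_n G^n_\Delta(\omega_0)$ is finite: an increasing chain of finite sets with finite union has non-decreasing, bounded cardinalities, which must stabilise, and $G^n_\Delta(\omega_0) \subseteq G^{n+1}_\Delta(\omega_0)$ with equal cardinality forces equality. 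That each $G^n_\Delta(\omega_0)$ is finite follows by induction on $n$: $\omega_0$ is finite and $G_\Delta$ preserves finiteness, since $\Fclo$, $\Mclo$ and $\Iclo$ contribute finitely many elements per input element, while each contribution of $\Sclo$ is a body instance set, finite by Lemma~\ref{lem:expr-omega-finite}. So the problem reduces to showing $\Omega_\infty$ is finite.

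To bound $\Omega_\infty$ I would apply Lemma~\ref{lem:types-subs}: every $\eo \in \Omega_\infty$ can be written $\eo = \eo_0[\eta_1 \cdots \eta_k]$, where $\eo_0$ is a copy of an instance occurring syntactically in $\ov D$ and every prefix $[\eta_1 \cdots \eta_i]$ is a \goodsub\ substitution. Since $\ov D$ is finite there are only finitely many base shapes $\eo_0$, so it suffices to show that, for each fixed $\eo_0$, only finitely many closed instances $\eo_0[\eta_1\cdots\eta_k]$ arise under \goodsub\ prefix substitutions.

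The heart of the argument, and the main obstacle, is controlling the inter-procedural closure $\Sclo$, the only clause that can increase term depth: $\Fclo$ adds field types of non-recursive structures, $\Mclo$ adds signature types of already-present method instances, and $\Iclo$ only adds method instances at interface types already in the set, none of which can generate unboundedly deep types on its own. For $\Sclo$ I would track the recursion by the sequence of method declarations $t'_S(\Phi').m'(\Psi')$ it visits along any branch, together with the composite substitution applied to $\Phi'\comma\Psi'$ at each visit. The key claim is that whenever a branch revisits the same declaration, the substitution becomes strictly \emph{more solved} in the $\sorder$ ordering, i.e.\ at least one further type parameter of that declaration becomes solved. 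This is exactly the content isolated in the $\Sclo$ case of Lemma~\ref{lem:types-subs}: if a parameter of $\Phi'\comma\Psi'$ recurred unsolved, it would occur under a constructor in its own position, contradicting \goodsub{}ness of the prefix, and hence witnessing $P \notmonomorphisable$, which we assumed fails.

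Given this claim, finiteness follows by counting: each declaration has finitely many type parameters, so along any branch a given declaration is revisited at most as many times as it has parameters before all become solved, after which its instance is closed and introduces no new parameters; since $\ov D$ has finitely many declarations and each body contributes finitely many call sites, the reachable instance set is finite. I expect the delicate points to be (i) phrasing ``more solved on revisit'' so that it genuinely bounds branch length rather than merely forbidding one bad step, making essential use of the \goodsub{}ness of \emph{all} prefixes and not just the last, and (ii) assembling the per-branch bound into a global finiteness statement for $\Omega_\infty$ by combining the bounded number of solved-variable increases with the finite branching of the program, whence the union is finite and the chain stabilises.
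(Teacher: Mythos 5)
Your proposal is correct and follows essentially the same route as the paper's proof: both rest on Lemma~\ref{lem:types-subs} to write every generated instance as one of finitely many base shapes under a sequence of \goodsub\ prefix substitutions, observe that the bases and the individual substitutions are bounded by the syntax of $P$, and then use the solved-variable ordering $\sorder$ --- goodness forbids a parameter recurring under a constructor, so any genuinely new instance strictly increases the number of solved variables --- to obtain well-foundedness and hence a finite fixpoint. The differences are presentational: you make explicit the reduction to finiteness of $\bigcup_n G_{\Delta}^{n}(\omega_0)$ (via Lemma~\ref{lem:expr-omega-finite}) and phrase the ordering in terms of per-declaration revisits, whereas the paper compares consecutive iterates directly and handles the identity/permutation caveat (your delicate point (i)) with its remark that only finitely many substitutions are permutations.
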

\begin{proof}
  Clearly, we have
  $
  G_{\Delta}^{n}(\{ t_S(\hat\Phi),
  t_S(\hat\Phi).m(\hat{\Psi}) \})
  \subseteq
  G_{\Delta}^{n+1}(\{ t_S(\hat\Phi),
  t_S(\hat\Phi).m(\hat{\Psi}) \})$ for all $n\geq 1$.

  Using Lemma~\ref{lem:types-subs}, we know that every element in
  these sets (where $k$ is $n$ or $n+1$) is of the form
$ \eo [ \eta_1 \cdots \eta_k] $ such that $1\leq i \leq k$,
  $[ \eta_1 \cdots \eta_i]$ is good.
There are finitely many instance $\eo$ since they consists of a type
  name with finitely many type parameters, or a pair of type name with
  finitely many parameters and a method name, with finitely many type
  parameters.
The number of distinct $\eo$ is bounded by the number of
  declarations (methods and types) and the number of method signatures
  in interfaces.

  Because each $\eta_i$ is extracted from a syntactical occurrence of a
  method call or type instantiation, there are also finitely many
  substitutions $\eta_i$. The number of distinct substitutions
  $\eta_i$ is bounded by the size of the syntax of $P$.

  By Lemma~\ref{lem:types-subs}, the sequences of substitutions grow
  on the left. We show that there is a well-founded ordering on these
  sequences, when they are not permutations.
Note that they are only finitely many substitutions that are
  permutations.
First, every sequence of substitutions $[\eta_1 \cdots \eta_i \cdots
  \eta_n]$ with $\eta_i$ solved, can replaced by  $[\eta_1 \cdots
  \eta_i]$. 
  
Next, we show that for
  $\eo [\eta'] \in G_{\Delta}^{n}(\{ t_S(\hat\Phi),
  t_S(\hat\Phi).m(\hat{\Psi}) \}) $ and
  $\eo [\eta \cdot \eta'] \in G_{\Delta}^{n+1}(\{ t_S(\hat\Phi),
  t_S(\hat\Phi).m(\hat{\Psi}) \}) $, if
  $\eo [\eta'] \neq \eo [\eta \cdot \eta']$, then we have
  $[\eta'] \sorder [\eta \cdot \eta']$ (assuming we have shortened the
  sequences with solved substitutions as above).

  Clearly the number of solved variables in $[ \eta \cdot \eta']$ is
  at least the number of solved variables in $[\eta' ]$, indeed each
  ${\alpha_i}$ occurring in $\ov{\tau}$ is replaced by $\sigma_i$.
There must be $\alpha_i$ such that it $\FV{\sigma} \cap \ov{\alpha}
  = \emptyset$ otherwise there would be a cycle which would lead to a
  bad substitution.
Applying all substitutions in $\eta'$ then replacing $\alpha_i$ will
  solve it in $[\eta \cdot \eta']$ hence the number of solved variable
  is strictly increasing.

Hence, there is well-founded ordering over the elements generated by
  the limit of $G$, hence there exists a finite fixpoint.
\end{proof}

\begin{theorem}[Decidability]If\/ $P \ok$ then it is decidable whether or not\/
  $P \notmonomorphisable$ holds.
\end{theorem}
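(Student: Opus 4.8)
The plan is to reduce the decision to the finitely many method declarations and then give a terminating iterative procedure for each. By rule \tyrulename{mn-program}, $P \notmonomorphisable$ holds iff $D \notmonomorphisable$ holds for at least one of the method declarations $D$ in $\ov{D}$. Since there are finitely many such declarations and each check is independent, it suffices to decide $D \notmonomorphisable$ for a single declaration $\func~(x~t_S(\Phi))~m(\Psi)N~\br{\return~e}$; the overall answer is the disjunction of the per-declaration results. Fix such a $D$, let $\Delta = \Phi, \Psi$, and write $\omega_0 = \{ t_S(\hat\Phi), t_S(\hat\Phi).m(\hat\Psi) \}$.

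First I would observe that each iterate $G_\Delta^n(\omega_0)$ is a finite set that can be computed effectively. Monotonicity is immediate from the definition, since $G_\Delta(\omega) \supseteq \omega$. Finiteness follows because each of the four auxiliary operators maps a finite set to a finite set: $\Fclo$ and $\Mclo$ collect the finitely many field and signature types of finitely many entries; $\Iclo$ ranges over pairs drawn from a finite set; and for $\Sclo$ the body instance sets $\Omega$ are finite by Lemma~\ref{lem:expr-omega-finite}. Computability rests on the effectiveness of the underlying lookups $\fields$, $\mbody$, $\methods$, and of the $\imp$ relation and well-formedness judgements---which, as remarked in Section~\ref{sec:related}, are unproblematic here because method signatures are nonvariant. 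The occurs check $\Phi \prec \phi$ (resp.\ $\Psi \prec \psi$) is a purely syntactic test on free variables, hence decidable, and equality of finite sets is decidable.

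The decision procedure then iterates $G_\Delta$ starting from $\omega_0$: at stage $n$ it first tests whether some element $t_S(\phi).m(\psi) \in G_\Delta^n(\omega_0)$ satisfies the occurs check, answering ``$D \notmonomorphisable$ holds'' if so; otherwise it tests whether a fixpoint has been reached, i.e.\ $G_\Delta^n(\omega_0) = G_\Delta^{n-1}(\omega_0)$, answering ``$D \notmonomorphisable$ does not hold'' if so; otherwise it proceeds to stage $n+1$. Correctness is by the definition of $D \notmonomorphisable$ together with monotonicity: a positive answer exhibits an explicit witness, while a negative answer is only given at a fixpoint, where the now-stable set can never produce a witness at any later stage.

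The crux is termination, which I would establish by a case split. If $D \notmonomorphisable$ holds, then by definition a witness occurs in some $G_\Delta^{n}(\omega_0)$ with $n$ finite, so the occurs-check test fires at or before stage $n$ and the procedure halts. If $D \notmonomorphisable$ does not hold, then the occurs-check test never fires (else the predicate would hold), and by Lemma~\ref{lem:finite-gen} the sequence $G_\Delta^n(\omega_0)$ reaches a fixpoint at some finite $n$, so the fixpoint test fires and the procedure halts. In either case the procedure terminates with the correct verdict, so $P \notmonomorphisable$ is decidable. The main obstacle is the second case: showing that a finite fixpoint is always reached when the program is monomorphisable. This is precisely the content of Lemma~\ref{lem:finite-gen}, whose proof in turn depends on the good-substitution analysis of Lemma~\ref{lem:types-subs} and on the well-founded ordering $\sorder$ over the substitution sequences generated by $G$; the delicate point there is ruling out unbounded growth of these sequences in the absence of problematic polymorphic recursion.
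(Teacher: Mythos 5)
Your algorithm has the right ingredients (iterate $G_\Delta$, test the occurs check, test for a fixpoint), and your points about computability and finiteness of each iterate are sound. The genuine gap is in the termination argument for your per-declaration procedure, exactly where you invoke Lemma~\ref{lem:finite-gen}. That lemma's hypothesis is \emph{global}: it assumes $P \notmonomorphisable$ does not hold, i.e.\ that \emph{no} declaration in $\ov{D}$ admits an occurs-check witness. You apply it under the strictly weaker, \emph{local} hypothesis that the one declaration $D$ being checked satisfies ``$D \notmonomorphisable$ does not hold''. These are not interchangeable, because $G_\Delta$ is inter-procedural: through $\Sclo$, the instance set of $D$ absorbs the instance sets of the bodies of every method that $D$ (transitively) calls, so the iteration for $D$ can diverge because of polymorphic recursion in some \emph{other} declaration, without $D$ ever producing a witness of the shape $t_S(\phi).m(\psi)$ demanded by rule \tyrulename{mn-func}, which only inspects instances of $D$'s own receiver type and method name.

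Concretely, take the program of Figure~\ref{fig:nomono} and add a declaration $D_1$ for a method $\id{A}.\id{f}$ whose body calls $\id{Box}(\id{int})\br{\ldots}.\id{Nest}(\ldots)$. The declaration of $\id{Box}.\id{Nest}$ is flagged by its own occurs check (it generates $\id{Box}(\id{Box}(a)).\id{Nest}$ with $a \prec \id{Box}(a)$), so $P \notmonomorphisable$ holds. But the iteration for $D_1$ generates only the \emph{ground} instances $\id{Box}(\id{Box}(\cdots(\id{int}))).\id{Nest}$, which grow without bound yet can never trigger the occurs check for $D_1$: they are not instances of $\id{A}.\id{f}$, and ground types contain no type variables. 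Hence $D_1 \notmonomorphisable$ does not hold, and still the sequence $G^n_\Delta$ for $D_1$ never reaches a fixpoint; your procedure for $D_1$ diverges and the disjunction you need is never computed. This is precisely why the paper's proof uses a \emph{dovetailing} algorithm: the per-declaration iterations are advanced simultaneously, and the algorithm stops as soon as either some declaration exhibits a witness (answer: $P \notmonomorphisable$ holds) or every declaration has reached its fixpoint (answer: it does not); Lemma~\ref{lem:finite-gen} is then applied only in the situation where its global hypothesis actually obtains. Your proof can be repaired by replacing ``decide each declaration independently, then take the disjunction'' with this interleaved scheme and by casing the termination argument on $P \notmonomorphisable$ globally rather than on $D \notmonomorphisable$ locally.
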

\begin{proof}
We construct a dovetailing algorithm that decides whether or not
  $P \notmonomorphisable$ holds as follows.
Given $P = \ov{D} \prog d$, we simultaneously check whether $D_i
  \notmonomorphisable$ holds for all $i$ iteratively (starting with
  $n=1$).
The algorithm terminates
  either ($i$) when
  $n$ is found such that 
  $G_{\Delta}^{n}(\omega \cup \{
  t_S(\hat\Phi) \})
  = 
  G_{\Delta}^{n{+}1}(\omega \cup \{
  t_S(\hat\Phi) \})
  $, i.e., a fixpoint has been found, for \emph{each} method declaration, or
  ($ii$)
  when there is a 
  method such that 
  there is $n$ 
with $(t_S(\phi).m(\psi)) \in G_{\Delta}^{n}(\omega \cup \{
  t_S(\hat\Phi) \})
~\text{s.t.}~
  \Phi \prec  \phi  ~\text{or}~ \Psi \prec \psi$ (i.e., the occurs
  check fails).

  This algorithm terminates if ($i$) all declaration checks reach
  a fixpoint or ($ii$) if at least one declaration fails the occurs check.
By Lemma~\ref{lem:finite-gen} we know that either ($i$)
  or ($ii$) will eventually be satisfied, hence the algorithm
  always terminates.
\end{proof}

\begin{theorem}[Monomorphisability]If\/ $P \ok$
  and\/ $P \notmonomorphisable$ doesn't hold
  then\/ $P \yields \Omega$
  with\/ $\Omega$ finite.
\end{theorem}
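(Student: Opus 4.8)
The plan is to bootstrap from the per-declaration finiteness already established in Lemma~\ref{lem:finite-gen} to finiteness of the global instance set computed by $G_\emptyset$. First I would observe that the seed of the global computation, namely the instance set $\omega$ of the top-level body with $\emptyset \stoup \emptyset \vdash e \yields \omega$, is finite by Lemma~\ref{lem:expr-omega-finite}. Since $\ov{D}$ contains finitely many type and method declarations, finiteness of $\Omega = \lim_n G_\emptyset^n(\omega)$ is equivalent to showing that for each type name $t$ only finitely many ground instances $t(\phi) \in \Omega$ arise, and for each receiver-and-method pair $t_S.m$ only finitely many ground instances $t_S(\phi).m(\psi) \in \Omega$ arise; equivalently, that the set of ground type actuals appearing anywhere in $\Omega$ is finite.

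Next I would set up a covering between the global computation and the family of seeded per-declaration computations. The claim is that every $\eo \in \Omega$ is a ground instantiation $\eo_0[\theta]$ of an \emph{abstract} instance $\eo_0$ drawn from a finite pool: either from the body's own $\omega$, or from one of the sets $G_{\Phi,\Psi}^{*}(\{ t_S(\hat\Phi), t_S(\hat\Phi).m(\hat\Psi) \})$ ranged over the method declarations $\func~(x~t_S(\Phi))~m(\Psi)N~\br{\return~e} \in \ov{D}$. The justification is that each auxiliary function, applied to a ground instance, produces ground images (under a substitution of the shape $\theta = (\Phi, \Psi \by \phi, \psi)$) of abstract instances attached to some declaration: $\Fclo$ and $\Mclo$ read field and signature types off a declaration, $\Sclo$ inspects the instantiated body $\mbody(t_S(\phi).m(\psi))$, whose instances are exactly the $\theta$-images of the syntactic instances of the source body and hence $\theta$-images of abstract instances of that declaration's local computation, while $\Iclo$ merely recombines method instances already present. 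Each such local computation is finite by Lemma~\ref{lem:finite-gen}, and there are finitely many declarations, so the pool of abstract templates $\eo_0$ is finite.

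It then remains to bound the ground substitutions $\theta$, and here I would reuse the machinery behind Lemma~\ref{lem:types-subs} and Lemma~\ref{lem:finite-gen}. A ground type actual in $\Omega$ is obtained by composing substitutions, each extracted from a single syntactic method call or type instantiation; by Lemma~\ref{lem:types-subs} every prefix of such a composition is \goodsub, so no step ever nests a parameter strictly under a constructor. Any chain of inter-procedural calls that \emph{did} cause the type actuals to grow without bound would, upon returning to a declaration it had already visited, exhibit a strictly nested instance $t_S(\phi).m(\psi)$ with $\Phi \prec \phi$ or $\Psi \prec \psi$ in that declaration's seeded local computation --- precisely the occurs check of \tyrulename{mn-func}, contradicting the assumption that $P \notmonomorphisable$ does not hold. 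Hence the nesting depth of ground type actuals is bounded; together with finitely many type and method names of bounded arity this makes the set of ground type actuals, and thus $\Omega$, finite. The main obstacle is exactly this last bridging step: Lemma~\ref{lem:finite-gen} analyses each method in isolation from its own seed, so the crux is to show that composing these analyses inter-procedurally in the global $G_\emptyset$-iteration cannot reintroduce unbounded growth --- i.e.\ that the global computation is faithfully covered by the union of the seeded local computations, with only \goodsub\ ground substitutions in between.
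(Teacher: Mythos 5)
Your proposal is sound and rests on exactly the machinery the paper uses (Lemmas~\ref{lem:expr-omega-finite}, \ref{lem:types-subs} and~\ref{lem:finite-gen}), but you take the long way around where the paper's proof is a one-line instantiation. The paper simply observes that the main function can be regarded as a method declaration with no type parameters, so the global computation $\Omega = \lim_{n\rightarrow\infty} G_{\emptyset}^{n}(\omega)$ \emph{is} one of the seeded per-declaration computations to which Lemma~\ref{lem:finite-gen} applies, and the finite fixpoint follows by a single application of that lemma. You instead cover $G_{\emptyset}^{n}(\omega)$ by ground images of the per-declaration computations of the genuine declarations, and then re-run the good-substitution and occurs-check argument to bound the composed ground substitutions. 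The bridging step you rightly flag as the crux is therefore not so much missing as duplicated: it is precisely the content of the paper's proof of Lemma~\ref{lem:types-subs} (the $\Sclo$ case, where revisiting a declaration with strictly nested actuals forces the occurs check of \tyrulename{mn-func} to fire in that declaration's own seeded set), replayed at the global level. The main-as-declaration observation makes the replay unnecessary, because the sets $G_{\Delta}^{n}(\{ t_S(\hat\Phi), t_S(\hat\Phi).m(\hat{\Psi}) \})$ are already inter-procedural --- their $\Sclo$ component follows calls into the bodies of \emph{other} declarations --- so they are not local analyses that need to be recomposed. Relatedly, your appeal to Lemma~\ref{lem:types-subs} for ground actuals appearing in $\Omega$ is, as that lemma is stated, out of scope (it speaks only of the seeded per-declaration sets); the paper's trick is exactly what brings $\Omega$ within its scope.

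Two minor repairs to your covering step. First, the finite pool of abstract templates must also include the syntactic field types of struct declarations and the signature types of method specifications, since the $\Fclo$ and $\Mclo$ images of a struct or method reachable only from the main body need not occur in any seeded per-declaration set; this is harmless, as that pool is still bounded by the syntax of $P$. Second, to conclude that $\Omega$ is finite you need Lemma~\ref{lem:expr-omega-finite} not only for the seed $\omega$ but also to see that each single application of $G_{\emptyset}$ adds finitely many elements (its $\Sclo$ component invokes the instance sets of method bodies), so that reaching a fixpoint after finitely many iterations indeed yields a finite set.
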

\begin{proof}
  Direct consequence from Lemma~\ref{lem:finite-gen}, considering the
  main function as a special case of a method declaration (with no
  type parameter). 
\end{proof}

\newpage
\newcommand{\lemref}[1]{Lemma~\ref{#1}}
\newcommand{\NYRULENAME}[1]{[\textsc{#1}]}
\newcommand{\thmref}[1]{Theorem~\ref{#1}}
\newcommand{\defref}[1]{Definition~\ref{#1}}
\newcommand{\secref}[1]{Section~\ref{#1}}
\newcommand{\appref}[1]{Appendix~\ref{#1}}

\newcommand{\COMPAT}[3]{\ensuremath{#1 \asymp_{#3} #2 }}

\section{Proof of Theorem~\ref{thm:main:monotsound}}
\begin{lemma}\label{lem:recv-meth-omega}
  If $\Delta \stoup \Gamma \vdash e : \tau$,
  $\Delta \stoup \Gamma \vdash e \yields \omega$, and $\sigma.m(\phi)
  \in \omega$, then $\sigma \in \omega$.
\end{lemma}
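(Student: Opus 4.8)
The plan is to proceed by induction on the derivation of the instance-set judgement $\Delta \stoup \Gamma \vdash e \yields \omega$ (Figure~\ref{fig:fgg-omega-new}), casing on the final rule. The guiding observation is that among the five rules, only \tyrulename{I-call} ever introduces an element of the method-instance form $\tau.m(\psi)$ into $\omega$, and in that same rule the receiver type $\tau$ is added alongside it. Every other element explicitly placed into an instance set by a rule---$\tau_S$ in \tyrulename{I-literal}, $\tau$ in \tyrulename{I-assert}, and $\tau$ in \tyrulename{I-call}---is a bare type, never a method instance. Hence any $\sigma.m(\phi) \in \omega$ must arise either as the distinguished pair added by an \tyrulename{I-call}, or from one of the recursively computed subsets, and in both cases the receiver is present.

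First I would dispatch the base and transparent cases. For \tyrulename{I-var}, $\omega = \emptyset$, so the hypothesis $\sigma.m(\phi) \in \omega$ is vacuous. For \tyrulename{I-field}, the conclusion's instance set coincides with that of the single premise, so the induction hypothesis applies verbatim. For \tyrulename{I-literal} and \tyrulename{I-assert}, the added element ($\tau_S$, resp.\ $\tau$) is a bare type, so $\sigma.m(\phi)$ must lie in one of the premise instance sets; applying the induction hypothesis to the corresponding subderivation yields $\sigma$ in that set, and hence in $\omega$.

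The central case is \tyrulename{I-call}, where $\omega = \set{\tau \comma \tau.m(\psi)} \cup \omega_0 \cup \ov{\omega}$ with $\tau$ the type of the receiver $e$ and $\omega_0$ its instance set. Given $\sigma.m(\phi) \in \omega$, either it equals the explicitly added pair $\tau.m(\psi)$, forcing $\sigma = \tau$, and then $\tau \in \omega$ directly; or it belongs to $\omega_0$ or to one of the $\omega_j$ in $\ov{\omega}$ (an argument's instance set), and the induction hypothesis applied to the relevant subderivation places $\sigma$ in that same subset, whence $\sigma \in \omega$.

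The only point demanding care---and the nearest thing to an obstacle---is the bookkeeping for the sequence notation $\ov{\omega}$, which the figure's convention coerces to the union of its members. I would make explicit that membership $\sigma.m(\phi) \in \ov{\omega}$ means $\sigma.m(\phi) \in \omega_j$ for some $j$, that each such $\omega_j$ is produced by its own subderivation $\Delta \stoup \Gamma \vdash e_j \yields \omega_j$, and that the induction hypothesis therefore applies to each separately; the union then preserves the conclusion. Beyond this the argument is entirely routine.
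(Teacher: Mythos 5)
Your proof is correct and matches the paper's own argument, which is precisely the ``straightforward induction on the rules from Figure~\ref{fig:fgg-omega-new}'' that you have spelled out case by case, with \tyrulename{I-call} as the only rule that introduces method instances and doing so together with its receiver type. The extra care you take with the coercion of $\ov{\omega}$ to a union is a fair elaboration of what the paper leaves implicit.
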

\begin{proof}
  By straightforward induction on the rules from Figure~\ref{fig:fgg-omega-new}.
\end{proof}

\begin{lemma}
  \label{lem:app_inomegaexp}
  Let $\emptyset ; \emptyset \vdash e : \tau$,
  $\emptyset \vdash e \yields \omega$ and $\Omega = \lim_{n
    \rightarrow \infty}  G_{\emptyset}^{n}(\omega)$. Then:
  \begin{itemize}
   \item $\tau \in \Omega$;
   \item If $\tau_S\br{\ov{e'}}$ is a subexpression of $e$ then $\tau_S \in
    \Omega$ and  $\ov{\tau} \in \Omega$, with $(\ov{f~\tau}) =
    \fields(\tau_S)$;
  \item If $e'.(\sigma)$ is a subexpression of $e$ then $\sigma \in \Omega$;
  \item If $e'.m(\psi)(\ov{e})$ is a subexpression of $e$ with
    $\emptyset\vdash e' : \sigma$ then
    $\{\sigma, \sigma.m(\psi)\} \subset \Omega$.
  \end{itemize}
\end{lemma}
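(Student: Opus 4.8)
The plan is to exploit two features of the construction in Figure~\ref{fig:fgg-omega-new}: first, that the seed $\omega$ produced by $\emptyset \stoup \emptyset \vdash e \yields \omega$ already records, for each subexpression, the type and method instances appearing at its head; and second, that $\Omega$ is closed under the closure functions $\Fclo$ and $\Mclo$. Three preliminary observations set this up. First, since $G_\Delta(\omega) = \omega \cup \cdots \supseteq \omega$, the sequence $G_\emptyset^n(\omega)$ is increasing, so $\omega = G_\emptyset^0(\omega) \subseteq \Omega = \bigcup_n G_\emptyset^n(\omega)$. Second, $\Omega$ is closed under $\Fclo$ and $\Mclo$: if an element $\eo$ of the right shape lies in $\Omega$, then $\eo \in G_\emptyset^n(\omega)$ for some $n$, and the types that $\Fclo$ (resp.\ $\Mclo$) extracts from $\eo$ already lie in $G_\emptyset^{n+1}(\omega) = G_\emptyset(G_\emptyset^n(\omega)) \subseteq \Omega$, since each closure is a union whose $\eo$-summand depends only on $\eo$; crucially this needs no finiteness of $\Omega$. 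Third, for any subexpression $e_0$ of $e$, writing $\emptyset \stoup \emptyset \vdash e_0 \yields \omega_0$, we have $\omega_0 \subseteq \omega$, by a one-line induction on the $\yields$ rules, since each of \tyrulename{I-literal}, \tyrulename{I-field}, \tyrulename{I-assert}, and \tyrulename{I-call} forms its instance set as a union containing the instance sets of its immediate subexpressions. (Every subexpression of the closed term $e$ is itself closed, as FGG expressions contain no binders, so each $e_0$ is typed and analysed under $\emptyset \stoup \emptyset$.)

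With these in hand, the three subexpression bullets are immediate. If $\tau_S\br{\ov{e'}}$, $e'.(\sigma)$, or $e'.m(\psi)(\ov{e})$ (with $e' : \sigma$) is a subexpression, then the corresponding rule \tyrulename{I-literal}, \tyrulename{I-assert}, or \tyrulename{I-call} places $\tau_S$, $\sigma$, or the pair $\{\sigma, \sigma.m(\psi)\}$ directly into its instance set $\omega_0$, and by the third and first observations that set is contained in $\Omega$. The only extra piece is the claim $\ov{\tau} \in \Omega$ with $(\ov{f~\tau}) = \fields(\tau_S)$: having shown $\tau_S \in \Omega$, I apply closure under $\Fclo$, noting that $\fields(\tau_S)$ already performs the instantiation $(\Phi \by \phi)$, so the field types it returns are exactly those inserted into $\Omega$ by $\Fclo$.

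Finally, I would prove the first bullet in the generalised form ``$\tau_0 \in \Omega$ for every subexpression $e_0$ with $\emptyset \stoup \emptyset \vdash e_0 : \tau_0$'', by induction on the structure of $e_0$. The variable case is vacuous (closed, binder-free terms contain no variables); the literal and assertion cases are exactly the subexpression bullets just proved. For $e_0 = e'.f_i$ with $e' : \tau_S$ and $\tau_0 = \tau_i$, the induction hypothesis gives $\tau_S \in \Omega$, and $\Fclo$-closure then yields the field type $\tau_i \in \Omega$. For $e_0 = e'.m(\psi)(\ov{e})$, the call bullet gives $\sigma.m(\psi) \in \Omega$, and $\Mclo$-closure supplies the instantiated return type; here I must check that the substitution $\eta = (\Psi \by \psi)$ used by $\Mclo$ coincides with the $\eta = (\Psi \by_\emptyset \psi)$ of \tyrulename{t-call}, so that the return type it deposits in $\Omega$ is precisely $\tau_0$. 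This matching of the method instance seeded by \tyrulename{I-call} against the type extracted by $\Mclo$, together with establishing $G$-closure of $\Omega$ at a possibly infinite limit (the second observation), is the only part that requires care; everything else is routine bookkeeping.
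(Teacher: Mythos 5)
Your proof is correct and takes essentially the same approach as the paper: the paper's own proof is the one-liner ``Straightforward by the definitions for computation of instance sets and $G$,'' and your three observations (extensivity of $G$ hence $\omega \subseteq \Omega$, element-wise closure of $\Omega$ under $\Fclo$ and $\Mclo$ via finite stages of the limit, and monotonicity of $\yields$ along subexpressions) are exactly the bookkeeping that claim elides, including the correct handling of the substitution matching between \tyrulename{t-call} and $\Mclo$.
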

\begin{proof}
Straightforward by the definitions for computation of instance sets
and $G$.
\end{proof}

\begin{lemma}
  \label{lem:app_exptypeinomega}
Let $\Delta ; x : t_S(\hat{\Phi}) \comma \ov{x : \tau} \vdash e : \tau$ for some
$\func~(x~t_S(\Phi))~m(\Psi)(\ov{x~\tau})~\sigma~\br{\return~e} \ok$,
$\Phi \stoup \Psi \ok~\Delta $ and $\Delta \vdash \tau \IMPLOP
\sigma$. If $\Delta ; x : t_S(\hat{\Phi}) \comma \ov{x : \tau} \vdash
e : \tau$ if $\theta = (\Delta \by_\Delta \phi)$, for some
$\phi$, $\emptyset ; x : t_S(\hat{\Phi})[\theta] \comma \ov{x : \tau[\theta]} \vdash
e \yields \omega$, and $\Omega = \lim_{n
    \rightarrow \infty}  G_{\emptyset}^{n}(\omega \cup
  \{t_S(\hat{\Phi})[\theta] , \ov{\tau[\theta]} , t_S(\hat{\Phi})[\theta].m(\hat{\Phi}[\theta])\})$ then $\tau[\theta] \in \Omega$.
\end{lemma}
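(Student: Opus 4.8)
The plan is to reduce this to a mild generalisation of the first clause of Lemma~\ref{lem:app_inomegaexp}, bridging the two by type substitution. First I would apply Lemma~\ref{lem:fgg_tsubsttyp} to the body typing $\Delta \stoup x : t_S(\hat\Phi)\comma \ov{x:\tau} \vdash e : \tau$: since $\Delta = \Phi,\Psi$ and $\theta = (\Delta \by_\Delta \phi)$ respects the bounds of $\Delta$, this yields $\emptyset \stoup x : t_S(\hat\Phi)[\theta]\comma \ov{x : \tau[\theta]} \vdash e[\theta] : \tau[\theta]$. This ground typing matches the ground value context of the instance-set judgement in the statement (recall the body delivered by $\mbody$ is already substituted, so the expression there denotes the ground body), and it pins the target type down to $\tau[\theta]$.

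The core is then an induction establishing the following extension of Lemma~\ref{lem:app_inomegaexp}, first clause, to a non-empty value environment: whenever $\emptyset \stoup \Gamma' \vdash e' : \rho$ and $\emptyset \stoup \Gamma' \vdash e' \yields \omega'$, and $\Omega$ is a fixpoint of $G_\emptyset$ containing $\omega'$ and the type of every variable in $\Gamma'$, then $\rho \in \Omega$. I would argue by cases on the form of $e'$. For a variable, $\rho = \Gamma'(x) \in \Omega$ by hypothesis; this is the single case absent from the empty-context Lemma~\ref{lem:app_inomegaexp}. For a structure literal $\tau_S\br{\ov{e'}}$ rule \tyrulename{I-literal} records $\rho = \tau_S$ in $\omega' \subseteq \Omega$, and likewise \tyrulename{I-assert} records the asserted type. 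For a selection $e''.f_i$, the induction hypothesis places the receiver type $\tau_S$ in $\Omega$, and since $\Omega = G_\emptyset(\Omega) \supseteq \FExtensionD{\Omega}{\emptyset}$ every field type of $\tau_S$, in particular $\rho$, lies in $\Omega$. For a method call $e''.m(\psi)(\ov{e'})$ with receiver type $\rho_0$, rule \tyrulename{I-call} records $\rho_0.m(\psi)$ in $\omega' \subseteq \Omega$; the instantiated return type $\rho$ is then supplied by $\MExtensionD{\Omega}{\emptyset} \subseteq G_\emptyset(\Omega) = \Omega$, using that the signature found for $m$ in $\methods_\emptyset(\rho_0)$ coincides with the one in the \tyrulename{t-call} premise of the substituted typing derivation.

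Finally I would instantiate the generalisation with $\Gamma' = (x : t_S(\hat\Phi)[\theta]\comma \ov{x : \tau[\theta]})$ and $\rho = \tau[\theta]$: its variable types are exactly the base instances $t_S(\hat\Phi)[\theta]$ and $\ov{\tau[\theta]}$ seeded into $\Omega$, so the hypotheses hold and $\tau[\theta] \in \Omega$ follows. The main obstacle is the bookkeeping in the call and selection cases, where the result type occurs nowhere syntactically but is pulled into the fixpoint by the $\Mclo$ and $\Fclo$ components of $G$; this requires verifying that the ground receiver types and instantiated method signatures computed during instance-set generation agree exactly with those arising in the substituted typing derivation, which is precisely where the fact that $\theta$ is a good, bound-respecting substitution is used.
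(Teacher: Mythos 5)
Your proposal is correct and takes essentially the same approach as the paper: the paper's proof is a one-line appeal to induction over the instance-set rules and the $\Fclo$/$\Mclo$ components of $G$, which is precisely the case analysis (variable, literal, assertion, selection via $\Fclo$, call via $\Mclo$) that you spell out, with the seeded types $t_S(\hat{\Phi})[\theta], \ov{\tau[\theta]}$ handling the variable case. Your explicit bridging step via Lemma~\ref{lem:fgg_tsubsttyp} is left implicit in the paper (it is in fact performed in the calling context, Lemma~\ref{lem:app_monomethdecls}), so this is an elaboration rather than a different route.
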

\begin{proof}
By induction definitions for computation of instance sets,
$G$, $\Fclo$ and $\Mclo$.
\end{proof}

\begin{lemma}
   \label{lem:app_exptypeinomega2}
If $\emptyset ; \ov{x :\sigma} \vdash e : \tau$, $\emptyset;\ov{x :\sigma} \vdash e \yields
\omega$ and $\Omega = \lim_{n \rightarrow \infty}
G_{\emptyset}^{n}(\omega' )$, with $\cup \{\ov{\sigma}\} \subset
\omega'$ and $\omega \subset \omega'$ then $\tau \in \Omega$.
\end{lemma}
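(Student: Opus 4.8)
The plan is to prove this by induction on the derivation of $\emptyset \stoup \ov{x : \sigma} \vdash e : \tau$, closely mirroring the proof of Lemma~\ref{lem:app_inomegaexp}; the only genuinely new case is \tyrulename{t-var}, which is vacuous there because that lemma has an empty variable environment. Before the case analysis I would record two facts used throughout. First, since $G_\emptyset$ is monotone and $\Omega = \lim_{n \rightarrow \infty} G_\emptyset^n(\omega')$, we have $\omega' \subseteq \Omega$ and $\Omega$ is closed under $G_\emptyset$: every $\rho \in \Omega$ lies in some $G_\emptyset^n(\omega')$, so the instances that $\FExtensionD{\cdot}{\emptyset}$ and $\MExtensionD{\cdot}{\emptyset}$ generate from $\rho$ already lie in $G_\emptyset^{n+1}(\omega') \subseteq \Omega$. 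Second, inspecting the rules of Figure~\ref{fig:fgg-omega-new}, the instance set of every immediate subexpression of $e$ is contained in $\omega$, hence in $\omega'$; because FGG expressions bind no variables the environment $\ov{x : \sigma}$ is unchanged for subexpressions, so the induction hypothesis applies to each subexpression with the \emph{same} $\omega'$.

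The base and ``recording'' cases are immediate. For \tyrulename{t-var} we have $e = x$ and $\tau = \sigma_i$ for some $i$; here the new hypothesis $\{\ov{\sigma}\} \subseteq \omega'$ is exactly what is needed, giving $\tau = \sigma_i \in \omega' \subseteq \Omega$ (note the instance set $\omega$ is empty by \tyrulename{I-var}). For \tyrulename{t-literal} the type is $\tau_S$ and rule \tyrulename{I-literal} places $\tau_S$ into $\omega \subseteq \omega' \subseteq \Omega$. For the assertion rules \tyrulename{t-assert$_I$}, \tyrulename{t-assert$_S$} and \tyrulename{t-stupid}, the result type is the asserted type $\tau$, and rule \tyrulename{I-assert} records $\tau$ in $\omega$, so $\tau \in \Omega$.

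The two cases that use closure of $\Omega$ are field selection and method call. For \tyrulename{t-field} with $e = e'.f_i$, $\emptyset \stoup \ov{x:\sigma} \vdash e' : \tau_S$ and $(\ov{f~\tau}) = \fields(\tau_S)$, the induction hypothesis on $e'$ gives $\tau_S \in \Omega$, and closure under $\FExtensionD{\cdot}{\emptyset}$ then yields $\ov{\tau} \subseteq \Omega$, in particular $\tau_i \in \Omega$. For \tyrulename{t-call} with $e = e'.m(\psi)(\ov{e})$, receiver type $\emptyset \stoup \ov{x:\sigma} \vdash e' : \tau'$, signature $(m(\Psi)(\ov{x~\sigma''})~\sigma') \in \methods_\emptyset(\tau')$ and $\tau = \sigma'[\eta]$ with $\eta = (\Psi \by_\emptyset \psi)$, rule \tyrulename{I-call} records $\tau'.m(\psi)$ in $\omega \subseteq \Omega$, and closure under $\MExtensionD{\cdot}{\emptyset}$ contributes $\sigma'[(\Psi \by \psi)]$ to $\Omega$. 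Since the typing premise guarantees that the bound-checked substitution $(\Psi \by_\emptyset \psi)$ is defined and coincides with the plain substitution $(\Psi \by \psi)$, this contributed type is precisely $\tau$, whence $\tau \in \Omega$.

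I expect no deep obstacle, as this is a routine structural induction; the two points that require care are the new \tyrulename{t-var} case, where the hypothesis $\{\ov{\sigma}\} \subseteq \omega'$ is consumed, and verifying that $\Omega$ is closed under the auxiliary functions $\Fclo$ and $\Mclo$ so that the successor types produced in the \tyrulename{t-field} and \tyrulename{t-call} cases already belong to $\Omega$. Keeping the parameter $\omega'$ fixed across all appeals to the induction hypothesis (justified by the monotonicity observation) is what keeps the inductive statement in the correct shape.
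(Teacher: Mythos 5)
Your proposal is correct and follows essentially the same route as the paper, whose proof is only the one-line sketch ``by induction [on the] definitions for computation of instance sets, $G$, $\Fclo$ and $\Mclo$'' --- precisely the structural induction you carry out, with $\Omega$'s closure under $\Fclo$/$\Mclo$ handling the \tyrulename{t-field} and \tyrulename{t-call} cases and the hypothesis $\{\ov{\sigma}\} \subseteq \omega'$ discharging \tyrulename{t-var}. Your write-up simply supplies the details (monotonicity of $G_\emptyset$, agreement of $(\Psi \by_\emptyset \psi)$ with the plain substitution) that the paper leaves implicit.
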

\begin{proof}
By induction definitions for computation of instance sets,
$G$, $\Fclo$ and $\Mclo$.
\end{proof}

\begin{lemma}
  \label{lem:app_substcommute}
  If $\theta \vdash \tau \mapsto t^\dagger$ then $\vdash \tau[\theta] \mapsto
  t^\dagger$. If $\theta \vdash e \mapsto e^\dagger$ then $\vdash
  e[\theta] \mapsto e^\dagger$.
\end{lemma}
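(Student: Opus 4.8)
The plan is to prove both statements together by induction on the structure of the monomorphisation derivation, after first observing that the entire content of the lemma lies in a single fact: the name map $\monoid{\cdot}$ applies its governing substitution \emph{before} computing an identifier, so that substituting first and then monomorphising under the empty substitution yields the identical result. Throughout, $\vdash \tau[\theta] \mapsto t^\dagger$ and $\vdash e[\theta] \mapsto e^\dagger$ abbreviate the judgements with the empty (identity) substitution $\emptyset$ in place of $\theta$.

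First I would dispatch the type statement directly, with no induction. By rule \tyrulename{m-type}, $\theta \vdash \tau \mapsto t^\dagger$ holds exactly when $t^\dagger = \monoid{\tau[\theta]}$. Since $\theta$ maps type parameters to ground types, $\tau[\theta]$ is closed, so applying the empty substitution leaves it unchanged: $(\tau[\theta])[\emptyset] = \tau[\theta]$. Hence $\emptyset \vdash \tau[\theta] \mapsto t^\dagger$ holds by the same rule, which requires $t^\dagger = \monoid{(\tau[\theta])[\emptyset]} = \monoid{\tau[\theta]}$. The analogous statement for method actuals, namely that $\theta \vdash m(\psi) \mapsto m^\dagger$ implies $\emptyset \vdash m(\psi[\theta]) \mapsto m^\dagger$, follows identically from \tyrulename{m-method}, since both sides require $m^\dagger = \monoid{m(\psi[\theta])}$; I would record this as a small auxiliary step for use in the call case below.

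With these in hand, the expression statement is a routine induction on the derivation of $\theta \vdash e \mapsto e^\dagger$, using the fact that syntactic substitution commutes with every expression constructor. Concretely: in the \tyrulename{m-var} case, $x[\theta] = x$ and $\emptyset \vdash x \mapsto x$; in the \tyrulename{m-value}, \tyrulename{m-select}, \tyrulename{m-call}, and \tyrulename{m-assert} cases, applying $[\theta]$ pushes into the immediate subexpressions, the embedded structure or assertion type, and the method actuals $\psi$, so that $e[\theta]$ has the same head constructor as $e$ with the substituted pieces as children. Each premise of the rule is then discharged by the induction hypothesis on subexpressions, by the type statement on the embedded types ($\tau_S$ or the asserted $\tau$), and by the auxiliary method-actual step on $m(\psi)$, after which the same rule instance reassembles $\emptyset \vdash e[\theta] \mapsto e^\dagger$.

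There is no genuine obstacle here: the only point requiring care is the bookkeeping of substitution composition — verifying that applying $[\theta]$ distributes over constructors and embedded types exactly so that the substituted subterms match the premises of the corresponding monomorphisation rule. Since $\theta$ is ground-valued this never interacts with binding, and the empty substitution acts as the identity on the already-closed results, so each case closes by direct application of the defining rule.
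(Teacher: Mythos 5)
Your proposal is correct and matches the paper's proof, which is exactly a straightforward induction on the monomorphisation rules for type names and expressions (the paper leaves the details implicit). Your observation that the type case is immediate from \tyrulename{m-type} because $\monoid{\cdot}$ is applied to $\tau[\theta]$ on both sides, with the empty substitution acting as the identity, is precisely the content the paper's one-line proof relies on.
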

\begin{proof}
Straightforward induction on the definition of monomorphisation of
expressions and type names.
\end{proof}

\begin{lemma}[Monomorphisation preserves subtyping]
\label{lem:mono_subt}
Let $P\ok$, $P \yields \Omega$,
$\Delta \vdash \tau,\sigma \ok$ and $\theta = (\Delta \by_\Delta \phi
)$, for some $\phi$,  $\sigma[\theta],\tau[\theta]\in \Omega$
and $\theta \vdash \tau \mapsto t^\dagger$ 
If $\Delta \vdash \tau \IMPLOP \sigma$ then $t^\dagger \IMPLOP u^\dagger$, with
$\theta \vdash \sigma \mapsto u^\dagger$.
\end{lemma}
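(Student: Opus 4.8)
The plan is to argue by case analysis on the last rule used to derive the FGG subtyping judgement $\Delta \vdash \tau \IMPLOP \sigma$, of which there are three: \tyrulename{<:-param}, \tyrulename{<:$_S$}, and \tyrulename{<:$_I$}. First note that $u^\dagger$ exists and is unique, since $\theta \vdash \sigma \mapsto u^\dagger$ is forced by \tyrulename{m-type} to be $u^\dagger = \monoid{\sigma[\theta]}$, and likewise $t^\dagger = \monoid{\tau[\theta]}$. The two reflexive cases are immediate: in \tyrulename{<:-param} we have $\tau = \sigma = \alpha$, and in \tyrulename{<:$_S$} we have $\tau = \sigma = \tau_S$; in either case $\tau[\theta] = \sigma[\theta]$ syntactically, so $t^\dagger = u^\dagger$, and the goal $t^\dagger \IMPLOP u^\dagger$ follows by reflexivity of $\IMPLOP$ in FG.

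The real work is the case \tyrulename{<:$_I$}, where $\sigma = \sigma_I$ is an interface type and the premise gives $\methods_\Delta(\tau) \supseteq \methods_\Delta(\sigma_I)$. Here $u^\dagger$ is a monomorphised interface, and by FG rule \tyrulename{<:$_I$} it suffices to show $\methods(t^\dagger) \supseteq \methods(u^\dagger)$ (this covers both the case where $t^\dagger$ is a structure and where it is an interface). Inspecting \tyrulename{m-type} (for declarations), \tyrulename{m-interface}, and \tyrulename{m-spec}, the method set of $u^\dagger$ splits into two groups: the \emph{real} specifications, one $m^\dagger N^\dagger$ for each $m(\psi)$ in $\mu_\sigma = \set{m(\psi) \mid \sigma_I[\theta].m(\psi) \in \Omega}$, and the \emph{dummy} specifications $m^\ast()~\dummytype$, one for each source method specification of $\sigma_I$ (generated via \tyrulename{m-id}). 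I would show each is matched in $t^\dagger$.

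The key step for the real specifications is to transport the subtyping to ground types and then exploit that $\Omega$ is a fixpoint of $G_\emptyset$, hence closed under $\Iclo$ and $\Sclo$. Concretely, since $\theta = (\Delta \by_\Delta \phi)$ is defined it respects the bounds of $\Delta$, so Lemma~\ref{lem:fgg_tsubstsub} (type substitution preserves subtyping) yields $\emptyset \vdash \tau[\theta] \IMPLOP \sigma_I[\theta]$, with both types closed. Given any $m(\psi) \in \mu_\sigma$, i.e.\ $\sigma_I[\theta].m(\psi) \in \Omega$, together with $\tau[\theta] \in \Omega$ (a hypothesis of the lemma): if $\tau[\theta]$ is an interface, closure under $\Iclo$ gives $\tau[\theta].m(\psi) \in \Omega$; if $\tau[\theta]$ is a structure, closure under $\Sclo$ gives the same. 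Thus $m(\psi) \in \mu_\tau$, so \tyrulename{m-interface}/\tyrulename{m-func} generate a specification/method for $t^\dagger$ at the same instance. Because method signatures in FGG's implementation relation are nonvariant, the underlying signature $N$ agreed on the $\tau$- and $\sigma_I$-sides, and since $m^\dagger = \monoid{m(\psi)}$ and $N^\dagger$ depend only on $m$, $\psi$, and that common signature, the produced FG specification coincides with $m^\dagger N^\dagger$. The dummy specifications are handled analogously: $\methods_\Delta(\tau) \supseteq \methods_\Delta(\sigma_I)$ guarantees $\tau$ carries a method/spec with the same (instantiated) signature, and by the assumed well-definedness of $\mkdummy$ on signature-equality classes, the dummy names coincide, so the corresponding dummy lies in $\methods(t^\dagger)$.

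The main obstacle I anticipate is the bookkeeping around signature matching rather than the high-level closure argument. I must verify that the monomorphised identifiers and signatures produced on the $\tau$-side are \emph{syntactically the same} FG names as those on the $\sigma_I$-side, both for real specifications (requiring the nonvariance of method signatures under \tyrulename{<:$_I$}, so that the two instantiated-then-monomorphised signatures agree) and for the dummies (requiring that $\mkdummy$ assigns equal names to equal instantiated signatures, and that the substitutions $\theta$ and the per-declaration $\eta$'s compose so that $(mM)$ instantiates to the same ground signature on both sides). Lemma~\ref{lem:app_substcommute} is useful here to rewrite $\theta \vdash \cdot \mapsto \cdot$ as monomorphisation of the already-ground type, and Lemma~\ref{lem:recv-meth-omega} ensures the receiver instances needed above are themselves present in $\Omega$.
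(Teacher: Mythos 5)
Your proof is correct and follows essentially the same route as the paper's: case analysis on the subtyping derivation, reflexivity for \tyrulename{<:-param} and \tyrulename{<:$_S$}, and for \tyrulename{<:$_I$} the combination of Lemma~\ref{lem:fgg_tsubstsub} to obtain $\emptyset \vdash \tau[\theta] \IMPLOP \sigma_I[\theta]$ with closure of $\Omega$ under $\Iclo$/$\Sclo$ to transfer each method instance to the $\tau$ side, matching real and dummy specifications separately. The only differences are organizational: the paper additionally splits on whether the set of monomorphised method instances of $\sigma_I[\theta]$ is empty (which your uniform treatment subsumes), while you are somewhat more explicit about the name/signature bookkeeping via nonvariance and the $\mkdummy$ assumption.
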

\begin{proof}
  We proceed by case analysis on $\Delta \vdash \tau \IMPLOP \sigma$. If
  the derivation holds by rule {\sc $\IMPLOP$-param}, then $\tau =
  \sigma = \alpha$ and we have that $\alpha[\theta] = \tau'$, for some
  $\tau'$. Since $\tau' \in \Omega$ and $\theta \vdash \tau \mapsto
  t^\dagger$ we have that $\theta \vdash \tau' \mapsto t^\dagger$ and $t^\dagger\ok$.
  We conclude by reflexivity of the FG implements relation.

  If the derivation holds by rule {\sc $\IMPLOP_S$}, then $\tau =
  \sigma = \tau_S$. Since $\tau_S[\theta] \in \Omega$ and $\theta
  \vdash \tau_S \mapsto t^\dagger$ we have that $t^\dagger\ok$ and
  conclude by the FG implements  {\sc $\IMPLOP_S$} rule.

  If the derivation holds by rule {\sc $\IMPLOP_I$}, then
  $\sigma = \tau_I$, for some $\tau_I$, with
  $\methods_\Delta(\tau) \supseteq \methods_\Delta(\tau_I)$.  Since
  $\theta \vdash \tau \mapsto t^\dagger$ and $\tau[\theta]\in\Omega$
  then $t^\dagger\ok$. Moreover, since since $\tau_I[\theta]\in\Omega$
  then there exists some $u^\dagger$ such that
  $\theta \vdash \tau_I \mapsto u^\dagger$ and $u^\dagger\ok$.

  We now
  proceed by a case analysis on the set of methods
  $M = \{\tau_I[\theta].m(\psi) \mid \tau_I[\theta].m(\psi) \in \Omega\}$. If this set is empty, then in the
  monomorphisation of $P$ ($P^\dagger)$, the type declaration for
  $u^\dagger$ contains only the methods generated by the second
  premise of rule {\sc m-spec}. Since $\tau[\theta]\in\Omega$, either
  $\tau$ is a struct type, and then we know that all its methods have
  a corresponding ``dummy'' analogue in $P^\dagger$ and since
  $\methods_\Delta(\tau) \supseteq \methods_\Delta(\tau_I)$ then it
  must be the case that
  $\methods(t^\dagger)\supseteq \methods(u^\dagger)$ and we conclude
  by the FG implements  {\sc $\IMPLOP_I$} rule. If $\tau$ is an
  interface type,  the type declaration for
  $t^\dagger$ contains at least the methods generated by the second
  premise of rule {\sc m-spec}, and since $\methods_\Delta(\tau)
  \supseteq \methods_\Delta(\tau_I)$ it follows that
  $\methods(t^\dagger)\supseteq \methods(u^\dagger)$  and we conclude
  by the FG implements  {\sc $\IMPLOP_I$} rule.
  
  If the set $M$ is not empty, then the declaration for $u^\dagger$
  contains the dummy version of all methods of $\tau_I[\theta]$ and the
  monomorphisations of those in $M$ (by rules {\sc m-type}, {\sc
    m-interface}, and {\sc m-spec}). If $\tau$ is an interface type,
  since $\tau[\theta]\in\Omega$, $M\subseteq \Omega$ and
  $\emptyset \vdash \tau[\theta] \IMPLOP \tau_I[\theta]$ by
  (Lemma~\ref{lem:fgg_tsubstsub}) then by $\Iclo$ each $\tau_I[\theta].m(\psi) \in
  \Omega$ has a corresponding $\tau_I[\theta]'.m(\psi) \in \Omega$ and
  so the type declaration for $t^\dagger$ contains both the methods generated by the second
  premise of rule {\sc m-spec} and all those corresponding to the
  monomorphisations of the methods of $\tau_I[\theta]$ in $M$, and
  thus $\methods(t^\dagger)\supseteq \methods(u^\dagger)$ and we conclude
  by the FG implements  {\sc $\IMPLOP_I$} rule.
  Finally, if $\tau$ is a struct type, we know that all its methods have
  a corresponding ``dummy'' analogue in $P^\dagger$ for
  $t^\dagger$. Moreover, Since $\emptyset \vdash \tau[\theta] \IMPLOP \tau_I[\theta]$ by
  (Lemma~\ref{lem:fgg_tsubstsub}), $\tau[\theta] \in \Omega$ and
  $M\subset \Omega$, by $\Sclo$ each $\tau_I[\theta].m(\psi) \in
  \Omega$ has a corresponding $\tau[\theta].m(\psi) \in \Omega$ and
  so, since  $\methods_\Delta(\tau) \supseteq \methods_\Delta(\tau_I)$
  we have that by rule {\sc m-func}, $\methods(t^\dagger)\supseteq
  \methods(u^\dagger)$ and we conclude
  by the FG implements  {\sc $\IMPLOP_I$} rule.

\end{proof}

\begin{lemma}
\label{lem:app_monodeclexists}
If $P\ok$ with $P = \ov{D} \prog d$ and $\vdash P \mapsto P^\dagger$
  with $P^\dagger = \ov{D^\dagger} \prog d^\dagger$, $P\yields
  \Omega$, $\emptyset \vdash \tau\ok$ and $\tau\in\Omega$ then if $\vdash \tau \mapsto t^\dagger$
  then $\type~t^\dagger~T^\dagger \in \ov{D^\dagger}$, for some $T^\dagger$.
\end{lemma}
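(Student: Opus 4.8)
The plan is to trace the instance $\tau$ through the program monomorphisation rule \tyrulename{m-program} and extract the declaration it induces. First I would observe that, since $\emptyset \vdash \tau \ok$, the type $\tau$ cannot be a type parameter: rule \tyrulename{t-param} requires the parameter to be bound in the type environment, which is empty here. Hence $\tau$ is a named type, say $\tau = t(\phi)$. By inversion on $\emptyset \vdash t(\phi) \ok$ (rule \tyrulename{t-named}) there is a declaration $(\type~t(\Phi)~T) \in \ov{D}$ for which $\eta = (\Phi \by_\emptyset \phi)$ is defined; a fortiori the plain substitution $\eta = (\Phi \by \phi)$ is defined, and $\phi$ matches the arity of $\Phi$. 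In particular $t(\hat\Phi)[\eta] = t(\phi) = \tau$.

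Next I would read off the name generated by the hypothesis $\vdash \tau \mapsto t^\dagger$. Since $\tau$ is closed, the type name-mapping rule \tyrulename{m-type} of Figure~\ref{fig:rcver-mono-subs} gives $t^\dagger = \monoid{\tau} = \monoid{t(\phi)}$. The key step is to show the declaration rule \tyrulename{m-type} of Figure~\ref{fig:rcver-mono-new} generates exactly this identifier. By inversion on $\vdash P \mapsto P^\dagger$ (rule \tyrulename{m-program}) we have $\Omega \vdash \ov{D \mapsto \calD}$ together with $\ov{D^\dagger} = \{\,\type~\dummytype~\struct~\br{}\,\} \cup \bigcup \ov{\calD}$, so in particular the declaration above is monomorphised as $\Omega \vdash (\type~t(\Phi)~T) \mapsto \calD_0$ for some $\calD_0$ contributing to $\bigcup \ov{\calD}$. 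Inspecting that rule, since $t(\phi) \in \Omega$ is one of the instances ranged over—with $\eta = (\Phi \by \phi)$ and $\mu = \set{m(\psi) \mid t(\phi).m(\psi) \in \Omega}$—the set $\calD_0$ contains an element $\type~t^\dagger~T^\dagger$ whose declared name is $\monoid{t(\hat\Phi)[\eta]} = \monoid{t(\phi)}$ (via \tyrulename{m-tformal} instantiated at $\eta$), matching the $t^\dagger$ from the previous paragraph, and whose body is any $T^\dagger$ with $\eta \stoup \mu \vdash T \mapsto T^\dagger$.

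Finally I would argue that such a $T^\dagger$ exists, i.e., that the type-literal monomorphisation relation $\eta \stoup \mu \vdash T \mapsto (-)$ is total: rules \tyrulename{m-struct} and \tyrulename{m-interface} are syntax-directed and appeal only to the name mappings of the component field types and method specifications, each of which is always defined for closed arguments. Combining these facts, $\type~t^\dagger~T^\dagger \in \calD_0 \subseteq \ov{D^\dagger}$, which is the required conclusion. The only genuine obstacle is the bookkeeping that the identifier produced inside the declaration rule coincides with the one from $\vdash \tau \mapsto t^\dagger$; this reduces to the single observation that $\eta$ substitutes $\hat\Phi$ by $\phi$, so that $t(\hat\Phi)[\eta] = t(\phi) = \tau$ and both identifiers equal $\monoid{t(\phi)}$.
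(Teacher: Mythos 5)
Your proposal is correct and matches the paper's proof, which simply states ``straightforward by definition of monomorphisation''; you have merely unfolded that one-liner into its constituent inversions (on \tyrulename{t-named}, \tyrulename{m-program}, and the declaration-level \tyrulename{m-type} rule) together with the key bookkeeping observation that $t(\hat\Phi)[\eta] = t(\phi)$ makes the two identifiers coincide. No gaps; the extra care about totality of $\eta \stoup \mu \vdash T \mapsto T^\dagger$ is a reasonable addition but not a departure from the intended argument.
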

\begin{proof}
Straightforward by definition of monomorphisation.
\end{proof}

\begin{lemma}[Monomorphisation preserves well-formedness of type declarations]
  \label{lem:app_monotdecl}
  If $P\ok$ with $P = \ov{D} \prog d$ and $\vdash P \mapsto P^\dagger$
  with $P^\dagger = \ov{D^\dagger} \prog d^\dagger$ then:
If $\type~t^\dagger~T^\dagger \in \ov{D^\dagger}$ then $\type
  ~t^\dagger~T^\dagger\ok$.

\end{lemma}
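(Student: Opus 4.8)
The plan is to proceed by analysing how the declaration $\type~t^\dagger~T^\dagger$ comes to lie in $\ov{D^\dagger}$. By rule \tyrulename{m-program}, every member of $\ov{D^\dagger}$ is either the dummy declaration $\type~\dummytype~\struct~\br{}$ or else belongs to some set $\calD$ produced by monomorphising a source declaration of $\ov{D}$. The dummy declaration is immediately well formed by the FG rules \tyrulename{t-type} and \tyrulename{t-struct}, since it has no fields. So I may assume $\type~t^\dagger~T^\dagger$ is generated by \tyrulename{m-type} from a source declaration $\type~t(\Phi)~T \in \ov{D}$, with $t(\phi) \in \Omega$, $\eta = (\Phi \by \phi)$, $\mu = \{\, m(\psi) \mid t(\phi).m(\psi) \in \Omega \,\}$, and $\eta \stoup \mu \vdash T \mapsto T^\dagger$. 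Since $P \ok$, inverting \tyrulename{t-prog} and then the FGG rule \tyrulename{t-type} gives $\Phi \vdash T \ok$. It then suffices to establish $T^\dagger \ok$ in FG, from which \tyrulename{t-type} yields the goal; note also that because $t(\phi)\in\Omega$ its actuals $\phi$ are closed, so $\eta$ is a substitution into closed types.

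I would then split on the shape of $T$. If $T = \struct~\br{\ov{f~\tau}}$, then by \tyrulename{m-struct} we have $T^\dagger = \struct~\br{\ov{f~t^\dagger}}$ with $\eta \vdash \ov{\tau \mapsto t^\dagger}$, and FG rule \tyrulename{t-struct} requires $\distinct(\ov{f})$ and $\ov{t^\dagger \ok}$. Distinctness of the field names is inherited from the FGG premise $\Phi \vdash \struct~\br{\ov{f~\tau}} \ok$, since monomorphisation leaves field names untouched. For $\ov{t^\dagger \ok}$, observe that $\fields(t(\phi)) = \ov{f~\tau[\eta]}$, so each field type $\tau[\eta]$ is added to $\Omega$ by the closure operator $\FExtensionD{\cdot}{\cdot}$, using $t(\phi) \in \Omega$ and that $\Omega$ is a fixed point of $G$. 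Each such $\tau[\eta]$ is a closed, well-formed FGG type (by $P \ok$ together with well-formedness of $t(\phi)$), so Lemma~\ref{lem:app_monodeclexists} provides a declaration of its monomorphisation $t^\dagger$ in $\ov{D^\dagger}$, whence $t^\dagger \ok$ by \tyrulename{t-named}.

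The interface case is where I expect the real work. Here $T = \interface~\br{\ov{S}}$ and by \tyrulename{m-interface} $T^\dagger = \interface~\br{\bigcup \ov{\calS}}$, each $\calS$ arising from \tyrulename{m-spec}. FG rule \tyrulename{t-interface} demands both that every specification in $\bigcup \ov{\calS}$ be well formed and that $\unique(\bigcup \ov{\calS})$ hold. Well-formedness of the individual specifications is handled as before: each concrete monomorphised signature $m^\dagger N^\dagger$ originates from an instance $t(\phi).m(\psi) \in \Omega$, and $\MExtensionD{\cdot}{\cdot}$ places all the substituted argument and result types of that signature into $\Omega$, so Lemma~\ref{lem:app_monodeclexists} declares their monomorphisations and \tyrulename{t-specification} applies; the dummy signatures $m^\ast()~\dummytype$ are trivially well formed since $\dummytype$ is declared and they carry no parameters.

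The main obstacle is the predicate $\unique(\bigcup \ov{\calS})$: I must show that no two specifications share a name while differing in signature. For the concrete signatures, a clash $\monoid{m_1(\psi_1[\eta])} = \monoid{m_2(\psi_2[\eta])}$ forces, by injectivity of the name map $\monoid{\cdot}$ on closed method instantiations, $m_1 = m_2$ and $\psi_1[\eta] = \psi_2[\eta]$; the source-level hypothesis $\unique(\ov{S})$ then equates the underlying FGG signatures, and applying the common type arguments makes the two monomorphisations identical. For the dummy signatures, the assumption that $\mkdummy(mM_1) = \mkdummy(mM_2)$ whenever $M_1 = M_2$ ensures that any two specifications collapsing to a common dummy name already share the constant signature $()~\dummytype$, while the same uniqueness of $\ov{S}$ rules out a dummy name colliding with a genuine monomorphised name. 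Discharging these name-collision obligations — tracking the interaction between the renaming $\monoid{\cdot}$, the hash $\mkdummy$, and source-level uniqueness — is the delicate step, whereas everything else reduces to a routine appeal to the closure operators and Lemma~\ref{lem:app_monodeclexists}.
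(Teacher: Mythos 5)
Your proof follows essentially the same route as the paper's: the same case analysis (dummy declaration vs.\ a \tyrulename{m-type}-generated one, then struct vs.\ interface literal), with field and signature types discharged by showing they land in $\Omega$ via the closure operators $\Fclo$ and $\Mclo$ and then appealing to Lemma~\ref{lem:app_monodeclexists} plus the FG well-formedness rules. The only notable difference is that you spell out the $\unique$ obligation for monomorphised interfaces (injectivity of the name map $\monoid{\cdot}$, the $\mkdummy$ convention, and closedness of the instances in $\mu$), which the paper compresses into the single phrase ``distinctness follows from $P\ok$''; your version is more careful there, but it is the same argument, not a different one.
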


\begin{proof}
  By inversion on monomorphisation we have that $P \yields \Omega$,
  $\Omega \vdash \ov{D\mapsto \mathcal{D}}$,
    $\ov{D^\dagger} = \{ \type~\dummytype~\struct~\br{} \}  \cup  \bigcup\ov{\calD}$
      and 
      $\emptyset \vdash d \mapsto d^\dagger$.

If $\type~t^\dagger~T^\dagger
 =\type~\dummytype~\struct~\br{}$ then
 $\type~\dummytype~\struct~\br{}\ok$ is immediate. Otherwise, we have
 that $\type~t^\dagger~T^\dagger$ is such that
$\type~t(\Phi)~T \in D$ and there exists some $t(\phi) \in \Omega$ where $\eta ; \mu \vdash T
\mapsto T^\dagger$, with $\eta = (\Phi \by \phi)$ and
$\mu = \set{ m(\psi) \mid t(\phi).m(\psi) \in \Omega }$.

If $T^\dagger$ is of the form $\struct\br{\ov{f~t^\dagger}}$, by
        inversion we have that $\eta \vdash \ov{\tau \mapsto
          t^\dagger}$. Since $t(\phi) \in \Omega$, by $\Fclo$ we have
        that $\ov{\tau[\eta]} \in \Omega$. We show that 
        can thus establish that $\type~t^\dagger~T^\dagger\ok$ by showing
        $\distinct(\ov{f})$ (since $\type~t(\Phi)~T \in D$ and $P\ok$)
        and by showing that $\ov{t^\dagger\ok}$ (since
        $\ov{\tau[\eta]} \in \Omega$, $\eta \vdash \ov{\tau \mapsto
          t^\dagger}$ and $\Omega \vdash \ov{D\mapsto \mathcal{D}}$).

 If $T^\dagger$ is of the form $\interface\br{\bigcup \ov{\calS}}$, we
 know that $T = \interface\br{S}$ and $\eta ; \mu \vdash \ov{S\mapsto
   \mathcal{S}}$. For method signatures in $\bigcup \ov{\calS}$
 arising from rule {\sc m-id}, well-formedness is immediate. For
 the rest, distinctness follows from $P\ok$ and well-formedness
 follows from $\Mclo$, $t(\phi).m(\psi) \in \Omega$ and $\Omega \vdash
 \ov{D\mapsto \mathcal{D}}$. 
 
\end{proof}

\begin{lemma}[Monomorphisation preserves typing of expressions]
  \label{lem:app_monoprestypingopenexpr}
  If $P \ok$ and $P \yields \Omega$ and 
  $\emptyset ; \ov{x : \tau} \vdash e : \sigma$ and
  $\emptyset ; \ov{x : \tau} \vdash e \yields \omega$ and
  $\omega \subset \Omega$,
  $\ov{\tau} \in \Omega$ and $\vdash \ov{\tau \mapsto t^\dagger}$
  and $\vdash e \mapsto e^\dagger$ and $\vdash \sigma \mapsto u^\dagger$
  then  $\emptyset ; \ov{x : t^\dagger} \vdash e : u^\dagger$
\end{lemma}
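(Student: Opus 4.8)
The plan is to proceed by induction on the derivation of the FGG typing judgement $\emptyset \stoup \ov{x : \tau} \vdash e : \sigma$ (Figure~\ref{fig:fgg-typing}), inverting the monomorphisation judgement $\vdash e \mapsto e^\dagger$, that is $\emptyset \vdash e \mapsto e^\dagger$ (Figure~\ref{fig:rcver-mono-subs}), in each case to read off the shape of $e^\dagger$ and then assembling a matching FG derivation of $\emptyset \stoup \ov{x : t^\dagger} \vdash e^\dagger : u^\dagger$. Throughout I would exploit two standing facts. First, since $\Delta = \emptyset$ every FGG type occurring in the derivation is \emph{closed}, so every monomorphisation judgement carries the empty substitution and $\monoid{\cdot}$ is a \emph{total function} of its (closed) argument; in particular two type-monomorphisation judgements with the same subject produce the same FG identifier, which is what lets a receiver's declared signature line up with the signature expected at a call site. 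Second, rules \textsc{m-struct} and \textsc{m-interface} preserve the struct/interface \emph{kind} of a type literal, so $\monoid{\tau_S}$ is always a structure name and $\monoid{\tau_I}$ an interface name. I also rely on a mild generalisation of Lemma~\ref{lem:app_inomegaexp} to a non-empty value environment (the rules of Figure~\ref{fig:fgg-omega-new} depend on $\Gamma$ only through the receiver type recorded by \textsc{I-call}, so the same induction applies): every subexpression's FGG type, every field type of a structure literal, every asserted type, and every receiver type together with its method instance $\tau_e.m(\psi)$ lies in $\omega$, hence in $\Omega$; moreover the instance set of each subexpression is contained in $\omega$, so the inductive hypothesis applies to subexpressions with $\omega' \subset \Omega$.

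First I would dispatch the structural cases. For \textsc{t-var} we have $e^\dagger = x$ and $\sigma$ is some $\tau_i$; functionality of $\monoid{\cdot}$ gives $u^\dagger = t_i^\dagger$, and FG \textsc{t-var} applies. For \textsc{t-literal}, inverting \textsc{m-value} gives $e^\dagger = t_S^\dagger\br{\ov{e^\dagger}}$ with $t_S^\dagger = \monoid{\tau_S}$; since $\tau_S \in \Omega$ (via \textsc{I-literal}), Lemma~\ref{lem:app_monodeclexists} yields $t_S^\dagger \ok$ and Lemma~\ref{lem:app_monotdecl} guarantees that its monomorphised structure declaration (built by \textsc{m-struct}) has exactly the field types $\monoid{\sigma_i}$ read off $\fields(\tau_S)$; the inductive hypothesis types each $e_i^\dagger$ at $\monoid{\tau_i}$, the field types $\ov{\sigma} \in \Omega$ via $\Fclo$, and Lemma~\ref{lem:mono_subt} transports the field subtyping $\ov{\tau \imp \sigma}$, so FG \textsc{t-literal} applies. \textsc{t-field} is analogous, reading the selected field's monomorphised type off the same declaration after applying the inductive hypothesis to the receiver.

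The three assertion rules are handled uniformly. The asserted $\tau$ lies in $\Omega$ (via \textsc{I-assert}), so $\monoid{\tau}$ is well-formed by Lemma~\ref{lem:app_monodeclexists}; the inductive hypothesis types the subexpression at the monomorphisation of its FGG type; kind-preservation tells us which of the FG assertion rules to use. For \textsc{t-assert$_I$} both types are interfaces, so FG \textsc{t-assert$_I$} applies directly; for \textsc{t-assert$_S$} the side condition $\tau_S \imp \bounds_\emptyset(\sigma_I) = \sigma_I$ is transported by Lemma~\ref{lem:mono_subt} (both types being in $\Omega$); for \textsc{t-stupid} the subexpression monomorphises to a structure type by kind-preservation, matching FG \textsc{t-stupid}.

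The hard part will be \textsc{t-call}. Inverting \textsc{m-call} gives $e^\dagger = e_0^\dagger.m^\dagger(\ov{e^\dagger})$ with $m^\dagger = \monoid{m(\psi)}$, and the FGG premises provide a receiver type $\tau_e$, a specification $(m(\Psi)(\ov{x~\sigma'})~\sigma') \in \methods_\emptyset(\tau_e)$, the method substitution $\eta_m = (\Psi \by \psi)$, the argument subtyping $\emptyset \vdash (\ov{\tau_a \imp \sigma'})[\eta_m]$, and overall type $\sigma = \sigma'[\eta_m]$. The inductive hypothesis types $e_0^\dagger$ at $\monoid{\tau_e}$ and each $e_i^\dagger$ at $\monoid{\tau_{a_i}}$. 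The crux is to show that the method set of $\monoid{\tau_e}$ contains the specification $m^\dagger(\ov{x~\monoid{\sigma'[\eta_m]}})~\monoid{\sigma'[\eta_m]}$, i.e. that declaration generation actually emits this signature. Here I would use that $\tau_e \in \Omega$ and $\tau_e.m(\psi) \in \Omega$ (from \textsc{I-call}), and split on whether $\tau_e$ is an interface or a structure: in the interface case $m(\psi)$ falls into the set $\mu$ driving \textsc{m-type}/\textsc{m-interface}/\textsc{m-spec}, while in the structure case $\tau_e.m(\psi) \in \Omega$ triggers \textsc{m-func}. In either case the substitution used instantiates both the receiver formals (by $\phi$) and the method formals (by $\psi$); since $\methods_\emptyset(\tau_e)$ has already applied the receiver instantiation to $\sigma'$, its effect on $\sigma'$ coincides with $\sigma'[\eta_m]$, so functionality of $\monoid{\cdot}$ makes the emitted parameter and result types literally equal to $\monoid{\sigma'[\eta_m]}$, which in turn equals the $u^\dagger$ of $\vdash \sigma \mapsto u^\dagger$ and the argument types from the inductive hypothesis. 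The closure $\sigma'_i[\eta_m], \sigma'[\eta_m] \in \Omega$ comes from $\Mclo$ applied to $\tau_e.m(\psi) \in \Omega$, and the FG argument subtyping $\ov{\monoid{\tau_a} \imp \monoid{\sigma'[\eta_m]}}$ follows from Lemma~\ref{lem:mono_subt}. Assembling these via FG \textsc{t-call} yields result type $\monoid{\sigma'[\eta_m]} = u^\dagger$, completing the case and the induction. The main obstacle, as indicated, is precisely this reconciliation of the signature synthesised by \textsc{m-func}/\textsc{m-spec} from $\Omega$ with the signature demanded at the call site; the rest is routine once the relevant instances are known to inhabit $\Omega$.
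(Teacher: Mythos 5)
Your proposal is correct and follows essentially the same route as the paper's proof: induction on the FGG typing derivation, using the same auxiliary lemmas (instances landing in $\Omega$ via the collection rules and $\Fclo$/$\Mclo$, declaration existence and well-formedness, and Lemma~\ref{lem:mono_subt} for transporting $\imp$), and concluding each case with the corresponding FG typing rule. The points you flag explicitly --- the open-environment version of the instance-containment lemma (which is the paper's Lemma~\ref{lem:app_exptypeinomega2}), functionality of $\monoid{\cdot}$ on closed types, and the interface-vs-structure split reconciling the call-site signature with what \textsc{m-spec}/\textsc{m-func} emit --- are steps the paper's \textsc{t-call} case leaves largely implicit, so your write-up is if anything slightly more detailed than the original.
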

\begin{proof}
  By induction on the derivation of
  $\emptyset ; \ov{x : \tau} \vdash e : \sigma$ with case analysis on
  the last rule and a further case analysis on monomorphisation and instance
  generation. In the sequel we write $\Gamma^\dagger$ for the FG typing
  context $\ov{x : t^\dagger}$ and $\Gamma$ for the corresponding FGG
  typing context.

  \begin{description}

 \item[Case:] $\RULENAME{t-var}$

Since $(x:\sigma) \in \Gamma$ then $(x:u^\dagger)$ in $\Gamma^\dagger$
and thus we conclude by FG rule $\RULENAME{t-var}$.

\item[Case:] $\RULENAME{t-call}$ ($e$ is $e_0.m(\psi)(\ov{e})$) 

  By inversion we have that:
  $\emptyset ; \Gamma \vdash e_0 : \tau_0$, $\emptyset ; \Gamma
  \vdash \ov{e : \tau'}$, $(m(\Psi)(\ov{x~\sigma})~\sigma) \in
  \methods_\emptyset(\tau_0)$ and $\eta = (\Psi \by_\emptyset \psi)$
  $\emptyset \vdash (\ov{\tau' \imp \sigma})[\eta]$.
 Since $\omega \subset \Omega$ and $\ov{\tau} \in \Omega$ it follows
by Lemmas~\ref{lem:app_exptypeinomega} and~\ref{lem:app_inomegaexp}
that $\tau_0.m(\psi) \in \Omega$ and $\tau_0 \in \Omega$. Thus, since
$(m(\Psi)(\ov{x~\sigma})~\sigma) \in \methods_\emptyset(\tau_0)$ then
by $\Mclo$ it follows that $\ov{\sigma [\eta]} \in \Omega$ and $\sigma[\eta] \in \Omega$.
Let
$\emptyset \vdash \tau_0 \mapsto t_0^\dagger$, 
$\emptyset \vdash \ov{\tau' \mapsto t'^\dagger}$,
$\vdash e_0 \mapsto e_0^\dagger$ and $\vdash \ov{e\mapsto e^\dagger}$,
for some $t_0^\dagger, t^\dagger, e_0^\dagger$ and $\ov{e^\dagger}$.
By i.h. we have that
$\Gamma^\dagger \vdash e_0^\dagger : t_0^\dagger$ and $\Gamma^\dagger
\vdash \ov{e^\dagger : t'^\dagger}$. Since $\tau_0.m(\psi) \in \Omega$
then $(m^\dagger(\ov{x~u'^\dagger})~u^\dagger) \in
\methods(t_0^\dagger)$,
where $\emptyset\vdash \ov{\sigma[\eta] \mapsto u'^\dagger}$ and $\emptyset\vdash
\sigma[\eta] \mapsto u^\dagger$ and $\vdash m \mapsto m^\dagger$.
From Lemma~\ref{lem:app_exptypeinomega2} it follows that $\ov{\tau'}
\in \Omega$ and so by Lemma~\ref{lem:mono_subt}, we know that
$\ov{t' \imp u'^\dagger}$.
By FG typing rule {\sc t-call} it follows that $\Gamma^\dagger \vdash
e_0^\dagger.m^\dagger(\ov{e^\dagger}) : u^\dagger$ and so we conclude
this case.

\item[Case:] $\RULENAME{t-literal}$ ($e$ is $\tau_S\br{\ov{e}}$)

  Let $\vdash \tau_S \mapsto t^\dagger$
By inversion we have $\emptyset \vdash \tau_S \ok$, 
$\emptyset \stoup \Gamma \vdash \ov{e : \tau}$,
 $(\ov{f~\sigma}) = \fields(\tau_S)$
and 
$\emptyset \vdash \ov{\tau \imp \sigma}$.
Let $\vdash \ov{e\mapsto e^\dagger}$ and $\vdash \ov{\tau \mapsto
  t'^\dagger}$, for some $e^\dagger$ and $t'^\dagger$. By i.h. it
follows
that $\Gamma^\dagger \vdash \ov{e^\dagger : t'^\dagger}$. By
Lemma~\ref{lem:app_exptypeinomega2} we have that $\tau_S \in \Omega$,
$\ov{\sigma} \in \Omega$ and $\ov{\tau}\in\Omega$.
Let $\vdash \ov{\sigma \mapsto u^\dagger}$, for some
$\ov{u^\dagger}$. We have that $(\ov{f~u^\dagger}) =
\fields(t^\dagger)$ by Lemmas~\ref{lem:app_monodeclexists}
and~\ref{lem:app_monotdecl}. 
By Lemma~\ref{lem:mono_subt} it follows that $\ov{t'^\dagger \imp
  u^\dagger}$ and so by FG typing rule $\RULENAME{t-literal}$ we
conclude this case.

\item[Case:] $\RULENAME{t-field}$ ($e$ is $e_0.f_i$)

  Let $\vdash \tau_i \mapsto t_i^\dagger$. By inversion we have
  $\Gamma \vdash e : \tau_S$ and $(\ov{f~\tau}) = \fields(\tau_S)$.
  By Lemma~\ref{lem:app_exptypeinomega2} we have that $\tau_S \in
  \Omega$ and by $\Fclo$ we have that $\ov{\tau}\in\Omega$.
  Let $\vdash \tau_S \mapsto t_S^\dagger$, $\vdash \ov{\tau \mapsto
    t^\dagger}$ and $e \mapsto e^\dagger$ for some $t_S^\dagger,
  \ov{t^\dagger}$ and $e^\dagger$. By i.h. it follows that
  $\Gamma^\dagger \vdash e^\dagger : t_S^\dagger$. Since $\tau_S \in
  \Omega$ and by Lemmas~\ref{lem:app_monodeclexists}
  and~\ref{lem:app_monotdecl} we know that $(\ov{f~t^\dagger}) = \fields(t_S)$ and we
  conclude by FG typing rule $\RULENAME{t-field}$.

\item[Case:] $\RULENAME{t-assert}_I$ ($e$ is $e_0.(\tau_J)$)

  Let $\vdash \tau_J \mapsto t_J^\dagger$ and $e_0 \mapsto
  e_0^\dagger$. By inversion we have that
  $\emptyset \vdash \tau_J\ok$ and $\Gamma \vdash e : \sigma_J$ 
We know that $\tau_J \in \Omega$ and by
Lemma~\ref{lem:app_exptypeinomega2} we have that $\sigma_J \in
\Omega$.  Let $\vdash \sigma_j \mapsto u^\dagger$, by i.h. we have
that $\Gamma^\dagger\vdash e_0^\dagger : u^\dagger$. Noting $\tau_J \in
\Omega$, by Lemmas~\ref{lem:app_monodeclexists} and~\ref{lem:app_monotdecl} we know $t_J^\dagger\ok$ and thus by FG typing rule $\RULENAME{t-assert}_I$
we have $\Gamma^\dagger \vdash e_0^\dagger.(t_J^\dagger) :
t_J^\dagger$, which concludes this case.

\item[Case:] $\RULENAME{t-assert}_S$ ($e$ is $e_0.(\tau_S)$)

  Let $\vdash \tau_S \mapsto t_S^\dagger$ and
  $\vdash e_0 \mapsto e_0^\dagger$. By inversion we know that
  $\emptyset \vdash \tau_S\ok$,
  $\emptyset ; \Gamma \vdash e_0 : \sigma_J$ and
  $\emptyset \vdash \tau_S \imp \bounds_\emptyset(\sigma_J)$.  We know that
  $\tau_S \in \Omega$ and by Lemma~\ref{lem:app_exptypeinomega2} we
  have that $\sigma_J \in \Omega$. Let
  $\vdash \sigma_J \mapsto u^\dagger$, by i.h. we have that
  $\Gamma^\dagger\vdash e_0^\dagger : u^\dagger$.
  Since $\tau_S\in \Omega$ and $\emptyset \vdash \tau_S\ok$, by
  Lemmas~\ref{lem:app_monodeclexists} and~\ref{lem:app_monotdecl}
  we have that $t_S^\dagger\ok$. Since $\sigma_J$ is closed, by
  definition $\bounds_\emptyset(\sigma_J) = \sigma_J$ and so
  $\emptyset \vdash \tau_S
  \imp \sigma_J$.  By Lemma~\ref{lem:mono_subt}, $t_S \imp u^\dagger$
  and by FG typing rule $\RULENAME{t-assert}_S$ we conclude
  that $\Gamma^\dagger \vdash e_0^\dagger.(t_S^\dagger) :
  t_S^\dagger$, which concludes this case.

 \item[Case:] $\RULENAME{t-stupid}$ ($e$ is $e_0.(\tau)$)

Let $\vdash \tau \mapsto t^\dagger$ and $\vdash e_0 \mapsto e_0^\dagger$. By inversion we have that
$\emptyset \vdash \tau \ok$ and $\emptyset;\Gamma \vdash e_0 :
\sigma_S$. We know that $\tau \in \Omega$ and by
Lemma~\ref{lem:app_exptypeinomega2} we have that $\sigma_S \in
\Omega$. Let $\vdash \sigma_S \mapsto u^\dagger$. By i.h. it follows that
$\Gamma^\dagger \vdash e_0^\dagger : u^\dagger$. Since  $\emptyset
\vdash \tau \ok$ and $\tau \in \Omega$ then
Lemmas~\ref{lem:app_monodeclexists} and~\ref{lem:app_monotdecl} give us $t^\dagger\ok$ and so by
FG typing rule $\RULENAME{t-stupid}$ we have that $\Gamma^\dagger
\vdash e_0^\dagger.(t^\dagger) : t^\dagger$, which concludes this
case. 
  
  \end{description}

\end{proof}

\begin{lemma}[Monomorphisation preserves well-formedness of method declarations]
  \label{lem:app_monomethdecls}
  If $P\ok$ with $P = \ov{D} \prog d$ and $\vdash P \mapsto P^\dagger$
  with $P^\dagger = \ov{D^\dagger} \prog d^\dagger$ then:
If $\func~(x~t_S^\dagger)~m^\dagger
  N^\dagger~\br{\return~e^\dagger} \in \ov{D^\dagger}$ then
  $\func~(x~t_S^\dagger)~m^\dagger N^\dagger~\br{\return~e^\dagger} \ok$. 
\end{lemma}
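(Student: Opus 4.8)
The plan is to invert the monomorphisation derivation and then verify, by case analysis, each premise of the FG rule \textsc{t-func}. First I would apply inversion to \tyrulename{m-program} to obtain $P \yields \Omega$, $\Omega \vdash \ov{D \mapsto \calD}$, and $\ov{D^\dagger} = \{\type~\dummytype~\struct~\br{}\} \cup \bigcup \ov{\calD}$. The given method declaration lies in some $\calD \cup \calD'$ produced by \tyrulename{m-func} from an FGG declaration $\func~(x~t_S(\Phi))~m(\Psi)N~\br{\return~e}$ (with $N = (\ov{x~\tau})~\sigma$), and since $P \ok$ guarantees this FGG declaration is itself well formed, I split on whether the monomorphised declaration comes from $\calD$ (a genuine instance) or $\calD'$ (a dummy).

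The dummy case is immediate: the signature $S^\dagger$ has no parameters and return type $\dummytype$, and the body is $\return~\dummytype\br{}$. The receiver type $t_S^\dagger$ is well formed because $t_S(\phi) \in \Omega$, so its declaration is generated (Lemmas~\ref{lem:app_monodeclexists} and~\ref{lem:app_monotdecl}); $\dummytype$ is well formed because its declaration is always added; the body has type $\dummytype$, which implements $\dummytype$ by reflexivity; and the remaining premises of \textsc{t-func} (distinctness, argument types) are vacuous. The genuine case is the substance. Here $t_S(\phi).m(\psi) \in \Omega$ and $\theta = (\Phi \by \phi, \Psi \by \psi)$, with $\theta \vdash t_S(\Phi) \mapsto t_S^\dagger$, $\theta \vdash N \mapsto N^\dagger$ and $\theta \vdash e \mapsto e^\dagger$. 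By Lemma~\ref{lem:recv-meth-omega}, $t_S(\phi) \in \Omega$, so $t_S^\dagger \ok$ (Lemmas~\ref{lem:app_monodeclexists},~\ref{lem:app_monotdecl}); the component types of $N^\dagger$ are well formed because $\Mclo$ places the closed types $\ov{\tau[\theta]}, \sigma[\theta]$ in $\Omega$, so their declarations exist and are well formed; and distinctness of the bound variables is inherited from the FGG declaration.

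For the body I would first invert the FGG rule \textsc{t-func} (available since $P \ok$) to obtain $\Phi, \Psi \stoup x : t_S(\hat\Phi), \ov{x : \tau} \vdash e : \tau'$ with $\Phi, \Psi \vdash \tau' \imp \sigma$ for some $\tau'$. Applying the type-substitution lemma Lemma~\ref{lem:fgg_tsubsttyp} with $\theta$ closes this to $\emptyset \stoup x : t_S(\hat\Phi)[\theta], \ov{x : \tau[\theta]} \vdash e[\theta] : \tau'[\theta]$, and Lemma~\ref{lem:fgg_tsubstsub} gives $\emptyset \vdash \tau'[\theta] \imp \sigma[\theta]$. By Lemma~\ref{lem:app_substcommute}, $\vdash e[\theta] \mapsto e^\dagger$, so I can feed this to Lemma~\ref{lem:app_monoprestypingopenexpr}; its side condition that the body's collected instance set be contained in $\Omega$ is exactly what $\Sclo$ guarantees, instantiated at $\tau = \tau_S = t_S(\phi)$ using reflexivity of $\imp$ and $\mbody(t_S(\phi).m(\psi)) = (x : t_S(\phi), \ov{x : \tau[\theta]}).e[\theta]$. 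This yields the FG typing of $e^\dagger$ at the monomorphised body type, and Lemma~\ref{lem:mono_subt} transports $\tau'[\theta] \imp \sigma[\theta]$ into the required FG implements judgement, discharging the last premise of \textsc{t-func}.

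The main obstacle is this body-typing premise, whose delicate point is that the closing substitution $\theta$ must respect the receiver's possibly tighter, covariant bounds $\Phi$, as Lemma~\ref{lem:fgg_tsubsttyp} demands. This holds because $t_S(\phi).m(\psi)$ enters $\Omega$ only through $\Sclo$, from a fact $\emptyset \vdash t_S(\phi) \imp \tau$ with $m \in \methods_\emptyset(\tau)$; membership then forces $m \in \methods_\emptyset(t_S(\phi))$, and since $\methods$ is bound-checked via $(\Phi \by_\emptyset \phi)$, the receiver actuals $\phi$ must indeed respect $\Phi$. Aligning the environments and instance sets threaded through $\mbody$, $\Sclo$, and Lemma~\ref{lem:app_monoprestypingopenexpr} is the remaining bookkeeping; once these are lined up, every premise of \textsc{t-func} follows mechanically from the cited lemmas.
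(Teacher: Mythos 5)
Your proof is correct and follows essentially the same route as the paper's: inversion on \tyrulename{m-program} and \tyrulename{m-func}, the dummy/genuine case split, well-formedness of the receiver and signature types via Lemmas~\ref{lem:recv-meth-omega}, \ref{lem:app_monodeclexists}, \ref{lem:app_monotdecl} and $\Mclo$, and body typing via Lemmas~\ref{lem:fgg_tsubsttyp}, \ref{lem:fgg_tsubstsub}, \ref{lem:app_substcommute}, \ref{lem:app_monoprestypingopenexpr} and \ref{lem:mono_subt}. If anything, you are more explicit than the paper, which applies Lemma~\ref{lem:fgg_tsubsttyp} and Lemma~\ref{lem:app_monoprestypingopenexpr} without spelling out their bound-respecting and $\omega \subset \Omega$ premises; your argument via $\Sclo$ and the bounds-checked substitution inside $\methods$ discharges exactly those side conditions.
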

\begin{proof}
  By inversion on monomorphisation we have that $P \yields \Omega$,
  $\Omega \vdash \ov{D\mapsto \mathcal{D}}$,
    $\ov{D^\dagger} = \{ \type~\dummytype~\struct~\br{} \}  \cup  \bigcup\ov{\calD}$
      and 
      $\emptyset \vdash d \mapsto d^\dagger$.

   Since $\func~(x~t_S^\dagger)~m^\dagger
  N^\dagger~\br{\return~e^\dagger} \in \ov{D^\dagger}$ then by
  inversion we know that the method definition arises from the first
  premise of {\sc m-func} or the second. If the latter, then
  $m^\dagger N^\dagger = S^\dagger$ for some $S^\dagger$ and
  $e^\dagger = \dummytype\br{}$, with
  $\func~(x~t_S(\Phi))~m(\Psi)N~\br{\return~e} \in D$, 
  $t_S(\phi) \in \Omega$, $\eta \vdash t_S(\Phi) \mapsto t_S^\dagger$
  and $\eta \vdash m(\Psi)N \mapsto S^\dagger$, with $\eta = (\Phi
  \by \phi)$. We thus have that $S^\dagger = m^\ast()~{\dummytype}$ by
  inversion on $\eta \vdash m(\Psi)N \mapsto S^\dagger$. Since
  $t_S(\phi) \in \Omega$ then $t_S^\dagger$ is declared in $D^\dagger$
  and well-formedness follows immediately.

  If the former then we know that:
  $\func~(x~t_S(\Phi))~m(\Psi)N~\br{\return~e} \in D$, 
$t_S(\phi).m(\psi) \in \Omega$, $\theta \vdash t_S(\Phi) \mapsto
t_S^\dagger$, $\theta \vdash m(\Psi) \mapsto m^\dagger$,
$\theta \vdash N \mapsto N^\dagger$ and $\theta \vdash  e \mapsto
e^\dagger$, with $\theta = (\Phi \by \phi, \Psi \by \psi)$. We know
that $N = (\ov{x~\tau})~\sigma$, $N^\dagger = (\ov{x~t^\dagger})~u^\dagger$, for some
$\ov{t^\dagger}$ and $u^\dagger$, with $\theta \vdash \ov{\tau \mapsto
  t^\dagger}$ and $\theta \vdash \sigma \mapsto u^\dagger$.

We must show that $t_S^\dagger\ok$, $\ov{t^\dagger}\ok$,
$u^\dagger\ok$, $\distinct(x,\ov{x})$ and
$x : t_S^\dagger \comma \ov{x : t^\dagger} \vdash e^\dagger : t$ with $t \imp u^\dagger$.
Since $P\ok$ and $\func~(x~t_S(\Phi))~m(\Psi)N~\br{\return~e} \in D$
then we have that $\distinct(x,\ov{x})$. Since $t_S(\phi).m(\psi) \in
\Omega$ then by Lemma~\ref{lem:recv-meth-omega}, $t_S(\phi) \in
\Omega$, and so by Lemma~\ref{lem:app_monotdecl} it follows that $t_S\ok$.
 Similarly, by $\Mclo$ we know that $\ov{\tau[\theta]}\in\Omega$ and
 $\sigma[\theta] \in\Omega$ and thus by Lemma~\ref{lem:app_monotdecl}  
 we have that $\ov{t^\dagger}\ok$ and $u^\dagger\ok$.
 
 Since $\func~(x~t_S(\Phi))~m(\Psi)N~\br{\return~e} \in D$ we have
 that $\Delta; x{:}t_S(\hat{\Phi}), \ov{x : \tau} \vdash e : \tau$
 with $\Delta \vdash \tau \IMPLOP \sigma$. By
 Lemma~\ref{lem:fgg_tsubstsub} we know that $\emptyset \vdash
 \tau[\theta] \IMPLOP \sigma[\theta]$.
By Lemma~\ref{lem:fgg_tsubsttyp} we have that 
 $\emptyset ; x{:}t_S(\phi), \ov{x : \tau[\theta]} \vdash e[\theta] :
 \tau[\theta]$. Since $\theta \vdash e \mapsto e^\dagger$ then $\vdash
 e[\theta] \mapsto e^\dagger$. Similarly, $\theta \vdash t_S(\Phi)
 \mapsto t_S^\dagger$ implies that $\vdash t_S(\phi) \mapsto
 t_S^\dagger$; $\theta \vdash \ov{\tau \mapsto
  t^\dagger}$ implies $\vdash \ov{\tau[\theta] \mapsto
  t^\dagger}$; and $\theta \vdash \sigma \mapsto u^\dagger$ implies
$\vdash \sigma[\theta] \mapsto u^\dagger$
(Lemma~\ref{lem:app_substcommute}). By
Lemma~\ref{lem:app_exptypeinomega} we know that $\tau[\theta] \in
\Omega$. Let $\vdash \tau[\theta]\mapsto t$, for some $t$, then
 by Lemma~\ref{lem:app_monoprestypingopenexpr} we know that 
 $x : t_S^\dagger \comma \ov{x : t^\dagger} \vdash e^\dagger : t$ and
 by Lemma~\ref{lem:mono_subt} it follows that $t\IMPLOP u^\dagger$,
 which concludes the proof.
\end{proof}

\begin{theorem}[Monomorphisation preserves program well-formedness]
  If $P\ok$ and $\vdash P \mapsto P^\dagger$
then $P^\dagger\ok$.
\end{theorem}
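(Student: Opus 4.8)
The plan is to derive $P^\dagger \ok$ by a single application of the FG program-formation rule \tyrulename{t-prog} to $P^\dagger$, whose four premises are: $\distinct(\tdecls(\ov{D^\dagger}))$, $\distinct(\mdecls(\ov{D^\dagger}))$, $\ov{D^\dagger \ok}$, and typeability of the translated body in the empty environment. First I would invert the two hypotheses. Inverting $P \ok$ via \tyrulename{t-prog} gives $\distinct(\tdecls(\ov{D}))$, $\distinct(\mdecls(\ov{D}))$, $\ov{D \ok}$, and $\emptyset \stoup \emptyset \vdash e : \tau$ for some $\tau$. Inverting $\vdash P \mapsto P^\dagger$ via \tyrulename{m-program} gives $P \yields \Omega$, $\Omega \vdash \ov{D \mapsto \calD}$, $\ov{D^\dagger} = \{\type~\dummytype~\struct~\br{}\} \cup \bigcup \ov{\calD}$, and $\emptyset \vdash e \mapsto e^\dagger$.

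The well-formedness and body-typing premises are then almost immediate from the lemmas just developed. Every type declaration in $\ov{D^\dagger}$ is either the added $\type~\dummytype~\struct~\br{}$, which is trivially well formed, or arises from \tyrulename{m-type} and is well formed by Lemma~\ref{lem:app_monotdecl}. Every method declaration in $\ov{D^\dagger}$ arises from \tyrulename{m-func}, as either a genuine monomorphised instance or a dummy, and is well formed by Lemma~\ref{lem:app_monomethdecls}. For the body, since $\emptyset \stoup \emptyset \vdash e : \tau$ uses empty environments the type $\tau$ is closed, so \tyrulename{m-type} yields $\vdash \tau \mapsto \monoid{\tau}$; taking $\emptyset \vdash e \yields \omega$ and using that $G$ is inflationary (i.e. $G_{\emptyset}(\omega) \supseteq \omega$, hence $\omega \subseteq G_{\emptyset}^{n}(\omega)$ for all $n$) gives $\omega \subseteq \Omega$. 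Lemma~\ref{lem:app_monoprestypingopenexpr}, instantiated with an empty value context so that its side conditions on variable types are vacuous, then delivers $\emptyset \vdash e^\dagger : \monoid{\tau}$, discharging the fourth premise.

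The two distinctness premises are where the real work lies, and I expect these to be the main obstacle. For type names, the map $\eta \vdash t(\Phi) \mapsto t^\dagger$ produces $t^\dagger = \monoid{t(\phi)}$ for each instance $t(\phi) \in \Omega$; since $\Omega$ is a set and $\monoid{\cdot}$ is injective on closed types, distinct instances yield distinct names, and the fresh $\dummytype$ collides with none of them, giving $\distinct(\tdecls(\ov{D^\dagger}))$. For receiver/method pairs the argument is analogous but subtler, because \tyrulename{m-func} emits both genuine methods (with name $\monoid{m(\psi)}$) and, through \tyrulename{m-id}, dummy methods named $\mkdummy(mM[\eta])$. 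Here I would combine source distinctness $\distinct(\mdecls(\ov{D}))$ with injectivity of the encodings to separate genuine instances, and invoke the assumption that $\mkdummy$ is constant exactly on signature-equivalence classes so that dummies for distinct source specifications get distinct names while dummies for the same specification are \emph{intentionally} identified and therefore collapse within the set $\bigcup \ov{\calD}$ rather than duplicate. Reconciling this deliberate merging of dummy methods with the no-duplication requirement --- in particular checking that the hash namespace of dummy names and the instantiation namespace of genuine names do not overlap --- is the crux; once it is settled, all four premises of \tyrulename{t-prog} hold and $P^\dagger \ok$ follows.
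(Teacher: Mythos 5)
Your proposal is correct and takes essentially the same route as the paper's proof: invert \tyrulename{t-prog} and \tyrulename{m-program}, discharge well-formedness of the translated declarations via Lemmas~\ref{lem:app_monotdecl} and~\ref{lem:app_monomethdecls}, and type the translated body via Lemma~\ref{lem:app_monoprestypingopenexpr} instantiated with empty environments. The only difference is one of detail rather than approach: you elaborate the distinctness premises (injectivity of $\monoid{\cdot}$ on closed instances, the $\mkdummy$ hash respecting signature-equivalence classes, and disjointness of the dummy and genuine name spaces) and the inflationary property of $G$ yielding $\omega \subseteq \Omega$, points the paper compresses into the remark that distinctness ``follows straightforwardly from the definitions of monomorphisation for names.''
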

\begin{proof}
  Since $P = \ov{D} \prog d$ we have that $P^\dagger = \ov{D^\dagger} \prog d^\dagger$, with $P
  \yields \Omega$, for some $\Omega$. Since $P\ok$ we know that
  $\distinct(\tdecls(\ov{D}))$ and $\distinct(\mdecls(\ov{D}))$.
  By Lemmas~\ref{lem:app_monomethdecls} and~\ref{lem:app_monotdecl} it
  follows that $\ov{D^\dagger\ok}$. We note that
that   $\distinct(\tdecls(\ov{D}^\dagger))$ and
$\distinct(\mdecls(\ov{D}^\dagger))$ follows straightforwardly from
the definitions of monomorphisation for names.
  Since $\emptyset ;\emptyset \vdash d : \tau$,
  for some $\tau$, then by Lemma~\ref{lem:app_monoprestypingopenexpr}
  we have that $\emptyset ;\emptyset \vdash d^\dagger : t$, with
  $\vdash \tau \mapsto t$, and so $P^\dagger\ok$.
 \end{proof}

\section{Proofs of Theorem \ref{thm:main:correspondence}}
\label{app:op}
\begin{theorem}[Monomorphisation reflects subtyping]
  \label{thm:mono_refl_subt}
  Let $P\ok$ with $P \yields \Omega$, $\Delta \vdash \tau_1,\tau_2 \ok$,
  $t^\dagger_1 \ok$, and $t^\dagger_2 \ok$. Let $\eta = (\Delta
  \by_\Delta\psi)$, for some $\psi$, with $\tau_1[\eta] \in \Omega$
  and $\tau_2[\eta] \in \Omega$.
If $\IMPLS{t_1^\dagger}{t_2^\dagger}$ with
$\eta \vdash \tau_1 \mapsto t_1^\dagger$ and
$\eta \vdash \tau_2 \mapsto t_2^\dagger$ 
then
$\IMPLS{\tau_1}{\tau_2}$.
\end{theorem}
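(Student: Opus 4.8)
The plan is to prove the statement directly, by inverting the FG subtyping derivation and reading the monomorphisation rules backwards. First I would record the shape of the hypotheses: by inversion on \tyrulename{m-type} we have $t_1^\dagger = \monoid{\tau_1[\eta]}$ and $t_2^\dagger = \monoid{\tau_2[\eta]}$, where $\tau_1[\eta]$ and $\tau_2[\eta]$ are \emph{closed} named types (the parameters of $\Delta$ are eliminated by $\eta$) lying in $\Omega$. I would use two basic properties of the identifier map: it is injective on closed types (its defining property, distinct closed types receiving distinct identifiers), and it sends structure types to FG structure names and interface types to FG interface names. The hypotheses $t_1^\dagger \ok$ and $t_2^\dagger \ok$, together with Theorem~\ref{thm:main:monotsound}, ensure both are declared in $P^\dagger$, so I may inspect the declarations monomorphisation generates for them. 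The proof then proceeds by case analysis on the last rule of $\IMPLS{t_1^\dagger}{t_2^\dagger}$, equivalently on whether $\tau_2[\eta]$ is a structure or an interface type.

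In the structure case (FG rule \tyrulename{<:$_S$}) we have $t_1^\dagger = t_2^\dagger$, and injectivity of $\monoid{\cdot}$ gives $\tau_1[\eta] = \tau_2[\eta]$, which is a structure type. If $\tau_2$ is a named structure type this already forces $\tau_1 = \tau_2$ and $\Delta \vdash \tau_1 \imp \tau_2$ holds by the FGG rule \tyrulename{<:$_S$}; if $\tau_2$ is a type parameter then, using that $\eta$ keeps distinct parameters distinct, $\tau_1$ must be the same parameter and $\Delta \vdash \tau_1 \imp \tau_2$ holds by \tyrulename{<:-param}. These subcases are where I expect the secondary difficulty to live, since recovering $\tau_1 = \tau_2$ from the equality of monomorphised forms is exactly the point where injectivity of $\eta$ on the parameters occurring in $\tau_1,\tau_2$ is needed (and is automatic in the closed regime $\Delta = \emptyset$ in which the theorem is invoked inside Theorem~\ref{thm:main:correspondence}).

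The substantive case is the interface case (FG rule \tyrulename{<:$_I$}), where $\tau_2[\eta]$ is an interface and FG subtyping yields $\methods(t_1^\dagger) \supseteq \methods(t_2^\dagger)$; the goal is $\methodsofD{\Delta}{\tau_1} \supseteq \methodsofD{\Delta}{\tau_2}$, whence $\Delta \vdash \tau_1 \imp \tau_2$ by the FGG rule \tyrulename{<:$_I$}. If $\tau_1$ is itself a parameter $\alpha$, method containment is routed through its bound via $\methodsofD{\Delta}{\alpha} = \methodsofD{\Delta}{\bounds_\Delta(\alpha)}$, and since $\eta$ respects bounds the name $t_1^\dagger = \monoid{\alpha[\eta]}$ carries at least the bound's methods, mirroring Lemma~\ref{lem:mono_subt}. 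For each source specification $mM \in \methodsofD{\Delta}{\tau_2}$, rules \tyrulename{m-spec} and \tyrulename{m-id} place a dummy entry $m^\ast()~\dummytype$ with $m^\ast = \mkdummy(mM[\eta'])$ into the declaration of $t_2^\dagger$, so this dummy belongs to $\methods(t_2^\dagger) \subseteq \methods(t_1^\dagger)$. Inspecting the declaration generated for $t_1^\dagger$ — by the $\calD'$ clause of \tyrulename{m-func} when $\tau_1[\eta]$ is a structure, or by \tyrulename{m-spec} when it is an interface — its dummy entries are precisely the $\mkdummy$ images of $\tau_1$'s specifications instantiated by $\eta$. Using that $\mkdummy(mM_1) = \mkdummy(mM_2)$ iff $M_1 = M_2$ (the signature equality of $\unique$ and $\imp$), the membership of $m^\ast$ in $\methods(t_1^\dagger)$ exhibits a specification of $\tau_1$ whose signature coincides with that of $mM$, i.e.\ $mM \in \methodsofD{\Delta}{\tau_1}$; ranging over all specifications of $\tau_2$ yields the inclusion.

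I expect the interface case to be the main obstacle, for two reasons. First, the dummy names in $t_1^\dagger$ and $t_2^\dagger$ are computed \emph{after} applying the respective instantiating substitutions, so the equality of FG names $m^\ast$ must be pulled back to an equality of FGG signatures modulo $\eta$; I would discharge this by invoking Lemma~\ref{lem:fgg_tsubstsub} and the fact that $\methods$ commutes with instantiation, so that a match among closed specifications lifts to the open method sets. Second, the argument genuinely requires $\mkdummy$ to be a \emph{faithful} encoding of a signature's $\alpha$-class (distinct classes giving distinct names, the converse of the well-definedness assumption used for Lemma~\ref{lem:mono_subt}), and that the $\calD'$ clause of \tyrulename{m-func} and \tyrulename{m-spec} together emit a dummy for \emph{every} specification of $\tau_1$ in both the structure-receiver and interface sub-forms — which is exactly the invariant that \tyrulename{m-id} was introduced to maintain. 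The remaining steps are routine, relying only on injectivity of $\monoid{\cdot}$, the structure/interface classification of identifiers, and reflexivity and transitivity of FGG subtyping (Lemma~\ref{lem:fgg_subequiv}).
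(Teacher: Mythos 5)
Your proposal is correct and follows essentially the same route as the paper's proof: a case analysis on the FG derivation of $\IMPLS{t_1^\dagger}{t_2^\dagger}$, using injectivity of the name map $\monoid{\cdot}$ in the \tyrulename{<:$_S$} case, and in the \tyrulename{<:$_I$} case transferring the method-set inclusion back to FGG via the dummy methods emitted by \tyrulename{m-spec}/\tyrulename{m-id} (and the $\calD'$ clause of \tyrulename{m-func} when $\tau_1[\eta]$ is a structure), split on whether $\tau_1$ is a structure or an interface. Your added care about the type-parameter subcases and the faithfulness of $\mkdummy$ makes explicit assumptions the paper's proof uses tacitly (and which are automatic in the closed setting where the theorem is applied), but it does not change the argument.
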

\begin{proof} 
We proceed by cases on the derivation of
$\IMPLS{t_1^\dagger}{t_2^\dagger}$. We use the term dummy methods to
refer to methods whose signature is generated by rule {\sc m-id}.

If the derivation holds from rule
$\IMPLOP_S$ then  $t_1^\dagger = t_2^\dagger = t_S$ for some $t_S$. By
definition of monomorphisation, since $\eta \vdash \tau_1 \mapsto t_S$
and $\eta \vdash \tau_2 \mapsto t_S$ we have that $\tau_1 = \tau_2 =
\tau_S$, for some $\tau_S$, and we conclude by the FGG implements
$\IMPLOP_S$ rule.

If the derivation holds from rule $\IMPLOP_I$ then we have that
$t_2^\dagger = t_I$, for some $t_I$, and
$\methods(t_1^\dagger) \supseteq \methods(t_I)$. By the definition of
monomorphisation, we know that $\tau_2 = \tau_I$, for some
$\tau_I$. Since 
$\tau_1[\eta] \in \Omega$ and $\tau_2[\eta] \in \Omega$, and  
$\methods(t_I)$ contains at least as many (potentially dummy) elements as
$\methods_\Delta(\tau_I)$, which are also in $\methods(t_1^\dagger)$.
Since $\tau_2[\eta] \in \Omega$ then rules {\sc m-type}, {\sc
  m-interface} and {\sc m-spec} are such that all dummy methods of
$t_I$ arise from all methods of $\tau_2$. If $\tau_1$ is an interface type,
by a similar reasoning we have that all dummy methods of $t_1$, which
contain all those of $t_2$, arise from methods of $\tau_1$, and thus
$\methods_\Delta(\tau_1) \supseteq \methods_\Delta(\tau_2)$, and so
$\Delta \vdash \tau_1 \IMPLOP \tau_2$, by FGG implements rule $\IMPLOP_I$. If
$\tau_1$ is a struct type, then by rule {\sc m-func}, since
$\tau_1[\eta] \in \Omega$, we know that all dummy method
implementations of $t_1$ (which include all methods declared for $t_2$) map from
all the method declarations of $\tau_1$ and so
$\methods_\Delta(\tau_1) \supseteq \methods_\Delta(\tau_2)$, and so
$\Delta \vdash \tau_1 \IMPLOP \tau_2$, by FGG implements rule
$\IMPLOP_I$. 

\end{proof}


\noindent{\bf\emph{Note:}}  
In this section, we often omit $\emptyset$, e.g., we write
$\tau <: \tau'$ for $\emptyset \vdash \tau <: \tau'$.

\begin{lemma}
\label{lem:opcorr_methbodies}
Suppose $\eta \vdash m(\psi) \mapsto m^\dagger$,
$\mbody(\tau.m(\psi)) = (x,\ov{x}).e$,
$m^\dagger = \monoid{m(\psi[\eta])}$, 
$\eta \vdash e \mapsto e^\dagger$,
and  $\eta \vdash \tau \mapsto t^\dagger$ with $t^\dagger=\monoid{\tau[\eta]}$. 
Then we have $\mbody(t^\dagger.m^\dagger) = (x,\ov{x}).e^\dagger$
\end{lemma}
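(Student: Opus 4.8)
The plan is to invert the two method-body lookups and match them through rule \tyrulename{m-func}, the crux being a substitution-composition identity for monomorphisation. Since $\mbody(\tau.m(\psi))$ is defined, $\tau$ must be a structure type $t_S(\phi)$ (the FGG $\mbody$ rule only fires on structure receivers), and inverting that rule yields a source declaration $\func~(x~t_S(\Phi))~m(\Psi)N~\br{\return~e_0} \in \ov{D}$ with $N = (\ov{x~\tau'})~\sigma$, together with $\theta_0 = (\Phi, \Psi \by \phi, \psi)$; the returned body is $e = e_0[\theta_0]$ and the returned binders are the receiver $x$ and the argument names $\ov{x}$ taken verbatim from the declaration.

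Next I would set $\theta = (\Phi \by \phi[\eta], \Psi \by \psi[\eta])$. Since the ground instance $t_S(\phi[\eta]).m(\psi[\eta])$ lies in the instance set $\Omega$ (guaranteed in the setting where this lemma is applied, as the bisimulation argument keeps $\Omega$ closed under the instances reached by reduction), rule \tyrulename{m-func} applied to the above declaration generates an FG declaration $\func~(x~t_S^\dagger)~m^\dagger N^\dagger~\br{\return~e_0^\dagger} \in \ov{D^\dagger}$ with $\theta \vdash t_S(\Phi) \mapsto t_S^\dagger$, $\theta \vdash m(\Psi) \mapsto m^\dagger$, $\theta \vdash N \mapsto N^\dagger$, and $\theta \vdash e_0 \mapsto e_0^\dagger$. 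The generated names agree with those in the hypotheses: by \tyrulename{m-tformal}, $t_S^\dagger = \monoid{t_S(\hat\Phi[\theta])} = \monoid{t_S(\phi[\eta])} = \monoid{\tau[\eta]} = t^\dagger$, and by \tyrulename{m-mformal}, $m^\dagger = \monoid{m(\hat\Psi[\theta])} = \monoid{m(\psi[\eta])}$, matching the hypothesised $m^\dagger$. Since \tyrulename{m-sig} copies variable names, $N^\dagger$ binds exactly $\ov{x}$, and \tyrulename{m-func} reuses the receiver name $x$. Hence by the FG definition of $\mbody$ we get $\mbody(t^\dagger.m^\dagger) = (x,\ov{x}).e_0^\dagger$, and it remains only to prove $e_0^\dagger = e^\dagger$.

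The main obstacle is this last equality: both $e_0^\dagger$ and $e^\dagger$ monomorphise the same source body, but $e_0^\dagger$ arises from $e_0$ under $\theta$ whereas $e^\dagger$ arises from $e = e_0[\theta_0]$ under $\eta$. The equality reduces to the substitution-composition identity $\theta = (\theta_0;\eta)$ on the type variables of $e_0$ (apply $\theta_0$, then $\eta$). Because the method body is well formed under $\Delta = \Phi,\Psi$, its free type variables lie in $\hat\Phi \cup \hat\Psi$, on which $(\theta_0;\eta)$ sends $\hat\Phi \mapsto \phi[\eta]$ and $\hat\Psi \mapsto \psi[\eta]$, which is precisely $\theta$; the assumed distinctness of the formals $\hat\Phi,\hat\Psi$ from the domain of $\eta$ (up to $\alpha$-renaming) guarantees $\eta$ and $\theta_0$ do not interfere and that no capture occurs. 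Consequently, for every type $\rho$ and method instantiation occurring in $e_0$ we have $\rho[\theta] = (\rho[\theta_0])[\eta]$, so the two derivations assign the same identifier $\monoid{\cdot}$ at each \tyrulename{m-type} and \tyrulename{m-method} leaf. A routine induction on the structure of $e_0$ then propagates this to $e_0^\dagger = e^\dagger$, using that monomorphisation of expressions is syntax-directed and single-valued. This is exactly the substitution-commuting property recorded, for the special case $\eta = \emptyset$, in Lemma~\ref{lem:app_substcommute}; the only extra work is its composition form, and verifying the variable-scoping side conditions just sketched.
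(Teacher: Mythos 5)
Your proof is correct and takes essentially the same approach as the paper, whose entire proof of this lemma is the one-liner ``straightforward by induction on the derivations of the monomorphisation definition'': your inversion of the FGG $\mbody$ rule, generation of the matching FG declaration via \textsc{m-func}, identifier agreement through \textsc{m-tformal}/\textsc{m-mformal}, and the substitution-composition identity $\rho[\theta_0][\eta]=\rho[\theta]$ driving the induction on the body are precisely the details that line compresses. The side condition you flag --- that $t_S(\phi[\eta]).m(\psi[\eta])\in\Omega$ is needed for the FG-side $\mbody$ to be defined at all, and is guaranteed only by the ambient assumptions $P \ok$, $\vdash P \mapsto P^\dagger$ and the closure properties of $\Omega$ --- is genuinely absent from the lemma's stated hypotheses, and your explicit appeal to the application context is the same implicit reliance the paper makes, not a defect of your argument.
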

\begin{proof}
Straightforward by induction on the derivations of the monomorphisation definition.
\end{proof}

\begin{lemma}[Compositionality]
\label{lem:mono_opcorr_comp}
Suppose $\JUDGEEXPRG{\emptyset}{\ov{x : \tau}}{e}{\tau}$, 
$\emptyset ; \emptyset \vdash {\ov{v:\tau'}}$ and 
$\ov{\IMPLS{\tau'}{\tau}}$. 

Assume 
$\eta \vdash e \mapsto e^\dagger$ and
$\eta \vdash \ov{v \mapsto v^\dagger}$.
Then we have 
$\eta \vdash e\ov{\SUBS{v}{x}} \mapsto
e^\dagger\ov{\SUBS{v^\dagger}{x}}$.
\end{lemma}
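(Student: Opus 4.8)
The plan is to proceed by induction on the structure of $e$, equivalently on the derivation of $\eta \vdash e \mapsto e^\dagger$, since the expression monomorphisation judgement of Figure~\ref{fig:rcver-mono-subs} is syntax-directed with exactly one rule per expression form. The essential observation that makes the statement hold is that the value substitution $\ov{\SUBS{v}{x}}$ acts only on term variables, whereas monomorphisation transforms the \emph{types} and \emph{method instantiations} occurring in an expression (via the fixed substitution $\eta$ from type parameters to ground types) and leaves the term-variable skeleton otherwise intact. Since the two operations act on disjoint syntactic categories they commute, and the only genuinely interesting case is the variable.

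First I would treat the base case $e = x$, governed by rule \textsc{m-var} ($\eta \vdash x \mapsto x$). If $x \notin \ov{x}$ then both $x\ov{\SUBS{v}{x}}$ and $x\ov{\SUBS{v^\dagger}{x}}$ are $x$, and \textsc{m-var} applies again. If $x = x_i$ for some $i$, then $x_i\ov{\SUBS{v}{x}} = v_i$ and $x_i\ov{\SUBS{v^\dagger}{x}} = v_i^\dagger$, so the required $\eta \vdash v_i \mapsto v_i^\dagger$ is exactly one of the hypotheses. Here I would note that the $v_i$ are closed (typed under $\emptyset;\emptyset$ by $\emptyset ; \emptyset \vdash \ov{v:\tau'}$), so their monomorphisation is well-defined and independent of any term variables, which is what lets the hypothesis discharge the goal directly.

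For the inductive cases — structure literal (\textsc{m-value}), field selection (\textsc{m-select}), method call (\textsc{m-call}), and type assertion (\textsc{m-assert}) — I would invert the monomorphisation derivation to expose the premises on the immediate subexpressions, apply the induction hypothesis to each subexpression with the same $\eta$ and the same $\ov{v}$, $\ov{v^\dagger}$, and reassemble via the original rule. In each case I would additionally use that the type- and method-level data translated by the rule, namely $\tau_S$, $\tau$, and $m(\psi)$, contain no term variables and are therefore untouched by $\ov{\SUBS{v}{x}}$; hence the same premise $\eta \vdash \tau_S \mapsto t_S^\dagger$ (resp.\ $\eta \vdash \tau \mapsto t^\dagger$, $\eta \vdash m(\psi) \mapsto m^\dagger$) that produced $e^\dagger$ also produces the corresponding component of $e^\dagger\ov{\SUBS{v^\dagger}{x}}$, since substitution distributes over each expression former.

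There is essentially no hard part: this is a routine structural induction, and the typing and subtyping premises $\JUDGEEXPRG{\emptyset}{\ov{x : \tau}}{e}{\tau}$, $\emptyset;\emptyset \vdash \ov{v:\tau'}$, and $\ov{\IMPLS{\tau'}{\tau}}$ are not needed for the equational argument itself — they serve only to guarantee that the $v_i$ are closed values, so that $\eta \vdash v_i \mapsto v_i^\dagger$ is meaningful. The only point requiring a moment's care is confirming that the fixed type substitution $\eta$ and the value substitution commute, i.e.\ that substituting closed values for term variables cannot create or alter occurrences of type parameters on which $\eta$ would then act; this follows because closed values introduce no new free type parameters.
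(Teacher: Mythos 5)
Your proof is correct, but it takes a more elementary route than the paper's. The paper disposes of this lemma in one line: ``mechanical by investigating the type derivation of $\JUDGEEXPRG{\emptyset}{\ov{x : \tau}}{e}{\tau}$'', citing Theorem~\ref{thm:main:monotsound} (monomorphisation preserves typing) and Lemma~\ref{lem:mono_subt} (monomorphisation preserves subtyping); that is, it inducts over the typing derivation and keeps the typing machinery in play. You instead induct directly on the syntax-directed monomorphisation derivation $\eta \vdash e \mapsto e^\dagger$ and observe that the statement is purely syntactic: the rules of Figure~\ref{fig:rcver-mono-subs} never consult typing, value substitution touches only term variables, and the type- and method-level data $\tau_S$, $\tau$, $m(\psi)$ translated by \textsc{m-value}, \textsc{m-assert}, and \textsc{m-call} contain no term variables, so the two substitutions commute, with the variable case discharged by the hypothesis $\eta \vdash \ov{v \mapsto v^\dagger}$. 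Since typing is syntax-directed, the two inductions traverse the same tree; but your version exposes that the typing and subtyping premises (and hence the appeals to Theorem~\ref{thm:main:monotsound} and Lemma~\ref{lem:mono_subt}) are not needed for the commutation itself --- they are merely inherited from the context in which the lemma is applied, namely the \textsc{r-call} case of Theorem~\ref{thm:op:comp}. This makes your proof more self-contained and removes any appearance of dependence on the soundness development; the paper's phrasing keeps typing information available at every node, which would only matter if the translation judgement were type-dependent (it is not). Two small points to tighten: closedness of the $v_i$ is not actually required for $\eta \vdash v_i \mapsto v_i^\dagger$ to be meaningful (the judgement is defined on arbitrary expressions, so even that residual use of the typing premises is dispensable), and you should state explicitly that FGG expressions have no binders, so substitution is capture-free and your commutation argument carries no side conditions.
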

\begin{proof}
Mechanical by investigating the type derivation of
$\JUDGEEXPRG{\emptyset}{\ov{x : \tau}}{e}{\tau}$ with 
\thmref{thm:main:monotsound} and \lemref{lem:mono_subt}. 
\end{proof}

\begin{lemma}
 \label{lem:opcorr_methbodies_completeness}
Suppose
$P = \ov{D} \prog d$,
such that
$P \ok$ and $\vdash P \mapsto
\ov{D^\dagger} \prog d^\dagger$ and
$d^\dagger= (x:t_S,\ov{x:t}).e^\dagger$.
Assume $\eta \vdash m(\psi) \mapsto m^\dagger$,
  $m^\dagger = \monoid{m(\psi[\eta])}$,
    $\eta \vdash \tau \mapsto t^\dagger$,
  $\eta \vdash e \mapsto e^\dagger$, 
  with $\eta \vdash \tau_S \mapsto t_S^\dagger$ and  
  $\eta \vdash \ov{\tau \mapsto t^\dagger}$. 
  Then
  $\mathit{body}(t^\dagger.m^\dagger) = (x:t_S^\dagger,\ov{x:t^\dagger}).e^\dagger$
  implies 
  $\mathit{body}(\tau.m(\psi)) = (x:\tau_S,\ov{x:\tau}).e$
\end{lemma}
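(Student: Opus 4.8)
The plan is to prove this as the converse of Lemma~\ref{lem:opcorr_methbodies}, by inverting the three layers of definition that connect the two method bodies: the FG auxiliary function $\mbody$, the monomorphisation judgement $\vdash P \mapsto P^\dagger$, and the FGG auxiliary function $\mbody$. First I would unfold the FG definition of $\mbody$: from $\mbody(t^\dagger.m^\dagger) = (x:t_S^\dagger, \ov{x:t^\dagger}).e^\dagger$ I obtain a declaration $\func~(x~t_S^\dagger)~m^\dagger(\ov{x~t^\dagger})~u^\dagger~\br{\return~e^\dagger} \in \ov{D^\dagger}$ for some $u^\dagger$. Since $\vdash P \mapsto \ov{D^\dagger} \prog d^\dagger$, inverting \textsc{m-program} places this declaration in $\bigcup\ov{\calD}$, and inverting \textsc{m-func} shows it is generated by some FGG declaration $\func~(x~t_S(\Phi))~m(\Psi)(\ov{x~\tau_0})~\sigma_0~\br{\return~e_0} \in \ov{D}$, either in the real family $\calD$ or the dummy family $\calD'$.

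The key step is to exclude the dummy family and pin down the generating instance. Dummy declarations in $\calD'$ have body $\return~\dummytype\br{}$ and a method name produced by \textsc{m-id}, i.e.\ of the shape $\mkdummy(\cdot)$; since the hypothesis fixes $m^\dagger = \monoid{m(\psi[\eta])}$ and the $\monoid{\cdot}$ instantiation-naming scheme is disjoint from the $\mkdummy$ hashing scheme, the declaration must come from $\calD$. Inverting \textsc{m-func} on the real family then yields an instance $t_S(\phi).m(\psi') \in \Omega$ and a substitution $\theta = (\Phi \by \phi, \Psi \by \psi')$ with $\theta \vdash t_S(\Phi) \mapsto t_S^\dagger$, $\theta \vdash m(\Psi) \mapsto m^\dagger$, $\theta \vdash (\ov{x~\tau_0})~\sigma_0 \mapsto (\ov{x~t^\dagger})~u^\dagger$, and $\theta \vdash e_0 \mapsto e^\dagger$. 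By \textsc{m-mformal} and \textsc{m-tformal}, $m^\dagger = \monoid{m(\psi')}$ and $t_S^\dagger = \monoid{t_S(\phi)}$; comparing with $m^\dagger = \monoid{m(\psi[\eta])}$ and $\eta \vdash \tau_S \mapsto t_S^\dagger$ (so $t_S^\dagger = \monoid{\tau_S[\eta]}$), and using that $\monoid{\cdot}$ is injective on closed instantiations, I conclude $\psi' = \psi[\eta]$ and $t_S(\phi) = \tau_S[\eta] = \tau[\eta]$, so the FGG instance $t_S(\phi).m(\psi')$ is exactly $\tau.m(\psi)$ read under $\eta$.

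Finally I would compute the FGG body. By the FGG definition of $\mbody$, $\mbody(t_S(\phi).m(\psi')) = (x:t_S(\phi), \ov{x:\tau_0[\theta]}).e_0[\theta]$. The receiver and argument types match by injectivity of $\monoid{\cdot}$: the receiver equals $\tau_S[\eta]$ by the previous step, and from $\theta \vdash \tau_0 \mapsto t^\dagger$ together with $\eta \vdash \ov{\tau \mapsto t^\dagger}$ one gets $\monoid{\tau_0[\theta]} = t^\dagger = \monoid{\tau[\eta]}$, hence $\tau_0[\theta] = \tau[\eta]$ (which equals $\tau$ in the regime where the lemma is applied, since there $\tau$ and $\psi$ are ground and $\eta$ acts trivially on them). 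For the body, Lemma~\ref{lem:app_substcommute} turns $\theta \vdash e_0 \mapsto e^\dagger$ into $\vdash e_0[\theta] \mapsto e^\dagger$, while the hypothesis gives $\eta \vdash e \mapsto e^\dagger$; as both $e_0[\theta]$ and $e$ are closed expressions mapping to the same $e^\dagger$, and expression monomorphisation is structural and only renames type annotations and method names through the injective $\monoid{\cdot}$, I conclude $e_0[\theta] = e$. Assembling these equalities yields $\mbody(\tau.m(\psi)) = (x:\tau_S, \ov{x:\tau}).e$.

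The main obstacle I anticipate is the injectivity/invertibility infrastructure that every matching step silently relies on: one must establish that $\monoid{\cdot}$ is a genuine bijection between closed type/method instantiations and generated identifiers, that its image is disjoint from the $\mkdummy$ hash names (so dummies can be ruled out), and that these facts lift to an inversion principle for the monomorphisation relation $\vdash e \mapsto e^\dagger$ on closed expressions. None of these are isolated as standalone lemmas in the excerpt, so the cleanest route is to first state and prove determinacy/injectivity lemmas for $\monoid{\cdot}$ and for closed-term expression monomorphisation, and then let the present lemma follow by the inversion argument above; carrying out that expression-level inversion form by form is the part most likely to require care.
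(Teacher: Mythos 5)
Your proposal is correct in substance, but it follows a genuinely different route from the paper's own proof. The paper stays at the level of typing: it inverts the FG $\mbody$ to obtain $\func~(x~t_S^\dagger)~m^\dagger(\ov{x~t^\dagger})~t^\dagger~\br{\return~e^\dagger} \in \ov{D^\dagger}$, inverts \textsc{t-func} to type the body, pulls the return-type relation back to FGG via Theorem~\ref{thm:mono_refl_subt} (monomorphisation reflects subtyping), and then recovers the generating FGG declaration by an appeal to ``the inductive hypothesis''---an appeal that is never made precise (the lemma is not set up as an induction), and which silently skips over the possibility that the FG declaration lies in the dummy family $\calD'$. You instead invert the translation itself: tracing the declaration through \textsc{m-program} and \textsc{m-func}, discharging $\calD'$ by disjointness of the $\mkdummy$ names from the $\monoid{\cdot}$ names, pinning down the generating instance $t_S(\phi).m(\psi')\in\Omega$ by injectivity of $\monoid{\cdot}$, and recovering $e_0[\theta]=e$ by injectivity of monomorphisation on type-closed expressions. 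Your route is more rigorous precisely where the paper is weakest: dummy exclusion and injectivity of the naming scheme are genuine proof obligations (they are also what makes $P^\dagger$ well formed at all), and isolating them as standalone lemmas is the right call; moreover, your argument never needs Theorem~\ref{thm:mono_refl_subt}, which the paper invokes. What the paper's route buys is brevity and avoidance of explicit injectivity infrastructure, at the cost of the unexplained inductive step. Your closing caveat is also apt: the conclusion $\mbody(\tau.m(\psi)) = (x{:}\tau_S,\ov{x{:}\tau}).e$ only makes sense when $\tau$ and $\psi$ are ground (FGG $\mbody$ is defined on instantiated types), i.e.\ when $\eta$ acts trivially on them, which is exactly the regime in which Theorem~\ref{thm:op:comp} applies the lemma---a point on which the paper's proof is equally silent.
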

\begin{proof}
By investigating the derivation of $\mbody$ in
Figure~\ref{fig:fg-aux},
\[
\mbody(t^\dagger.m^\dagger) = (x:t_S^\dagger,\ov{x:t^\dagger}).e^\dagger
\]
is derived from
\[
\func~(x~t_S^\dagger)~m^\dagger(\ov{x~t^\dagger})~t^\dagger~\br{\return~e^\dagger}\in \ov{D}^\dagger
\]
Then by investigating the derivations of \NYRULENAME{t-func}, 
we have:
\[
x : t_S^\dagger \comma \ov{x : t^\dagger} \vdash e : u^\dagger
\quad \mbox{and} \quad u^\dagger \imp t^\dagger
\]
By applying Theorem~\ref{thm:mono_refl_subt},
we have 
\[
\emptyset\vdash \tau_3 \imp \tau
\]
with
$\eta \vdash \tau_3 \mapsto u^\dagger$.
Applying the inductive hypothesis, we have
\[
\func~(x~\tau_S)~m(\psi)(\ov{x~\tau})~\tau~\br{\return~e}\in \ov{D}
\]
Then applying the $\mbody$ rule in Figure~\ref{fig:fgg-aux},  
we have
\[ \mathit{body}(\tau.m(\psi)) = (x:\tau_S,\ov{x:\tau}).e\]
as required. 
\end{proof}

\begin{theorem}[Monomorphisation preserves and reflects reductions]
 \label{thm:op:comp}
Assume
  $ P = \ov{D} \prog d$,
  such that
  $P \ok$ and $\vdash P \mapsto
  \ov{D^\dagger} \prog d^\dagger $.
  
  Then: (a)
  if $d \becomes e$ then there exists $e^\dagger$
  such that $d^\dagger \becomes e^\dagger$ and
  $\emptyset \vdash e \mapsto e^\dagger$;
(b)
  if $d^\dagger \becomes e'$ then there exists $e$ such that
    $d\becomes e$, 
  $\emptyset \vdash  e \mapsto e^\dagger$ and $e'=e^\dagger$. 
\end{theorem}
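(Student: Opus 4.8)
The plan is to prove statements (a) and (b) separately, each by induction on the derivation of the given reduction step, using the explicit congruence (rc-) presentation of the reduction relations so that every inductive case corresponds to exactly one syntactic redex or one evaluation-context hole. The engine that makes the correspondence go through is that monomorphisation of expressions (Figure~\ref{fig:rcver-mono-subs}) is defined purely homomorphically on expression structure — \RULENAME{m-select}, \RULENAME{m-call}, \RULENAME{m-value}, \RULENAME{m-assert} each translate a compound expression by translating its immediate subexpressions — and that the FG reduction rules mirror the FGG ones one-for-one. Throughout I would keep the source and target terms well typed by appealing to FGG type preservation (Theorem~\ref{thm:fgg_pres}) and to soundness of the translation (Theorem~\ref{thm:main:monotsound}); well-typedness is what licenses the body- and subtyping-lemmas invoked below.

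For direction (a) I would induct on $d \becomes e$. The congruence cases (a redex firing inside an evaluation context) follow immediately from the induction hypothesis together with the homomorphic clauses of $\mapsto$: the translation of a context applied to $d_0$ is the corresponding FG context applied to the translation of $d_0$, so a step of $d_0$ lifts to a step of $d_0^\dagger$ in the same context. The base case \RULENAME{r-field} is routine, since \RULENAME{m-value} preserves the field sequence. The real work is \RULENAME{r-call}: from $v.m(\psi)(\ov v) \becomes e_0\SUBS{v}{x}\SUBS{\ov v}{\ov x}$ with $(x:\tau_S,\ov{x:\tau}).e_0 = \mbody(\vtype(v).m(\psi))$, I would apply Lemma~\ref{lem:opcorr_methbodies} to obtain $\mbody(t^\dagger.m^\dagger) = (x,\ov x).e_0^\dagger$ with $\vdash e_0 \mapsto e_0^\dagger$, fire FG's \RULENAME{r-call} on $v^\dagger.m^\dagger(\ov{v^\dagger})$, and then invoke the Compositionality Lemma~\ref{lem:mono_opcorr_comp} to conclude $\vdash e_0\SUBS{v}{x}\SUBS{\ov v}{\ov x} \mapsto e_0^\dagger\SUBS{v^\dagger}{x}\SUBS{\ov{v^\dagger}}{\ov x}$, which is exactly the FG reduct. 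Finally \RULENAME{r-assert} reduces iff $\vtype(v) \imp \tau$; since the monomorphised value and asserted type lie in the instance set, Lemma~\ref{lem:mono_subt} gives $\vtype(v^\dagger) \imp t^\dagger$, so FG's \RULENAME{r-assert} fires and the value translates to itself.

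For direction (b) I would induct on $d^\dagger \becomes e'$, reconstructing the source redex from the target one. Because $\mapsto$ is structural, a given $d^\dagger$ pins down which subterm of $d$ occupies redex position, so the congruence and \RULENAME{r-field} cases are again mechanical. For \RULENAME{r-assert} I would use that monomorphisation \emph{reflects} subtyping (Theorem~\ref{thm:mono_refl_subt}): success of the FG assertion forces $\vtype(v) \imp \tau$, so the FGG assertion fires too. For \RULENAME{r-call} I would run the forward argument backwards, using Lemma~\ref{lem:opcorr_methbodies_completeness} to recover the FGG declaration and body $(x:\tau_S,\ov{x:\tau}).e_0$ from the FG one, then Compositionality again to equate $e'$ with the translation of the FGG reduct. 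A point needing explicit care is that the target also contains the dummy methods produced by rule \RULENAME{m-id} and the $\calD'$ component of \RULENAME{m-func}; these return $\dummytype\br{}$ and must never head a fired call. This is guaranteed because \RULENAME{m-call} only ever emits the ``real'' name $m^\dagger = \monoid{m(\psi[\eta])}$, never a dummy hash name, so no FG redex in the image of $\mapsto$ targets a dummy method and no spurious reduction can arise.

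The main obstacle is the bookkeeping around the two-level substitution in \RULENAME{r-call}. In FGG the reduct substitutes the receiver's type formals and then the method's type formals into the body, whereas the monomorphised body $e_0^\dagger$ is already fully specialised; I must verify that the composite substitution threaded by Lemma~\ref{lem:opcorr_methbodies} (relating $\mbody$ on the two sides) agrees with the value-substitution handled by Compositionality, so that the two reducts are genuinely $\mapsto$-related — this is exactly the delicate instance-tracking the paper flags in its introduction. The secondary difficulty, specific to direction (b), is maintaining the invariant that the $\mapsto$ image never exercises a dummy method and that the reconstruction of the source redex is unambiguous; both hinge on the fact that method-call translation uses the canonical name map $\monoid{\cdot}$ rather than the hash $\mkdummy(\cdot)$.
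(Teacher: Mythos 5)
Your proposal is correct and follows essentially the same route as the paper's proof: induction on the reduction derivation in each direction, with the congruence cases handled structurally, \RULENAME{r-call} via Lemma~\ref{lem:opcorr_methbodies} (resp.\ Lemma~\ref{lem:opcorr_methbodies_completeness} for reflection) combined with the Compositionality Lemma~\ref{lem:mono_opcorr_comp}, and \RULENAME{r-assert} via preservation of subtyping (Lemma~\ref{lem:mono_subt}) in one direction and reflection of subtyping (Theorem~\ref{thm:mono_refl_subt}) in the other. Your explicit observation that calls in the image of $\mapsto$ never target dummy methods (since \RULENAME{m-call} emits only names of the form $\monoid{m(\psi[\eta])}$) is a point the paper leaves implicit, but it is consistent with, not a departure from, the paper's argument.
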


\begin{proof}
{\bf Proof of (a):} 
By induction on the derivation of $d \becomes e$ with a case analysis
on the last reduction rule used.

\begin{description}
\item[{\em Case $\NYRULENAME{r-field}$}]
\quad $d=\tau_S\{\ov{v}\}.f_i$
\quad $(\ov{f~\tau}) = \mathit{fields}(\tau_S)$
\quad $e= v_i$\\

Then by rule $\NYRULENAME{m-select}$ and
$\NYRULENAME{m-value}$, 
$\eta \vdash \tau_S\{\ov{v}\}.f_i \mapsto t_S^\dagger\{\ov{v^\dagger}\}.f_i$ with
       $\eta \vdash \tau_S \mapsto t_S^\dagger$
       and
       $\eta \vdash \ov{v \mapsto v^\dagger}$. On the other hand, 
       by
Lemmas~\ref{lem:app_monodeclexists}
  and~\ref{lem:app_monotdecl},  
we have that
$(\ov{f~\tau^\dagger}) = \mathit{fields}(t_S^\dagger)$.
Hence we have\\

\begin{quote}
         \begin{tabular}{ll}
           $d^\dagger=t_S^\dagger\{\ov{v^\dagger}\}.f_i$\\
           $e^\dagger=v_i^\dagger$\\
         \end{tabular}\\
\end{quote}

         \smallskip

Then by applying $\NYRULENAME{r-field}$,
we have $d^\dagger \becomes e^\dagger$, as required.\\

\item[{\em Case $\NYRULENAME{r-call}$}]
  \quad $d=v.m(\psi)(\ov{v})$
  \quad $(x:\tau_S,\ov{x:\tau}).e_0 = \mbody(\mathit{type}(v).m(\psi))$
  \quad $e=e_0[x \by v, \ov{x \by v}]$

\smallskip 
  
Then by \NYRULENAME{m-value}, we have
  $\eta \vdash v \mapsto v^\dagger$ and 
  $\eta \vdash \ov{v \mapsto v^\dagger}$, and 
by \NYRULENAME{m-call} and \NYRULENAME{m-method}, we have 
$\eta \vdash m(\psi) \mapsto m^\dagger$ with
$m^\dagger = \monoid{m(\psi[\eta])}$. 
By \lemref{lem:opcorr_methbodies}, 
we have that
 $\mbody(\mathit{type}(v^\dagger).m^\dagger) =
  (x:t_S^\dagger,\ov{x:t^\dagger}).e_0^\dagger$
  with $\eta \vdash \tau_S \mapsto t_S^\dagger$ and
  $\eta \vdash \ov{\tau \mapsto t^\dagger}$. 
  Thus, by \lemref{lem:mono_opcorr_comp}, 
  we have\\

\begin{quote}
    \begin{tabular}{ll}
      $d^\dagger  = v^\dagger.m^\dagger(\ov{v^\dagger})$\\
      $e^\dagger  = e_0^\dagger [x \by v^\dagger, \ov{x \by v^\dagger}]$
    \end{tabular}
\end{quote}

\smallskip 

Applying $\NYRULENAME{r-call}$, 
we have: $v^\dagger.m^\dagger(\ov{v^\dagger}) \becomes 
e_0^\dagger [x \by v^\dagger, \ov{x \by v^\dagger}]$. Hence we have
$d^\dagger \becomes e^\dagger$.\\
    
\item[{\em Case $\NYRULENAME{r-assert}$}]       
\quad $d=v.(\tau)$ \quad $e=v$ \quad $\vtype(v) \imp \tau$\\

Then by \NYRULENAME{m-assert}, we have:\\

\begin{quote}
\begin{tabular}{ll}
 $\eta \vdash v \mapsto v^\dagger$\\
 $\eta \vdash \tau \mapsto t^\dagger$\\
 $\eta \vdash v.(\tau) \mapsto v^\dagger.(t^\dagger)$     \\
\end{tabular}
\end{quote}

\smallskip 

By 
\lemref{lem:app_monoprestypingopenexpr} and
\lemref{lem:mono_subt}, we have that 
$\IMPLS{\mathit{type}(v^\dagger)}{t^\dagger}$.
Hence we have\\

\begin{quote}
\begin{tabular}{ll}
  $d^\dagger= v^\dagger.(t^\dagger)$     \\
  $e^\dagger = v^\dagger$\\
\end{tabular}
\end{quote}

\smallskip

Applying $\NYRULENAME{r-assert}$,
we have  $v^\dagger.(t^\dagger) \becomes v^\dagger$. Hence 
we have $d^\dagger \becomes e^\dagger$, as required. \\

\smallskip

       \item[{\em Case $\NYRULENAME{r-context}$}]
         \quad $d=E[e_1]$ \quad $e=E[e_2]$ \quad $e_1 \becomes e_2$

         \smallskip

         where $E$ is an evaluation context.
         Then by the inductive hypothesis, we have 
         $e_1^\dagger \becomes e_2^\dagger$.
         Hence there exists\\ 
         
    \begin{quote}
         \begin{tabular}{ll}
           $d^\dagger=E_0[e_1^\dagger]$\\
           $e_1^\dagger \becomes e_2^\dagger$\\
           $e^\dagger=E_0[e_2^\dagger]$\\
         \end{tabular}
    \end{quote}
         
    \smallskip

    Thus we must prove\\

    \begin{quote}
     $E_0[e_1^\dagger]=E[e_1]^\dagger$ and
     $E_0[e_2^\dagger]=E[e_2]^\dagger$  \quad ($\star$)\\
    \end{quote}

\smallskip

From ($\star$), by applying $\NYRULENAME{r-context}$, we have\\

         \begin{quote}
           \begin{tabular}{ll}
           $d^\dagger=E[e_1]^\dagger$\\
           $d^\dagger \becomes e^\dagger$\\
           $e^\dagger=E[e_2]^\dagger$\\
           \end{tabular}\\
         \end{quote}

\smallskip 
         
There are five subcases for proving ($\star$):\\

\smallskip
    
\begin{description}
    \item[{\em Subcase \NYRULENAME{rc-structure}}] 
     \quad $E[e_1] = \tau_S\br{\ov{v},e_1,\ov{e}}$
     \quad $e_1 \becomes e_2$
     \quad $E[e_2] = \tau_S\br{\ov{v},e_2,\ov{e}}$

      \smallskip

      Then we have:\\

      \smallskip

      \begin{quote}
      $E[e_1]^\dagger = t_S^\dagger\br{\ov{v}^\dagger,e_1^\dagger,\ov{e}^\dagger}$
      with $\eta \vdash \tau_S \mapsto t_S^\dagger$\\
      $e_1^\dagger \becomes e_2^\dagger$\\
       $E[e_2]^\dagger = t_S^\dagger\br{\ov{v}^\dagger,e_2^\dagger,\ov{e}^\dagger}$
      \end{quote}        

      Hence by letting $E_0[\ ] =
      t_S^\dagger\br{\ov{v}^\dagger,e_2^\dagger,\ov{e}^\dagger}$, 
      we have proved ($\star$), as
     required. \\
      
      \smallskip

    \item[{\em Subcase \NYRULENAME{rc-select}}]
     \quad $E[e_1] = e_1.f$
     \quad $e_1 \becomes e_2$
     \quad $E[e_2] = e_2.f$

      \smallskip

      Then we have:\\

      \smallskip

      \begin{quote}
      $E[e_1]^\dagger = e_1^\dagger.f$\\
      $e_1^\dagger \becomes e_2^\dagger$\\
       $E[e_2]^\dagger = e_2^\dagger.f$\\
      \end{quote}        

      Hence by letting $E_0[\ ] = [\ ].f$, 
      we have proved ($\star$), as
     required. \\
      
      \smallskip
      
    \item[{\em Subcase \NYRULENAME{rc-receive}}]
      \quad $E[e_1] = e_1.m(\psi)(\ov{e})$
      \quad $e_1 \becomes e_2$
      \quad $E[e_2] = e_2.m(\psi)(\ov{e})$\\

      \smallskip

      Then we have:\\

      \smallskip

      \begin{quote}
      $E[e_1]^\dagger = e_1^\dagger.m^\dagger(\ov{e}^\dagger)$\\
      $e_1^\dagger \becomes e_2^\dagger$\\
      $E[e_2]^\dagger = e_2^\dagger.m^\dagger(\ov{e}^\dagger)$\\
      \end{quote}        

Hence by letting $E_0[\ ] = [\ ].m^\dagger(\ov{e}^\dagger)$, we have proved ($\star$), as
     required. \\
      
      \smallskip

    \item[{\em Subcase \NYRULENAME{rc-argument}}]
     \quad $E[e_1] = v.m(\psi)(\ov{v},e_1,\ov{e})$
     \quad $e_1 \becomes e_2$
     \quad $E[e_1] = v.m(\psi)(\ov{v},e_2,\ov{e})$\\ 

           \smallskip

      Then we have:\\

      \smallskip

      \begin{quote}
      $E[e_1]^\dagger = v^\dagger.m^\dagger(\ov{v}^\dagger,e_1^\dagger,\ov{e}^\dagger)$\\
      $e_1^\dagger \becomes e_2^\dagger$\\
      $E[e_2]^\dagger = v^\dagger.m^\dagger(\ov{v}^\dagger,e_2^\dagger,\ov{e}^\dagger)$\\
      \end{quote}        

      \smallskip

      Hence by letting $E_0[\ ] =
      v^\dagger.m^\dagger(\ov{v}^\dagger,[\ ],\ov{e}^\dagger)$, 
      we have proved ($\star$), as required. \\

\smallskip 
      
    \item[{\em Subcase \NYRULENAME{rc-assert}}] 
     \quad $E[e_1] = e_1(\tau)$
     \quad $e_1 \becomes e_2$
     \quad $E[e_2] = e_2(\tau)$\\

           \smallskip

      Then we have:\\

      \smallskip

      \begin{quote}
       $E[e_1]^\dagger = e_1^\dagger(t^\dagger)$
       with $\eta \vdash \tau \mapsto t^\dagger$\\
       $e_1^\dagger \becomes e_2^\dagger$\\
      $E[e_2]^\dagger = e_2^\dagger(t^\dagger)$\\
      \end{quote}        

      \smallskip

      Hence by letting $E_0[\ ] =  [ \ ](t^\dagger)$ with
      $\eta \vdash \tau \mapsto t^\dagger$, so that 
      we have proved ($\star$), as required. \\

\smallskip

    \end{description}

         \smallskip

\end{description}
{\bf Proof of (b):} 
By induction on the derivation of $d^\dagger \becomes e^\dagger$
with a case analysis on the last reduction rule used and
inspecting the last typing rule applied for $d^\dagger$. 

\begin{description}
\item[{\em Case $\NYRULENAME{r-field}$}]
  \quad
  $d^\dagger = t_S^\dagger\{\ov{v}^\dagger\}.f_i$
  \quad 
  $(\ov{f~\tau}) = \mathit{fields}(t_S^\dagger) $
  \quad
  $d^\dagger \becomes e'$

  \smallskip

  Then we have $e' = v_i^\dagger$ by $\NYRULENAME{r-field}$.
  On the other hand, by inspecting the derivations from $\NYRULENAME{m-value}$ and  $\NYRULENAME{m-select}$, we have\\

\begin{quote}
    \begin{tabular}{ll}
      $\eta \vdash v \mapsto v^\dagger$\\ 
      $\eta \vdash \tau_S \mapsto t_S^\dagger$\\
      $\eta \vdash \ov{v \mapsto v^\dagger}$\\
      $\eta \vdash \tau_S\{\ov{v}\}.f_i \mapsto
       t_S^\dagger\{\ov{v^\dagger}\}.f_i$\\
       \end{tabular}
\end{quote}
  
\smallskip 
       
  By inspecting the derivation of $t_S^\dagger\{\ov{v}^\dagger\}.f_i$ by 
  $\NYRULENAME{t-field}$, 
  we have $(\ov{f~\tau}) = \mathit{fields}(\tau_S)$. 
  Applying $\NYRULENAME{r-field}$,
  we have $\tau_S\{\ov{v}\}.f_i \becomes v_i$, as required. 

\smallskip  
  
\item[\emph{Case $\NYRULENAME{r-call}$}]
  \quad 
  $d^\dagger = v^\dagger.m^\dagger(\ov{v^\dagger})$
  \quad
  $d^\dagger \becomes e'$\\

  By inspecting $\NYRULENAME{r-call}$, we have\\

  \smallskip 

\begin{quote}
  \begin{tabular}{ll}
  $v^\dagger.m^\dagger(\ov{v^\dagger}) \becomes
  e_0^\dagger[x \by v^\dagger, \ov{x \by v^\dagger}]$\\
  $e'= e_0^\dagger[x \by v^\dagger, \ov{x \by v^\dagger}]$\\
  $(x:t_S^\dagger,\ov{x:t^\dagger}).e_0^\dagger = \mathit{body}(\mathit{type}(v).m^\dagger)$\\
  \end{tabular}
\end{quote}

  \smallskip 

  By inspecting the derivation of
  $v^\dagger.m^\dagger(\ov{v^\dagger})$ and $\NYRULENAME{m-call}$,
  we have \\

  \smallskip

\begin{quote}
\begin{tabular}{ll}
  $\eta \vdash v \mapsto v^\dagger$\\
  $\eta \vdash m(\psi) \mapsto m^\dagger$\\
  $\eta \vdash v.m(\psi)(\ov{v}) \mapsto v^\dagger.m^\dagger(\ov{v^\dagger})$\\
\end{tabular}
\end{quote}

\smallskip

By \lemref{lem:opcorr_methbodies_completeness},
we have $\mathit{body}(\tau.m(\psi)) = (x:\tau_S.\ov{x:\tau}).e_0$
with $\eta \vdash e_0 \mapsto e_0^\dagger$,
$\eta \vdash \tau_S \mapsto t_S^\dagger$,
and 
$\eta \vdash \ov{\tau \mapsto t^\dagger}$. 
Applying $\NYRULENAME{r-call}$, we have  $v.m(\psi)(\ov{v}) \becomes
e_0 [x \by v, \ov{x \by v}]$.
Then by \lemref{lem:mono_opcorr_comp}, $\eta
  \vdash e_0 [x \by v, \ov{x \by v}] \mapsto
  e_0^\dagger[x \by v^\dagger, \ov{x \by v^\dagger}]$.
  Hence $d \becomes e$ as required. 

  \smallskip

\item[{\em Case $\NYRULENAME{r-assert}$}]
  \quad 
  $d^\dagger = v^\dagger.(t^\dagger)$
  \quad 
  $d^\dagger \becomes e'$
  \quad
  $\IMPLS{\mathit{type}(v^\dagger)}{t^\dagger}$\\

  Then we have $e'=v^\dagger$. Then by inspecting the derivation
  of $v^\dagger.(t^\dagger)$ by $\NYRULENAME{m-asssert}$,
  we have\\ 

  \centerline{
    \begin{tabular}{ll}
      $\eta \vdash v \mapsto v^\dagger$\\ 
      $\eta \vdash \tau \mapsto t^\dagger$\\
      $\eta \vdash v.(\tau) \mapsto v^\dagger.(t^\dagger)$\\
  \end{tabular}}

  \smallskip
    
  Then applying \lemref{lem:app_monoprestypingopenexpr}, 
  we have that $\eta\vdash \mathit{type}(v) \mapsto \mathit{type}(v^\dagger)$.
  Then by Theorem~\ref{thm:mono_refl_subt}, we have
  $\IMPLS{\mathit{type}(v)}{t}$. Applying 
  $\NYRULENAME{r-assert}$,  we have $v.(t) \becomes v$, as desired. 

\smallskip
  
\item[{\em Case $\NYRULENAME{r-context}$}]
 \quad $d^\dagger = E[e_1]^\dagger$
 \quad $E[e_1]^\dagger = E_0[e_1^\dagger]$
 \quad $d^\dagger \becomes e'$\\

  By investigating the derivation from $\NYRULENAME{r-context}$, we have\\

  \centerline{
  \begin{tabular}{ll}
  $e_1^\dagger \becomes e_2'$\\
  $e' = E_0[e_2']$
  \end{tabular}}

  \smallskip
  We have to prove:\\

  \begin{quote}
    $E_0[e_2'] = E_0[e_2^\dagger] = E[e_2]^\dagger$
    such that $e_1 \becomes e_2$ 
    \quad ($\star$)\\
  \end{quote}
  From ($\star$), we have $E[e_1]\becomes E[e_2]$ by
  $\NYRULENAME{r-context}$. \\

  \smallskip 
  
  There are five subcases for proving ($\star$).

  \begin{description}
  \item[{\em Subcase \NYRULENAME{rc-structure}}]
   \quad $E[e_1]^\dagger = t_S^\dagger\br{\ov{v}^\dagger,e_1^\dagger,\ov{e}^\dagger}$   \quad $E[e_1]^\dagger = E_0[e_1^\dagger]$
   \quad $E_0[e_1^\dagger] \becomes e'$ 

    \smallskip

    with $\eta \vdash \ov{v\mapsto v}^\dagger$, 
    $\eta \vdash \ov{e}\mapsto \ov{e}^\dagger$, and 
    $\eta \vdash \tau_S \mapsto t_S^\dagger$.
    Then by inductive hypothesis on $e_1^\dagger$,
    we have $e_1^\dagger \becomes e_2' = e_2^\dagger$.  
    Hence $e' = t_S^\dagger\br{\ov{v}^\dagger,e_2^\dagger,\ov{e}^\dagger} = 
    E_0[e_2^\dagger] = E[e_2]^\dagger$ and
    $e_1 \becomes e_2$, as required.\\

   \smallskip
    
   \item[{\em Subcase \NYRULENAME{rc-select}}]
     \quad $E[e_1]^\dagger = e_1^\dagger.f$
     \quad $E[e_1]^\dagger= E_0[e_1^\dagger]$
     \quad $e_1^\dagger \becomes e_2'$

     \smallskip 

     Then by inductive hypothesis on $e_1^\dagger$, 
     we have $e_1^\dagger \becomes e_2' = e_2^\dagger$. 
     Hence $e' = e_2^\dagger.f = E_0[e_2^\dagger] = E[e_2]^\dagger$ and
    $e_1 \becomes e_2$, as required.\\

     \smallskip

   \item[{\em Subcase \NYRULENAME{rc-receive}}]
     \quad $E[e_1]^\dagger = e_1^\dagger.m^\dagger(\ov{e}^\dagger)$
     \quad $E[e_1]^\dagger = E_0[e_1^\dagger]$
     \quad  $e_1^\dagger \becomes e_2'$\\

      \smallskip

      with $\eta \vdash m(\psi) \mapsto m^\dagger$,
      $\eta \vdash e_1 \mapsto e_1^\dagger$ and  
      $\eta \vdash \ov{e \mapsto e^\dagger}$.  
    Then by inductive hypothesis on $e_1^\dagger$,
    we have $e_1^\dagger \becomes e_2' = e_2^\dagger$.  
      Then we have $E[e_1]^\dagger =
      E_0[e_1^\dagger]=
      e_1^\dagger.m^\dagger(\ov{e}^\dagger) \becomes
      e_2^\dagger.m^\dagger(\ov{e}^\dagger) =
      E_0[e_2^\dagger]= E[e_2]^\dagger$ with 
      $e_1\becomes  e_2$, as required. \\
      
   \smallskip
     
   \item[{\em Subcase \NYRULENAME{rc-argument}}]
     \quad $E[e_1]^\dagger =
     v^\dagger.m^\dagger(\ov{v}^\dagger,e_1^\dagger,\ov{e}^\dagger)$
     $E[e_1]^\dagger = E_0[e_1^\dagger]$
     \quad $e_1^\dagger \becomes e_2'$

     \smallskip
     
      with $\eta \vdash m(\psi) \mapsto m^\dagger$,
      $\eta \vdash e_1 \mapsto e_1^\dagger$ and  
      $\eta \vdash \ov{e \mapsto e^\dagger}$. 
      Then by inductive hypothesis on
      $e_1$, we have
      $e_1^\dagger \becomes e_2'$ with $e_2'=e_2^\dagger$.
      Then we have
      $E[e_1]^\dagger =
      v^\dagger.m^\dagger(\ov{v}^\dagger,e_1^\dagger,\ov{e}^\dagger)
      \becomes
      v^\dagger.m^\dagger(\ov{v}^\dagger,e_2^\dagger,\ov{e}^\dagger)
      =E_0[e_2^\dagger] = E[e_2]^\dagger$ with 
      $e_1\becomes  e_2$, as required. \\

      \smallskip

   \item[{\em Subcase \NYRULENAME{rc-assert}}]  
     \quad $E[e_1]^\dagger = e_1^\dagger(t^\dagger)$
     \quad $E_0[e_1^\dagger]= e_1^\dagger(t^\dagger)$
     \quad $e_1 \becomes e_2'$
     
     \smallskip

      Then by inductive hypothesis on
      $e_1$, we have
      $e_1^\dagger \becomes e_2'$ with $e_2'=e_2^\dagger$.
     Hence we have 
     $E[e_1]^\dagger = e_1^\dagger(t^\dagger) \becomes  e_2^\dagger(t^\dagger)
     = E_0[e_2^\dagger]= E[e_2]^\dagger$ with
     $e_1\becomes  e_2$, as required.
\end{description}
\end{description}
\end{proof}

\end{document}
